\newcommand{\uend}{}
\newcommand{\uende}{\eqno}  
\newcommand{\case}[1]{\par\medskip\noindent\textit{Case #1: }}
\newenvironment{cs}{
  \begin{description}
    \renewcommand{\case}[1]{\item[\normalfont\itshape\mdseries Case ##1:]}
  }{
  \end{description}
}
\newlist{caselist}{description}{10}
\setlist[caselist]{font=\itshape\mdseries}
 \newlist{eroman}{enumerate}{2}
 \setlist[eroman,1]{label=(\roman*)}
 \setlist[eroman,2]{label=(\alph*)}
\definecolor{blau}{RGB}{0,84,159}
\definecolor{hellblau}{RGB}{142,168,229}
\definecolor{petrol}{RGB}{0,97,101}
\definecolor{tuerkis}{RGB}{0,152,161}
\definecolor{gruen}{RGB}{87,171,39}
\definecolor{maigruen}{RGB}{189,205,0}
\definecolor{gelb}{RGB}{255,237,0}
\definecolor{orange}{RGB}{255,128,0}
\definecolor{magenta}{RGB}{227,0,102}
\definecolor{rot}{RGB}{204,7,30}
\definecolor{bordeaux}{RGB}{161,16,53}
\definecolor{violett}{RGB}{97,33,88}
\definecolor{lila}{RGB}{122,111,172}
\definecolor{grey}{gray}{0.7}
\definecolor{mittelblau}{RGB}{0,128,255}
\newcommand{\bigmid}{\mathrel{\big|}}
\newcommand{\Bigmid}{\mathrel{\Big|}}
\newcommand{\ceil}[1]{\left\lceil#1\right\rceil}
\newcommand{\floor}[1]{\left\lfloor#1\right\rfloor}
\newcommand{\angles}[1]{\left\langle#1\right\rangle}
\renewcommand{\tilde}{\widetilde}
\renewcommand{\hat}{\widehat}
\renewcommand{\vec}[1]{\boldsymbol{#1}}
\newcommand{\biglmulti}{{\big\{\hspace{-4.3pt}\big\{}}
\newcommand{\bigrmulti}{{\big\}\hspace{-4.3pt}\big\}}}
\newcommand{\Biglmulti}{{\Big\{\hspace{-4.5pt}\Big\{}}
\newcommand{\Bigrmulti}{{\Big\}\hspace{-4.5pt}\Big\}}}
\renewcommand{\phi}{\varphi}
\renewcommand{\epsilon}{\varepsilon}
\newcommand{\Nat}{{\mathbb N}}
\newcommand{\PNat}{{\mathbb N}_{>0}}
\newcommand{\Real}{{\mathbb R}}
\newcommand{\PReal}{{\mathbb R}_{>0}}
\newcommand{\Int}{{\mathbb Z}}
\newcommand{\Rat}{{\mathbb Q}}
\newcommand{\LL}{\textsf{\upshape L}}
\newcommand{\LC}{\textsf{\upshape C}}
\newcommand{\FO}{\textsf{\upshape FO}}
\newcommand{\CC}{{\mathcal C}}
\newcommand{\CG}{{\mathcal G}}
\newcommand{\CI}{{\mathcal I}}
\newcommand{\CN}{{\mathcal N}}
\newcommand{\CQ}{{\mathcal Q}}
\newcommand{\CR}{{\mathcal R}}
\newcommand{\CS}{{\mathcal S}}
\newcommand{\CU}{{\mathcal U}}
\newcommand{\CY}{{\mathcal Y}}
\newcommand{\Ca}{\mathcal a}
\newcommand{\Cb}{\mathcal b}
\newcommand{\Cr}{\mathcal r}
\newcommand{\Cu}{\mathcal u}
\newcommand{\Cx}{\mathcal x}
\newcommand{\Cy}{\mathcal y}
\newcommand{\Cz}{\mathcal z}
\newcommand{\CGS}{{\mathcal G\!\mathcal S}}
\newcommand{\FA}{{\mathfrak A}}
\newcommand{\FC}{{\mathfrak C}}
\newcommand{\FF}{{\mathfrak F}}
\newcommand{\FG}{{\mathfrak G}}
\newcommand{\FN}{{\mathfrak N}}
\newcommand{\FL}{{\mathfrak L}}
\newcommand{\FR}{{\mathfrak R}}
\newcommand{\Fa}{{\mathfrak a}}
\newcommand{\logic}[1]{\textsf{\upshape #1}}
\newcommand{\FOC}[1][]{\logic{FO$^{#1}$+C}}
\newcommand{\FOCnu}[1][]{\logic{FO$^{#1}$+C}_{\textup{nu}}}
\newcommand{\nuni}{{}_{\textup{nu}}}
\newcommand{\GC}{\logic{GFO+C}}
\newcommand{\GCgc}{\GC^{\textup{gc}}}
\newcommand{\GCnu}{\GC_{\textup{nu}}}
\newcommand{\GCgcnu}{\GC^{\textup{gc}}_{\textup{nu}}}
\newcommand{\MC}{\logic{MFO+C}}
\newcommand{\MCnu}{\MC_{\textup{nu}}}
\newcommand{\PTIME}{\logic{PTIME}}
\newcommand{\TC}{\logic{TC}}
\newcommand{\zero}{\logic{0}}
\newcommand{\one}{\logic{1}}
\newcommand{\ord}{\logic{ord}}
\newcommand{\les}{\leqslant}
\DeclareMathOperator{\free}{free}
\DeclareMathOperator{\relu}{relu}
\DeclareMathOperator{\lsig}{lsig}
\DeclareMathOperator{\sig}{sig}
\DeclareMathOperator{\id}{id}
\DeclareMathOperator{\ar}{ar}
\DeclareMathOperator{\bsize}{bsize}
\DeclareMathOperator{\size}{size}
\DeclareMathOperator{\wt}{wt}
\DeclareMathOperator{\depth}{dp}
\newcommand{\agg}{\logic{agg}}
\newcommand{\comb}{\logic{comb}}
\newcommand{\msg}{\logic{msg}}
\newcommand{\ro}{\logic{ro}}
\newcommand{\SUM}{\logic{SUM}}
\newcommand{\MEAN}{\logic{MEAN}}
\newcommand{\MAX}{\logic{MAX}}
\newcommand{\sem}[1]{\left\lsem#1\right\rsem}
\newcommand{\num}[1]{\left\llangle#1\right\rrangle}
\newcommand{\Lin}[1]{\left\llangle#1\right\rrangle}
\newcommand{\Fin}[1]{\left\llangle#1\right\rrangle}
\newcommand{\pos}[1]{\left\langle#1\right\rangle}
\newcommand{\Bit}{\operatorname{Bit}}
\newcommand{\DRat}{\Int\big[{\textstyle\frac{1}{2}}\big]}
\DeclareMathOperator{\tp}{tp}
\newcommand{\ttv}{\texttt{\upshape v}}
\newcommand{\ttn}{\texttt{\upshape n}}
\newcommand{\tta}{\texttt{\{}}
\newcommand{\ttz}{\texttt{\}}}
\newcommand{\rtp}[1]{{#1\to\texttt{\upshape r}}}
\newcommand{\Ltp}[1]{{#1\to\texttt{\upshape L}}}
\newcommand{\Ftp}[1]{{#1\to\texttt{\upshape F}}}
\newcommand{\bool}{{\textup{bool}}}
\newcommand{\inorm}[1]{\left\|#1\right\|_{\infty}}
\newcommand{\rexp}{\boldsymbol{\rho}}
\newcommand{\rr}{\rexp_{\textup{sg}}}
\newcommand{\rI}{\rexp_{\textup{Ind}}}
\newcommand{\rs}{\rexp_{\textup{dn}}}
\newcommand{\rt}{\rexp_{\textup{bd}}}
\newcommand{\Zr}{Z_{\textup{sg}}}
\newcommand{\ZI}{Z_{\textup{Ind}}}
\newcommand{\Zs}{Z_{\textup{dn}}}
\newcommand{\Zt}{Z_{\textup{bd}}}
\newcommand{\Zlen}{Z_{\textup{len}}}
\newcommand{\Zth}{\vec Z_{\textup{th}}}
\newcommand{\Zsl}{\vec Z_{\textup{sl}}}
\newcommand{\Zco}{\vec Z_{\textup{co}}}
\newcommand{\Zwt}{\vec Z_{\textup{wt}}}
\newcommand{\Zbi}{\vec Z_{\textup{bi}}}
\newcommand{\Zac}{\vec Z_{\textup{ac}}}
\newcommand{\ZV}{Z_{\textup{V}}}
\newcommand{\ZE}{Z_{\textup{E}}}
\newcommand{\emptytuple}{\emptyset}
\DeclareMathOperator{\crep}{crep}
\newcommand{\uni}{\CU_{[0,1]}}
\DeclareMathOperator{\bin}{bin}
   \newcommand{\tm}{{\tilde m}}
    \newcommand{\tM}{{\tilde M}}
    \title{The Descriptive Complexity of Graph Neural Networks}
\begin{document}

\maketitle

\begin{abstract}
  We analyse the power of graph neural networks (GNNs) in terms of
  Boolean circuit complexity and descriptive complexity.

  We prove that the graph queries that can be computed by a
  polynomial-size bounded-depth family of GNNs are exactly those
  definable in the guarded fragment $\GC$ of first-order logic with
  counting and with built-in relations. This puts GNNs in the circuit
  complexity class (non-uniform) $\TC^0$. Remarkably, the GNN families may use
  arbitrary real weights and a wide class of activation functions
  that includes the standard ReLU, logistic ``sigmoid'', and hyperbolic
  tangent functions. If the GNNs are allowed to use random
  initialisation and global readout (both standard features of GNNs
  widely used in practice), they can compute exactly the same queries
  as bounded depth Boolean circuits with threshold gates, that is,
  exactly the queries in $\TC^0$.

  Moreover, we show that queries computable by a single GNN with
  piecewise linear activations and rational weights are definable in
  $\GC$ without built-in relations. Therefore, they are contained in
  uniform $\TC^0$.
\end{abstract}

\section{Introduction}

Graph neural networks (GNNs) \cite{GilmerSRVD17,ScarselliGTHM09} are
deep learning models for graph data that play a key role in machine learning on graphs
(see, for example, \cite{ChamiAPRM22}). A GNN
describes a distributed algorithm carrying out local computations at
the vertices of the input graph. At any time, each vertex has a
``state'', which is a vector of reals, and in each computation step it sends a
message to all its neighbours. The messages are also vectors of reals,
and they only depend on the current state of the sender and the
receiver. Every vertex aggregates the messages it receives and
computes its new state depending on the old state and the aggregated
messages. The message and state-update functions 
are computed by feedforward neural networks whose parameters are
learned from data.

In this article, we study the \emph{expressiveness} of GNNs: which
functions on graphs or their vertices can be computed by GNNs? We
provide answers in terms of Boolean circuits and logic, that is,
computation models of classical (descriptive) complexity theory. An
interesting and nontrivial aspect of this is that GNNs are
``analogue'' computation models operating on and with real
numbers. The weights of neural networks may be arbitrary reals, and
the activation functions may even be transcendental functions such as
the logistic function $x\mapsto\frac{1}{1+e^{-x}}$.

We always want functions on graphs to be
\emph{isomorphism invariant}, that is, isomorphic graphs are mapped to
the same value. Similarly, we want functions on vertices to be
\emph{equivariant}, that is, if $v$ is a vertex of a graph $G$ and $f$
is an isomorphism from $G$ to a graph $H$, then $v$ and $f(v)$ are
mapped to the same value. Functions computed by GNNs are always
invariant or equivariant, and so are functions defined in logic
(a.k.a.~\emph{queries}).

In a machine learning context, it is usually assumed that the vertices
of the input graph are equipped with additional features in the form
of vectors over the reals; we speak of \emph{graph signals} in this
article. The function values are also vectors over the reals. Thus a
function on the vertices of a graph is an equivariant transformation
between graph signals. When comparing GNNs and logics or Boolean
circuits, we focus on Boolean functions, where the input signal is
Boolean, that is, it associates a $\{0,1\}$-vector with every vertex
of the input graph, and the output is just a Boolean value $0$ or
$1$. In the logical context, it is natural to view Boolean signals as
vertex labels. Thus a Boolean signal in $\{0,1\}^k$ is described as a
sequence of $k$ unary relations on the input graph. Then an invariant
Boolean function becomes a \emph{Boolean query} on labelled graphs,
and an equivariant Boolean function on the vertices becomes a
\emph{unary query}.  To streamline the presentation, in this article we
focus on unary queries and equivariant functions on the vertices. All
our results also have versions for Boolean queries and functions on
graphs, but we only discuss these in occasional remarks. While we are
mainly interested in queries (that is, Boolean functions), our results
also have versions for functions with arbitrary real input and output
signals. These are needed for the proofs anyway. But since the exact
statements become unwieldy, we keep them out of the
introduction. Before discussing further background, let us state our
central result.

\begin{theorem}\label{theo:main1}
  Let ${\CQ}$ be a unary query on labelled graphs. Then the following are
  equivalent.
  \begin{enumerate}
  \item ${\CQ}$ is computable by a polynomial-weight bounded-depth family
    of GNNs with rpl-approxi\-mable activation functions.
  \item ${\CQ}$ is definable in the guarded fragment $\GCnu$ of first-order logic
    with counting and with built-in relations.
  \end{enumerate}
\end{theorem}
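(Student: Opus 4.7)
The plan is to prove the two implications separately, establishing a tight correspondence between guarded counting quantifiers and a single message-passing layer. For the direction $(2)\Rightarrow(1)$, I would proceed by structural induction on $\GCnu$ formulas $\varphi(x)$. The base case consists of atomic formulas $Px$, $Eyz$, equality, and numeric built-in atoms: these translate directly into initial vertex features and, for numeric atoms, into components of the weight parameters of the $n$-th GNN, which the non-uniformity of the family allows us to hard-code. Boolean combinations are realised by a constant-depth feedforward subnetwork that implements $\land,\lor,\neg$ via the rpl approximation of small piecewise linear gadgets, using the hypothesis that the activation functions are rpl approximable. The essential inductive step is the guarded counting quantifier $\exists^{\ge k}\,y\,(Exy\land\psi(x,y))$: by induction, $\psi(x,y)$ is computed by a GNN whose output at $v$ is a near-Boolean value on each neighbour of $v$; a further GNN layer with $\SUM$ aggregation followed by an rpl approximation of the threshold at level $k$ yields $\varphi$. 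Guardedness matches aggregation exactly, and polynomial weight complexity follows from the polynomial size of the built-in tables restricted to $\{0,\dots,n\}$.

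For the direction $(1)\Rightarrow(2)$, I would simulate the GNN family layer-by-layer in $\GCnu$. Fix the depth bound $d$ and, for each layer $t<d$, encode the update
\[
h_v^{(t+1)} \;=\; \comb\bigl(h_v^{(t)},\;\agg\lmulti\msg(h_v^{(t)},h_u^{(t)})\bigmid u\in N(v)\rmulti\bigr)
\]
by a $\GCnu$ formula whose outer construct is a guarded counting quantifier over neighbours $u$ of $v$ (this is where $\agg=\SUM$ is used, counting distinct message-values with their multiplicities), and whose body is a bounded-size arithmetic circuit encoding $\msg$, $\comb$, and an rpl approximation of the activation. Because the weight bit-length is polynomial in $n$, every intermediate coordinate is an element of $\DRat$ that can be stored as a polynomial-length binary string whose bits are accessible through built-in numeric relations. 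Iterating this construction $d$ times produces a $\GCnu$ formula of constant quantifier rank, and thresholding the final coordinate at $1/2$ yields $\CQ$. The key correctness statement is that the cumulative approximation error after $d$ layers stays well below the margin between the final GNN output and $1/2$; this follows by choosing the rpl approximation rate generously enough as a function of $n$, which is compatible with the polynomial-weight budget.

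The chief obstacle is managing the interplay between the analogue, real-valued GNN dynamics and the discrete bit-level semantics of $\GCnu$. Transcendental activations such as the logistic function do not live in the logic, and every rpl approximation introduces error that is then amplified both by unbounded linear combinations inside $\comb$ and by $\SUM$ aggregation of fan-in $n$; controlling this error through $d$ layers so that it never crosses the decision threshold, while keeping the polynomial-weight invariant, is the heart of the proof. The dual difficulty in the converse direction is realising arbitrary built-in numeric relations by genuine GNN computation, which is where the non-uniformity of the family is exploited: the relevant fragments of each built-in relation on $\{0,\dots,n\}$ can be stored in the parameter vector of the $n$-th GNN. Matching the $n$-dependent approximation rates on the two sides of the equivalence, so that the polynomial-weight bound on one side translates into a built-in relation of polynomial bit complexity on the other, is the most delicate bookkeeping step.
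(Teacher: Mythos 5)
Your high-level plan — prove both implications, use non-uniformity to absorb built-in relations, and control approximation error through constant depth — matches the paper's strategy, and your ``chief obstacle'' paragraph correctly identifies the analogue-vs.-discrete tension. But there is a genuine gap in your direction $(2)\Rightarrow(1)$ that would make the induction fail as stated. You treat $\GCnu$ as if it were Barcel\'o et al.'s logic $\logic{GC}$, whose only counting construct is a guarded quantifier $\exists^{\ge k}x'$ with a \emph{constant} threshold $k$. The paper's $\GC$ is strictly richer: it has counting \emph{terms} $\#(x',y_1<\theta_1,\dots,y_k<\theta_k).(\gamma\wedge\psi)$, whose value is a number that may range up to a polynomial in $n$, and these terms participate in arbitrary arithmetic (sums, products, comparisons against other terms and against number variables). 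A formula such as $\#x'.(E(x,x')\wedge P(x'))\le \#x'.(E(x,x')\wedge Q(x'))$ compares two counts that are not constants, and $\#(x',y<\theta).(\gamma\wedge\psi(x,x',y))$ counts over pairs of a vertex and a number variable. A single $\SUM$-aggregation layer followed by an rpl threshold at a fixed level $k$ cannot realise any of these. The induction hypothesis also has to carry, at every vertex, a polynomial amount of numeric state (the values of all vertex-subexpressions for all assignments to the $\ell$ number variables), not just a near-Boolean bit per subformula, and your proposal has no mechanism for this.

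The paper's Lemma~\ref{lem:converse} solves this by encoding, for each vertex expression $\xi$ with free number variables among $y_1,\dots,y_\ell$, the relation $S_\xi(v)\subseteq M^{\ell+1}$ (where $M=\{0,\dots,m-1\}$, $m$ polynomial in $n$) as a one-hot vector in $\{0,1\}^{m^{\ell+1}}$ stored in the vertex features, and then realises the arithmetic combination steps ($+$, $\cdot$, $\le$, inner counting over number variables) by bounded-depth polynomial-size \emph{threshold circuits}, which Lemma~\ref{lem:c2fnn} converts into $\relu$-FNNs. Guarded counting over a vertex variable becomes a single $\SUM$-aggregation layer, but what is summed is a vector of per-neighbour counts indexed by the remaining number variables, not a single bit. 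The built-in relations are absorbed exactly as you suggest (hard-code the polynomial-size fragment), but their interaction with the counting terms is what forces the one-hot machinery. Without some version of this idea your inductive step does not go through; adding it would also remove the need to invent ad hoc ``threshold gadgets.''

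Your sketch of $(1)\Rightarrow(2)$ is on track, but be aware that ``encode $\SUM$ aggregation by counting distinct message-values with multiplicities'' hides a substantial task: the messages are dyadic rationals of polynomial bit-length, and the iterated addition of an unordered family of such rationals, together with approximate division for $\MEAN$, has to be carried out in $\GC$ on bit representations (this is the content of Sections~\ref{sec:arithmetic}--\ref{sec:GCarithmetic} and the whole r-/L-/F-schema machinery). Likewise, ``choosing the rpl approximation rate generously enough'' needs the quantitative Lipschitz bounds of Section~\ref{sec:nonuniform} (Lemmas~\ref{lem:ub1}--\ref{lem:gnnbound4}); the paper first proves a uniform Theorem~\ref{theo:uniform} for rational piecewise-linear GNNs and then transfers to rpl-approximable families by constructing, for each $n$, a rational GNN whose parameters approximate the real ones to $n$-dependent precision and storing those rational parameters as built-in relations. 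Your description collapses these two stages into one, which is not a correctness error, but you should be explicit that the intermediate rational-GNN family is what actually gets hard-coded into the built-in relations.
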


The result requires more explanations.  First of all, it is a
\emph{non-uniform} result, speaking about computability by families of
GNNs and a logic with built-in relations. A family
$\CN=(\FN^{(n)})_{n\in\Nat}$ of GNNs consists of a GNN $\FN^{(n)}$ for
input graphs of size $n$, for each $n\in\Nat$. \emph{Bounded depth}
refers to the number of message passing rounds, or layers, of the GNNs
as well as the depth of the feed-forward neural networks they use for
their message and state-update functions. We would like the GNN
$\FN^{(n)}$ to be of ``size'' polynomial in $n$, but since we allow
arbitrary reals as parameters of the neural networks, it is not clear
what this actually means. We define the \emph{weight} of a GNN to be
the number of computation nodes of the underlying neural networks plus
the sum of the absolute values of all parameters of these
networks. The class of \emph{rpl-approximable} functions (see Section~\ref{sec:func}; ``rpl'' abbreviates rational piecewise linear) contains all
functions that are commonly used as activation functions for neural
networks, for example, the rectified linear unit, the logistic
function, the hyperbolic tangent function, the scaled exponential
linear unit (see Section~\ref{sec:fnn} for background on neural
networks and their activation functions).

On the logical side, \emph{first-order logic with counting} $\FOC$ is
the two-sorted extension of first-order logic over relational
structures that has variables ranging over the non-negative integers,
bounded arithmetic on the integer side, and counting terms that give
the number of assignments satisfying a formula. In the
\emph{$k$-variable fragment} $\FOC[k]$, only $k$ variables ranging
over the vertices of the input graphs are allowed (but arbitrarily
many variables for the integer part). The \emph{guarded fragment}
$\GC$ is a fragment of $\FOC[2]$ where quantification over vertices is
restricted to the neighbours of the current vertex. \emph{Built-in
  relations} are commonly used in descriptive complexity to introduce
non-uniformity to logics and compare them to non-uniform circuit
complexity classes. Formally, they are just arbitrary relations on the
non-negative integers that the logic can access, independently of the
input structure. $\GCnu$ denotes the extension of $\GC$ by built-in
relations.

It is well-known that over ordered input structures, $\FOC$
with built-in relations captures the circuit complexity class
(non-uniform) $\TC^0$, consisting of Boolean functions (in our context:
queries) that are
computable by families of bounded-depth polynomial-size Boolean
circuits with threshold gates. This implies that, as a corollary to
Theorem~\ref{theo:main1}, we get the following.

\begin{corollary}\label{cor:main1}
  Every unary query that is computable by a polynomial-weight
  bounded-depth family of GNNs with rpl approximable activation
  functions is in $\TC^0$.
\end{corollary}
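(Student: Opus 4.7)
The plan is to deduce the corollary directly from Theorem~\ref{theo:main1} combined with the classical descriptive-complexity fact, already cited in the paragraph preceding the corollary, that non-uniform $\TC^0$ is captured by $\FOC$ with arbitrary built-in numerical relations on ordered structures. First I would invoke Theorem~\ref{theo:main1} to replace the analytic hypothesis about GNNs by the purely syntactic conclusion that $\CQ$ is definable by some formula $\phi$ of $\GCnu$. Since $\GCnu$ is by definition a syntactic fragment of $\FOCnu$, the formula $\phi$ is in particular an $\FOCnu$-formula, so $\CQ$ is $\FOCnu$-definable.

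Second, I would apply the capture theorem: any unary query definable in $\FOC$ with built-in numerical predicates on the integer sort is computable by a non-uniform family of bounded-depth, polynomial-size circuits with threshold gates, i.e.\ it lies in $\TC^0$. Here the integer-sort variables, the bounded arithmetic and the built-in relations are what make this non-uniform rather than uniform; the guarded, two-variable vertex-part of $\phi$ is certainly subsumed by the full $\FOC$-to-$\TC^0$ translation, which evaluates each counting term of a formula by a layer of threshold gates and composes the resulting circuits for the bounded number of nested subformulas of $\phi$. Because $\phi$ has bounded quantifier depth and bounded arithmetic depth independent of the input size, the resulting circuit family has constant depth and polynomial size in the number of vertices.

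The one point requiring a brief comment is the input encoding: $\TC^0$ circuits act on strings, whereas our labelled graphs are unordered relational structures. This is handled in the standard way by feeding the circuit the adjacency matrix (together with the unary label predicates) under some arbitrary linear ordering of the vertices; isomorphism-invariance of $\CQ$, which is guaranteed both by Theorem~\ref{theo:main1} and by the semantics of $\FOCnu$, makes the choice of ordering immaterial for the Boolean output.

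There is essentially no hard step here: the corollary is a routine chaining of two imports, and the only care needed is to keep track of the fact that the built-in relations on the integer sort of $\GCnu$ play exactly the role of the non-uniform numerical predicates in the $\FOC$-captures-$\TC^0$ theorem, so that non-uniformity matches on both sides.
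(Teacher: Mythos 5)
Your proposal is correct and follows essentially the same route the paper indicates in the paragraph preceding the corollary: apply Theorem~\ref{theo:main1} to pass from the GNN family to a $\GCnu$-formula, note that $\GCnu\subseteq\FOCnu$ and that a formula not mentioning $\les$ is trivially order-invariant, and then invoke the capture of non-uniform $\TC^0$ by order-invariant $\FOCnu$ (Corollary~\ref{cor:TC0FO2}). Your remark on input encoding and isomorphism-invariance correctly fills in the one step the paper leaves implicit.
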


The strength of GNNs can be increased
by extending the input signals with a random component
\cite{AbboudCGL21,SatoYK21}. In \cite{AbboudCGL21}, it was even proved that such
\emph{GNNs with random initialisation} can approximate all
functions on graphs. The caveat of this result is that it is
non-uniform and that input graphs of size $n$ require GNNs of
size exponential in $n$ and depth linear in $n$. We ask which queries
can be computed by polynomial-weight, bounded-depth families of
GNNs. Surprisingly, this gives us a converse of
Corollary~\ref{cor:main1} and thus a characterisation of $\TC^0$.

\begin{theorem}\label{theo:main2}
  Let ${\CQ}$ be a unary query on labelled graphs. Then the following are
  equivalent.
  \begin{enumerate}
  \item ${\CQ}$ is computable by a polynomial-weight bounded-depth family
    of GNNs with random initialisation, global readout, and with rpl approximable activation functions.
  \item ${\CQ}$ is computable in $\TC^0$.
  \end{enumerate}
\end{theorem}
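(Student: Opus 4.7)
The plan is to prove the two implications separately, using on one side Corollary~\ref{cor:main1} and on the other the classical characterisation of non-uniform $\TC^0$ on ordered relational structures by $\FOCnu$-definability.

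For (1)$\Rightarrow$(2), the strategy is a non-uniform derandomisation. Let $\CN = (\FN^{(n)})$ be a polynomial-weight bounded-depth family of GNNs with random initialisation computing $\CQ$ with bounded error. I would first amplify by running $k(n) = n^2$ independent copies of $\FN^{(n)}$ in parallel on the same input but with independent random initialisations, and combining their outputs by a constant-depth majority-threshold aggregation. Bounded-depth, polynomial-weight GNNs are closed under such parallel composition together with a final threshold layer, both operations being expressible with the rpl approximable activation machinery used in Theorem~\ref{theo:main1}, so the amplified family is still polynomial-weight and bounded-depth, and its error on any fixed input drops to $2^{-\Omega(n^2)}$. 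A union bound over the at most $2^{O(n^2)}$ labelled graphs on $n$ vertices then yields, for each $n$, a fixed assignment $r_n$ of random initial values for which the amplified GNN is correct on every input of size $n$. Hardwiring $r_n$ into $\FN^{(n)}$ turns the amplified family into a deterministic polynomial-weight, bounded-depth family computing $\CQ$, and Corollary~\ref{cor:main1} then places $\CQ$ in $\TC^0$. The random values are formally real, but polynomial precision suffices (given the bounds on weights and the quality of the rpl approximations), so the random component can be taken to be a Boolean string of polynomial length.

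For (2)$\Rightarrow$(1), I would start from an $\FOCnu$-formula defining $\CQ$ over ordered input structures, obtained from any given $\TC^0$ circuit family by the standard descriptive translation, and realise this formula by a GNN family with random initialisation. The key role of the random initialisation is to break vertex symmetries: drawing $O(\log n)$ random bits per vertex ensures that, with probability at least $1 - n^{-\omega(1)}$, the random labels are pairwise distinct and thus induce a linear order on the vertex set. I would then extend the formula-to-GNN construction underlying Theorem~\ref{theo:main1} from guarded to unrestricted counting terms, using a global aggregation (readout) to evaluate unbounded counting and using the vertex indices determined by the random order to access the built-in integer relations. Polynomial weight and bounded depth are inherited from the quantifier depth and formula size, since each of the finitely many layers corresponds to a single quantifier alternation.

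The principal obstacle is the (2)$\Rightarrow$(1) direction, specifically the simulation of unrestricted $\FOCnu$-counting by a bounded-depth GNN. Guarded counting aligns with local message passing as in Theorem~\ref{theo:main1}, whereas full counting requires a global aggregation primitive that must be carefully integrated with the rpl approximation bookkeeping and the probabilistic correctness argument for the random ordering. Ensuring that the error from the approximations and the probability of a label collision remain negligible through all the layers is the technically delicate core of the construction.
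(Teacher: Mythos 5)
Your high-level plan is sound --- amplify, Adleman, and use random features to induce an order --- and it does overlap substantially with the paper's. But there are two concrete gaps.

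\textbf{(1)$\Rightarrow$(2).} Hardwiring a fixed random assignment $r_n$ into $\FN^{(n)}$ does not produce a GNN, so Corollary~\ref{cor:main1} cannot be applied. A GNN computes an \emph{equivariant} signal transformation; fixing per-vertex initial features $r_n$ requires first choosing a bijection $V(G)\to\{1,\dots,n\}$, and the resulting function depends on that bijection --- it is not a signal transformation in the sense of Section~\ref{sec:gnn}. The Adleman argument does show that, for the chosen $r_n$, the final Boolean output agrees with $\CQ(G,\Cb)$ for \emph{every} choice of bijection (this is exactly the order-invariance the paper proves), but the derandomised object is a circuit on ordered encodings, not a GNN. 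The paper therefore performs the Adleman step at the logical level rather than the GNN level: Lemma~\ref{lem:ri2} first truncates and amplifies (Lemma~\ref{lem:ri1}), then uses Theorem~\ref{theo:snonuniform} to replace the GNN with ri by a $\GCgc$-expression in which the random reals appear as a relation variable; it then introduces an auxiliary linear order $\les$ to re-index that random relation over $\Nat$, takes a union bound over all triples $(G,\Cb,\les)$ to pick a single good $R^{(n)}$, absorbs $R^{(n)}$ into a built-in numerical relation, and reads off an order-invariant $\GCgcnu$-formula; Corollary~\ref{cor:GCgc_TC0} then gives $\TC^0$. Nothing in your sketch addresses where the linear order comes from or why the final object is order-invariant as a function of the adjacency matrix, and without that you cannot close this direction.

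\textbf{(2)$\Rightarrow$(1).} The ``principal obstacle'' you identify --- simulating \emph{unrestricted} $\FOCnu$-counting by a bounded-depth GNN --- is not actually the obstacle. Over ordered graphs, $\FOCnu$-definability collapses to the two-variable decomposable fragment (Lemma~\ref{lem:fo2fo2}) and further to $\GCgc$ (Lemma~\ref{lem:GCgr}), yielding Corollary~\ref{cor:GCgc_TC0}: $\TC^0$ is exactly order-invariant $\GCgcnu$. So one only ever needs to simulate \emph{guarded} counting plus global readout, which is precisely what the $\GCgc$ version of Lemma~\ref{lem:converse} gives. The role of randomness is only to manufacture the order, and the paper uses three random integer features per vertex plus the auxiliary FNN of Lemma~\ref{lem:random-function} to discretise continuous randomness; you should note that continuous random reals cannot be mapped to a discrete range by a continuous (activation) function, so this discretisation step is genuinely necessary and requires a small-probability failure analysis.

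In short: the skeleton (amplification, Adleman, random order) is right, but you need to route the derandomisation through the logic with built-in relations to handle order-invariance, and you should replace the ``unrestricted counting'' worry with the observation that the reduction to $\GCgc$ over ordered structures is already in hand.
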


For a GNN with random initialisation to compute a query, it needs to
compute the correct answer with high probability, taken over the
random inputs. (We demand a probability $\ge 3/4$, but the exact value
is irrelevant).

Following \cite{AbboudCGL21}, we allow GNNs with
random initialisation to also use a feature known as \emph{global
  readout}, which means that in each message-passing round of a GNN
computation, the vertices not only receive messages from their
neighbours, but the aggregated state of all vertices. There is also a
version of Theorem~\ref{theo:main1} for GNNs with global readout. It
is an open question what the exact expressiveness of polynomial-weight
bounded-depth families of GNNs with random initialisation, but without
global readout, is.

\subsection*{Related Work}
A fundamental result on the expressiveness of GNNs
\cite{MorrisRFHLRG19,XuHLJ19} states that two graphs are
distinguishable by a GNN if and only if they are distinguishable by
the 1-dimensional Weisfeiler-Leman (WL) algorithm, a simple
combinatorial algorithm originally introduced as a graph isomorphism
heuristics \cite{Morgan65,WeisfeilerL68}. This result has had
considerable impact on the subsequent development of GNNs, because it
provides a yardstick for the expressiveness of GNN extensions (see
\cite{MorrisLMRKGFB23}). Its generalisation to higher-order GNNs and
higher-dimensional WL algorithms \cite{MorrisRFHLRG19} even gives a
hierarchy of increasingly more expressive formalisms against which
such extensions can be compared. However, these results relating GNNs
and their extensions to the WL algorithm only consider a restricted
form of expressiveness, the power to distinguish two
graphs. Furthermore, the results are \emph{non-uniform}, that is, the
distinguishing GNNs depend on the input graphs or at least on their
size, and the GNNs may be arbitrarily large and deep.  Indeed, the
GNNs from the construction in \cite{XuHLJ19} may be exponentially
large in the graphs they distinguish. Those of \cite{MorrisRFHLRG19}
are polynomial. Both have recently been improved by
\cite{AamandCINRSSW22}, mainly showing that the messages only need to
contain logarithmically many bits.

We are not the first to study the logical expressiveness of GNNs (see
\cite{Grohe21} for a recent survey). It
was proved in \cite{BarceloKM0RS20} that all unary queries definable in the
guarded fragment $\logic{GC}$ of the extension $\LC$ of first-order
logic by counting quantifiers $\exists^{\ge n}x$ (``there exist at
least $n$ vertices $x$ satisfying some formula'') are computable by a
GNN. The logic $\logic{GC}$ is weaker than our $\GC$ in that it does
not treat the numbers $n$ in the quantifiers $\exists^{\ge n}x$ as
variables, but as fixed constants. What is
interesting about this result, and what makes it incomparable to ours,
is that it is a \emph{uniform} result: a query definable in
$\logic{GC}$ is computable by a single GNN across all graph
sizes. There is a partial converse to this result, also from
\cite{BarceloKM0RS20}: all unary queries that are definable in first-order
logic and computable by a GNN are actually definable in
$\logic{GC}$. Note, however, that there are queries computable
by GNNs that are not definable in first-order logic.

A different approach to capturing GNNs by logic has been taken in
\cite{GeertsSV22}. There, the authors introduce a new logic
$\logic{MPLang }$ that operates directly on the reals. The logic, also
a guarded (or modal) logic, is simple and elegant and well-suited to
translate GNN computations to logic. The converse translation is more
problematic, though. But to be fair, it is also in our case, where it requires
families of GNNs and hence non-uniformity. However, the purpose of the
work in \cite{GeertsSV22} is quite different from ours. It is our goal
to describe GNN computations in terms of standard descriptive
complexity and thus to be able to quantify the computational power of
GNNs in the framework of classical complexity. It is the goal
of \cite{GeertsSV22} to expand logical reasoning to real-number
computations in a way that is well-suited to GNN computations. Of
course, both are valid goals.

There is another line of work that is important for us. In the 1990s,
researchers studied the expressiveness of feedforward neural networks
(FNNs) and compared it to Boolean computation models such as Turing
machines and circuits (for example,
\cite{KarpinskiM97,Maass97,MaassSS91,SiegelmannS95}). Like GNNs, FNNs
are analogue computation models operating on the reals, and this work
is in the same spirit as ours. An FNN has fixed numbers $p$ of inputs
and $q$ of outputs, and it thus computes a function from $\Real^p$ to
$\Real^q$. Restricted to Boolean inputs, we can use FNNs with $p$
inputs to decide subsets of $\{0,1\}^p$, and we can use families of
FNNs to decide languages. It was proved in \cite{Maass97} that a
language is decidable by a family of bounded-depth polynomial-weight
FNNs using piecewise-polynomial activation functions if and only if it
is in $\TC^0$. It may seem that our Corollary~\ref{cor:main1}, at
least for GNNs with piecewise-linear (or even piecewise-polynomial)
activations, follows easily from this result. But this is not the
case, because when processing graphs, the inputs to the FNNs computing
the message and update functions of the GNN may become large through
the aggregation ranging over all neighbours. Also, the arguments of
\cite{Maass97} do not extend to rpl-approximable activation functions
like the logistic function. There has been related work
\cite{KarpinskiM97} that extends to a wider class of activation
functions including the logistic function, using arguments based on
o-minimality.  But the results go into a different direction; they
bound the VC dimension of FNN architectures and do not relate them to
circuit complexity.

\subsection*{Techniques}

The first step in proving the difficult implication (1)$\Rightarrow$(2)
of Theorem~\ref{theo:main1} is to prove a uniform result for a simpler
class of GNNs; this may be of independent interest.

\begin{theorem}\label{theo:main3}
  Let ${\CQ}$ be a unary query computable by a GNN with rational
  weights and piecewise linear activations. Then ${\CQ}$ is definable in
  $\GC$. 
\end{theorem}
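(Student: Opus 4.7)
The plan is to simulate the GNN computation directly inside $\GC$, using the integer (number) sort of $\FOC$ to represent the real-valued states. Since the weights of the fixed GNN are rational and the activation functions are piecewise linear with rational breakpoints, every intermediate value that arises during the computation on an input graph $G$ is a rational number whose denominator lies in a fixed finite set depending only on the GNN, and whose numerator is bounded polynomially in $|V(G)|$ (because at each layer the SUM-aggregation adds at most $|V(G)|$ many bounded quantities, and the fixed depth $L$ turns this into a polynomial bound overall). Hence, after scaling by a layer-dependent integer $D_\ell$, the $i$-th coordinate $x_{v,i}^{(\ell)}$ of the state of $v$ at layer $\ell$ is an integer of polynomial magnitude, which the bounded arithmetic on the number sort of $\FOC$ can represent.

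Proceeding by induction on the layer $\ell$, for each coordinate $i \leq d_\ell$ I would construct a $\GC$-term $t_i^{(\ell)}(v)$ of the number sort such that $t_i^{(\ell)}(v) = D_\ell \cdot x_{v,i}^{(\ell)}$ in every labelled input graph. The base case $\ell=0$ is immediate: the Boolean input signal is given by the unary label predicates, and each $t_i^{(0)}(v)$ is a simple conditional number term. For the inductive step, I would simulate the feedforward neural networks computing the message and update functions as arithmetic on number terms. The only nontrivial operation is the aggregation over neighbours. To simulate $\sum_{u \sim v} x_{u,j}^{(\ell)}$, I use the identity
\[
\sum_{u \sim v} f(u) \;=\; \sum_{k} k \cdot \bigl|\{u : E(v,u) \wedge f(u) = k\}\bigr|,
\]
where $k$ ranges over the polynomial interval of possible values of $f = t_j^{(\ell)}$. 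The inner count is a guarded counting term of $\GC$ (the quantification over $u$ is guarded by the edge relation $E(v,u)$), and the outer sum over $k$ is a bounded sum on the number sort. Piecewise linear activations are translated by case analysis on a fixed number of sign comparisons between number terms, and the linear combinations with rational coefficients become straightforward arithmetic on the number sort.

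Finally, since the GNN produces a Boolean output by thresholding the final state, the unary query $\CQ$ is defined by a $\GC$-formula that compares the number terms $t_i^{(L)}(v)$ with the appropriate threshold. No unguarded vertex quantification is ever introduced, because the only vertex quantifications arising in the simulation come from aggregations and are guarded by the edge relation; hence the constructed formula lies in $\GC$ rather than merely in $\FOC[2]$.

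The main obstacle is the arithmetic bookkeeping: one must verify, by a careful induction on the layers, that denominators stay in a fixed finite set of integers and that the numerators of the scaled state values remain polynomially bounded in $|V(G)|$. Both properties are stable under rational linear combinations and under piecewise linear activations with rational breakpoints, and SUM-aggregation blows the numerator up by at most a factor of $|V(G)|$ per layer; since $L$ is fixed, the overall bound stays polynomial, which is exactly what is needed for the number-sort encoding to succeed. Once these bounds are in place, the inductive simulation described above produces the required $\GC$-definition of $\CQ$.
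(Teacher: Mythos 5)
Your plan — simulate the GNN layer by layer, encode intermediate real states as (scaled) integers on the number sort, handle per-vertex sums over neighbours by a guarded counting term, and finish by a threshold comparison — is the same high-level strategy as the paper's proof (Lemma~\ref{lem:uniform} and Theorem~\ref{theo:uniform}, then Corollary~\ref{cor:uniform}). The aggregation identity you use is also essentially the paper's adaptive-bound trick from Lemma~\ref{lem:ar3a}/\ref{lem:ar8}. But there is a genuine gap in the arithmetic bookkeeping that your proof hinges on.

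You claim that every intermediate value is a rational ``whose denominator lies in a fixed finite set depending only on the GNN.'' That is true for $\SUM$ and $\MAX$ aggregation, but the GNN model in the paper also allows $\MEAN$ aggregation, and there the claim fails: $\MEAN$ divides by $\deg(v)$, which depends on the input graph and varies from vertex to vertex. After one $\MEAN$ layer the state of $v$ picks up a factor $1/\deg(v)$; after a second aggregation layer one must add values with denominators $\deg(u)$ for the various neighbours $u$, whose least common multiple can be exponential in $|G|$. So the denominators are \emph{not} in a fixed finite set, the scaling constants $D_\ell$ cannot be chosen independently of the graph, and the scaled numerators need not stay polynomial. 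Your induction therefore breaks at exactly the point you flag as ``the main obstacle.'' The paper sidesteps this by \emph{not} aiming for exact computation: Theorem~\ref{theo:uniform} only achieves a $2^{-\Ca(W)(v)}$-approximation of the true signal, using a bit-level representation of dyadic rationals (the r-schemas/r-expressions of Section~\ref{sec:rat-arithmetic}) and approximate division (Lemma~\ref{lem:ar6}) for the $\MEAN$ case. Theorem~\ref{theo:main3} then follows because the query is defined by a $\ge 3/4$ versus $\le 1/4$ threshold, so any approximation with error below $1/4$ already determines the Boolean answer. To repair your argument you would need to either (a) restrict to $\SUM$/$\MAX$ aggregation, or (b) replace ``exact scaled-integer representation'' by an approximate bit-level representation and carry an error budget through the induction, which is what the paper does. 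A smaller point: $\FOC$-terms take values in $\Nat$, so you would also need to say how signs are represented (the paper keeps a separate sign bit in the r-schema).
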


Compare this with the result of \cite{BarceloKM0RS20}: every query
definable in the (weaker) logic $\logic{GC}$ is computable by a GNN
and in fact a GNN with rational weights and piecewise linear
activations. Thus we may write
$ \logic{GC}\subseteq\logic{GNN}\subseteq\GC.  $ It is not hard to
show that both inclusions are strict.

To prove Theorem~\ref{theo:main3}, we need to show that the rational
arithmetic involved in GNN computations, including
unbounded linear combinations, can be simulated, at least
approximately, in the logic $\GC$. Establishing this is a substantial
part of this article, and it may be of independent interest.

So how do we prove the forward implication of Theorem~\ref{theo:main1}
from Theorem~\ref{theo:main3}? It was our first idea to look at the
results for FNNs. In principle, we could use the linear-programming
arguments of \cite{Maass97}. This would probably work, but would be
limited to piecewise linear or piecewise polynomial activations. We
could then use o-minimality to extend our results to wider classes of
activation functions. After all, o-minimality was also applied
successfully in the somewhat related setting of constraint databases
\cite{KuperLP00}. Again, this might work, but our analytical approach
seems simpler and more straightforward. Essentially, we use the
Lipschitz continuity of the functions computed by FNNs to show that we
can approximate arbitrary GNNs with rpl-approximable activations by
GNNs with rational weights and piecewise linear activations, and then
we apply Theorem~\ref{theo:main3}. 

Let us close the introduction with a few remarks on
Theorem~\ref{theo:main2}. The reader may have noted that assertion (1)
of the theorem involves randomness in the computation model, whereas
(2) does not. To prove the implication (1)$\Rightarrow$(2) we use the
well-known ``Adleman Trick'' that allows us to trade randomness for
non-uniformity. To prove the converse implication, the main insight is
that with high probability, random initialisation gives us a
linear order on the vertices of the input graph. Then we can use the
known fact that $\FOC$ with built-in relations captures $\TC^0$ on
ordered structures.

\subsection*{Structure of This Article}
After collecting preliminaries from different areas in 
Section~\ref{sec:prel}, in Section~\ref{sec:logic} we develop a machinery for carrying out the required
rational arithmetic in first-order logic with counting and its guarded
fragment. This is a significant
part of this article, which is purely logical and independent of neural networks. We then introduce GNNs
(Section~\ref{sec:gnn}) and prove the uniform Theorem~\ref{theo:main3}
(Section~\ref{sec:uniform}). We prove the forward direction of
Theorem~\ref{theo:main1} in Section~\ref{sec:nonuniform} and the
backward direction in Section~\ref{sec:converse}. Finally, we prove
Theorem~\ref{theo:main2} in Section~\ref{sec:ri}.

\section{Preliminaries}
\label{sec:prel}

By $\Int,\Nat,\PNat,\Rat,\Real$ we denote the sets of integers, nonnegative integers,
positive integers, rational numbers, and real numbers,
respectively. Instead of arbitrary rationals, we will often work with
\emph{dyadic rationals}, that is, rationals of the form $\frac{n}{2^\ell}$ for
$p\in\Int, \ell\in\Nat$. These are precisely the numbers that have a presentation
as finite
precision binary floating point numbers. We denote the set of dyadic
rationals by $\DRat$. 

We denote the binary
representation of $n\in\Nat$ by $\bin(n)$. The \emph{bitsize} of $n$
is the length of the
binary representation, that is,
\[
  \bsize(n)\coloneqq|\bin(n)|=
  \begin{cases}
    1&\text{if }n=0,\\
    \ceil{\log(n+1)}&\text{if }n>0,
  \end{cases}
\]
where $\log$ denotes the binary logarithm.  
We
denote the $i$th bit of the binary representation of 
$n\in\Nat$ by $\Bit(i,n)$, where we count bits starting from $0$ with
the lowest significant bit. It will be convenient to let $\Bit(i,n)\coloneqq 0$ for all
$i\ge\bsize(n)$.  So
\[
  n=\sum_{i=0}^{\bsize(n)-1}\Bit(i,n)\cdot 2^i=\sum_{i\in\Nat}\Bit(i,n)\cdot 2^i.
\]
The \emph{bitsize} of 
an integer $n\in\Int$ is
$\bsize(n)\coloneqq1+\bsize(|n|)$, and the \emph{bitsize}
of a dyadic rational $q=\frac{n}{2^{\ell}}\in\DRat$ in reduced form is
  $\bsize(q)\coloneqq\bsize(n)+\ell+1$.

We denote tuples (of numbers, variables, vertices, et cetera) using
boldface letters. Usually, a $k$-tuple $\vec t$ has entries
$t_1,\ldots,t_k$. The empty tuple is denoted $\emptytuple$ (just like
the empty set, this should never lead to any confusion), and for every
set $S$ we have $S^0=\{\emptytuple\}$. For tuples
$\vec t=(t_1,\ldots,t_k)$ and $\vec u=(u_1,\ldots,u_\ell)$, we let
$\vec t\vec u=(t_1,\ldots,t_k,u_1,\ldots,u_\ell)$. To improve
readability, we often write $(\vec t,\vec u)$ instead of
$\vec t\vec u$. This does not lead to any confusion, because we never
consider nested tuples.

For a vector $\vec
x=(x_1,\ldots,x_k)\in\Real^k$, the \emph{$\ell_1$-norm} (a.k.a.\
Manhattan norm) is $\|\vec x\|_1\coloneqq \sum_{i=1}^k|x_i|$,  the \emph{$\ell_2$-norm} (a.k.a.\
Euclidean norm) is $\|\vec x\|_2\coloneqq \sqrt{\sum_{i=1}^kx_i^2}$,
and the \emph{$\ell_\infty$-norm} (a.k.a.~maximum norm)
$\|\vec x\|_\infty\coloneqq\max_{i\in[k]}|x_i|$. As $\frac{1}{k}\|\vec
x\|_1\le\|\vec x\|_\infty\le\|\vec
x\|_2\le\|\vec x\|_1$, it does not make much of a difference
which norm we use; most often, it will be convenient for us to 
use the $\ell_{\infty}$-norm.

\subsection{Functions and Approximations}
\label{sec:func}
A function $f:\Real^p\to\Real^q$ is \emph{Lipschitz continuous} if there is some
constant $\lambda$, called a \emph{Lipschitz constant} for $f$, such
that for all $\vec x,\vec y\in\Real^p$ it holds that
$\inorm{f(\vec x)-f(\vec y)}\le \lambda\inorm{\vec x-\vec y}$. 

A function $L:\Real\to\Real$ is \emph{piecewise linear} if there are
$n\in\Nat$, $a_0,\ldots,a_n$, $b_0,\ldots,b_n$,
$t_1,\ldots,t_n\in\Real$ such that $t_1<t_2<\ldots<t_n$ and
\[
  L(x)=
  \begin{cases}
    a_0x+b_0&\text{if }x< t_1,\\
    a_ix+b_i&\text{if }t_i\le x<t_{i+1}\text{ for some }i<n,\\
    a_nx+b_n&\text{if }x\ge t_n
  \end{cases}
\]
if $n\ge 1$, or $L(x)=a_0x+b_0$ for all $x$ if
$n=0$. Note that there is a unique \emph{minimal representation} of
$L$ with minimal number $n+1$ of pieces. We call
$t_1,\ldots,t_n$ in the minimal representation of
$L$ the \emph{thresholds} of
$L$; these are precisely the points where $L$ is non-linear.
$L$ is \emph{rational} if all its parameters
$a_i,b_i,t_i$ in the minimal representation are dyadic
rationals.\footnote{Throughout this article we work with dyadic
  rationals. For this reason, we are a little sloppy in our
  terminology. For example, we call a function ``rational piecewise
  linear'' when the more precise term would be ``dyadic-rational
  piecwise linear''.} If
$L$ is rational, then its \emph{bitsize} of
$\bsize(L)$ is the sum of the bitsizes of all the parameters
$a_i,b_i,t_i$ of the minimal representation. Oberserve that if
$L$ is continuous then it is Lipschitz continuous with Lipschitz
constant $\max_{0\le i\le n}a_i$.

\begin{example}\label{exa:pl-activation}
  The most important example of a rational piecewise linear function
  for us is the \emph{rectified linear unit} $\relu:\Real\to\Real$
  defined by $\relu(x)\coloneqq\max\{0,x\}$.

  In fact, it is not hard to see that every piecewise linear function
  can be written as a linear combination of $\relu$-terms. For
  example, the \emph{identity} function $\id(x)=x$ can be written as
  $\relu(x)-\relu(-x)$, and the \emph{linearised sigmoid} function
  $\lsig:\Real\to\Real$, defined by $\lsig(x)=0$ if $x<0$, $\lsig(x)=x$
  if $0\le x< 1$, an $\lsig(x)=1$ if $x\ge 1$, can be written as
  $\relu(x)-\relu(x-1)$.
  \uend
\end{example}

We need a notion of approximation between functions on the
reals. Let $f,g:\Real\to\Real$ and $\epsilon\in\PReal$. Then
$g$ is an \emph{$\epsilon$-approximation} of $f$ if for all
$x\in\Real$ it holds that
\[
  \big|f(x)-g(x)\big|\le\epsilon |f(x)|+\epsilon.
\]
Note that we allow for both an additive and a multiplicative
approximation error. This notion of approximation is not symmetric, but if $g$ $\epsilon$-approximates $f$
for some $\epsilon < 1$ then $f$
$\frac{\epsilon}{1-\epsilon}$-approximates $g$. The main reason we
need to allow for a multiplicative approximation error is that we want
to approximate linear functions with irrational coefficients by linear
functions with rational coefficients.

We call a function $f:\Real\to\Real$ \emph{rpl-approximable} if
for every $\epsilon>0$ there is a continuous rational piecewise linear
function $L$ of bitsize polynomial in $\epsilon^{-1}$ that
$\epsilon$-approximates $f$.

\begin{example}\label{exa:rpl-app}
  The \emph{logistic function} $\sig(x)=\frac{1}{1+e^{-x}}$ and the \emph{hyperbolic
  tangent} $\tanh(x)=\frac{e^x-e^{-x}}{e^x+e^{-x}}$ are
  rpl-approximable. Examples of unbounded rpl-approximable
  functions are the \emph{soft plus function} $\ln(1+e^x)$ and the
  \emph{exponential linear units} defined by $\operatorname{elu}_\alpha(c)=x$
  if $x>0$ and $\alpha(e^x-1)$ if $x\le 0$, where $\alpha>0$ is a constant. We omit
  the straightforward proofs based on simple calculus.
  \uend
\end{example}

\begin{example}
  Examples of functions that are not rpl approximable are functions
  that are not Lipschitz continuous, such as the square function, and
  periodic functions such as the sine or cosine functions.
  \uend
\end{example}

\subsection{Graphs and Signals}
Graphs play two different roles in this article: they are the basic data
structures on which logics and graph neural networks operate, and they
form the skeleton of Boolean circuits and neural networks. In the
first role, which is the default, we assume graphs to be
undirected. This assumption is not essential for our results, but
convenient. 
In the second role, graphs are directed acyclic graphs (dags).

We always denote the vertex set of a graph or dag $G$ by $V(G)$ and
the edge set by $E(G)$. We denote edges by $vw$ (without
parentheses). We assume the vertex set of all graphs in this article to be finite and
nonempty. The \emph{order} of a graph $G$ is
$|G|\coloneqq|V(G)|$, and the \emph{bitsize} $\bsize(G)$ of $G$ is the
size of a representation of $G$. (For simplicity, we can just take
adjacency matrices, then $\bsize(G)=|G|^2$.) The class of all (undirected) graphs is
denoted by $\CG$. 

For a vertex $v$ in an (undirected) graph, we let $N_G(v)\coloneqq\{w\in
V(G)\mid vw\in E(G)\}$ be the \emph{neighbourhood} of $v$ in $G$, and we let
$N_G[v]\coloneqq\{v\}\cup N_G(v)$ be the \emph{closed
  neighbourhood}. Furthermore, we let $\deg_G(v)\coloneqq|N_G(v)|$ be
the \emph{degree} of $v$ in $G$. For a vertex $v$ in a directed graph $G$, we
let $N^+_G(v)\coloneqq\{w\in
V(G)\mid vw\in E(G)\}$ be the \emph{out-neighbourhood} of $v$ and $N^-_G(v)\coloneqq\{u\in
V(G)\mid uv\in E(G)\}$ the \emph{in-neighbourhood}, and we let
$\deg^+_G(v)\coloneqq|N^+_G(v)|$ and $\deg^-_G(v)\coloneqq|N^-_G(v)|$
be the \emph{out-degree} and \emph{in-degree}. We call nodes of in-degree $0$ \emph{sources} and nodes of
out-degree $0$ \emph{sinks}. The \emph{depth} $\depth_G(v)$ of node $v$ in a dag $G$ is the
length of the longest path from a source to $v$. The \emph{depth}
$\depth(G)$ of a
dag $G$ is the maximum depth of a sink of $G$. In notations such as $N_G,\deg_G$
we omit the index ${}_G$ if the graph is clear from the
context. 

When serving as data for graph neural networks, the vertices of graphs
usually have real-valued features, which we call \emph{graph signals}.
An \emph{$\ell$-dimensional signal} on a graph $G$ is a function
$\Cx:V(G)\to\Real^\ell$. We denote the class of all $\ell$-dimensional
signals on $G$ by $\CS_\ell(G)$ and the class of all pairs $(G,\Cx)$,
where $\Cx$ is an $\ell$-dimensional signal on $G$, by $\CGS_\ell$.
An $\ell$-dimensional signal is \emph{Boolean} if its range is
contained in $\{0,1\}^\ell$. By $\CS^\bool_\ell(G)$ and $\CGS^\bool_\ell$ we
denote the restrictions of the two classes to Boolean signals.

Isomorphisms between pairs $(G,\Cx)\in\CGS_\ell$ are required to
preserve the signals. We call a mapping $f:\CGS_\ell\to\CGS_m$ a \emph{signal
  transformation} if for all $(G,\Cx)\in\CGS_\ell$ we have
$f(G,\Cx)=(G,\Cx')$ for some $\Cx'\in\CS_m(G)$. Such a signal
transformation $f$
is \emph{equivariant} if for all isomorphic
$(G,\Cx),(H,\Cy)\in\CGS_\ell$, every isomorphism
$h$ from $(G,\Cx)$ to $(H,\Cy)$ is also an isomorphism from $f(G,\Cx)$
to $f(H,\Cy)$.

We can view signals $\Cx\in\CS_\ell(G)$ as matrices in the space
$\Real^{V(G)\times\ell}$. Flattening them to vectors of length
$|G|\ell$, we can apply the usual vector norms to graph signals. In
particular, we have
$\inorm{\Cx}=\max\big\{\inorm{\Cx(v)}\bigmid v\in V(G)\big\}$. Sometimes, we need
to restrict a signal to a subsets of $W\in V(G)$. We denote this
restriction by $\Cx|_W$, which may be viewed as a matrix in $\Real^{W\times \ell}$.

\subsection{Boolean Circuits}
A \emph{Boolean circuit} $\FC$ is a dag where all
nodes except for the sources are
labelled as \emph{negation},
\emph{disjunction}, or \emph{conjunction} nodes. Negation nodes must
have in-degree $1$.  Sources are \emph{input nodes}, and we always
denote them by $X_1,\ldots,X_p$. Similarly, sinks are \emph{output
  nodes}, and we denote them by $Y_1,\ldots,Y_q$. The number $p$ of input nodes is the \emph{input dimension} of
$\FC$, and the number $q$ of output nodes the \emph{output dimension}.
Most of the time, we consider circuits
that also have \emph{threshold nodes} of arbitrary positive in-degree,
where a $\ge t$-threshold node evaluates to $1$ if at least $t$ of its
in-neighbours evaluate to $1$. To distinguish them from the Boolean
circuits over the standard basis we refer to such circuits as
\emph{threshold circuits}.
The \emph{depth} $\depth(\FC)$ of a circuit $\FC$ is the maximum
length of a path from an input node to an output node. The
\emph{order} $|\FC|$ of $\FC$ is the number of nodes, and the
\emph{size} is the number of nodes plus the number of edges.

A circuit $\FC$ of input dimension $p$ and output dimension $q$
computes a function $f_\FC:\{0,1\}^p\to\{0,1\}^q$ defined in the
natural way. To simplify the notation, we simply denote this function
by $\FC$, that is, we write $\FC(\vec x)$ instead of $f_{\FC}(\vec x)$
to denote the output of $\FC$ on input $\vec x\in\{0,1\}^p$.

In complexity theory, we study which languages
$L\subseteq\{0,1\}^*$  or functions $F:\{0,1\}^*\to\{0,1\}^*$ can
be computed by families $\CC=(\FC_n)_{n\in\PNat}$ of circuits, where
$\FC_n$ is a circuit of input dimension $n$. Such a family $\CC$ \emph{computes}
$F$ if for all $n\in\PNat$, $\FC_n$ computes the restriction $F_n$ of
$F$ to $\{0,1\}^n$. We say that $\CC$ \emph{decides} $L$ if it
computes its characteristic function. \emph{Non-uniform $\TC^0$} is
the class of all languages that are decided by a family
$\CC=(\FC_n)_{n\in\PNat}$ of threshold circuits of \emph{bounded
  depth} and \emph{polynomial size}.
There is also a class \emph{(dlogtime)
  uniform $\TC^0$} where the family $\CC$ itself is required to be
easily computable; we refer the reader to \cite{BarringtonIS90}. We
will never work directly with uniform circuit families, but instead
use a logical characterisation in terms of first-order logic with
counting (Theorem~\ref{theo:bis}).

An important fact that we shall use is that standard arithmetic
functions on the bit representations of natural numbers can be
computed by bounded-depth polynomial-size threshold circuits. We
define $\operatorname{ADD}_{2n}:\{0,1\}^{2n}\to\{0,1\}^{n+1}$ to be
the bitwise addition of two $n$-bit numbers (whose result may be an
$(n+1)$-bit number). We let $\operatorname{ADD}:\{0,1\}^*\to\{0,1\}^*$
be the function that coincides with $\operatorname{ADD}_{2n}$ on
inputs of even size and maps all inputs of odd size to $0$. Similarly,
we define $\operatorname{SUB}_{2n}:\{0,1\}^{2n}\to\{0,1\}^{2n}$ and
$\operatorname{SUB}:\{0,1\}^*\to\{0,1\}^*$ for the truncated
subtraction $m\dotminus n\coloneqq\max\{0,m-n\}$, $\operatorname{MUL}_{2n}:\{0,1\}^{2n}\to\{0,1\}^{2n}$ and
$\operatorname{MUL}:\{0,1\}^*\to\{0,1\}^*$ for multiplication.  We
also introduce a binary integer division function
$\operatorname{DIV}_{2n}:\{0,1\}^{2n}\to\{0,1\}^{n}$ mapping $n$-bit
numbers $k,\ell$ to $\floor{k/\ell}$ (with some default value, say
$0$, if $\ell=0$).  The \emph{iterated addition} function
$\operatorname{ITADD}_{n^2}:\{0,1\}^{n^2}\to\{0,1\}^{2n}$ and the
derived $\operatorname{ITADD}:\{0,1\}^*\to\{0,1\}^*$ add $n$
numbers of $n$-bits each. Finally, we need the less-than-or-equal-to
predicate $\operatorname{LEQ}_{2n}:\{0,1\}^{2n}\to\{0,1\}$ and $\operatorname{LEQ}:\{0,1\}^{*}\to\{0,1\}$.

\begin{lemma}[\cite{BarringtonIS90,ChandraSV84},\cite{Hesse01}]
  \label{lem:tc0ar}
  $\operatorname{ADD}$, $\operatorname{MUL}$, $\operatorname{DIV}$, $\operatorname{ITADD}$, and $\operatorname{LEQ}$ are computable by
 dlogtime uniform families of bounded-depth polynomial-size threshold
  circuits. 
\end{lemma}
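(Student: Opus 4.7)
The plan is to address the five operations in order of increasing difficulty, using the classical reductions and then invoking the deep results for the hard cases. I would first dispose of $\operatorname{ADD}$ and $\operatorname{LEQ}$, which are actually in dlogtime uniform $\AC^0$: for $\operatorname{LEQ}$ one identifies, using a constant-depth polynomial-size formula, the most significant bit on which the two inputs disagree; for $\operatorname{ADD}$ the standard carry-lookahead shows that the carry into position $i$ is a disjunction over $j \le i$ of conjunctions of ``generate at $j$'' and ``propagate at $j+1,\ldots,i-1$'', which is a depth-$2$ polynomial-size formula, and then each output bit is an XOR of three such signals. Uniformity is immediate because each gate is addressed by a short index describing its position and role.

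Next I would reduce $\operatorname{MUL}$ to $\operatorname{ITADD}$: schoolbook multiplication of two $n$-bit numbers produces $n$ shifted $n$-bit summands, and their sum is the product. The shifting is a trivial wiring step that adds only one layer of AND gates, so $\operatorname{MUL}$ reduces to $\operatorname{ITADD}$ in $\AC^0$ with a dlogtime uniform reduction. Thus it suffices to handle $\operatorname{ITADD}$ and $\operatorname{DIV}$.

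For $\operatorname{ITADD}$ I would follow the approach of Chandra, Stockmeyer, and Vishkin. The key observation is that to add $n$ numbers of $n$ bits each, one first computes, for every output bit position $j$, the column count $c_j \in \{0,\ldots,n\}$ of $1$s appearing in column $j$ of the summands. Each $c_j$ is encoded in $O(\log n)$ bits, and each of its bits is itself a threshold predicate on the $n$ input bits in column $j$, hence computable by a single threshold gate. The output is then $\sum_j c_j 2^j$, which is the sum of $O(n)$ shifted numbers of $O(\log n)$ bits each. A second application of column counting together with threshold-gate carry handling collapses this to a constant-depth polynomial-size threshold circuit. The dlogtime indexing of gates follows because all gate identities are determined by triples $(j, t, \ell)$ of bit positions in $[n]$ together with a role tag.

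The main obstacle is $\operatorname{DIV}$, and this is where Hesse's theorem is genuinely needed: the naive long-division algorithm has $\Theta(n)$ sequential steps, and it is far from obvious that truncated quotients can be computed in constant depth. I would follow Hesse's strategy of representing numbers in Chinese Remainder Representation modulo $O(n)$ small primes, each of $O(\log n)$ bits. In this representation addition, subtraction, and multiplication are componentwise and trivially reduce to $\operatorname{ITADD}$ and $\operatorname{MUL}$ modulo small numbers, which fit inside constant-depth threshold circuits. The technical heart of the argument, and the step I would expect to require the most care, is the conversion back from CRR to binary: one approximates the fractional parts $\{x \cdot p_i^{-1}\}$ using $\operatorname{ITADD}$ and shows that the truncation errors are controllable. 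Hesse \cite{Hesse01} shows that the precomputed primes, their residues of powers of two, and the CRR tables all have dlogtime uniform descriptions, and putting this together yields dlogtime uniform bounded-depth polynomial-size threshold circuits for $\operatorname{DIV}$. Combining with the preceding reductions proves the lemma.
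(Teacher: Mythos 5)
Your proposal is correct and takes essentially the same approach as the paper, which itself treats this as a known fact and defers to Chandra–Stockmeyer–Vishkin (for $\operatorname{ADD}$, $\operatorname{MUL}$, $\operatorname{ITADD}$, $\operatorname{LEQ}$) and to Hesse and Hesse–Allender–Barrington (for $\operatorname{DIV}$, where the dlogtime uniformity is the genuinely hard part). Your sketch of the column-counting construction for $\operatorname{ITADD}$ and the CRR strategy for $\operatorname{DIV}$ is an accurate summary of what those references do.
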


The fact that $\operatorname{ADD}$, $\operatorname{MUL}$,
$\operatorname{ITADD}$, and $\operatorname{LEQ}$ are computable by
families of bounded-depth polynomial-size threshold circuits goes back
to \cite{ChandraSV84} (also see \cite{Vollmer99}). The arguments given
there are non-uniform, but it is not hard to see that they can be
``uniformised'' \cite{BarringtonIS90}. The situation for $\operatorname{DIV}$ is more
complicated. It was known since the mid 1980s that
$\operatorname{DIV}$ is computable by a non-uniform (or
polynomial-time uniform) family of bounded-depth polynomial-size
threshold circuits, but the uniformity was only established 15 years
later in \cite{Hesse01,HesseAB02}.

\subsection{Feedforward Neural Networks}
\label{sec:fnn}
It will be convenient for us to formalise feedforward neural networks
(a.k.a.~multilayer perceptrons, MLPs) in a similar way as Boolean
circuits. A more standard ``layered'' presentation of FNNs can easily be seen as a special case.  A
\emph{feedforward neural network architecture} $\FA$ is a triple
\mbox{$\big(V,E,(\Fa_v)_{v\in V}\big)$}, where $(V,E)$ is a
directed acyclic graph that we call the \emph{skeleton} of $\FA$ and
for every vertex $v\in V$, $\Fa_v:\Real\to\Real$ is a continuous
function that we call the \emph{activation function} at $v$. A
\emph{feedforward neural network (FNN)} is a tuple
$\FF=(V,E, (\Fa_v)_{v\in V},\vec w,\vec b)$, where
$\big(V,E,(\Fa_v)_{v\in V} \big)$ is an FNN architecture,
$\vec w=(w_e)_{e\in E}\in\Real^E$ associates a \emph{weight} $w_e$
with every edge $e\in E$, and $\vec b=(b_v)_{v\in V}\in\Real^{V}$
associates a \emph{bias} $b_v$ with every node $v\in V$. As for
circuits, the sources of the dag are \emph{input nodes}, and we denote them by
$X_1,\ldots,X_p$. Sinks are \emph{output
  nodes}, and we denote them by $Y_1,\ldots,Y_q$.  We define the
\emph{order} $|\FF|$, the \emph{depth} $\depth(\FF)$, the \emph{input
  dimension}, and the \emph{output dimension} of $\FF$ in the same way
as we did for circuits.

To define the semantics, let $\FA=\big(V,E,(\Fa_v)_{v\in V}\big)$ be
an FNN architecture of input dimension $p$ and output dimension $q$. For each node $v\in V$, we define a function
$f_{\FA,v}:\Real^p\times\Real^E\times\Real^{V}\to\Real$
inductively as follows. Let $\vec x=(x_1,\ldots,x_p)\in\Real^p$, $\vec
w=(w_e)_{e\in E}\in\Real^E$, and $\vec
b=(b_v)_{v\in V}\in\Real^{V}$.
  Then
  \[
    f_{\FA,v}(\vec x,\vec w,\vec b)\coloneqq
    \begin{cases}
      x_i&\text{if $v$ is the input node $X_i$},\\
      \Fa_v\left(b_v+\sum_{v'\in N^-(v)}f_{\FA,v'}(\vec x,\vec
      w,\vec b)\cdot w_{v'v}\right)&\text{if $v$ is not an input node}.
    \end{cases}
  \]
  We define $f_{\FA}: \Real^p\times\Real^E\times\Real^{V}\to\Real^q$ by
  \[
    f_{\FA}(\vec x,\vec w,\vec b)\coloneqq\big(f_{\FA,Y_1}(\vec x,\vec w,\vec
    b),\ldots, f_{\FA,Y_q}(\vec x,\vec w,\vec
    b) \big).
  \]
  For an FNN $\FF=\big(V,E,
  (\Fa_v)_{v\in V},\vec w,\vec b\big)$ with architecture $\FA=\big(V,E,
  (\Fa_v)_{v\in V} \big)$ we define
  functions $f_{\FF,v}:\Real^p\to\Real$ for $v\in V$ and $f_{\FF}:\Real^p\to\Real^q$ by
  \begin{align*}
    f_{\FF,v}(\vec x)&\coloneqq f_{\FA,v}(\vec x,\vec w,\vec b),\\
    f_{\FF}(\vec x)&\coloneqq f_{\FA}(\vec x,\vec w,\vec b).
  \end{align*}
As for circuits, to simplify the notation we usually denote the
functions $f_{\FA}$ and $f_{\FF}$ by $\FA$ and  $\FF$, respectively.

\begin{myremark}
  The reader may have noticed that we never use the activation
  function $\Fa_v$ or the bias $b_v$ for input nodes $v=X_i$. We only
  introduce them for notational convenience. \emph{We may always assume that
    $\Fa_v\equiv0$ and $b_v=0$ for all input nodes $v$.}
  \uend
\end{myremark}

Typically, the weights $w_e$ and biases $b_e$ are learned from data. We
are not concerned with the learning process here, but only with the
functions computed by pre-trained models. 

\emph{Throughout this article, we assume the activation functions
  in neural networks to be Lipschitz
  continuous.} Our theorems can also be proved with weaker assumptions
on the activation functions, but assuming Lipschitz continuity
simplifies the proofs, and since all activation functions typically
used in practice are Lipschitz continuous, there is no harm in making
this assumption. Since linear functions are Lipschitz continuous and
the composition of Lipschitz continuous functions is Lipschitz
continuous as well, it follows that for all FNNs $\FF$ the function
$f_\FF$ is Lipschitz continuous. A consequence of the Lipschitz
continuity is that the output of an FNN can be linearly bounded in the
input. For later reference, we state these facts as a lemma.

\begin{lemma}\label{lem:fnngrowth}
  Let $\FF$ be an FNN of input dimension $p$.
  \begin{enumerate}
  \item There is a Lipschitz constant $\lambda=\lambda(\FF)\in\PNat$ for
    $\FF$ such that for all
    $\vec x,\vec x'\in\Real^p$,
    \[
      \inorm{\FF(\vec x)-\FF(\vec x')}\le\lambda\inorm{\vec
        x-\vec x'}.
    \]
  \item There is a $\gamma=\gamma(\FF)\in\PNat$ such that for all
    $\vec x\in\Real^p$,
    \[
      \inorm{\FF(\vec x)}\le\gamma\cdot\big(\inorm{\vec
        x}+1\big).
    \]
  \end{enumerate}
\end{lemma}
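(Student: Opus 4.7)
The plan is to prove both claims by induction on the dag structure of $\FF$, defining appropriate constants for every node and then reading off $\lambda$ and $\gamma$ as maxima over the output nodes $Y_1,\ldots,Y_q$. The inductions rest on two elementary facts that I would use throughout: (i) linear combinations and compositions of Lipschitz continuous functions are again Lipschitz continuous; and (ii) any Lipschitz continuous $\Fa:\Real\to\Real$ with constant $\mu$ satisfies $|\Fa(y)|\le|\Fa(0)|+\mu|y|$, which is the source of the linear growth in (2).

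For part (1), I would assign to each node $v$ a Lipschitz constant $\lambda_v\in\PNat$ for $f_{\FF,v}$ in topological order. For an input node $v=X_i$ one has $f_{\FF,v}(\vec x)=x_i$ and $\lambda_v=1$ works. For a non-input node $v$ with in-neighbours $v_1,\ldots,v_k$ and activation function $\Fa_v$ of Lipschitz constant $\mu_v$, writing
\[
f_{\FF,v}(\vec x)-f_{\FF,v}(\vec x')=\Fa_v\Big(b_v+\sum_i w_{v_iv}f_{\FF,v_i}(\vec x)\Big)-\Fa_v\Big(b_v+\sum_i w_{v_iv}f_{\FF,v_i}(\vec x')\Big)
\]
and applying $\mu_v$-Lipschitz continuity of $\Fa_v$ together with the inductive bounds $|f_{\FF,v_i}(\vec x)-f_{\FF,v_i}(\vec x')|\le\lambda_{v_i}\inorm{\vec x-\vec x'}$ leads to $\lambda_v=\lceil\mu_v\sum_i|w_{v_iv}|\lambda_{v_i}\rceil$. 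Setting $\lambda=\max_{j\in[q]}\lambda_{Y_j}$ yields the claim, since $\inorm{\FF(\vec x)-\FF(\vec x')}$ is the maximum of the coordinate differences.

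For part (2), the same induction produces constants $\gamma_v\in\PNat$ with $|f_{\FF,v}(\vec x)|\le\gamma_v(\inorm{\vec x}+1)$. For $v=X_i$ the constant $\gamma_v=1$ suffices. For a non-input $v$, combining fact (ii) for $\Fa_v$ with the triangle inequality and the inductive bounds gives
\[
|f_{\FF,v}(\vec x)|\le|\Fa_v(0)|+\mu_v|b_v|+\mu_v\sum_i|w_{v_iv}|\gamma_{v_i}(\inorm{\vec x}+1),
\]
and since $\inorm{\vec x}+1\ge 1$ every additive constant can be absorbed into a single linear term, so one sets $\gamma_v=\lceil|\Fa_v(0)|+\mu_v|b_v|+\mu_v\sum_i|w_{v_iv}|\gamma_{v_i}\rceil$. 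Taking $\gamma=\max_{j\in[q]}\gamma_{Y_j}$ finishes the argument.

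I do not foresee any real obstacle: the statement is essentially a bookkeeping exercise that follows directly from the blanket assumption, made immediately before the lemma, that all activation functions are Lipschitz continuous. The only minor care needed is to ensure $\lambda$ and $\gamma$ come out as positive integers, which is handled by the ceiling in each recursive step.
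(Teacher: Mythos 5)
Your proof is correct. For part (1) it is essentially the paper's argument made explicit: the paper simply invokes the fact that compositions of Lipschitz functions are Lipschitz, whereas you unwind that into a topological-order induction that assigns each node $v$ a constant $\lambda_v$; the content is the same. For part (2), however, you take a genuinely different route. The paper derives (2) directly from (1) in one line: since $\inorm{\FF(\vec x)}\le\inorm{\FF(\vec x)-\FF(\vec 0)}+\inorm{\FF(\vec 0)}\le\lambda(\FF)\inorm{\vec x}+\inorm{\FF(\vec 0)}$, it suffices to take $\gamma=\max\{\lambda(\FF),\inorm{\FF(\vec 0)}\}$ (rounded up to an integer). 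You instead run a second node-by-node induction from the bound $|\Fa_v(y)|\le|\Fa_v(0)|+\mu_v|y|$. Both are valid; the paper's derivation is shorter because it reuses (1), while yours is self-contained and gives per-node constants $\gamma_v$ that could be useful if one wanted more fine-grained control (indeed the paper later proves such refined per-node growth bounds separately in Lemma~\ref{lem:ub1}). Either way, the only care needed is rounding to $\PNat$, which you handle with the ceilings.
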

\begin{proof}
  Assertion (1) is simply a consequence of the fact that composition of Lipschitz continuous functions is Lipschitz
  continuous. For (2), note that by (1) we have
  \[
    \inorm{\FF(\vec x)}\le\lambda(\FF)\inorm{\vec
      x}+\inorm{\FF(\vec 0)}.
  \]
  We let
  $\gamma\coloneqq\max\big\{\lambda(\FF),\inorm{\FF(\vec 0)}\big\}$.
\end{proof}

\begin{myremark}
  The reader may wonder why we take the constants $\lambda,\gamma$  in
  Lemma~\ref{lem:fnngrowth} to be integers. The reason is that we can
  easily represent positive integers by closed terms $(\one+\ldots+\one)$
  in the logic $\FOC$, and this will be convenient later.
  \uend
\end{myremark}

We often make further restrictions on the FNNs we consider.
An FNN is
\emph{piecewise linear} if all its activation functions are piecewise
linear. An FNN is \emph{rational piecewise linear} if all weights and
biases are dyadic rationals and all activation functions are rational
piecewise linear. The relu function and the linearised sigmoid
function (see Example~\ref{exa:pl-activation}) are typical examples of
rational piecewise linear activation functions.  An FNN is \emph{rpl
  approximable} if all its activation functions are rpl
approximable. The logistic function and the hyperbolic tangent function
(see Example~\ref{exa:rpl-app}) are typical examples of rpl
approximable activation functions. 

It is a well known fact that FNNs can simulate threshold
circuits.

\begin{lemma}\label{lem:c2fnn}
  For every threshold circuit
$\FC$ of input dimension $p$ there is an FNN $\FF=(V,E,(\Fa_v)_{v\in V},$ $(w_e)_{e\in  E},(b_v)_{v\in V})$ of input dimension $p$ such that
$|\FF|=O(|\FC|)$, $\Fa_v=\relu$
for all $v$, $w_e\in\{1,-1\}$ for all $e$, $b_v\in\Nat$ is bounded by the maximum
threshold in $\FC$ for all $v$, and $\FC(\vec x)=\FF(\vec x)$ for all $\vec x\in\{0,1\}^p$.
\end{lemma}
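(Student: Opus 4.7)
The plan is to construct the FNN $\FF$ by replacing each gate of $\FC$ by a small $\relu$-gadget of constant size, built so that every weight is $\pm 1$ and every bias is a non-negative integer bounded by $t_{\max}$, the maximum threshold occurring in $\FC$. Input nodes of $\FF$ correspond exactly to input nodes $X_1,\dots,X_p$ of $\FC$; they pass their value through unchanged.

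The core observation is that for Boolean inputs $y_1,\dots,y_k\in\{0,1\}$ and integer $t\ge 1$, the threshold-$t$ function can be written as
\[
  \mathbb{1}\Big[\textstyle\sum_{i=1}^k y_i\ge t\Big]
  \;=\;
  \relu\Big(\textstyle\sum_i y_i-(t-1)\Big)-\relu\Big(\textstyle\sum_i y_i-t\Big),
\]
as one easily checks case by case on the integer value of $\sum_i y_i$. The obstacle is that this asks for \emph{negative} bias contributions $-(t-1)$ and $-t$, which the lemma forbids. To simulate a negative shift of size $c\in\{1,\dots,t_{\max}\}$ I introduce an auxiliary ``constant'' node $K_c$ of in-degree $0$ with bias $c$ and activation $\relu$, so that $K_c$ outputs $c$; adding a single edge of weight $-1$ from $K_c$ into a successor effectively realises the bias $-c$. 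All biases used are then in $\{0,1,\dots,t_{\max}\}$ and all edge weights are in $\{-1,+1\}$.

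With this in hand, the gadgets are:
\textbf{(i) Threshold-$t$ gate} (which subsumes AND as threshold-$k$ and OR as threshold-$1$): create $K_{t-1}$ and $K_t$; introduce nodes $A$ and $B$ with bias $0$, activation $\relu$, an incoming $+1$-edge from each $y_i$, and a single $-1$-edge from $K_{t-1}$ resp.\ $K_t$, so that $A=\relu(\sum_i y_i-(t-1))$ and $B=\relu(\sum_i y_i-t)$; finally add an output node $V$ with bias $0$, activation $\relu$, an incoming $+1$-edge from $A$ and a $-1$-edge from $B$. Since $A\ge B\ge 0$ and $A-B\in\{0,1\}$, we have $\relu(A-B)=A-B$, giving the correct Boolean output. \textbf{(ii) Negation gate} for input $y$: create $K_1$ and an output node with bias $0$, activation $\relu$, and incoming edges $+1$ from $K_1$ and $-1$ from $y$, computing $\relu(1-y)=1-y$. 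Constant nodes may be shared across gates, but even without sharing each gate of $\FC$ contributes at most a constant number of new nodes to $\FF$.

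It remains to verify the claimed properties. Inductively, every non-input node of $\FF$ evaluates on Boolean inputs to a value in $\{0,1\}$, and the values at the output nodes of the gadgets replacing the output gates of $\FC$ agree with $\FC(\vec x)$; this is exactly the case analysis performed in the displayed identity above. All activation functions are $\relu$, every edge weight lies in $\{-1,+1\}$, and every bias is in $\{0,1,\dots,t_{\max}\}$ as required. Finally, since each gate of $\FC$ is replaced by $O(1)$ new nodes, we get $|\FF|=O(|\FC|)$. The main subtlety in the argument is purely bookkeeping, namely routing the shifts $-(t-1)$ and $-t$ through constant gadgets so that the non-negativity constraint on biases is respected without blowing up the node count.
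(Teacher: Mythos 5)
Your core construction is exactly the paper's: simulate $\FC$ gatewise, using $\relu\big(\sum_i y_i-t+1\big)-\relu\big(\sum_i y_i-t\big)$ for a threshold-$t$ gate and $\relu(1-y)$ for negation on Boolean values; the case analysis, the extra node computing $\relu(A-B)=A-B\in\{0,1\}$, and the $O(1)$-nodes-per-gate size bound are all fine.

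The additional machinery you introduce to keep the biases nonnegative is where the argument fails inside the paper's formalism. By the paper's definition of an FNN, \emph{every source of the dag is an input node}, and its value is read from the input vector ($f_{\FF,X_i}(\vec x)=x_i$); the bias and activation at a source are never applied. So your in-degree-$0$ node $K_c$ does not output $c$: it becomes an extra input node, the input dimension of $\FF$ is no longer $p$, and the claimed identity $\FF(\vec x)=\FC(\vec x)$ for $\vec x\in\{0,1\}^p$ is not even well-typed, let alone true. The repair is small: either derive the constant from a genuine input, e.g.\ an intermediate node $N$ with a single weight $-1$ edge from $X_1$ (so $N=\relu(-x_1)=0$ on Boolean inputs) feeding $K_c$, which then has bias $c$ and computes $\relu(c+0)=c$; or simply do what the paper's own proof does and use the biases $1-t$, $-t$, $1$ directly, reading the bound on $b_v$ as a bound on its magnitude (note that with literally nonnegative biases the paper's two-term identity would not comply either, so the constraint you tried to honour is best read that way). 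As written, however, the constant-node gadget is a concrete step that does not work under the given definitions, and the proof needs this correction.
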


\begin{proof}
  We we simulate $\FC$ gatewise, noting that a Boolean $\neg x$
  negation can be expressed as $\relu(1-x)$ and a threshold
  $\sum x_i\ge t$ can be expressed as
  $\relu(\sum x_i-t+1)-\relu(\sum x_i-t)$ for Boolean inputs $x,x_i$.
\end{proof}

\subsection{Relational Structures}
\label{sec:graph}
A \emph{vocabulary} is a finite set $\tau$ of
relation symbols. Each relation symbol $R\in\tau$ has an \emph{arity}
$\ar(R)\in\Nat$.
A \emph{$\tau$-structure} $A$ consists of a finite set
$V(A)$, the \emph{universe} or \emph{vertex set}, and a relation $R(A)\subseteq V(A)^k$ for every relation symbol
$R\in\tau$ of arity $\ar(R)=k$. For a $\tau$-structure $A$ and a subset $\tau'\subseteq\tau$, the
\emph{restriction of $A$ to $\tau'$} is the $\tau'$-structure
$A|_{\tau'}$ with $V(A|_{\tau'})\coloneqq V(A)$ and
$R(A|_{\tau'})\coloneqq R(A)$ for all $R\in\tau'$.
The \emph{order} of a structure $A$ is $|A|\coloneqq |V(A)|$. 

For example, a graph may be viewed as an $\{E\}$-structure $G$, where
$E$ is a binary symbol, such that $E(G)$ is symmetric and
irreflexive. A pair $(G,\Cb)\in\CGS^\bool_\ell$, that is, a graph with a
Boolean signal $\Cb:V(G)\to\{0,1\}^\ell$, may be viewed as an
$\{E,P_1,\ldots,P_\ell\}$-structure $G_{\Cb}$  with
    $V(G_{\Cb})=V(G)$, $E(G_{\Cb})=E(G)$, and $P_i(G_{\Cb})=\{v\in
    V(G)\mid\Cb(v)_i=1\}$. We may think of
the $P_i$ as labels and hence refer to
$\{E,P_1,\ldots,P_\ell\}$-structures whose $\{E\}$-restrictions are
undirected graphs as
\emph{$\ell$-labeled graphs}. In the
following, we do not distinguish between graphs with Boolean
signals and the corresponding labeled graphs. 

A \emph{$k$-ary query} on a class $\CC$ of structures is an
equivariant mapping ${\CQ}$ that associates with each structure
$A\in\CC$ a mapping ${\CQ}(A):V(A)^k\to\{0,1\}$. In this article, we are mainly interested in $0$-ary (or \emph{Boolean}) and unary
queries on (labelled) graphs. 
We observe that a Boolean query on $\ell$-labelled
graphs is an invariant
mapping from $\CGS_\ell^\bool$ to $\{0,1\}$ and a unary query is an
equivariant signal transformations from $\CGS_\ell^\bool$ to
$\CGS_1^\bool$.

\section{First-Order Logic with Counting}
\label{sec:logic}
Throughout this section, we fix a vocabulary $\tau$.
We introduce two types of variables, \emph{vertex variables} ranging
over the vertex set of a structure, and \emph{number variables}
ranging over $\Nat$. We typically denote vertex variables by $x$ and
variants such as $x',x_1$, number variables by $y$ and variants, and we
use $z$ and variants to refer to either vertex or number variables.

We define the sets of
\emph{$\FOC$-formulas} and \emph{$\FOC$-terms} of vocabulary $\tau$
inductively as follows:
\begin{itemize}
\item All number variables and $\zero,\one$ are $\FOC$-terms.

\item For all  $\FOC$-terms $\theta,\theta'$ the expressions
  $\theta+ \theta'$ and $\theta\cdot \theta'$
  are $\FOC$-terms.

\item For all  $\FOC$-terms $\theta,\theta'$ the expression
  $
   \theta\le\theta'
  $
  is an $\FOC$-formula.

\item For all vertex variables $x_1,\ldots,x_k$ and all $k$-ary $R\in
  \tau$ the expressions
  $
  x_1=x_2
  $
  and $R(x_1,\ldots,x_k)$
  are $\FOC$-formulas.
\item For all $\FOC$-formulas $\phi,\psi$ the expressions
  $\neg\phi$ and $\phi\wedge\psi$
  are $\FOC$-formulas.
\item For all $\FOC$-formulas $\phi$, all $k,\ell\in\Nat$ with
  $k+\ell\ge 1$, all vertex variables
  $x_1,\ldots,x_k$, all number variables $y_1,\ldots,y_\ell$, and all
  $\FOC$-terms $\theta_1,\ldots,\theta_\ell$,
  \[
    \#(x_1,\ldots,x_k,y_1<\theta_1,\ldots,y_\ell<\theta_\ell).\phi 
  \]
  is an $\FOC$-term (a \emph{counting term}).
\end{itemize}
A \emph{$\tau$-interpretation} is a pair $(A,\mathcal a)$, where $A$
is a $\tau$-structure and $\Ca$ is an \emph{assignment over $A$}, that
is, a mapping from the set of all variables to $V(A)\cup\Nat$ such
that $\Ca(x)\in V(A)$ for every vertex variable $x$ and
$\Ca(y)\in\Nat$ for every number variable $y$. For a tuple $\vec z=(z_1,\ldots,z_k)$ of distinct variables, and a
tuple $\vec c=(c_1,\ldots,c_k)\in (V(A)\cup\Nat)^k$ such that $c_i\in
V(A)$ if $z_i$ is a vertex variable and $c_i\in\Nat$ if $z_i$ is a
number variable, we let $\Ca\frac{\vec c}{\vec z}$ be the
assignment with $\Ca\frac{\vec c}{\vec z}(z_i)=c_i$ and
$\Ca\frac{\vec c}{\vec z}(z)=\Ca(z)$ for all $z\not\in\{z_1,\ldots,z_k\}$.
We inductively define a
value $\sem{ \theta}^{(A,\Ca)}\in\Nat$ for
each $\FOC$-term $\theta$ and a Boolean value
$\sem{ \phi}^{(A,\Ca)}\in\{0,1\}$ for each $\FOC$-formula
$\phi$.
\begin{itemize}
\item We let $\sem{
  y}^{(A,\Ca)}\coloneqq\Ca(y)$ and $\sem{\zero}^{(A,\Ca)}\coloneqq 0$,
$\sem{\one}^{(A,\Ca)}\coloneqq 1$.
\item We let $\sem{
  \theta+\theta'}^{(A,\Ca)}\coloneqq \sem{
  \theta}^{(A,\Ca)}+\sem{ \theta'}^{(A,\Ca)}$
  and $\sem{
  \theta\cdot\theta'}^{(A,\Ca)}\coloneqq \sem{
  \theta}^{(A,\Ca)}\cdot\sem{
  \theta'}^{(A,\Ca)}$. 
\item We let
  $\sem{ \theta\le\theta'}^{(A,\Ca)}=1$ if and
  only if
  $\sem{ \theta}^{(A,\Ca)}\le\sem{
  \theta'}^{(A,\Ca)}$.
\item We let $\sem{
  x_1=x_2}^{(A,\Ca)}=1$ if and only if
  $\Ca(x_1)=\Ca(x_2)$ and $\sem{
  R(x_1,\ldots,x_k)}^{(A,\Ca)}=1$ if and only if $\big(\Ca(x_1),\ldots,\Ca(x_k)\big)\in R(A)$.
\item We let $\sem{\neg
  \phi}^{(A,a)}\coloneqq 1-\sem{\phi}^{(A,a)}$ and $\sem{\phi\wedge\psi}^{(A,a)}\coloneqq \sem{
  \phi}^{(A,a)}\cdot \sem{
  \phi}^{(A,a)}$.
\item We let
  \[
    \sem{
    \#(x_1,\ldots,x_k,y_1<\theta_1,\ldots,y_\ell<\theta_\ell).\phi}^{(A,\Ca)}
  \]
  be the number of tuples $(a_1,\ldots,a_k,b_1,\ldots,b_\ell)\in
  V(A)^k\times\Nat^\ell$ such that
  \begin{itemize}
  \item $b_i< \sem{
      \theta_i}^{(A,\Ca\frac{(a_1,\ldots,a_k,b_1,\ldots,b_{i-1})}{(x_1,\ldots,x_k,y_1,\ldots,y_{i-1})}
                      )}$ for all
    $i\in[\ell]$;
  \item $\sem{ \phi}^{(A,\Ca\frac{(a_1,\ldots,a_k,b_1,\ldots,b_{\ell})}{(x_1,\ldots,x_k,y_1,\ldots,y_{\ell})})}=1$.
  \end{itemize}
  Note that we allow \emph{adaptive bounds}: the bound $\theta_i$ for the variable
  $y_i$ may depend on the values of all previous variables $x_j$ for
  $j\in[k]$ and $y_j$ for $j<i$. While it
  can be shown that this does not increase the expressive power of the
  plain logic,\footnote{This is a consequence of
    Lemma~\ref{lem:termbound}, by which we can replace all bounds
    $\theta$ in counting terms by $\ord^k$ for a suitable $k$, where
    $\ord\coloneqq\# x. x=x$ is a term defining the order of the input
    structure.} the adaptive bounds do add power to an
  extension of the logic with function variables (see
  Section~\ref{sec:arithmetic}).
\end{itemize}
For $\FOC$-formulas $\phi$, instead of $\sem{
\phi}^{(A,\Ca)}=1$ we also write $(A,\Ca)\models\phi$.

An \emph{$\FOC$-expression} is either an $\FOC$-term or an
$\FOC$-formula. The set $\free(\xi)$ of \emph{free variables} of an
$\FOC$-expression $\xi$
is defined inductively in the obvious way, where for a counting term
we let
\begin{align*}
  &\free\big(\#(x_1,\ldots,x_k,y_1<\theta_1,\ldots,y_\ell<\theta_\ell).\phi\big)\coloneqq\\
  &\hspace{2cm}\big(\free(\phi)\setminus\{x_1,\ldots,x_k,y_1,\ldots,y_\ell\}\big)\cup\bigcup_{i=1}^\ell\free\big(\theta_i\setminus\{x_1,\ldots,x_k,y_1,\ldots,y_{i-1}\}\big).
\end{align*}
A \emph{closed expression} is an expression without free
variables. Depending on the type of expression, we also speak of
\emph{closed terms} and \emph{closed formulas}.

For an expression $\xi$, the notation $\xi(z_1,\ldots,z_k)$ stipulates that
$\free(\xi)\subseteq\{z_1,\ldots,z_k\}$. It is easy to see that the
value $\sem{\xi}^{(A,\Ca)}$ only depends on the
interpretations $c_i\coloneqq\Ca(z_i)$ of the free variables. Thus we
may avoid explicit reference to the assignment $\Ca$ and write
$\sem{\xi}^A(c_1,\ldots,c_k)$ 
instead of $\sem{\xi}^{(A,\Ca)}$. If $\xi$ is a
closed expression, we just write $\sem{\xi}^A$. For formulas
$\phi(z_1,\ldots,z_k)$,
we also write $A\models\phi(c_1,\ldots,c_k)$ instead of
$\sem{\phi}^A(c_1,\ldots,c_k)=1$, and for closed formulas $\phi$ we
write $A\models\phi$.

Observe that every $\FOC$-formula $\phi(x_1,\ldots,x_k)$
of vocabulary $\tau$ defines a $k$-ary query on
the class of $\tau$-structures, mapping a structure $A$ to the set of
all $(a_1,\ldots,a_k)\in A^k$ such that
$A\models\phi(a_1,\ldots,a_k)$. 

We defined the logic $\FOC$ with a minimal syntax, avoiding
unnecessary operators. However, we can use other standard arithmetical
and logical operators as abbreviations:
\begin{itemize}
\item For $n\ge 2$, we can use $n$ as an abbreviation for the
  corresponding sum of $1$s.
\item We use $\ord$ as an abbreviation for the term $\# x.x=x$. Then
  $\sem{\ord}^A=|A|$ for all structures $A$, that is, $\ord$ defines
  the order of a structure.
\item We can express the relations $=,\ge,<,>$ on $\Nat$ using Boolean
  combinations and $\le$.
\item We can express Boolean connectives like $\vee$ or $\to$ using
  $\neg$ and $\wedge$.
\item For vertex variables $x$, we can express existential
  quantification $\exists x.\phi$ as $1\le\# x.\phi$. Then we can
  express universal quantification $\forall x.\phi$ using $\exists x$
  and $\neg$ in the usual way.

  In particular, this means that we can view first-order logic $\FO$
  as a fragment of $\FOC$.
\item For number variables $y$, we can similarly express bounded
  quantification $\exists y<\theta.\phi$ and
  $\forall y<\theta.\phi$.
  \item In counting terms, we do not have to use strict inequalities to
    bound number variables. For example, we write
    $\#(x,y\le\theta).\phi$ to abbreviate $\#(x,y<\theta+1).\phi$.
\item We can express truncated subtraction $\dotminus$, 
  minimum and maximum of two numbers, and integer division:
  \begin{align*}
    y\dotminus y'&\text{ abbreviates }\# (y''< y).(y'\le y''),\\
    \logic{min}(y,y')&\text{ abbreviates }\#(y''< y).y''<y',\\
    \logic{max}(y,y')&\text{ abbreviates }\#(y''< y+y').(y''<y\vee y''<y'),\\
    \logic{div}(y,y')&\text{ abbreviates }\#(y''\le
                       y).(0<y'\wedge 0<y''\wedge y'\cdot y''\le y).
  \end{align*}
  Note that for all structures $A$ and $b,b'\in\Nat$ we have
  \[
    \sem{\logic{div}}^A(b,b')=
    \begin{cases}
      \floor{\frac{b}{b'}}&\text{if }b'\neq 0,\\
      0&\text{if }b'=0.
    \end{cases}
  \]
\end{itemize}

\begin{myremark}
  There are quite a few different versions of first-order logic with
  counting in the literature. The logics that only involve counting
  quantifiers $\exists^{\ge n}$ for constant $n$ are strictly weaker
  than our $\FOC$, and so are logics with modular counting
  quantifiers.

  Counting logics with quantification over numbers have first been
  suggested, quite informally, by Immerman \cite{Immerman87}. The
  2-sorted framework was later formalised by Grädel and
  Otto~\cite{GradelO93}. Essentially, our $\FOC$ corresponds to Kuske
  and Schweikardt's \cite{KuskeS17} $\logic{FOCN}(\{\mathbb P_\le\})$,
  with one important difference: we allow counting terms also over
  number variables. This makes no difference over ordered structures,
  but it makes the logic stronger over unordered structures (at least
  we conjecture that it does; this is
  something that with current techniques one can probably only prove modulo some complexity
  theoretic assumptions). Other differences, such as that
   Kuske
  and Schweikardt use the integers for the numerical part, whereas we
  use the non-negative integers, are inessential. Importantly,
  the two logics and other first-order logics with counting, such as
  first-order logic with a majority quantifier, are
  equivalent over ordered arithmetic structures and thus all
  capture the complexity class uniform $\TC^0$, as will be discussed
  in the next section.
  \uend
\end{myremark}

A simple lemma that we will frequently use states that all
$\FOC$-terms are polynomially bounded. At this point, the reader may
safely ignore the reference to function variables in the assertion of
the lemma; we will only introduce them in
Section~\ref{sec:2nd-order}. We just mention them here to avoid
confusion in later applications of the lemma.

\begin{lemma}\label{lem:termbound}
  For every $\FOC$-term $\theta(x_1,\ldots,x_k,y_1,\ldots,y_k)$
  without function variables there
  is a polynomial $\pi(X,Y)$ such that for all structures $A$, all
  $a_1,\ldots,a_k\in V(A)$, and all $b_1,\ldots,b_\ell\in\Nat$ it
  holds that
  \[
    \sem{\theta}^A(a_1,\ldots,a_k,b_1,\ldots,b_\ell)\le
    \pi\Big(|A|,\max\big\{b_i\bigmid i\in[\ell]\big\}\Big).
  \]
\end{lemma}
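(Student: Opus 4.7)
The plan is a straightforward structural induction on the term $\theta$, with the only nontrivial case being the counting term. Note that the absence of function variables is what makes this go through, since then every subterm is bounded purely in terms of $|A|$ and the values of its free number variables.

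For the base cases, $\zero$ and $\one$ evaluate to constants and a number variable $y_i$ evaluates to $b_i \le B \coloneqq \max_j b_j$; in all cases they are bounded by $B+1$. For $\theta + \theta'$ and $\theta\cdot\theta'$, the bounds follow by combining the polynomial bounds $\pi_\theta, \pi_{\theta'}$ given by induction, using that sums and products of polynomials in $(X,Y)$ are again polynomials in $(X,Y)$.

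The main case is a counting term $\theta = \#(x_1,\ldots,x_k,y_1<\theta_1,\ldots,y_\ell<\theta_\ell).\phi$. By definition, $\sem{\theta}^{(A,\Ca)}$ is at most the number of tuples $(a_1,\ldots,a_k,b_1,\ldots,b_\ell)\in V(A)^k\times\Nat^\ell$ satisfying the bounds, that is,
\[
  \sem{\theta}^{(A,\Ca)} \le |A|^k \cdot \prod_{i=1}^\ell \sem{\theta_i}^{(A,\Ca')},
\]
where $\Ca'$ ranges over the relevant extensions of $\Ca$. The key observation is that the free number variables of $\theta_i$ are among the external free number variables of $\theta$ (bounded by $B$) and the previously quantified variables $y_1,\ldots,y_{i-1}$. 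By induction on $i$, we show that there is a polynomial $\sigma_i(X,Y)$ bounding $\sem{\theta_i}$ (and hence the value $b_i$) in terms of $|A|$ and $B$: having bounded $b_1,\ldots,b_{i-1}$ by the polynomial $\sigma_{i-1}(|A|,B)$, the inductive polynomial bound $\pi_i$ for $\theta_i$ gives $\sem{\theta_i} \le \pi_i(|A|, \max(B, \sigma_{i-1}(|A|,B)))$, which is again a polynomial in $(|A|, B)$. Multiplying the $\sigma_i$'s together with the factor $|A|^k$ yields the desired polynomial bound for $\sem{\theta}$.

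The only mild subtlety is handling the adaptive bounds correctly: one must chase the dependencies of the $\theta_i$ on the previously quantified number variables. This is resolved cleanly by doing a secondary induction on $i$ within the inductive step for the counting term, as sketched above. No difficulty arises from the formula $\phi$, which is only used as a predicate selecting which tuples to count and does not contribute to the numerical bound.
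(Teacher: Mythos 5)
Your proposal is correct and fleshes out exactly the "straightforward induction on $\theta$" that the paper's one-line proof gestures at; the secondary induction on $i$ to tame the adaptive bounds $\theta_1,\ldots,\theta_\ell$ is the right way to handle the only nontrivial point. No gaps.
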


\begin{proof}
  A straightforward induction on $\theta$.
\end{proof}

\subsection{Descriptive Complexity}\label{sec:dc}

We review some results relating the logic $\FOC$ to the complexity
class $\TC^0$. In the descriptive complexity theory of ``small'' complexity classes
(say, within \PTIME), we need to expand structures by a linear order
of the vertex set (and possibly additional arithmetical relations). We
introduce a distinguished binary relation symbol $\les$, which we
assume to be not contained in the usual vocabularies $\tau$. Note that
$\les$ is distinct from $\le$, which we use for the standard linear
order on $\Nat$. We denote the interpretation of $\les$ in a
structure $A$ by $\les^A$ instead of ${\les}(A)$, and we use the
symbol in infix notation.

An \emph{ordered $\tau$-structure} is a $\tau\cup\{\les\}$-structure $A$ where $\les^A$ is a linear order of
the vertex set $V(A)$.
It will be convenient to have the following notation for ordered
structures $A$. For $0\le i<n\coloneqq|A|$, we let $\pos{i}_A$ be the
$(i+1)$st element of the linear order $\les^A$, that is, we have
$V(A)=\{\pos{i}_A\mid 0\le i<n\}$ with $\pos{0}_A\les^A
\pos{1}_A\les^A\cdots\les^A \pos{n-1}_A$. We omit the subscript ${}_A$ if $A$
is clear from the context.
The reason that ordered structures are important in descriptive
complexity is that they have simple canonical representations as
bitstrings. To represent ordered graphs, we can take the adjacency
matrix with rows and columns arranged according to the given order and
then concatenate the rows of the matrix to obtain a string
representation. This can easily be generalised to arbitrary structures
(see, for example, \cite{Immerman99}).
Let $s(A)\in\{0,1\}^*$ denote the string representing an ordered
structure $A$. Then for every
class $\CC$ of ordered $\tau$-structures, we let
$L(\CC)\coloneqq\{s(A)\mid A\in\CC\}$.

\begin{theorem}[Barrington, Immerman, and
  Straubing~\cite{BarringtonIS90}]
  \label{theo:bis}
  Let $\CC$ be a class of ordered $\tau$-structures. Then $L(\CC)$ is
  in uniform $\TC^0$ if and only if there is a closed $\FOC$-formula $\psi$
  of vocabulary $\tau\cup\{\les\}$ such that for all ordered
  $\tau$-structures $A$ it holds that
    $
      A\in\CC\iff A\models\psi.
   $
 \end{theorem}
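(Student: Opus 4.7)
The plan is to prove the two directions of the equivalence separately. The forward direction, from $\FOC$-definability to uniform $\TC^0$, proceeds by structural induction on formulas and terms. The backward direction requires encoding a uniform circuit family inside $\FOC$ and is the more delicate of the two.

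For the forward direction, given a closed $\FOC$-formula $\psi$ of vocabulary $\tau\cup\{\les\}$, I would build a dlogtime uniform family $(\FC_n)_{n\in\PNat}$ of threshold circuits that on input $s(A)$ for an ordered $\tau$-structure $A$ of order $n$ evaluates $\psi$ on $A$. I proceed by induction on the structure of $\psi$ and of its subterms. Atomic formulas $x_1=x_2$ and $R(x_1,\ldots,x_k)$ reduce to lookups in $s(A)$. By Lemma~\ref{lem:termbound}, every $\FOC$-term takes values polynomially bounded in $|A|$, so each term value is represented on $O(\log n)$ bits; the arithmetic atoms $\theta\le\theta'$ and the operations $+$ and $\cdot$ used in constructing terms are then computed by bounded-depth polynomial-size threshold circuits via Lemma~\ref{lem:tc0ar}. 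A counting term $\#(\vec x,\vec y<\vec\theta).\phi$ becomes the iterated sum of the Boolean indicator for $\phi$ over polynomially many substitutions, handled by $\operatorname{ITADD}$; the vertex variables $\vec x$ range over $V(A)^k$ and each number variable $y_i$ is polynomially bounded by induction. Since $\psi$ has a fixed nesting depth, the resulting circuit family has constant depth and polynomial size, and dlogtime uniformity follows from the syntactic locality of the construction combined with the uniformity of the primitives in Lemma~\ref{lem:tc0ar}.

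For the backward direction, I would start from a dlogtime uniform family $(\FC_n)_{n\in\PNat}$ deciding $L(\CC)$ and construct a closed $\FOC$-formula $\psi$ such that $A\models\psi$ iff $\FC_{|A|}$ accepts $s(A)$. The linear order $\les^A$ identifies the vertices with positions $\pos{0},\ldots,\pos{n-1}$, and combined with counting this lets me define $\Bit$ and the basic arithmetic on $\Nat$ up to any polynomial bound. Each gate of $\FC_n$ is then encoded by an $O(\log n)$-bit number, i.e.\ by a bounded tuple of numerical variables. The dlogtime uniformity hypothesis provides a Turing machine that, on input a gate number and a bit position, returns the gate's type, threshold, and the relevant connection information in $O(\log n)$ steps; this dlogtime computation can be captured by an $\FOC$-formula using the bit-level arithmetic bootstrapped above. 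Since $\depth(\FC_n)$ is a fixed constant $d$, the gate evaluation unrolls into $d$ layers: I define subformulas $\logic{Val}_i(g)$ for $i\le d$, where $\logic{Val}_i(g)$ asserts that the gate encoded by $g$ at depth $i$ outputs $1$, via a counting expression of the form $\#g'.\big(g'\text{ is an in-neighbour of }g\wedge\logic{Val}_{i-1}(g')\big)\ge t_g$ for the threshold $t_g$ at $g$. The final $\psi$ states that the (unique) output gate evaluates to $1$.

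The hardest step is the bootstrapping inside $\FOC$ needed for the backward direction: one has to define bit extraction, tuple pairing, and enough Turing-machine simulation to faithfully encode the dlogtime description of $\FC_n$, all while respecting the two-sorted discipline of $\FOC$ (vertex sort bounded by $|A|$, number sort accessed through polynomially bounded counting terms). Once this arithmetic layer is in place, the constant-depth unrolling above matches the bounded quantifier nesting of an $\FOC$-formula in a routine fashion. Since Theorem~\ref{theo:bis} is used here only as a black-box characterisation, I would not redo this bootstrapping in detail but rather appeal to the full argument of \cite{BarringtonIS90}.
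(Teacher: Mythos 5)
The paper cites this as a known result of Barrington, Immerman, and Straubing without reproducing a proof, and you likewise defer the detailed bootstrapping to \cite{BarringtonIS90}; the sketch you give before deferring is a faithful high-level account of the standard argument (induction on $\FOC$-syntax using Lemma~\ref{lem:termbound} and Lemma~\ref{lem:tc0ar} for the forward direction, and bit-level arithmetic plus dlogtime Turing-machine simulation inside $\FOC$ for the converse). Your approach therefore matches the paper's, which is simply to invoke the cited reference.
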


 We need to rephrase this theorem for queries over unordered
 structures. For a class $\CC$ of
 $\tau$-structures, we let $\CC_\les$ be the class of all ordered
 $\tau$ structures $A$ with $A|_\tau\in\CC$, and we let
 $L_\les(\CC)\coloneqq L(\CC_\les)$. Since Boolean queries can be
 identified with classes of structures, this gives an encoding of
 Boolean queries by languages. Extending this to  queries of higher arity
 for a $k$-ary query $\CQ$ on a class $\CC$ of $\tau$-structures, we let
 \[
   L_\les(\CQ)\coloneqq\big\{s(A)\#\bin(i_1)\#\ldots\#\bin(i_k)\bigmid
   A\in\CC_\les, {\CQ}(A|_\tau)(\angles{i_1},\ldots, \angles{i_k})=1\big\}
 \]
 We say that a formula $\phi(x_1,\ldots,x_k)$ of vocabulary $\tau\cup\{\les\}$ is \emph{order
   invariant} if for all ordered $\tau$-structures $A,A'$ with
 $A|\tau=A'|\tau$ and all $a_1,\ldots,a_k\in V(A)$ it holds that
 $A\models\phi(a_1,\ldots,a_k)\iff A'\models\phi(a_1,\ldots,a_k)$. We
 say that a $k$-ary query
 $\CQ$ on a class of $\tau$-structures is definable in
 \emph{order-invariant} $\FOC$ if there is an order invariant
 $\FOC$-formula $\phi(x_1,\ldots,x_k)$ of vocabulary $\tau\cup\{\les\}$ such
 that for all $A\in\CC_\les$ and all $a_1,\ldots,a_k\in V(A)$ it holds that $A\models\phi(a_1,\ldots,a_k)\iff \CQ(A|_\tau)(a_1,\ldots,a_k)=1$.

\begin{corollary}
   Let $\CQ$ be a query. Then $L_{\les}(\CQ)$ is in uniform $\TC^0$ if
   and only if $\CQ$ is definable in order-invariant $\FOC$.
\end{corollary}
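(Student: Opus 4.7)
The plan is to reduce the $k$-ary query case to the Boolean case handled by Theorem~\ref{theo:bis}, via an auxiliary vocabulary expansion that encodes the $k$ argument positions as distinguished singletons. Concretely, let $\tau^{\ast}=\tau\cup\{P_1,\ldots,P_k\}$ with each $P_j$ a fresh unary relation symbol, and let $\CC^{\ast}$ be the class of ordered $\tau^{\ast}$-structures $A^{\ast}$ for which each $P_j^{A^{\ast}}$ is a singleton $\{a_j\}$ and $\CQ(A^{\ast}|_\tau)(a_1,\ldots,a_k)=1$. Both directions of the corollary pass through $L(\CC^{\ast})$, combined with a cheap uniform $\TC^0$ re-encoding between $s(A^{\ast})$ and $s(A^{\ast}|_\tau)\#\bin(i_1)\#\cdots\#\bin(i_k)$, where $i_j$ is the position of $a_j$ in the linear order.

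For the forward direction, assume $L_{\les}(\CQ)\in$ uniform $\TC^0$. I would build a uniform $\TC^0$ family for $L(\CC^{\ast})$ as follows: on input $s(A^{\ast})$, first verify by a simple threshold count that each $P_j^{A^{\ast}}$ is a singleton; then extract the binary representation $\bin(i_j)$ of the position $i_j$ of its unique element using $\operatorname{LEQ}$ and $\operatorname{ITADD}$ (Lemma~\ref{lem:tc0ar}); assemble the string $s(A^{\ast}|_\tau)\#\bin(i_1)\#\cdots\#\bin(i_k)$; and invoke the assumed $\TC^0$ circuit for $L_{\les}(\CQ)$. Applying Theorem~\ref{theo:bis} to $\CC^{\ast}$ yields a closed $\FOC$-formula $\psi$ of vocabulary $\tau^{\ast}\cup\{\les\}$ defining $\CC^{\ast}$. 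After $\alpha$-renaming bound variables to avoid capture, replacing every atomic occurrence $P_j(t)$ in $\psi$ by $t=x_j$ produces a formula $\phi(x_1,\ldots,x_k)$ over $\tau\cup\{\les\}$ that defines $\CQ$. Order-invariance of $\phi$ follows because $\CC^{\ast}$ (and hence its $\FOC$-definition through $\psi$) only refers to the underlying $\tau$-reduct via $\CQ$, which is itself independent of the chosen order.

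For the backward direction, take an order-invariant formula $\phi(x_1,\ldots,x_k)$ defining $\CQ$ and form
\[
  \psi \;\coloneqq\; \bigwedge_{j=1}^{k}\bigl(\#x.\,P_j(x)=\one\bigr)\;\wedge\;\exists x_1\cdots\exists x_k\,\Bigl(\bigwedge_{j=1}^{k}P_j(x_j)\wedge\phi(x_1,\ldots,x_k)\Bigr),
\]
which defines $\CC^{\ast}$. Theorem~\ref{theo:bis} gives a uniform $\TC^0$ family deciding $L(\CC^{\ast})$. A preprocessing circuit in uniform $\TC^0$ converts $s(A)\#\bin(i_1)\#\cdots\#\bin(i_k)$ into $s(A^{\ast})$, where $P_j^{A^{\ast}}$ is set to the singleton $\{\angles{i_j}\}$ by comparing each vertex position against $i_j$ using $\operatorname{LEQ}$. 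Composing the two circuits yields a uniform $\TC^0$ family for $L_{\les}(\CQ)$.

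There is no deep obstacle here: both directions are applications of Theorem~\ref{theo:bis} modulo a reduction that uses standard closure properties of uniform $\TC^0$ under bit-level rearrangements. The only point demanding care is verifying that the two re-encoding circuits are genuinely dlogtime uniform, but this is a routine positional computation on binary addresses covered by Lemma~\ref{lem:tc0ar}.
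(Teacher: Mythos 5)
Your proof is correct, and it is exactly the routine reduction the paper leaves implicit: the corollary is stated there without proof as a rephrasing of Theorem~\ref{theo:bis}, and encoding the $k$ argument positions by fresh singleton unary predicates, then transferring between $s(A^{\ast})$ and $s(A)\#\bin(i_1)\#\cdots\#\bin(i_k)$ inside uniform $\TC^0$, is the standard way to carry that out. Two trivial points to tidy, both consistent with your own closing caveat: the structure you feed to $L_{\les}(\CQ)$ must keep the order (i.e.\ $A^{\ast}|_{\tau\cup\{\les\}}$ rather than $A^{\ast}|_{\tau}$), and because the unpadded strings $\bin(i_j)$ make the length of the transformed string depend on more than the input length, the composed circuit should select among the polynomially many candidate input lengths of the target family, which is routine for dlogtime-uniform $\TC^0$.
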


\subsection{Non-Uniformity and Built-in Relations}
To capture non-uniformity in descriptive complexity, we add
\emph{built-in relations}. The classical way of doing this is to only consider structures with
universe $\{0,\ldots,n-1\}$, for some $n\in\Nat$, and then add
relation symbols $S$ to the language that have a fixed interpretation
$S^{(n)}\subseteq \{0,\ldots,n-1\}^k$ in all structures with universe
$\{0,\ldots,n-1\}$. Slightly more abstractly, we can consider ordered
structures and transfer the definition of $S^{(n)}$ to all linearly
ordered structures of order $n$ via the natural mapping
$i\mapsto\angles i$.

We take a different approach to built-in relations here, which
allows us to also use them over structures that are not necessarily
ordered. A \emph{built-in numerical relation} is simply a relation over $\Nat$,
that is, a subset $N\subseteq\Nat^k$ for some $k\ge0$, the
\emph{arity} of $N$. We use the same letter $N$ to denote both the
relation $N\subseteq\Nat^k$ and a $k$-ary relation symbol representing
it in the logic. In other words, the relation symbol $N$ will be
interpreted by the same relation $N\subseteq\Nat^k$ in all
structures. We extend the logic $\FOC$ by new atomic formulas
$N(y_1,\ldots,y_k)$ for all $k$-ary numerical relations $N$ and number
variables $y_1,\ldots,y_k$, with the obvious semantics. By
$\FOCnu$ we denote the extension of $\FOC$ to formulas using arbitrary
built-in numerical
relations.\footnote{Think of the index 'nu' as an abbreviation of
  either 'numerical' or 'non-uniform'.}
We will later use the same notation $\logic F\nuni$ for fragments
$\logic F$ of $\FOC$ to denote the 
extension of $\logic F$ to formulas using 
built-in numerical
relations. 

Then it easily follows from Theorem~\ref{theo:bis} that
$\FOCnu$ captures (non-uniform) $\TC^0$. (Or it can be proved
directly, in fact, it is much easier
to prove than Theorem~\ref{theo:bis}.)

\begin{corollary}\label{cor:nuTC}
 Let $\CC$ be a class of ordered $\tau$-structures. Then $L(\CC)$ is
  in $\TC^0$ if and only if there is a closed $\FOCnu$-formula $\psi$
  of vocabulary $\tau\cup\{\les\}$ such that for all ordered
  $\tau$-structures $A$ it holds that
    $
      A\in\CC\iff A\models\psi.
   $ 
 \end{corollary}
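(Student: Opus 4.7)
My plan is to prove both directions by reducing to Theorem~\ref{theo:bis} plus some non-uniform padding; the essential new ingredient is that built-in numerical relations let us encode arbitrary non-uniform advice directly in the logic, so no uniformity analysis of the circuit family is required.

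For the direction from $\TC^0$ to $\FOCnu$, let $(\FC_n)_{n\in\PNat}$ be a family of threshold circuits of depth at most $d$ and size at most $n^c$ deciding $L(\CC)$. By padding I may assume each $\FC_n$ has exactly $d$ layers and exactly $n^c$ nodes per layer, with each node labelled either input, constant, NOT, AND, OR, or $\geq t$-threshold. I then introduce built-in numerical relations that encode the wiring and gate types: for each layer $\ell\in\{0,\ldots,d\}$ a relation $W_\ell(n,i,j)$ recording that node $j$ in layer $\ell-1$ feeds into node $i$ in layer $\ell$, a relation $T_\ell(n,i,t)$ recording the threshold of node $i$ (with $t=0$ for non-threshold gates), plus relations identifying gate types and the designated output gate. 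With these built in, I can express the value of a gate $i$ at layer $\ell$ of $\FC_n$ by an $\FOCnu$-term $\vartheta_\ell(y_i)$ with $y_i<n^c$, defined by induction on $\ell$: the value is the appropriate Boolean combination of the values of its in-neighbours for non-threshold gates, and for a $\geq t$-threshold node it is given by $t\le \#(y_j<n^c).(W_\ell(\ord,y_i,y_j)\wedge \vartheta_{\ell-1}(y_j)=1)$. At layer $0$, the value of an input gate is read directly from the structure $A$ via $\les$, using that the bitstring $s(A)$ is computable from $A$ by a fixed $\FOC$-formula (all the gadgets needed for extracting the $k$-th bit of the canonical encoding of an ordered $\tau$-structure are expressible in $\FOC$). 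The closed $\FOCnu$-formula $\psi$ then states that the output gate evaluates to $1$ on input size $\ord$. Since $d$ is a constant and all number variables are bounded by $\ord^c$, this is a valid $\FOCnu$-formula.

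For the converse, let $\psi$ be a closed $\FOCnu$-formula of vocabulary $\tau\cup\{\les\}$. I build, for each $n\in\PNat$, a threshold circuit $\FC_n$ of polynomial size and bounded depth equivalent to $\psi$ on ordered $\tau$-structures of order $n$. I unfold $\psi$ syntactically: every counting term $\#(\vec x,\vec y<\vec\theta).\phi$ becomes an iterated addition over at most polynomially many instantiations (polynomial by Lemma~\ref{lem:termbound}, since the only free variables at the top level are number variables in previously bounded counting terms), so by Lemma~\ref{lem:tc0ar} it is implementable by a bounded-depth polynomial-size threshold sub-circuit. Atomic formulas of the form $R(x_1,\ldots,x_k)$ read into hardwired bits of $\FC_n$ corresponding to positions in $s(A)$, atomic formulas $\theta\le\theta'$ become $\operatorname{LEQ}$, and each built-in numerical relation $N(y_1,\ldots,y_k)$ is simply hardwired: for each fixed tuple of values of $y_1,\ldots,y_k$ in the polynomially bounded range, the truth value is a constant $0$ or $1$ that the non-uniform circuit $\FC_n$ may store freely. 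A straightforward induction on $\psi$ gives a family $(\FC_n)$ of bounded depth and polynomial size.

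The only step that requires any care is making sure that the polynomial bound on the number values referenced by $\psi$ (Lemma~\ref{lem:termbound}) propagates correctly through nested counting terms so that all $\operatorname{ITADD}$-gadgets stay polynomial in $n=\ord$ and of constant depth; this is the main, albeit routine, bookkeeping issue. Everything else is immediate from Lemma~\ref{lem:tc0ar} and the fact that non-uniformity absorbs the interpretations of built-in relations.
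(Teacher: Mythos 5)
Your proof is correct, and it fills in the ``direct proof'' that the paper only gestures at (the paper gives no argument for Corollary~\ref{cor:nuTC}, merely observing that it ``easily follows from Theorem~\ref{theo:bis}'' or ``can be proved directly, in fact, it is much easier to prove than Theorem~\ref{theo:bis}''). Your two directions are the natural such argument: encoding the non-uniform circuit family's wiring and thresholds into built-in numerical relations in one direction, and hardwiring the built-in relations (restricted to the polynomially bounded range supplied by Lemma~\ref{lem:termbound}) into a non-uniform circuit family in the other. The only point worth polishing is syntactic: built-in relations take number \emph{variables} as arguments, not arbitrary terms, so a literal occurrence such as $W_\ell(\ord,y_i,y_j)$ must be rewritten as $\exists y\le\ord.\,(y=\ord\wedge W_\ell(y,y_i,y_j))$, which is trivial. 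Otherwise this matches what the paper had in mind.

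Where your proof and the paper's intended proof genuinely differ is in emphasis. The paper suggests deriving the corollary \emph{from} the dlogtime-uniform Theorem~\ref{theo:bis}, presumably by observing that a uniform circuit family can query a fixed advice string supplied by a built-in relation and, conversely, that a circuit produced uniformly from an $\FOC$-formula can be augmented with hardwired slices of the built-in relations. Your argument bypasses Theorem~\ref{theo:bis} entirely and instead redoes the translation from scratch in the non-uniform setting, which is strictly easier because you never need to worry about computing the circuit description in dlogtime. This is a fair trade: you lose the ability to cite the harder theorem as a black box, but you gain a self-contained argument that is conceptually simpler. Both are valid; the paper's remark that the direct proof is ``much easier'' than Theorem~\ref{theo:bis} is precisely your observation that non-uniformity makes the bookkeeping routine.
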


We also state the version of
this result for queries.

 \begin{corollary}\label{cor:TC0FO2}
   Let $\CQ$ be a query. Then $L_{\les}(\CQ)$ is in $\TC^0$ if
   and only if $\CQ$ is definable in order-invariant $\FOCnu$.
\end{corollary}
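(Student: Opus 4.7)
The plan is to reduce this query-version to the Boolean version established in Corollary~\ref{cor:nuTC} via the standard trick of encoding a $k$-tuple of distinguished vertices as $k$ extra unary relations, each marking exactly one vertex. Concretely, for a $k$-ary query $\CQ$ on $\tau$-structures, I set $\tau' \coloneqq \tau \cup \{U_1, \ldots, U_k\}$ with fresh unary relation symbols $U_j$, and let $\CC'$ be the class of ordered $\tau'$-structures $B$ such that each $U_j(B)$ is a singleton $\{a_j\}$ and $\CQ(B|_\tau)(a_1, \ldots, a_k) = 1$. The key observation is that the string encoding $s(B)$ and the pairing $s(B|_{\tau \cup \{\les\}}) \# \bin(i_1) \# \cdots \# \bin(i_k)$ with $U_j(B) = \{\pos{i_j}\}$ are intertranslatable by dlogtime-uniform $\TC^0$ transducers; in particular, $L_\les(\CQ) \in \TC^0$ iff $L(\CC') \in \TC^0$.

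For the forward implication, suppose $L_\les(\CQ) \in \TC^0$. By the encoding equivalence, $L(\CC') \in \TC^0$, and Corollary~\ref{cor:nuTC} yields a closed $\FOCnu$-formula $\psi$ of vocabulary $\tau' \cup \{\les\}$ defining $\CC'$. I would then produce $\phi(x_1, \ldots, x_k)$ of vocabulary $\tau \cup \{\les\}$ by conjoining pairwise distinctness of the $x_j$ with the formula obtained from $\psi$ after replacing every atom $U_j(x)$ by $x = x_j$. This $\phi$ defines $\CQ$, and because $\CQ$ is a query on unordered structures, $\CC'$ is closed under reorderings of $\les$, so $\psi$ and hence $\phi$ is automatically order invariant.

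For the backward implication, let $\phi(x_1, \ldots, x_k)$ be an order-invariant $\FOCnu$-formula defining $\CQ$. I would build a closed formula $\psi$ over $\tau' \cup \{\les\}$ asserting that each $U_j$ is a singleton and then expressing $\phi$ with each free variable $x_j$ interpreted as the unique element of $U_j$, using guards of the form $\forall x_j.\, U_j(x_j) \to (\cdots)$. Then $\psi$ defines $\CC'$, so Corollary~\ref{cor:nuTC} gives $L(\CC') \in \TC^0$, and the encoding equivalence yields $L_\les(\CQ) \in \TC^0$.

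The only delicate point is the handling of order invariance, and it is essentially automatic: any formula defining $\CC'$ (or $\CQ$) must be order invariant because the class itself is, and conversely order invariance of $\phi$ is exactly what is needed for the constructed $\psi$ to be a well-defined definition of $\CC'$ independent of how the vertex set happens to be ordered. All other manipulations, namely the back-and-forth translations between $\tau$-formulas with free variables and closed $\tau'$-formulas and the polynomial-time conversions between string encodings, are purely syntactic and do not require any new ideas beyond Corollary~\ref{cor:nuTC}.
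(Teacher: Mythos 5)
Your reduction to Corollary~\ref{cor:nuTC} by encoding the $k$-tuple as $k$ singleton unary relations $U_1,\ldots,U_k$ is the natural and essentially forced route (the paper omits a proof precisely because this step is routine), and most of the argument is sound, including the observation that any formula defining $\CC'$ is automatically order invariant because $\CC'$ is closed under reorderings of $\les$, and that the two string encodings are intertranslatable in $\TC^0$.

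The one genuine error is in the forward direction, where you conjoin pairwise distinctness of the $x_j$ to the substituted formula. The query $\CQ$ is a $k$-ary query defined on \emph{all} tuples in $V(A)^k$, including tuples with repeated entries, and $L_\les(\CQ)$ correspondingly admits strings with repeated indices $i_j$. Your class $\CC'$ already handles repetitions correctly: if $a_i = a_j$ then $U_i(B)$ and $U_j(B)$ are the same singleton, which is still a legal $\CC'$-structure, and $\psi$ (which defines $\CC'$) decides such $B$ correctly. Replacing each atom $U_j(z)$ in $\psi$ by $z = x_j$ \emph{without} any extra conjunct already yields a $\phi$ that agrees with $\psi$ on the corresponding expansion of $A$ for every tuple: under the assignment $x_i \mapsto a_i$, $x_j \mapsto a_j$ with $a_i = a_j$, the atoms $z = x_i$ and $z = x_j$ are equivalent, exactly matching the coincidence of $U_i$ and $U_j$ in $B$. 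Inserting the pairwise-distinctness conjunct forces $\phi(a_1,\ldots,a_k)$ to be false whenever two coordinates coincide, whereas $\CQ(A|_\tau)(a_1,\ldots,a_k)$ may well equal $1$ there, so your $\phi$ would not define $\CQ$. Drop that conjunct and the argument goes through.
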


\subsection{Types and Second-Order Variables}
\label{sec:2nd-order}

The counting extension of first-order logic refers to a 2-sorted
extension of relational structures and adheres to a strict type
discipline. For the extension we are going to introduce next, we need
to make this formal. We assign a \emph{type} to each variable: a
vertex variable has type $\ttv$, and number variable has type
$\ttn$. A $k$-tuple $(z_1,\ldots,z_k)$ of variables has a type
$(t_1,\ldots,t_k)\in\{\ttv,\ttn\}^k$, where $t_i$ is the type of
$z_i$. We denote the type of a tuple $\vec z$ by $\tp(\vec z)$. For a structure $A$ and a type
$\vec t=(t_1,\ldots,t_k)\in\{\ttv,\ttn\}^k$ we let $A^{\vec t}$ be the
set of all $(c_1,\ldots,c_k)\in (V(A)\cup\Nat)^k$ such that
$c_i\in V(A)$ if $t_i=\ttv$ and $c_i\in\Nat$ if $t_i=\ttn$.

Now we extend our logic by relation variables (denoted by uppercase
letters $X,Y$) and function variables (denoted by $U,V$). Each
relation variable $X$ has a type $\tp(X)$ of the form
$\tta\vec t\ttz$, and each function variable $U$ has a type $\tp(U)$
of the form $\vec t\to\ttn$, for some $\vec t\in\{\ttv,\ttn\}^k$. We
extend the logic $\FOC$ by allowing additional atomic formulas
$X(\xi_1,\ldots,\xi_k)$ and terms $U(\xi,\ldots,\xi_k)$, where $X$ is a
relation variable of type $\tta(t_1,\ldots,t_k)\ttz$ for some tuple
$(t_1,\ldots,t_k)\in\{\ttv,\ttn\}^k$, $U$ a function variable of type
$(t_1,\ldots,t_k)\to\ttn$, and for all $i\in[k]$, if $t_i=\ttv$ then
$\xi_i$ is a vertex variable and if $t_i=\ttn$ then $\xi_i$ is a term.

To define the semantics, let $A$
be a structure and $\Ca$ an assignment over $A$. Then $\Ca$
maps each relation variable $X$ of type $\tta\vec t\ttz$ to a subset $\Ca(X)\subseteq
A^{\vec t}$ and each function variable $U$ of type $\vec t\to\ttn$ to a function
$\Ca(U):A^{\vec t}\to\Nat$. Moreover, for a tuple $\vec\xi=(\xi_1,\ldots,\xi_k)$ of vertex
variables and terms we let $\sem{\vec\xi}^{(A,\Ca)}=(c_1,\ldots,c_k)$
where $c_i=\Ca(\xi_i)$ if $\xi_i$ is a vertex variables and
$c_i=\sem{\xi_i}^{(A,\Ca)}$ if $c_i$ is a term.
We let
\begin{align*}
  \sem{X(\vec \xi)}^{(A,\Ca)}&\coloneqq
  \begin{cases}
    1&\text{if }\sem{\vec\xi}^{(A,\Ca)}\in\Ca(X),\\
    0&\text{otherwise},
  \end{cases}
  \\
  \sem{U(\vec \xi)}^{(A,\Ca)}&\coloneqq \Ca(U)\big(\sem{\vec\xi}^{(A,\Ca)}\big). 
\end{align*}
Observe that a function variable $U$ of type $\emptytuple\to\ttn$ is essentially just a number
variable, if we identify a $0$-ary function with the value it takes on
the empty tuple $\emptytuple$. It is still useful sometimes to use
$0$-ary function variables. We usually write $\Ca(U)$ instead of
$\Ca(U)(\emptytuple)$ to denote their value. We call a relation
variable \emph{purely numerical} if it is of type $\tta\ttn^k\ttz$,
for some $k\ge 0$. Similarly, we call a function variable \emph{purely numerical} if it is of type $\ttn^k\to\ttn$.

To distinguish them from the
``second-order'' relation and function variables, we refer to our
original ``first-order'' vertex variables and number variables as
\emph{individual variables}.
When we list variables of an expression in parentheses, as in $\xi(\vec z)$,
we only list the free individual
variables, but not the free relation or function
variables. Thus $\xi(\vec z)$ stipulates that all free individual
variables of $\xi$ occur in $\vec z$. However, $\xi$ may have free
relation variables and free function variables that are not listed in
$\vec z$. For a
structure $A$, an assignment $\Ca$, and a tuple $\vec c\in A^{\tp(\vec
  z)}$, we write $\sem{\xi}^{(A,\Ca)}(\vec c)$ instead of
$\sem{\xi}^{(A,\Ca\frac{\vec c}{\vec z})}$. If $\xi$ is a formula, we
may also write
$(A,\Ca)\models\xi(\vec c)$. 

The role of relation variables and function variables is
twofold. First, we will use them to specify "inputs" for formulas, in
particular for formulas defining numerical functions. (In the next
sections we will see how to use relation and function variables to
specify natural and rational numbers.) And second, we may just use relation and
function variables as placeholders for formulas and terms that we may
later substitute for them.

Note that Lemma~\ref{lem:termbound} no longer holds if the term
$\theta$ contains function variables, because these variables may be
interpreted by functions of super-polynomial growth.

Let us close this section by emphasising that we do not allow quantification over
relation or function variables. Thus, even in the presence of such
variables, our logic remains ``first-order''.

\subsection{Arithmetic in \texorpdfstring{$\FOC$}{FO+C}}\label{sec:arithmetic}
In this section, we will show that arithmetic on bitwise
representations of integers is expressible in $\FOC$. Almost none of
the formulas we shall define make any reference to a structure $A$;
they receive their input in the form of purely numerical relation
variables and function variables and only refer to the numerical part,
which is the same for all structures.  We call an $\FOC$-expression
$\xi$ \emph{arithmetical} if it contains no vertex variables.  It is
worth mentioning that arithmetical $\FOC$-formulas without relation
and function variables are formulas of \emph{bounded arithmetic} (see,
for example, \cite{HajekP93}) augmented by bounded counting terms.

Note that if $\xi$ is an arithmetical expression,
then for all structures $A,A'$ and all assignments $\Ca,\Ca'$ over
$A,A'$, respectively, such that $\Ca(y)=\Ca'(y)$ for all number variables
$y$ and $\Ca(Z)=\Ca'(Z)$ for all purely numerical relation or function
variables $Z$, we have
$\sem{\xi}^{(A,\Ca)}=\sem{\xi}^{(A',\Ca')}$. Thus there is no need to
mention $A$ at all; we may
write $\sem{\xi}^{\Ca}$ instead of $\sem{\xi}^{(A,\Ca)}$. In fact, we
can even use the notation $\sem{\xi}^{\Ca}$ if $\Ca$ is only a
partial assignment that assigns values only to number variables and
purely numerical relation and function variables. We call such a partial assignment a \emph{numerical}
assignment. As usual, if $\xi=\xi(y_1,\ldots,y_k)$ has all free individual
variables among $y_1,\ldots,y_k$, for $b_1,\ldots,b_k\in\Nat$ we may
write
$\sem{\xi}^\Ca(b_1,\ldots,b_k)$ instead of
$\sem{\xi}^{\Ca\frac{b_1\ldots b_k}{y_1\ldots,y_k}}$. If, in addition,
  $\xi$ has no free relation or function variables, we may just write 
  $\sem{\xi}(b_1,\ldots,b_k)$.

  The following well-known lemma is the foundation for expressing
  bitwise arithmetic in $\FOC$. There is also a (significantly deeper) version of the lemma
  for first-order logic without counting, which goes back to
  Bennett~\cite{Bennett62} (see \cite[Section~1.2.1]{Immerman99} for a
  proof).\footnote{I find it worthwhile to give the simple proof of the lemma
  for $\FOC$; I am not aware that this can be found in the literature.}

\begin{lemma}\label{lem:bit}
  There is an arithmetical $\FOC$-formula $\logic{bit}(y,y')$ such that for all $i,n\in\Nat$,
  \[
    \sem{\logic{bit}}(i,n)=\Bit(i,n).
  \]
\end{lemma}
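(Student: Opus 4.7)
The plan is to express $\Bit(i,n)$ as the parity of $\floor{n/2^i}$, using the abbreviation $\logic{div}$ already available in the logic. The missing ingredient is a way to refer to $2^i$ inside $\FOC$. Since $2^i$ can grow super-polynomially in $i$ while $\FOC$-terms are polynomially bounded by Lemma~\ref{lem:termbound}, I cannot hope to define $2^i$ as a term; however, for $\Bit(i,n)$ to be $1$ we must have $2^i \le n$, so it suffices to existentially quantify over a candidate $p\le n$ and insist that $p = 2^i$.

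First I would define an arithmetical formula $\logic{pow2}(p)$ expressing that $p$ is a positive power of $2$, using the characterisation that a positive integer is a power of $2$ iff each of its divisors greater than $1$ is even. Divisibility ``$d\mid p$'' is expressible as $\logic{div}(p,d)\cdot d=p$, evenness of $d$ as $\exists e\le d.\,d=e+e$, and every quantifier stays bounded by $p$. Next, since the positive powers of $2$ strictly below $2^i$ are exactly $2^0,\ldots,2^{i-1}$, the statement ``$p=2^i$'' can be captured arithmetically by
\[
\logic{pow2}(p)\,\wedge\,\#(q<p).\logic{pow2}(q)=y,
\]
with $y$ holding the value $i$.

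Putting the pieces together, I would set
\[
\logic{bit}(y,y')\;\coloneqq\;\exists p\le y'.\,\bigl(\logic{pow2}(p)\,\wedge\,\#(q<p).\logic{pow2}(q)=y\,\wedge\,\logic{odd}(\logic{div}(y',p))\bigr),
\]
where $\logic{odd}(u)\coloneqq\exists z\le u.\,u=z+z+\one$. Verification is then routine: if $2^i\le n$, the witness $p=2^i$ lies in range and $\Bit(i,n)$ agrees with the parity of $\logic{div}(n,p)$; if $2^i>n$, no $p\le n$ satisfies the first two conjuncts and the formula correctly evaluates to $0$. Degenerate cases such as $n=0$ are immediate because $\logic{pow2}(0)$ is false by construction. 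The only point requiring genuine care is to ensure that every bounded quantifier and counting term appearing in the construction stays within the growth permitted by $\FOC$ without function variables, so that the whole definition is honestly arithmetical; since every bound introduced above is one of $p$, $y$, or $y'$, this is unproblematic, and no essential obstacle remains.
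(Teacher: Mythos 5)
Your proposal is correct and follows essentially the same route as the paper: define $\logic{pow2}$, characterise $p=2^i$ by counting powers of two below $p$ (the paper's $\logic{exp2}$), and then extract the bit; the only variation is that you read off the bit as the parity of $\logic{div}(y',p)$, whereas the paper uses a direct existential decomposition $y'=2y_1 p + p + y_2$ with $y_2<p$. Both bit-extraction steps are sound and of comparable difficulty, so this is the same proof up to that cosmetic choice.
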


\begin{proof}
  Clearly, there is a formula $\logic{pow2}(y)$ expressing that $y$ is
  a power of $2$; it simply states that all divisors of $y$ are
  divisible by $2$.
  Then the formula
  \[
    \logic{exp2}(y,y')\coloneqq \logic{pow2}(y')\wedge y =\# y''<y'.\logic{pow2}(y'')
  \]
  expresses that $y'=2^y$.

  To express the bit predicate, we define the auxiliary formula
  \[
    \logic{pow2bit}(y,y')\coloneqq \logic{pow2}(y)\wedge\exists
    y_1<y'.\exists y_2<y.y'=2y_1y+y+y_2,
  \]
  expressing that $y$ is $2^i$ for some $i$ and the $i$th bit of $y'$
  is $1$. We let
  \[
    \logic{bit}(y,y')\coloneqq\exists
    y''<y'.\big(\logic{exp2}(y,y'')\wedge \logic{pow2bit}(y'',y')\big).
    \qedhere
  \]
\end{proof}

\begin{corollary}\label{cor:bit}
  There is an arithmetical $\FOC$-term $\logic{len}(y)$ such that
  for all $n\in\Nat$,
  \[
    \sem{\logic{len}}(n)=|\bin(n)|.
  \]
\end{corollary}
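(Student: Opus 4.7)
The plan is to read the corollary off directly from the proof of Lemma~\ref{lem:bit}, which already exhibits the auxiliary arithmetical formula $\logic{pow2}(y)$ expressing ``$y$ is a power of $2$''. The numerical observation to exploit is that for $n\ge 1$ the powers of $2$ in $\{1,\dots,n\}$ are exactly $2^0,2^1,\dots,2^{\lfloor\log_2 n\rfloor}$, so there are $\lfloor\log_2 n\rfloor+1=|\bin(n)|$ of them; for $n=0$, no power of $2$ is $\le 0$, so the same count yields $0$, whereas by convention $|\bin(0)|=1$. Thus the length is essentially ``count the powers of $2$ bounded by $y$'', corrected by $+1$ at $y=0$.

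Concretely, I would set
\[
  \logic{len}(y) \;\coloneqq\; \#(y'\le y).\logic{pow2}(y') \;+\; \#(y'<\one).(y\le y').
\]
The first summand is a bounded counting term ranging over $y'\in\{0,1,\dots,y\}$, and by the observation above returns $|\bin(y)|$ whenever $y\ge 1$ and $0$ when $y=0$. The second summand is an arithmetical counting term whose bound $\one$ restricts $y'$ to $\{0\}$; the inner formula $y\le y'$ then holds iff $y=0$, so this piece contributes exactly the desired correction of $1$ at $y=0$ and $0$ otherwise. Summing the two cases gives $\sem{\logic{len}}(n)=|\bin(n)|$ for every $n\in\Nat$, which is what the corollary asserts.

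There is no real obstacle: the entire content is the identity $|\bin(n)|=|\{k\ge 0:2^k\le n\}|$ for $n\ge 1$ together with the boundary case $n=0$. The only point where a little care is needed is to package the ``$y=\zero$'' correction as a bounded counting term (rather than, say, a formula afterwards coerced to a number), so that $\logic{len}(y)$ is syntactically an $\FOC$-term and, since it mentions no vertex variables, is arithmetical in the sense of Section~\ref{sec:arithmetic}.
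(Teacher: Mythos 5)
Your construction is correct. Let me verify: for $n\ge 1$, $\#(y'\le y).\logic{pow2}(y')$ counts $\{1,2,4,\dots,2^{\lfloor\log_2 n\rfloor}\}$, which has $\lfloor\log_2 n\rfloor+1=|\bin(n)|$ elements, and the correction term vanishes; for $n=0$, no power of two is $\le 0$, so the first summand is $0$ and the second supplies the required $1$. The formula is arithmetical since it uses only number variables and $\logic{pow2}$.

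The paper takes a slightly different route: it uses the full $\logic{bit}$ predicate from Lemma~\ref{lem:bit} and encodes the identity $|\bin(n)|=1+\max\{i:\Bit(i,n)=1\}$ as $\one+\#z<y.\exists y'\le y.(z<y'\wedge\logic{bit}(y',y)\wedge\forall y''\le y(y'<y''\to\neg\logic{bit}(y'',y)))$. Its trick for turning the $\max$ into a term is the ``count elements below the witness'' idiom described just after the proof. That route has the small advantage that the $n=0$ case falls out automatically (the existential is vacuously false, the count is $0$, and the outer $+\one$ gives $1$), whereas your approach needs the explicit second summand as a boundary correction. Conversely, your route is a bit more lightweight in its dependencies — it only needs the auxiliary $\logic{pow2}$ formula, not the $\logic{bit}$ predicate itself — so it does not really exercise Lemma~\ref{lem:bit} at all; it could in principle be stated and proved before that lemma. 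Both turn on the same underlying count $\lfloor\log_2 n\rfloor+1$, so the difference is one of packaging rather than substance.
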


\begin{proof}
  Observe that for $n\ge 1$, 
  \[
    |\bin(n)|=1+\max\big\{i\bigmid\Bit(i,n)=1\big\}.
  \]
  Noting that $|\bin(n)|\le n$ for all $n\ge1$, the following
  term defines the length for all $n\ge 1$:
  \[
    1+\#z<y.\exists y'\le
    y.\Big(z<y'\wedge\logic{bit}(y',y)\wedge\forall y''\le y\big(
    y'<y''\to\neg\logic{bit}(y'',y)\big)\Big).
  \]
  Note that that this term also gives us the correct result for
  $n=0$, simply because the formula $\exists y'\le
    y.\Big(z<y'\wedge\ldots)$ is false for all $z$ if $y=0$.
\end{proof}

In the proof of the previous lemma we used a trick that is worthwhile
being made explicit. Suppose we have a formula $\phi(y,\vec z)$ that
defines a function $\vec z\mapsto y$, that is, for all structures
$A$ and $\vec c\in A^{\tp(\vec z)}$ there is a unique $b=f_A(\vec
c)\in\Nat$ such
that $A\models\phi(b,\vec c)$. Often, we want a term expressing
the same function. In general, there is no such term, because the
function may grow too fast. (Recall that all terms are polynomially
bounded by Lemma~\ref{lem:termbound}, but the function $f_A$ may grow
exponentially fast.)
Suppose, however, that we have a
term $\theta(\vec z)$ that yields an upper bound for this function,
that is, $f_A(\vec c)<\sem{\theta}^A(\vec c)$ for all $A$ and $\vec c\in A^{\tp(\vec z)}$. Then we 
obtain a term $\eta(\vec z)$ such that $\sem{\eta}^A(\vec c)=f_A(\vec
c)$ as follows: we let
\[
  \eta(\vec z)\coloneqq\# y'<\theta(\vec z).\exists y<\theta(\vec
  z).\big(y'<y\wedge\phi(y,\vec z)\big).
\]

It is our goal for the rest of this section to express bitwise
arithmetic in $\FOC$.  We will use relation variables to encode binary
representations of natural numbers. Let $Y$ be a relation variable of
type $\ttn$, and let $\Ca$ be a numerical interpretation. We think of
$Y$ as representing the number whose $i$th bit is $1$ if and only if
$i\in\Ca(Y)$. But as $\Ca(Y)$ may be infinite, this representation is
not yet well defined. We also need to specify a bound on the number of
bits we consider, which we can specify by a function variable
$U$ of type $\emptyset\to\ttn$. Then the pair $(Y,U)$ represents the number
\begin{equation}\label{eq:num1}
  \num{Y,U}^{\Ca}\coloneqq \sum_{\substack{i\in\Ca(Y),i<\Ca(U)}}2^i.
\end{equation}
We can also specify numbers by formulas and terms. We let $\hat y$ be a distinguished number variable
(that we fix for the rest of this article). Let $\chi$ be a formula and
$\theta$ a term. We usually assume that $\hat y$ occurs freely in
$\chi$ and does not occur in $\theta$, but neither is necessary. Let
$A$ be a structure and $\Ca$ an assignment over $A$. Recall that
$\Ca\frac{i}{\hat y}$ denotes the assignment that maps
$\hat y$ to $i$ and coincides with $\Ca$ on all other variables.
We let
\begin{equation}\label{eq:num2}
  \num{\chi,\theta}^{(A,\Ca)}\coloneqq \sum_{\substack{
      (A,\Ca\frac{i}{\hat y})\models \chi,\\i<\sem{\theta}^{(A,\Ca)}}}2^i.
\end{equation}
If $\chi$ and $\theta$ are arithmetical, we may write
$\num{\chi,\theta}^{\Ca}$ instead of $\num{\chi,\theta}^{(A,\Ca)}$.

The following Lemmas~\ref{lem:ar1}, \ref{lem:ar2}, and \ref{lem:ar2a} follow easily
from the facts that the arithmetic operations are in uniform $\TC^0$
(Lemma~\ref{lem:tc0ar}) and $\FOC$ captures uniform $\TC^0$
(Theorem~\ref{theo:bis}). However, we find it helpful to sketch at
least some of the proofs, in particular the proof of
Lemma~\ref{lem:ar2} for iterated addition. Researchers in the
circuit-complexity community are well-aware of the fact that
iterated addition is in \emph{uniform} $\TC^0$, and at least implicitly
this is shown in \cite{BarringtonIS90}. Nevertheless, I think it is
worthwhile to give a proof of this lemma. Our proof is purely logical,
circumventing circuit complexity altogether, so it may be of
independent interest.

\begin{lemma}\label{lem:ar1}
  Let $Y_1,Y_2$ be relation variables of type $\tta\ttn\ttz$, and let
  $U_1,U_2$ be function variables of type $\emptytuple\to\ttn$.
  \begin{enumerate}
  \item There are arithmetical $\FOC$-formulas $\logic{add}$,
    $\logic{sub}$ and arithmetical $\FOC$-terms
    $\logic{{bd-add}}$, $\logic{{bd-sub}}$ such
    that for all structures $A$ and assignments $\Ca$ over $A$,
    \begin{align*}
      \num{\logic{add}, \logic{{bd-add}}}^{\Ca}
      &=\num{Y_1,U_1}^{\Ca}+
        \num{Y_2,U_2}^{\Ca},\\
      \num{\logic{sub}, \logic{{bd-sub}}}^{\Ca}
      &=\num{Y_1,U_1}^{\Ca}\dotminus
        \num{Y_2,U_2}^{\Ca}.
    \end{align*}
  \item There is an arithmetical $\FOC$-formula $\logic{leq}$ such
    that for all structures $A$ and assignments $\Ca$ over $A$,
    \[
      \sem{\logic{leq}}^\Ca=1\iff \num{Y_1,U_1}^{\Ca}\leq
      \num{Y_2,U_2}^{\Ca}.
    \]
  \end{enumerate}
\end{lemma}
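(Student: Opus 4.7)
My plan is to simulate school arithmetic on the binary representations of $\num{Y_i,U_i}$ using FOC formulas that directly express carry propagation (for $\logic{add}$), borrow propagation (for $\logic{sub}$), and highest--bit comparison (for $\logic{leq}$). The enabling observation, standard in the descriptive complexity of bitwise arithmetic, is that each of these operations admits a \emph{closed} first--order description in terms of generate/propagate patterns on the input bits, so no recursion on bit position needs to be encoded inside the formula. Throughout, I would use the abbreviation $\logic{digit}(Y,U,y) := Y(y) \wedge y < U$ to extract the $y$th bit of $\num{Y,U}$; in particular, bits beyond the declared length $U$ are automatically treated as $0$.

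For addition, the carry into bit $\hat y$ is $1$ iff some earlier position generates a carry and every intermediate position propagates it, so I would set
\[
  \logic{carry}(\hat y) := \exists y' < \hat y.\Big(\logic{digit}(Y_1,U_1,y') \wedge \logic{digit}(Y_2,U_2,y') \wedge \forall y'' < \hat y.\big(y' < y'' \to \logic{digit}(Y_1,U_1,y'') \vee \logic{digit}(Y_2,U_2,y'')\big)\Big),
\]
and take $\logic{add}(\hat y)$ to be the Boolean combination expressing the three--way exclusive or of $\logic{digit}(Y_1,U_1,\hat y)$, $\logic{digit}(Y_2,U_2,\hat y)$ and $\logic{carry}(\hat y)$, with $\logic{bd-add} := \logic{max}(U_1,U_2) + \one$. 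For the comparison, I would express that there is no \emph{highest} position at which $Y_1$ has a $1$ and $Y_2$ has a $0$:
\[
  \logic{leq} := \forall y < \logic{max}(U_1,U_2).\big(\logic{digit}(Y_1,U_1,y) \wedge \neg \logic{digit}(Y_2,U_2,y) \to \exists y' < \logic{max}(U_1,U_2).(y < y' \wedge \neg \logic{digit}(Y_1,U_1,y') \wedge \logic{digit}(Y_2,U_2,y'))\big).
\]

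For subtraction, I would mirror the addition construction with borrow in place of carry: the borrow into bit $\hat y$ is $1$ iff some earlier $j$ has $\neg Y_1(j) \wedge Y_2(j)$ (a borrow is generated) and every position strictly between $j$ and $\hat y$ has the two digits equal (the borrow propagates). The bit of the difference is once more the exclusive or of the two input digits with the borrow, and $U_1$ bits suffice. To realise the truncation $\dotminus$, I guard this formula by the substitution instance of $\logic{leq}$ that compares $\num{Y_2,U_2}$ with $\num{Y_1,U_1}$, forcing the output to $0$ whenever the difference would be negative; then $\logic{bd-sub} := U_1$. I do not anticipate a genuine obstacle: the correctness of each formula reduces to a routine induction on bit position verifying that the closed generate/propagate condition agrees with the usual inductive carry/borrow definition, and the only bookkeeping concern is the boundary case $\hat y \geq U_i$, which is handled automatically by the $y < U$ conjunct inside $\logic{digit}$.
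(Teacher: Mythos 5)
Your proposal is correct and takes essentially the same approach as the paper: the paper's own (very brief) proof also rests on the closed generate/propagate characterisation of the carry bit, notes that subtraction is analogous via borrow, and that $\le$ is directly expressible; you have simply filled in the formulas explicitly and verified the boundary cases, all of which check out.
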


\begin{proof}
  The key observation is that we can easily define the
  carry bits. Suppose that we want to add numbers $m,n$. Then for
  $i\ge 0$, the $i$th carry is $1$ if any only if there is a $j\le i$
  such that $\Bit(j,m)=\Bit(j,n)=1$, and for $j<k\le i$, either $\Bit(k,m)=1$ or $\Bit(k,n)=1$.

  We can use a similar observation for subtraction.

  Less-than-or-equal-to can easily be expressed directly.
\end{proof}

To define families of numbers, we use relation and function
variables of higher arity, treating the additional entries as
parameters. 
For a type $\vec t\in\{\ttv,\ttn\}^k$, let $Y$ be a relation variable of type
$\tta\ttn\vec
t\ttz$, and let $U$ be a number variable of type
$\vec t\to\ttn$. Then for every structure $A$, assignment $\Ca$, and tuple
$\vec c\in A^{\vec t}$ we let 
\begin{equation}\label{eq:num4}
  \num{Y,U}^{(A,\Ca)}(\vec
  c)=\sum_{\substack{(j,\vec
      c)\in\Ca(Y),\\j<\Ca(U)(\vec c)}}2^j.
\end{equation}
We can slightly extend this definition to a setting where $U$ is a function variable of type
$\vec t'\to\ttn$ for some subtuple $\vec t'$ of $\vec t$. For example,
in the following lemma we have $\vec t=\ttn$ and $\vec
t'=\emptytuple$.

\begin{lemma}\label{lem:ar2}
  Let $Y$ be a relation variable of type $\tta(\ttn,\ttn)\ttz$, and let
  $U$ be a function variable of type 
  $\emptytuple\to\ttn$. 
  Then there is an arithmetical $\FOC$-formula $\logic{s-itadd}$
  and  an arithmetical $\FOC$-term
    $\logic{{bd-s-itadd}}$\footnote{The 's' in $\logic{s-itadd}$ indicates that
 this is a simple version of iterated addition.} such
    that for all numerical assignments $\Ca$ we have
    \[
      \num{\logic{s-itadd}, \logic{{bd-s-itadd}}}^{\Ca}=
      \sum_{i<\Ca(U)}\num{Y,U}^{\Ca}(i).
    \]
\end{lemma}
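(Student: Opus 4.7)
The plan is to construct $\logic{s-itadd}(\hat y)$ so that, for a numerical assignment $\Ca$, it holds at $\hat y=p$ iff bit $p$ of $S := \sum_{i<N}\num{Y,U}^\Ca(i)$ is $1$, where $N:=\Ca(U)$. This then gives $\num{\logic{s-itadd},\logic{bd-s-itadd}}^\Ca = S$ for any sufficiently large bound, and I would take $\logic{bd-s-itadd}:=2\cdot\ord$ (using $S<N\cdot 2^N$).

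Start from the identity $S = \sum_j C_j\cdot 2^j$, where $C_j := \#\{i<N\,:\,(j,i)\in Y\}$ is the $j$th column sum. Each $C_j$ is directly expressible as an $\FOC$-counting term and is bounded by $N$, so each $\Bit(r,C_j)$ is $\FOC$-definable via Lemma~\ref{lem:bit}. A direct carry-based calculation then yields
\[
  \mathrm{bit}_p(S) \,=\, (C_p + K_p) \bmod 2,
  \qquad
  K_p \,:=\, \Big\lfloor \tfrac{1}{2^p}\sum_{q<p} C_q\cdot 2^q \Big\rfloor,
\]
and because each $C_q\le N$, one checks $K_p<N$, so $K_p$ fits in $\logic{len}(\ord)$ bits. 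The whole task reduces to expressing $K_p$ in $\FOC$.

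The main obstacle is that the inner sum $T_p := \sum_{q<p}C_q\cdot 2^q$ can be doubly exponential in $|A|$, so it is \emph{not} directly expressible as an $\FOC$-counting term. My plan is a truncate-and-correct construction: set $L := \logic{len}(\ord)+2$ and define the top-$L$ contribution
\[
  G_p \,:=\, \sum_{d=1}^{L}C_{p-d}\cdot 2^{L-d}
  \;=\;
  \#\{(d,i,s)\,:\,1\le d\le L,\;(p-d,i)\in Y,\;s<2^{L-d}\},
\]
which is $\FOC$-expressible because $L$, $2^{L-d}$ and $C_{p-d}$ are all polynomially bounded, and $2^{L-d}$ can be handled via $\logic{exp2}$. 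Writing $T_p = 2^{p-L}G_p + R_p$, the tail satisfies $R_p < N\cdot 2^{p-L}\le 2^{p-2}$. Further decomposing $G_p = 2^L Q + r$ with $Q := \lfloor G_p/2^L\rfloor$ and $r := G_p\bmod 2^L$, a short calculation gives $K_p\in\{Q,\,Q+1\}$, with the correction being $+1$ exactly when $R_p \ge (2^L-r)\cdot 2^{p-L}$.

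The hardest step will be resolving this $\pm 1$ correction, because $R_p$ itself is exponentially large. The redeeming point is that the correction depends only on $M := \lfloor R_p/2^{p-L}\rfloor$, a quantity $<N$ of the same structural form as $K_p$ but for the shifted position $p-L$. To avoid an unbounded recursion (which $\FOC$ cannot express at a single stroke), my plan is to express each of the $\logic{len}(\ord)$ bits of $M$ directly as an $\FOC$-counting term. Concretely, bit $k$ of $M$ is a symmetric function of the bits $\Bit(r, C_q)$ for $q<p-L$ and small $r$, because after expanding each $C_q$ via $\logic{bit}$ and regrouping contributions by their dyadic weight, only a polynomially bounded set of weighted bits can affect bit $k$ of the carry; a threshold over such a set is $\FOC$-expressible via counting. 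Substituting these expressions into the correction condition yields a single closed $\FOC$-formula for $K_p$, whence for $\mathrm{bit}_p(S)$. Assembling, $\logic{s-itadd}(\hat y)$ is the formula asserting that $C_{\hat y}+K_{\hat y}$ is odd, built from the counting term for $C_{\hat y}$, the truncated-top term $G_{\hat y}$, and the above expressions for the $\log N$ bits of the carry correction.
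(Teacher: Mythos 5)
Your reduction to the carry $K_p=\floor{\sum_{q<p}C_q 2^q/2^p}$ and the truncate-and-correct decomposition $K_p\in\{Q,Q+1\}$ are both arithmetically sound, and you correctly observe that the residual $\pm 1$ is governed by $M=\floor{R_p/2^{p-L}}$, which, as you say, has ``the same structural form as $K_p$'' at the shifted position $p-L$. That is exactly where the argument breaks. You claim to escape the resulting recursion by expressing the bits of $M$ ``directly'' as $\FOC$-counting terms, on the grounds that bit $k$ of $M$ is a symmetric/threshold function of the bits $\Bit(r,C_q)$ for which ``only a polynomially bounded set of weighted bits'' matter. This step is unsupported. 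Regrouping by dyadic weight, $M$ is the integer part of $\sum_{q,r}\Bit(r,C_q)2^{q+r}$ divided by $2^{p-L}$; that is a weighted threshold whose weights $2^{q+r}$ range up to roughly $2^m$, and it is not a symmetric Boolean function nor a polynomially-weighted counting threshold. Moreover, every bit $\Bit(r,C_q)$ with $q<p-L$ can affect $M$ through cascaded carry propagation, so no truncation to a polynomial window of positions suffices. Expressing an exponentially weighted threshold in $\FOC$ is essentially the same difficulty as iterated addition itself; as written, the step assumes the conclusion rather than supplying a base case for the recursion you set up. (A small additional slip: $\logic{bd\text{-}s\text{-}itadd}$ should be the term $2U$, not $2\cdot\ord$; the lemma's formulas are arithmetical and must be evaluable in purely numerical assignments, where $\ord$ is not meaningful.)

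The paper's proof avoids computing a deep carry at all. It iterates the column-sum move: replacing the family $(n_i)$ by the shifted column sums $(s_j 2^j)_j$ shrinks the ``spread'' of nonzero bit positions in each summand from $m$ to $\floor{\log m}$, and repeating gives $p^{(k+1)}=\floor{\log(p^{(k)}+1)}$, which reaches $1$ after $k^*=O(\log^* m)$ stages; at spread $1$ the family splits into two ordinary integers that are added by Lemma~\ref{lem:ar1}. The genuinely hard part, which your plan also needs but does not supply, is encoding this $\log^*$-deep iteration in a single fixed $\FOC$-formula; the paper does so by coding, per bit position, the entire trajectory of intermediate column sums as a single integer of bitsize $O((\log\log m)^2)$ and showing this code fits below $m$ (its Claims 1--6). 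Your truncate-and-correct idea could be salvaged along similar lines, but it would require an explicit bounded-depth unrolling device of that kind in place of the unjustified direct-expressibility claim.
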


\newcommand{\code}[1]{\ullcorner#1\ulrcorner}

The proof of this lemma requires some preparation.
Our first step will be to fix an encoding of sequences by natural
numbers. We first encode a sequence $\vec
i=(i_0,\ldots,i_{k-1})\in\Nat^*$ by the string 
\[
\sigma_1(\vec
i)\coloneqq\#\bin(i_0)\#\bin(i_2)\#\ldots\#\bin(i_{k-1})
\]
over the alphabet
$\{0,1,\#\}$. Then we replace every $0$ in $\sigma_1(\vec i)$ by $01$,
every $1$ by $10$, and every $\#$ by $11$ to obtain a string $\sigma_2(\vec
i)=s_{\ell-1}\ldots s_0$ over the alphabet $\{0,1\}$. We read $\sigma_2(\vec
i)$ as a binary number and let
\[
  \code{\vec
  i}\coloneqq
\begin{cases}
\displaystyle\bin^{-1}\big(\sigma_2(\vec i)\big)=\sum_{j=0}^{\ell-1}s_{j}2^j&\text{if }\vec i\neq\emptyset,\\
  0&\text{if }\vec i=\emptyset.
\end{cases}
\]

\begin{example}
  Consider the sequence $\vec i=(2,5,0)$. We have $\sigma_1(\vec
  i)=\#10\#101\#0$ and $\sigma_2(\vec i)=111001111001101101$. This yields
  \[
    \code{\vec i}=237\,165. \qedhere 
  \]
\end{example}

It is easy to see that the mapping $\code{\cdot}:\Nat^*\to\Nat$ is
injective, but not bijective.
Observe that for $\vec
i=(i_0,\ldots,i_{k-1})\in\Nat^*$ we have
\begin{equation}
  \label{eq:95}
  \bsize\big(\code{\vec i}\big)=\sum_{j=0}^{k-1}2(\bsize(i_j)+1)\le 4\sum_{j=0}^{k-1}\bsize(i_j)
\end{equation}
and thus
\begin{equation}
  \label{eq:96}
  \code{\vec i}<2^{4\sum_{j=0}^{k-1}\bsize(i_j)}.
\end{equation}

\begin{lemma}\label{lem:itadd1}
  \begin{enumerate}
  \item There is an arithmetical $\FOC$-formula $\logic{seq}(y)$ such
    that for all $n\in\Nat$ we have
    \[
      \sem{\logic{seq}}(n)=1\iff n=\code{\vec i}\text{ for some
        }\vec i\in\Nat.
    \]
    
  \item There is an arithmetical $\FOC$-term $\logic{seqlen}(y)$ such
    that for all $n\in\Nat$ we have
    \[
      \sem{\logic{seqlen}}(n)=
      \begin{cases}
        k&\text{if }n=\code{(i_0,\ldots,i_{k-1})}\text{ for some
        }k,i_0,\ldots,i_{k-1}\in\Nat,\\
        0&\text{otherwise}.
      \end{cases}
    \]
  \item There is an arithmetical $\FOC$-term $\logic{entry}(y,y')$ such
    that for all $j,n\in\Nat$ we have
    \[
      \sem{\logic{entry}}(j,n)=
      \begin{cases}
        i&\text{if }n=\code{(i_0,\ldots,i_{k-1})},j<k,i=i_j\text{ for some
        }k,i_0,\ldots,i_{k-1}\in\Nat,\\
        n&\text{otherwise}.
      \end{cases}
    \]
  \end{enumerate}
\end{lemma}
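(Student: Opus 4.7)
The plan is to build all three objects from the bit and length primitives of Lemma~\ref{lem:bit} and Corollary~\ref{cor:bit}, bounded quantification, counting terms, and the formula-to-term trick (already used in the proof of Corollary~\ref{cor:bit}) that turns a polynomially bounded unique-witness formula into an $\FOC$-term. I first introduce three auxiliary arithmetical formulas reading the pair of bits at an even bit position $p$: $\logic{hashat}(p,y)\coloneqq\logic{bit}(p,y)\wedge\logic{bit}(p+1,y)$ (the pair encodes $\#$), $\logic{zeroat}(p,y)\coloneqq\logic{bit}(p,y)\wedge\neg\logic{bit}(p+1,y)$, and $\logic{oneat}(p,y)\coloneqq\neg\logic{bit}(p,y)\wedge\logic{bit}(p+1,y)$; evenness of $p$ is expressible by $p=2\cdot\logic{div}(p,2)$. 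For (1), $\logic{seq}(y)$ asserts $y=0$, or else the conjunction of: $\logic{len}(y)$ is even and positive; the top pair is $\#$, i.e., $\logic{hashat}(\logic{len}(y)-2,y)$; every even $p<\logic{len}(y)$ satisfies at least one of $\logic{hashat},\logic{zeroat},\logic{oneat}$ at $p$ (forbidding the pair $00$); no two consecutive even positions both satisfy $\logic{hashat}$ (so every middle block $\bin(i_j)$ is nonempty); and $\neg\logic{hashat}(0,y)$ (so the last block $\bin(i_{k-1})$ is nonempty). For (2), with the indicator $[\psi]\coloneqq\#y_0<\one.\psi$ and multiplication,
\[
  \logic{seqlen}(y)\coloneqq[\logic{seq}(y)]\cdot\#p<\logic{len}(y).\bigl(p=2\cdot\logic{div}(p,2)\wedge\logic{hashat}(p,y)\bigr)
\]
counts the $\#$-pairs when $y$ is a valid code and returns $0$ otherwise.

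The bulk of the work is (3). I locate the $(j+1)$-st $\#$-pair of $y'$ counted from the top and read the bits of $\bin(i_j)$ sitting between it and the next lower $\#$-pair (or bit position $1$ if $j$ is the last index). Setting
\[
  \logic{nhashabove}(p,y')\coloneqq\#q<\logic{len}(y').\bigl(p<q\wedge q=2\cdot\logic{div}(q,2)\wedge\logic{hashat}(q,y')\bigr),
\]
the formula $\logic{hashat}(p,y')\wedge\logic{nhashabove}(p,y')=j\wedge p=2\cdot\logic{div}(p,2)$ uniquely determines an even $p<\logic{len}(y')$ whenever $j<\logic{seqlen}(y')$, so the formula-to-term trick with bound $\logic{len}(y')$ yields a term $\logic{hashpos}(j,y')$ returning this $p$. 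Using the indicator trick to case-split on $j+1<\logic{seqlen}(y')$, set
\[
  \logic{lowbit}(j,y')\coloneqq(\logic{hashpos}(j+1,y')+3)\cdot[j+1<\logic{seqlen}(y')]+1\cdot[j+1\ge\logic{seqlen}(y')].
\]
Then the bits of $i_j$ sit at positions $\logic{lowbit}(j,y'),\logic{lowbit}(j,y')+2,\ldots,\logic{hashpos}(j,y')-1$, with the $m$-th bit of $i_j$ equal to $\logic{bit}(\logic{lowbit}(j,y')+2m,y')$, because a non-$\#$ pair's character value is its high bit. Since $i_j\le y'$, the arithmetical formula
\[
  \phi_{\textup{ent}}(n,j,y')\coloneqq\forall k\le y'.\Bigl(\logic{bit}(k,n)\leftrightarrow\bigl(2k+\logic{lowbit}(j,y')<\logic{hashpos}(j,y')\wedge\logic{bit}(2k+\logic{lowbit}(j,y'),y')\bigr)\Bigr)
\]
uniquely determines $n=i_j$, and the formula-to-term trick with bound $y'+1$ turns it into a term $\logic{entryval}(j,y')$. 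Finally
\[
  \logic{entry}(j,y')\coloneqq\logic{entryval}(j,y')\cdot[\logic{seq}(y')\wedge j<\logic{seqlen}(y')]+y'\cdot[\neg(\logic{seq}(y')\wedge j<\logic{seqlen}(y'))]
\]
selects the default value $y'$ in the ``otherwise'' case.

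The main obstacle is not conceptual but accounting: tracking parity of bit positions, the offset $+3$ between a $\#$-pair and the lowest bit of the block above it, and the edge case $j=\logic{seqlen}(y')-1$ in which no lower $\#$-pair exists and $\logic{lowbit}$ must default to $1$. For each intermediate object the formula-to-term trick needs a polynomial bound; these are immediate, as bit positions are bounded by $\logic{len}(y')\le y'+1$ and $i_j$ by $y'$, so Lemma~\ref{lem:termbound} causes no trouble. Apart from this bookkeeping, the entire construction stays within the arithmetical fragment of $\FOC$.
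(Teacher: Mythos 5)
The student's approach is essentially the same as the paper's: use the bit predicate of Lemma~\ref{lem:bit}, inspect the bit pairs $(s_{2p},s_{2p+1})$ to recognise the characters $\#,0,1$, count $\#$-pairs for $\logic{seqlen}$, and extract the $j$-th block for $\logic{entry}$. Part (3) is worked out in considerably more detail than the paper's one-line sketch, and the bit accounting there is correct: the parity convention, the $+3$ offset from a $\#$-pair to the high bit of the lowest character of the block above it, and the edge case $j=\logic{seqlen}(y')-1$ (where $\logic{lowbit}$ defaults to $1$) all check out. Part (2) is also fine.

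There is, however, a genuine gap in part (1). Your $\logic{seq}(y)$ requires even length, a $\#$-pair at the top, no $00$-pairs, no adjacent $\#$-pairs, and $\neg\logic{hashat}(0,y)$ — but it has no condition ruling out \emph{leading zeros} in a block. Recall $\bin(i)$ never begins with $0$ unless it is exactly $\bin(0)="0"$. So a string such as $\#01\#1$ — encoding $n=878$, say — passes all five of your conditions (it is even length, the top pair is $\#$, every pair is one of $\#,0,1$, the two $\#$-pairs at bit positions $8$ and $2$ are not adjacent, and the bottom pair is a $1$), yet it is not $s_2(\vec i)$ for any $\vec i$, because "$01$" is not in the image of $\bin$. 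So $\sem{\logic{seq}}(878)=1$ while $878\neq\code{\vec i}$ for any $\vec i$, violating the "only if" direction of the lemma. This is not a harmless overcount: in Claim~2 of the proof of Lemma~\ref{lem:ar2}, the formula $\phi(y,z)$ must pin down a \emph{unique} $z<y$, and your $\logic{entry}$ decodes a leading-zero block to the same integer as the canonical one, so several distinct $z$ would satisfy $\phi$, and the formula-to-term trick — which returns the \emph{largest} witness below the bound — would pick a non-canonical code rather than $\code{(p^{(1)},\ldots,p^{(k^*)})}$. The paper's sketch has exactly this condition as its second bullet ("if $s'_p=\#$ and $s'_{p-1}=0$ then $p=1$ or $s'_{p-2}=\#$"). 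The fix is to add a clause saying that if an even $p$ has $\logic{hashat}(p,y)$ and $\logic{zeroat}(p-2,y)$ then either $p=2$ or $\logic{hashat}(p-4,y)$; with that added, your construction goes through.
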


\begin{proof}
  Let
  $
    S=\{\sigma_2(\vec i)\mid \vec i\in\Nat^*\}.
  $
  Let $\bin(n)\eqqcolon\vec s=s_{\ell-1}\ldots s_0$. We want to detect
  if $\vec s\in S$.
  As a special case, we note that if $\ell=0$ and thus $\vec s=\emptyset$,
  we have $\vec s=\sigma_2(\emptyset)$. In the following, we assume that
  $\ell>0$. Then if $\ell$ is odd, we have $\vec s\not\in S$.
  Furthermore, if there is a $p<\frac{\ell}{2}$ such that $s_{2p}=0$ and
  $s_{2p+1}=0$, again we have $\vec s\not\in S$,  because $\vec s$ is not obtained from
  a string $\vec s'\in\{0,1,\#\}^*$ by replacing $0$s by $01$, $1$s
  by $10$, and $\#$s by $11$. Otherwise, we let $\vec
  s'=s'_{\frac{\ell}{2}-1}\ldots s'_0\in\{0,1,\#\}^*$ be the corresponding
  string. Then $\vec s'= \sigma_1(\vec i)$ for some
  $\vec i\in\Nat^*$ if and only if $\vec s'$ satisfies the following conditions:
  \begin{itemize}
  \item $s'_0\neq\#$ and $s'_{\ell/2-1}=\#$;
  \item for all $p<\frac{\ell}{2}$, if $s'_p=\#$ and $s'_{p-1}=0$ then either
    $p=1$ or $s_{p-2}=\#$.
  \end{itemize}
  We can easily translate the conditions to conditions on the string
  $\vec s$ and, using the bit predicate, to
  conditions on $n$. As the bit predicate is definable in $\FOC$ (by
  Lemma~\ref{lem:bit}), we can express these conditions by an
  arithmetical $\FOC$-formula $\logic{seq}(y)$.
  
  \medskip To prove (2), we observe that if $n=\code{\vec i}$ for some
  $\vec i\in\Nat^*$, then the length of the sequence
  $\vec i$ is the number of $\#$s in the string
  $\sigma_1(\vec i)$, or equivalently, the
  number of $p<\frac{\ell}{2}$ such that $\Bit(2p,n)=1$ and
$\Bit(2p+1,n)=1$ Using the formula $\logic{seq}(y)$ and the bit
predicate, we can easily express this by a term $\logic{seqlen}(y)$.

  \medskip
  To prove (3), we first write an arithmetical formula
  $\logic{isEntry}(y,y',y'')$ such that
  \[
    \sem{\logic{isEntry}}(j,n,i)=1\iff n=\code{(i_0,\ldots,i_{k-1})},j<k,i=i_j\text{ for some
    }k,i_0,\ldots,i_{k-1}\in\Nat.
  \]
  Once we have this formula, we let
  \[
    \logic{entry}(y,y')=\logic{min}\Big(y',\#z<y'.\exists
    y''<y'.\big(\logic{isEntry}(y,y',y'')\wedge z<y''\big)\Big).
  \]
  To define $\logic{isEntry}(y,y',y'')$, observe that for $\vec
  i=(i_0,\ldots,i_{k-1})\in\Nat^*$, the $j$th entry $i_j$ is located
  between the $j$th and $(j+1)$st '$\#$' in the string $\sigma_1(\vec i)$
  and thus between the $j$th and $(j+1)$st occurrence of '$11$' at
  positions $2p,2+1$ in the string $\sigma_2(\vec i)=\bin(\code{\vec
    i})$. Using the bit predicate, we can thus extract the bit
  representation of $i_j$ from $\code{\vec i}$ in $\FOC$.
\end{proof}

\begin{lemma}\label{lem:itadd2}
  There are arithmetical $\FOC$-terms $\logic{flog}(y)$ and
  $\logic{clog}(y)$ such that for all for all $n\in\PNat$,
  \[
    \sem{\logic{flog}}(n)=\floor{\log n}
    \quad\text{and}\quad
    \sem{\logic{clog}}(n)=\ceil{\log n}.
  \]
\end{lemma}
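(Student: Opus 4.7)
The plan is to reduce both terms directly to the length function $\logic{len}$ from Corollary~\ref{cor:bit}, using the elementary identity
\[
\floor{\log n} = |\bin(n)|-1 \quad \text{for all } n \in \PNat,
\]
and handling the ceiling by distinguishing whether $n$ is a power of $2$.

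\textbf{Floor.} Since $2^{|\bin(n)|-1} \le n < 2^{|\bin(n)|}$ whenever $n \ge 1$, taking logarithms gives $\floor{\log n} = |\bin(n)|-1$. Thus I would simply define
\[
\logic{flog}(y) \;\coloneqq\; \logic{len}(y)\dotminus\one,
\]
where $\dotminus$ is truncated subtraction (already introduced as an abbreviation in $\FOC$). For $n \ge 1$ this yields $\sem{\logic{flog}}(n) = \floor{\log n}$ as required.

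\textbf{Ceiling.} For $n \ge 1$, one has $\ceil{\log n} = \floor{\log n}$ precisely when $n$ is a power of $2$ (with the convention $\ceil{\log 1}=0$, which is also covered since $1 = 2^0$), and $\ceil{\log n} = \floor{\log n} + 1$ otherwise. The proof of Lemma~\ref{lem:bit} already exhibits an arithmetical $\FOC$-formula $\logic{pow2}(y)$ expressing that $y$ is a power of $2$. To turn the Boolean value of $\neg\logic{pow2}(y)$ into an additive correction, I would use the standard trick of a counting term with a singleton range, namely
\[
\chi(y) \;\coloneqq\; \#(y'<\one).\neg\logic{pow2}(y),
\]
which evaluates to $1$ if $y$ is not a power of $2$ and to $0$ otherwise. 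Then
\[
\logic{clog}(y) \;\coloneqq\; \logic{flog}(y) + \chi(y)
\]
has $\sem{\logic{clog}}(n) = \ceil{\log n}$ for every $n \in \PNat$.

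Both constructions only use $\logic{len}$, $\logic{pow2}$, the abbreviations $\dotminus$ and bounded counting, and addition, all of which are arithmetical in $\FOC$. There is no real obstacle; the only thing to check carefully is the boundary case $n=1$, where $\logic{pow2}$ holds and $\logic{len}(1)=1$, giving $\logic{flog}(1) = 0 = \ceil{\log 1}$ and $\logic{clog}(1) = 0$, as wanted.
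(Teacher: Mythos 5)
Your proof is correct and follows essentially the same approach as the paper, whose entire argument is the one-line observation that $\floor{\log n}$ is the position of the highest $1$-bit in $\bin(n)$, i.e.\ $|\bin(n)|-1$, which is exactly what your $\logic{len}(y)\dotminus\one$ computes. Your explicit handling of the ceiling via the $\logic{pow2}$ indicator (and the check that $\logic{pow2}(1)$ holds) just fills in a detail the paper leaves implicit under ``straightforward.''
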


\begin{proof}
  This is straightforward, observing that $\floor{\log n}$ is the
  highest $1$-bit in the binary representation of $n$.
\end{proof}

\begin{proof}[Proof of Lemma~\ref{lem:ar2}]
  For the proof, it will be convenient to fix a numerical assignment
  $\Ca$. Of course the $\FOC$-expressions we shall define will not
  depend on $\Ca$. Let $m\coloneqq\Ca(U)$, and for $0\le i<m$, let
  $n_i\coloneqq\num{Y,U}^{\Ca}(i)$. Then $n_i<2^m$ and, and for $0\le
  j<m$ we have
  \[
   \Bit(j,n_i)=1\iff (j,i)\in\Ca(Y).
  \]
  It is our goal to compute $\sum_{i=0}^{m-1} n_i$ in such a way that
  the computation can be expressed in $\FOC$. More precisely, we want
  to define a formula $\logic{s-itadd}$
  and  a term
  $\logic{{bd-s-itadd}}$, which both may use the variables $Y$ and
  $U$, such that
    \[
      \num{\logic{s-itadd}, \logic{{bd-s-itadd}}}^{\Ca}=\sum_{i=0}^{m-1} n_i.
    \]
  Since
  $
    \sum_{i=0}^{m-1}
    n_i<2^{2m},
  $
  we can simply let $\logic{bd-s-itadd}\coloneqq 2U$. Thus we only
    need to define the formula $\logic{s-itadd}(\hat y)$ in such a way
    that for all $j<2m$ we have
  \begin{equation}
    \label{eq:99}
      \sem{\logic{s-itadd}}^\Ca(j)=\Bit\left(j, \sum_{i=0}^{m-1}
        n_i\right).
    \end{equation}
    Without of loss of generality we may assume that $m$ is
    sufficiently large, larger than some absolute constant that can be
    extracted from the proof. If $m$ is smaller than this constant,
    we simply compute the sum by repeatedly applying
    Lemma~\ref{lem:ar1}(1).

Our construction will be inductive, repeatedly transforming the
initial sequence of numbers $n_i$ into new sequences that have the
same sum. It will be convenient to use the index ${}^{(0)}$ for the
initial family. Thus we let 
$m^{(0)}\coloneqq m$, and $n^{(0)}_i\coloneqq n_i$,
for all
$i\in\{0,\ldots,m^{(0)}-1\}$. Furthermore, it will be useful to
let $\ell^{(0)}\coloneqq m^{(0)}$, because at later stages $t$ of the
construction the size $m^{(t)}$ of the current family of numbers will
no longer be identical with their bitlength $\ell^{(t)}$. 

For $j<\ell^{(0)}$, let
$n^{(0)}_{i,j}\coloneqq \Bit(j,n^{(0)}_{i})$ and
\begin{equation}
    \label{eq:93}
  s^{(0)}_j\coloneqq\sum_{i=0}^{m^{(0)}-1}n^{(0)}_{i,j}=\big|\big\{i\bigmid
  n^{(0)}_{i,j}=1\big\}\big|.
\end{equation}
Then
\[
  \sum_{i=0}^{m^{(0)}-1}n^{(0)}_i=\sum_{i=0}^{m^{(0)}-1}\sum_{j=0}^{\ell^{(0)}-1} 2^j n^{(0)}_{i,j}
  =\sum_{j=0}^{\ell^{(0)}-1} 2^j\sum_{i=0}^{m^{(0)}-1} n^{(0)}_{i,j}
  =\sum_{j=0}^{\ell^{(0)}-1} 2^j s^{(0)}_j.
\]
Let $m^{(1)}\coloneqq\ell^{(0)}$ and $n^{(1)}_i\coloneqq 2^is^{(0)}_i$
for $i<\ell^{(0)}$. Then
\begin{equation*}
  \sum_{i=0}^{m-1}n_i=\sum_{i=0}^{m^{(0)}-1}n_i^{(0)}=\sum_{i=0}^{m^{(1)}-1}n_i^{(1)}. 
\end{equation*}
Moreover,
\[
  n^{(1)}_i\le
  2^{\ell^{(0)}-1}s^{(0)}_i=2^{\ell^{(0)}-1+\log
    s^{(0)}_i}<2^{\ell^{(0)}+\floor{\log s^{(0)}_i}}.
\]
Let $p^{(1)}\coloneqq \floor{\log m^{(0)}}$ and $\ell^{(1)}\coloneqq
\ell^{(0)}+p^{(1)}$. Noting that $s^{(0)}_i\le m^{(0)}$
for all $i$,
we thus have
\[
  n^{(1)}_i<2^{\ell^{(1)}}
\]
for all $i$. Finally, note that
\[
  \Bit(j,n^{(1)}_i)\neq0\implies i\le j\le i+p^{(1)}
\]
This completes the base step of the construction. For the inductive
step, suppose that we have defined $m^{(k)},\ell^{(k)},p^{(k)}\in\PNat$ and
$n^{(k)}_i\in\Nat$ for $i\in\{0,\ldots,m^{(k)}-1\}$ such that
\[
  \sum_{i=0}^{m-1}n_i=\sum_{i=0}^{m^{(k)}-1}n_i^{(k)}
\]
and
for all $i$:
\begin{itemize}
\item $n^{(k)}_i<2^{\ell^{(k)}}$ for all $i$;
\item $\Bit(j,n^{(k)}_i)\neq0\implies i\le j\le i+p^{(k)}$.
\end{itemize}
For $j<\ell^{(k)}$, let
$n^{(k)}_{i,j}\coloneqq \Bit(j,n^{(k)}_{i})$ and
\[
  s^{(k)}_j\coloneqq\sum_{i=0}^{m^{(k)}-1}n^{(k)}_{i,j}=\sum_{i=\max\{0,j-p^{(k)}\}}^j n^{(k)}_{i,j}.
\]
Then
\[
  \sum_{i=0}^{m^{(k)}-1}n^{(k)}_i
=\sum_{j=0}^{\ell^{(k)}-1} 2^j s^{(k)}_j.
\]
Let $m^{(k+1)}\coloneqq\ell^{(k)}$ and $n^{(k+1)}_i\coloneqq
2^is^{(k)}_i$ for $i<\ell^{(k)}$. Then
\[
  \sum_{i=0}^{m-1}n_i=\sum_{i=0}^{m^{(k)}-1}n_i^{(k)}=\sum_{i=0}^{m^{(k+1)}-1}n_i^{(k+1)}.
\]
Moreover,
\[
  n^{(k+1)}_i\le
  2^{\ell^{(k)}-1}s^{(k)}_i
  =2^{\ell^{(k)}-1+\log s^{(k)}_i}
  <2^{\ell^{(k)}+\floor{\log s^{(k)}_i}}.
\]
Note that $s^{(k)}_i\le p^{(k)}+1$. Let $p^{(k+1)}\coloneqq
\floor{\log(p^{(k)}+1)}$ and $\ell^{(k+1)}\coloneqq
\ell^{(k)}+p^{(k+1)}$. Then
\[
  n^{(k+1)}_i<2^{\ell^{(k+1)}}
\]
and for all $j$,
\[
  \Bit(j,n^{(k+1)}_i)\neq 0\implies i\le j\le i+p^{(k+1)}.
\]
Observe that if
$p^{(k)}=1$ then $p^{(k')}=1$ for all $k'>k$. Let $k^*$ be the least
$k$ such that $p^{(k)}=1$. It is easy to see that
\begin{equation}
  \label{eq:100}
  \begin{array}{rcccll}
    &&p^{(k^*-1)}&=&2,\\
    3&\le&p^{(k^*-2)}&\le& 6,\\
    7&\le& p^{(k^*-3)}&\le& 126,\\
    127&\le& p^{(k)}&&&\text{for }k<k^*-3.
  \end{array}
\end{equation}

\begin{techclaim}\label{dummy}\label{claim:a:1}
  \begin{enumerate}
\item $\displaystyle\sum_{i=2}^{k^*}ip^{(i)}\le 7\log\log
    m$;
  \item $\displaystyle \sum_{i=1}^{k^*}\bsize(p^{(i)})\le 7\log\log
    m$.
  \end{enumerate}
\end{techclaim}
\begin{subproof}
  By induction on $k=k^*,k^*-1,\ldots,2$ we prove
  \begin{equation}
    \label{eq:98}
    \sum_{i=k}^{k^*}i\cdot p^{(i)}\le 3k p^{(k)}
  \end{equation}
  As base case, we need to check \eqref{eq:98} for
  $k\in\{k^*,k^*-1,k^*-2,k^*-3\}$. Using \eqref{eq:100}, this is straightforward. For example, if $k=k^*-2$ and
  $p^{(k)}=3$ we have 
    \[
      \sum_{i=k}^{k^*}ip^{(i)}=3(k^*-2)+2(k^*-1)+k^*=6k^*-8\le
      9k^*-18=3kp^{(k)}.
    \]
    The inequality $6k^*-8\le
      9k^*-18$ holds because if $2\le k\le k^*-2$ then
    $k^*\ge 4$, which implies $3k^*\ge 12$. 
    Similarly, if $k=k^*-3$ and $p^{(k)}=15$ we have
  $p^{(k+1)}=\floor{\log(p^{(k)}+1)}=4$ and thus
    \[
      \sum_{i=k}^{k^*}ip^{(i)}=15(k^*-3)+4(k^*-2)+2(k^*-1)+k^*=22k^*-55\le
      45k^*-135=3kp^{(k)}.
    \]
    The inequality $22k^*-55\le
      45k^*-135$ holds because if $2\le k\le k^*-3$ then
    $k^*\ge 5$.

  For the inductive step $k+1\mapsto
  k$, where $2\le k<k^*-3$, we argue as follows:
  \begin{align*}
    \sum_{i=k}^{k^*}i\cdot p^{(i)}
    &=k\cdot p^{(k)}+\sum_{i=k+1}^{k^*}i\cdot p^{(i)}\\
    &\le k\cdot p^{(k)}+3(k+1)p^{(k+1)}&\text{induction hypothesis}\\
    &\le k\cdot p^{(k)}+3(k+1)\log(p^{(k)}+1)\\
    &\le k\cdot p^{(k)}+4k\log(p^{(k)}+1)&\text{because }k\ge 2.
  \end{align*}
  Since by \eqref{eq:100} we have $p^{(k)}\ge 127$ for $k< k^*-3$, we have $p^{(k)}\ge
  4\log(p^{(k)}+1)$.  Inequality \eqref{eq:98} follows:
  \[
    \sum_{i=k}^{k^*}i\cdot p^{(i)}\le p^{(k)}+4k\log(p^{(k)}+1)\le 2kp^{(k)}.
  \]
  For $k=2$, this yields
  \[
    \sum_{i=2}^{k^*}i\cdot p^{(i)}\le 6p^{(2)}\le 6\log (p^{(1)}+1)\le
    6\log (\log m+1)\le 7\log\log m.
  \]

  \medskip
  To prove (2), note that $\bsize(p^{(k^*)})=\bsize(1)=1$ and
  $\bsize(p^{(k)})=\ceil{\log(p^{(k)}+1)}\le p^{(k+1)}+1$ for
  $k<k^*$. Thus (2) holds if $k^*=1$. If $k^*\ge 2$ we have
  \[
    \sum_{i=1}^{k^*}\bsize(p^{(i)})
    =1+\sum_{i=2}^{k^*}(p^{(i)}+1)
    =k^*+
    \sum_{i=2}^{k^*}p^{(i)}\le k^*+1+\sum_{i=2}^{k^*-1} p^{(i)}\le \sum_{i=2}^{k^*}ip^{(i)},
  \]
  and (2) follows from (1).
\end{subproof}

\begin{techclaim}\label{claim:a:2}
  There is an arithmetical $\FOC$-term $\logic{pseq}(y)$ such
  that
  \[
    \sem{\logic{pseq}}^\Ca(m)=\code{(p^{(1)},\ldots,p^{(k^*)})}.
  \]
  \end{techclaim}
\begin{subproof}
  Let
  \begin{align*}
    \phi(y,z)\coloneqq\,&
                          \logic{seq}(z)\wedge\logic{entry}(0,z)=\logic{flog}(y)\\
    &\wedge
    \forall
    y'<\logic{seqlen}(z)-1.\logic{entry}(y'+1,z)=\logic{flog}(\logic{entry}(y',z)+1)
  \end{align*}
  where the formula $\logic{seq}$ and the terms $\logic{entry}$,
  $\logic{seqlen}$ are from Lemma~\ref{lem:itadd1} and the term
  $\logic{flog}$ is from Lemma~\ref{lem:itadd2}. Then for all
  $q\in\Nat$ we have
  \[
    \sem{\phi}^\Ca(m,q)=1\iff q=\code{(p^{(1)},\ldots,p^{(k^*)})}.
  \]
  By Claim~\ref{claim:a:1}(2) and \eqref{eq:96} we have
  $\code{(p^{(1)},\ldots,p^{(k^*)}}<m$ (for sufficiently large $m$). Thus the term
  \[
    \logic{pseq}(y)\coloneqq \#y'<y.\exists z<y.\big(\phi(y,z)\wedge
    y'<z\big).
  \]
  defines $\code{(p^{(1)},\ldots,p^{(k^*)})}$.\uend
\end{subproof}

\medskip
Recall that for $1\le k\le k^*$ and $0\le j<\ell^{(k)}$ we have
\[
  s_j^{(k)}=\sum_{i=\max\{0,j-p^{(k)}\}}^jn_{i,j}^{(k)}.
\]
To simplify the notation, in the following, we let $n_i^{(k)}=0$ and
$s_i^{(k)}=0$ for all $k$ and $i<0$. Of course then the bits
$n_{i,j}^{(k)}$ are also $0$, and we can write
\[
  s_j^{(k)}=\sum_{i=j-p^{(k)}}^jn_{i,j}^{(k)}.
\]
As $n_{i,j}^{(k)}=\Bit(j,n_i^{(k)})$ and $n_i^{(k)}=2^is_i^{(k-1)}$,
we thus have
\begin{align}
  \notag
  s_j^{(k)}&=\sum_{i=j-p^{(k)}}^j\Bit(j, 2^is_i^{(k-1)})
  =\sum_{i=j-p^{(k)}}^j\Bit(j-i, s_i^{(k-1)})\\
    \label{eq:97}
  &=\big|\big\{ i\bigmid j-p^{(k)}\le i\le j,\Bit(j-i,
  s_i^{(k-1)})=1\big\}\big|.
\end{align}

\begin{techclaim}\label{claim:a:3}
  For every $t\ge 0$ there is an arithmetical $\FOC$-terms $\logic{s}^{(t)}(y)$ 
  such that for all $j\in\Nat$ we have
  \[
    \sem{\logic{s}^{(t)}}^\Ca(j)=s^{(t)}_j.
  \]

 \end{techclaim}
\begin{subproof}
  We define the terms inductively, using \eqref{eq:93} for the base
  step and \eqref{eq:97} for the inductive step.
  \uend
\end{subproof}

\medskip
Note that this claim only enables us to define the numbers $s^{(k^*)}_j$ by
a formula that depends on $k^*$ and hence on the input, or more
formally, the assignment $\Ca$. Claim~\ref{claim:a:6} below will show that we can
also define the $s^{(k*)}_j$ by a formula that is independent of the
input. (We will only use Claim~\ref{claim:a:3} to define the $s^{(2)}_j$.)

\begin{techclaim}\label{claim:a:4}
  There is an arithmetical $\FOC$-formula $\logic{step}(y,z)$
  such that for all $k\in[k^*]$, $j\in\{0,\ldots,\ell^{(k)}-1\}$, and
  $j,k,s,t\in\Nat$, if
  \[
    t=\code{(s^{(k-1)}_{j-p^{(k)}},s^{(k-1)}_{j-p^{(k)}+1},\ldots,i_{s}=s^{(k-1)}_j)}
  \]
  then
  \[
    \sem{\logic{step}}^\Ca(s,t)=1\iff s=s^k_j.
  \]

 \end{techclaim}
\begin{subproof}
  This follows easily from \eqref{eq:97}.
\end{subproof}

To compute $s_j^{(k^*)}$, we need to know $s_j^{(k^*-1)},\ldots,
s_{j -p^{(k^*)}}^{(k^*-1)}$. To compute these numbers, we need to know 
$s_j^{(k^*-2)},\ldots,
s_{j -p^{(k^*)}-p^{(k^*-1)}}^{(k^*-2)}$, et cetera. Thus if we want to
compute $s_j^{(k^*)}$ starting from values $s^{(k)}_{j'}$, we
need to know $s_{j'}^{(k)}$ for
\[
  j-\sum_{i=k+1}^{k^*}p^{(i)}\le j'\le j.
\]
For $2\le k<k^*$, we let
\[
  \vec s_{jk}\coloneqq
  (s^{(k)}_{j-\sum_{i=k+1}^{k^*}p^{(i)}},\ldots,s^{(k)}_j),
\]
and we let $\vec s_{jk^*}\coloneqq(s^{(k^*)}_j)$. Concatenating these
sequences, we let
\[
  \vec s_j\coloneqq\vec s_{j2}\vec s_{j3}\ldots\vec s_{jk^*}.
\]

\begin{techclaim}\label{claim:a:5}
  There is an arithmetical $\FOC$-formula $\logic{all-s}(y,z)$ such
  that for all $t\in\Nat$ and $j\in\{0,\ldots,\ell^{(k^*)}-1\}$ we have
  \[
    \sem{\logic{all-s}}^\Ca(j,t)=1\iff t=\code{\vec s_j}.
  \]

  \end{techclaim}
\begin{subproof} This follows from Claim~\ref{claim:a:2} (to get the $p^{(i)}$ as well as
  $k^*$), Claim~\ref{claim:a:3} (to get the base values $s^{(2)}_{j'}$), and Claim~\ref{claim:a:4}
  (to verify the internal values of the sequence), just requiring a
  bit of arithmetic on the indices.
  \uend
\end{subproof}

\begin{techclaim}\label{claim:a:6}
  There is an arithmetical $\FOC$-term $\logic{skstar}(y)$ such
  that for all $j\in\Nat$ we have
  \[
    \sem{\logic{skstar}}^\Ca(j)=s^{(k^*)}_j.
  \]
  \end{techclaim}
\begin{subproof}
  We first prove that for all $j$ we have
  \begin{equation}
    \label{eq:94}
    \code{\vec s_j}< m.
  \end{equation}
  The key observation is that the length of the sequence $\vec s_j$ is
  \[
    |\vec s_j|=\sum_{k=2}^{k*}|\vec
    s_{jk}|=\sum_{k=2}^{k*}\left(1+\sum_{i=k+1}^{k^*}p^{(i)}\right)
    \le \sum_{k=2}^{k*}\sum_{i=k}^{k^*}p^{(i)}\le\sum_{k=2}^{k^*}i
    p^{(i)}\le7\log\log m,
  \]
  where the last inequality holds by Claim~\ref{claim:a:1}. Moreover, for $2\le k\le
  k^*$ and $0\le j<\ell^{(k)}$ we have $s^{(k)}_j\le p^{(k)}+1\le
  p^{(2)}+1\le \log m +1$ and thus $\bsize(s^{(k)}_j)=O(\log\log
  m)$. Thus $\vec s_j$ is a sequence of $O(\log\log m)$ numbers, each of
  bitsize $O(\log\log m)$. Thus the bitsize of the sequence is
  $O((log\log m)^2)$, and it follows from \eqref{eq:96} that for some
  constant $c$,
  \[
    \code{\vec s_j}\le 2^{c(\log\log m)^2}<m,
  \]
  again assuming that $m$ is sufficiently large. This proves
  \eqref{eq:94}.

  We let
  \[
    \logic{skstar}(y)\coloneq\exists z<U.\Big(\logic{all-s}(y,z)\wedge
    y=\logic{entry}\big(\logic{seqlen}(z)-1,z\big)\Big),
  \]
  where the formula $\logic{all-s}(y,z)$ is from Claim~\ref{claim:a:5} and the terms
  $\logic{entry}$ and $\logic{seqlen}$ are from
  Lemma~\ref{lem:itadd1}. 
  \uend
\end{subproof}

\medskip
It remains to compute $\sum_{i=0}^{m^{(k^*)}-1}n^{(k^*+1)}_i$, where
$n_i^{(k^*+1)}=2^is_i^{(k^*)}$. Since $s^{(k^*)}_j\le 2$, we have
$\Bit(j,n^{(k^*)}_i)=0$ unless $j\in\{i-1,i\}$. We split the sum into
the even and the odd entries:
\[
  \sum_{i=0}^{m^{(k^*)}-1}n^{(k^*+1)}_i=
  \underbrace{
    \sum_{i=0}^{\ceil{m^{(k^*)}/2}-1}n^{(k^*)}_{2i}
  }_{\eqqcolon n^*_1}
  +
  \underbrace{
    \sum_{i=0}^{\floor{m^{(k^*)}/2}-1}n^{(k^*)}_{2i+1}
  }_{\eqqcolon n^*_2}.
\]
Since the entries in the two partial sums have no non-zero bits in
common, it is easy to define the two partial sums (of course, using
Claim~\ref{claim:a:6}). Then we can apply Lemma~\ref{lem:ar1} to define
$n_1^*+n_2^*$. 
\end{proof}

\begin{lemma}\label{lem:ar2a}
  Let $Y_1,Y_2$ be relation variables of type $\tta\ttn\ttz$, and let
  $U_1,U_2$ be function variables of type $\emptytuple\to\ttn$.
  Then there are arithmetical $\FOC$-formulas $\logic{mul}$, $\logic{div}$, and  $\FOC$-terms
    $\logic{{bd-mul}}$, $\logic{{bd-div}}$ such
    that for all numerical assignments $\Ca$,
    \begin{align*}
            \num{\logic{mul}, \logic{{bd-mul}}}^{\Ca}
      &=\num{Y_1,U_1}^{\Ca}\cdot
        \num{Y_2,U_2}^{\Ca},\\
            \num{\logic{div}, \logic{{bd-div}}}^{\Ca}
      &=\floor{\frac{\num{Y_1,U_1}^{\Ca}}
        {\num{Y_2,U_2}^{\Ca}}}&\text{if }\num{Y_2,U_2}^{\Ca}\neq0.
    \end{align*}
 \end{lemma}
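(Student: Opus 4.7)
The plan is to handle multiplication and division by different strategies: multiplication reduces cleanly to iterated addition via Lemma~\ref{lem:ar2}, while division I would obtain by appealing to Lemma~\ref{lem:tc0ar} (integer division is in dlogtime-uniform $\TC^0$) combined with Theorem~\ref{theo:bis} ($\FOC$ captures uniform $\TC^0$). The bound terms are immediate in both cases: $\logic{bd-mul}\coloneqq U_1+U_2$ since $\num{Y_1,U_1}^\Ca\cdot\num{Y_2,U_2}^\Ca<2^{U_1+U_2}$, and $\logic{bd-div}\coloneqq U_1$ since $\floor{n_1/n_2}\le n_1<2^{U_1}$ whenever $n_2>0$.

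For multiplication, the schoolbook identity
\[
  n_1\cdot n_2=\sum_{i<\Ca(U_2)}\Bit(i,n_2)\cdot 2^i\cdot n_1
\]
(with $n_k=\num{Y_k,U_k}^\Ca$) writes the product as an iterated sum of $\Ca(U_2)$ summands, each strictly less than $2^{U_1+U_2}$, and the $\hat y$-th bit of the $y$-th summand equals $1$ exactly when $y<U_2\wedge Y_2(y)\wedge y\le\hat y\wedge(\hat y\dotminus y)<U_1\wedge Y_1(\hat y\dotminus y)$. I would then substitute this formula for each occurrence of the atom ``$Y(\hat y,y)$'', and substitute the term $U_1+U_2$ for each occurrence of ``$U$'', throughout the formula $\logic{s-itadd}$ and the term $\logic{bd-s-itadd}$ from Lemma~\ref{lem:ar2}. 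Because the substitution is semantics-preserving for iterated summation, this directly produces the required $\logic{mul}$ and $\logic{bd-mul}$.

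For division, note first that the naive counting characterisation $\floor{n_1/n_2}=\#\{k\ge 1:k\cdot n_2\le n_1\}$ would require counting up to $2^{U_1}$ witnesses, which exceeds the polynomial term bound of Lemma~\ref{lem:termbound}, so a direct reduction to iterated addition does not go through. Instead, I would invoke Lemma~\ref{lem:tc0ar} to obtain a dlogtime-uniform bounded-depth polynomial-size threshold circuit family for $\operatorname{DIV}$, and then invoke Theorem~\ref{theo:bis} to get, for each output bit index $\hat y$, an $\FOC$-formula (in a vocabulary encoding the pair of input bit-strings as an ordered relational structure) whose truth value equals the $\hat y$-th bit of $\floor{n_1/n_2}$. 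Translating this formula back into our arithmetical setting, where inputs are presented as the pairs $(Y_k,U_k)$ rather than as bitstring structures, yields the required arithmetical formula $\logic{div}(\hat y)$.

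The main obstacle I anticipate is the bookkeeping in the division translation: Theorem~\ref{theo:bis} is formulated over ordered relational structures, whereas our arithmetical fragment handles inputs as free purely-numerical relation and function variables. One must encode the pair of bit-representations as an ordered structure of size polynomial in $U_1+U_2$, apply Theorem~\ref{theo:bis}, and then rewrite the resulting formula using $\logic{bit}$ from Lemma~\ref{lem:bit} and the surrounding machinery to access the bits of $Y_1$ and $Y_2$ directly. All these translation steps are already in our toolbox, so this is routine in principle, but it needs to be spelled out carefully.
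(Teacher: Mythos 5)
Your proposal is correct and follows essentially the same route as the paper: multiplication is reduced to iterated addition (Lemma~\ref{lem:ar2}) via the schoolbook decomposition, and division is obtained from the fact that $\operatorname{DIV}$ is in dlogtime-uniform $\TC^0$ (Lemma~\ref{lem:tc0ar}, which rests on Hesse's result) together with Theorem~\ref{theo:bis}. The paper's actual proof is just two sentences — it cites \cite{HesseAB02} directly for division rather than passing explicitly through Lemma~\ref{lem:tc0ar} and Theorem~\ref{theo:bis}, but the remark preceding Lemma~\ref{lem:ar1} flags exactly the route you take, and your bound terms and bit-level formula for the product are correct.
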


 \begin{proof}
   The expressibility multiplication follows easily from the
   expressibility of iterated addition (Lemma~\ref{lem:ar2}). Division
   is significantly more difficult, we refer the reader to \cite{HesseAB02}. 
 \end{proof}

We need be an extension of Lemma~\ref{lem:ar2} where the
family of numbers is no longer indexed by numbers, but by arbitrary
tuples of vertices and numbers, and where the bounds on the bitsize of
the numbers in the family are not uniform (in Lemma~\ref{lem:ar2} we use
the $0$-ary function variable $U$ to provide a bound on the bitsize of
all numbers in our family). Before presenting this
extension, we consider a version of iterated addition as well as
taking the maximum and minimum of a family of numbers given directly
as values of a function instead of the binary representation that we
considered in the previous lemmas. 

\begin{lemma}\label{lem:ar3a}
  Let $X$ be a relation variable of type $\tta\ttv^k\ttn^\ell\ttz$,
  and let $U,V$ be function variables of types
  $\ttv^k\ttn^\ell\to\ttn$, $\ttv^k\to\ttn$, respectively.
  \begin{enumerate}
  \item
    There is a $\FOC$-term $\logic{u-itadd}$\footnote{The 'u' in $\logic{u-itadd}$
      indicates that we use a unary representation here.}
    such that for
    all structures $A$ and assignments $\Ca$ over $A$,
    \[
      \sem{\logic{u-itadd}}^{(A,\Ca)}=\sum_{(\vec a,\vec
        b)}\Ca(U)(\vec a,\vec b),
    \]
    where the sum ranges over all $(\vec a,\vec b)\in\Ca(X)$ such that
    $\vec b=(b_1,\ldots, b_\ell)\in\Nat^\ell$ with
    $b_i<\Ca(V)(\vec a)$ for all $i\in[\ell]$.
  \item There are $\FOC$-terms $\logic{u-max}$ and $\logic{u-min}$ such
    that for all structures $A$ and assignments $\Ca$ over $A$,
    \begin{align*}
      \sem{\logic{u-max}}^{(A,\Ca)}&=\max_{(\vec a,\vec b)}\Ca(U)(\vec
                                     a,\vec b),\\
      \sem{\logic{u-min}}^{(A,\Ca)}
                                   &=\min_{(\vec a,\vec b)}\Ca(U)(\vec a,\vec b),
    \end{align*}
    where $\max$ and $\min$ range  over all $(\vec a,\vec b)\in\Ca(X)$
    such that $\vec b=(b_1,\ldots, b_\ell)\in\Nat^\ell$ with
    $b_i<\Ca(V)(\vec a)$ for all $i\in[\ell]$.
  \end{enumerate}
  Furthermore, if $k=0$ then the terms $\logic{u-itadd}$,
  $\logic{u-max}$, and $\logic{u-min}$ are arithmetical.
\end{lemma}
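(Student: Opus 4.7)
The key tool is the adaptive bounded counting term: the definition of $\#$ allows the bound on a number variable $y_i$ to depend on the values already chosen for preceding variables. This enables us to turn summation into counting.

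For iterated addition, the idea is that $\sum_{(\vec a,\vec b)}U(\vec a,\vec b)$ equals the number of triples $(\vec a,\vec b,c)$ with $(\vec a,\vec b)\in\Ca(X)$, $b_i<\Ca(V)(\vec a)$ for all $i$, and $c<\Ca(U)(\vec a,\vec b)$. Accordingly, I would define
\[
  \logic{u-itadd}\coloneqq\#\bigl(x_1,\ldots,x_k,y_1<V(\vec x),\ldots,y_\ell<V(\vec x),y'<U(\vec x,\vec y)\bigr).X(\vec x,\vec y),
\]
using that $V(\vec x)$ and $U(\vec x,\vec y)$ are legal terms in the adaptive bounds once the preceding variables are fixed. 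Correctness is immediate by unpacking the semantics of the counting term.

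For maximum, I would use the fact that $M\coloneqq\max_{(\vec a,\vec b)}\Ca(U)(\vec a,\vec b)$ equals the number of $c\in\Nat$ with $c<M$, i.e.\ the number of $c$ for which some valid $(\vec a,\vec b)$ satisfies $c<\Ca(U)(\vec a,\vec b)$. Since every $U$-value appearing in the max is also a summand in $\logic{u-itadd}$, we have $M\le\sem{\logic{u-itadd}}^{(A,\Ca)}$, so the sum provides a legal polynomial bound. Thus
\[
  \logic{u-max}\coloneqq\#\bigl(y<\logic{u-itadd}+\one\bigr).\exists x_1\cdots\exists x_k\,\exists y_1{<}V(\vec x)\cdots\exists y_\ell{<}V(\vec x).\bigl(X(\vec x,\vec y)\wedge y<U(\vec x,\vec y)\bigr).
\]
For minimum, dually, $c$ lies below the minimum iff $c<\Ca(U)(\vec a,\vec b)$ for every valid $(\vec a,\vec b)$; this is again expressible, now using universal bounded quantification (which is an abbreviation in $\FOC$), and the same term $\logic{u-itadd}+\one$ can serve as an outer bound in the non-empty case. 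If the family is empty, we fall back to $\zero$; this can be enforced by a standard case distinction on whether $\exists\vec x\exists\vec y<V(\vec x).X(\vec x,\vec y)$ holds.

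Finally, the claim that the three terms are arithmetical when $k=0$ is immediate from the construction: if $k=0$ there are no vertex variables in the tuple $\vec x$, hence none of the quantifiers, counting binders, or argument lists above introduce vertex variables. The only non-routine step is the choice of legal polynomial upper bounds in $\logic{u-max}$ and $\logic{u-min}$; using $\logic{u-itadd}$ as the bound (an already constructed term) avoids any circularity and is the main technical point to get right.
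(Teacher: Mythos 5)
Your proof is correct and takes essentially the same approach as the paper: encode the iterated sum as a single counting term using the adaptive bound $y'<U(\vec x,\vec y)$, and then bound the counting in $\logic{u-max}$ and $\logic{u-min}$ by $\logic{u-itadd}$, which works because max and min are dominated by the sum. The only cosmetic difference is that the paper uses the bound $z<\logic{u-itadd}$ rather than $z<\logic{u-itadd}+\one$, which makes the empty-family case for $\logic{u-min}$ come out to $0$ automatically without the separate case distinction you mention.
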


\begin{proof}
  To simplify the notation, in the proof we assume $\ell=1$. The
  generalisation to arbitrary $\ell$ is straightforward.
  
  The trick is to use adaptive bounds in our counting terms. Let $A$
  be a structure and $\Ca$ an assignment over $A$. Let $\CI\subseteq V(A)^k\times\Nat$ be the set
  of all all $(\vec a,b)\in\Ca(X)$ with
    $b<\Ca(V)(\vec a)$. We exploit that
    for all $(\vec a,b)\in {\CI}$, $\Ca(U)(\vec a,b)$ is the
    number of $c\in\Nat$ such that $c<\Ca(U)(\vec a,b)$.
    This implies
    \begin{align*}
      \sum_{(\vec a,b)}\Ca(U)(\vec a,b)&=\Big|\Big\{(\vec a,b,c)\Bigmid
      (\vec a,b)\in {\CI}\text{ and }c<\Ca(U)(\vec a,b)\Big\}\Big|\\
      &=\Big|\Big\{(\vec a,b,c)\Bigmid
      (\vec a,b)\in\Ca(X)\text{ with }b<\Ca(V)(\vec
        a)\text{ and }c<\Ca(U)(\vec a,b)\Big\}\Big|.
    \end{align*}
    Thus 
    \[
      \logic{u-itadd}\coloneqq\#\big(\vec x,y< V(\vec
      x),z< U(\vec
      x,y)\big).X(\vec x,y).
    \]
    satisfies assertion (1).

    \medskip
    Once we have this, assertion (2) is easy because maximum and
    minimum are bounded from above by the sum. For example, we let
    \[
      \logic{u-max}\coloneqq \# z<\logic{u-itadd}.\exists\vec
      x.\exists y<V(\vec x).\big(X(\vec x)\wedge z< U(\vec x,y)\big).
      \qedhere
    \]
\end{proof}

The following lemma is the desired generalisation of
Lemma~\ref{lem:ar2}.

\begin{lemma}\label{lem:ar3}
  Let $X,Y$ be relation variables of type $\tta\ttv^k\ttn^\ell\ttz$
  and $\tta\ttn\ttv^k\ttn^\ell\ttz$, respectively, and let $U,V$ be
  function variables of type $\ttv^k\ttn^\ell\to\ttn$ and
  $\ttv^k\to\ttn$, respectively.
  Then there is an $\FOC$-formula $\logic{itadd}$
  and  an $\FOC$-term
    $\logic{{bd-itadd}}$ such
    that for all structures $A$ and assignments $\Ca$ over $A$,
    \[
      \num{\logic{itadd}, \logic{{bd-itadd}}}^{(A,\Ca)}=
      \sum_{(\vec a,\vec b)}\num{Y,U}^{(A,\Ca)}(\vec a,\vec b),
    \]
    where the sum ranges over all $(\vec a,\vec b)\in\Ca(X)$ such that $\vec b=(b_1,\ldots, b_\ell)\in\Nat^\ell$ with
    $b_i<\Ca(V)(\vec a)$ for all $i\in[\ell]$.

    If $k=0$, then the formula $\logic{itadd}$ and the term
    $\logic{{bd-itadd}}$ are arithmetical.
\end{lemma}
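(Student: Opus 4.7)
The plan is to reduce Lemma~\ref{lem:ar3} to Lemma~\ref{lem:ar2} via substitution, using the well-known bit-slicing trick: instead of summing the numbers directly, we count for each bit position $j$ how many of the summands have their $j$th bit equal to $1$. Call this count $s_j$. Then
\[
  \sum_{(\vec a,\vec b)}\num{Y,U}^{(A,\Ca)}(\vec a,\vec b)
  \;=\;\sum_{j\ge 0}2^{j}s_j,
\]
which is now a sum of numbers indexed by a single number variable, bounded uniformly in bitsize, i.e.\ precisely the setting of Lemma~\ref{lem:ar2}.

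First I would define auxiliary quantities. Using Lemma~\ref{lem:ar3a}(2), I obtain a term $\theta_{M_U}$ computing $M_U\coloneqq\max_{(\vec a,\vec b)}\Ca(U)(\vec a,\vec b)$, where the max ranges over valid tuples. A plain counting term $\theta_{M_X}\coloneqq \#(\vec x,y_1<V(\vec x),\dots,y_\ell<V(\vec x)).X(\vec x,\vec y)$ gives the number $M_X$ of valid indices. Since the total sum is bounded by $M_X\cdot 2^{M_U}$, a safe bitsize bound is
\[
  \theta_N\coloneqq \theta_{M_U}+\logic{len}(\theta_{M_X})+1,
\]
using Corollary~\ref{cor:bit}. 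Next I define a term
\[
  \theta_s(\hat y)\coloneqq \#(\vec x,\vec y<V(\vec x)).\bigl(X(\vec x,\vec y)\wedge Y(\hat y,\vec x,\vec y)\wedge \hat y<U(\vec x,\vec y)\bigr),
\]
which returns $s_{\hat y}$. Note $s_{\hat y}\le M_X$, and $s_{\hat y}=0$ whenever $\hat y\ge M_U$.

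Now apply Lemma~\ref{lem:ar2} to obtain $\logic{s\text{-}itadd}$ and $\logic{bd\text{-}s\text{-}itadd}=2U$ over relation variable $Y'$ of type $\tta(\ttn,\ttn)\ttz$ and function variable $U'$ of type $\emptytuple\to\ttn$. I would now substitute: replace every occurrence of $U'$ by $\theta_N$, and every atomic formula $Y'(\xi_1,\xi_2)$ by
\[
  \phi_{Y'}(\xi_1,\xi_2)\;\coloneqq\;\xi_2\le\xi_1\;\wedge\;\logic{bit}(\xi_1\dotminus\xi_2,\theta_s(\xi_2))=1,
\]
which encodes the bits of $2^{\xi_2}s_{\xi_2}$. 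Let $\logic{itadd}$ and $\logic{bd\text{-}itadd}$ be the results of this substitution. By the standard semantic substitution principle for first-order formulas, these behave under any assignment $\Ca$ over $A$ exactly like $\logic{s\text{-}itadd}$ and $\logic{bd\text{-}s\text{-}itadd}$ do under the derived numerical assignment $\Ca'$ with $\Ca'(Y')=\{(j',j):(A,\Ca)\models\phi_{Y'}(j',j)\}$ and $\Ca'(U')=\sem{\theta_N}^{(A,\Ca)}$. Since $\num{Y',U'}^{\Ca'}(i)=2^{i}s_i$ as long as $s_i<2^{N-i}$, which the choice of $\theta_N$ guarantees, one concludes
\[
  \num{\logic{itadd},\logic{bd\text{-}itadd}}^{(A,\Ca)}=\sum_{i<N}2^{i}s_i=\sum_{(\vec a,\vec b)}\num{Y,U}^{(A,\Ca)}(\vec a,\vec b),
\]
as required. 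If $k=0$ then $\vec x$ is empty, all ingredients ($\theta_{M_U},\theta_{M_X},\theta_s,\theta_N,\phi_{Y'}$) involve only number variables and purely numerical relation/function variables, so $\logic{itadd}$ and $\logic{bd\text{-}itadd}$ are arithmetical.

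The main obstacle is bookkeeping rather than conceptual: one must verify the quantitative bound $N\ge \bsize\bigl(\sum_{(\vec a,\vec b)}\num{Y,U}(\vec a,\vec b)\bigr)$ and also $N-i\ge\bsize(s_i)$ for every relevant $i$, so that the reinterpretation $\num{Y',U'}(i)=2^{i}s_i$ does not truncate bits, and one must choose variable names in $\phi_{Y'}$ and $\theta_s$ disjoint from the bound variables occurring inside $\logic{s\text{-}itadd}$ and $\logic{bd\text{-}s\text{-}itadd}$ so that the substitution is capture-free. Both are routine but tedious; everything else follows by directly invoking Lemmas~\ref{lem:bit}, \ref{lem:ar2}, \ref{lem:ar3a}, and Corollary~\ref{cor:bit}.
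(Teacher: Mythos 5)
Your proof is correct and takes essentially the same approach as the paper: both use the bit-slicing trick to replace the unordered sum by the ordered family of numbers $2^i s_i$, where $s_i$ is the count of summands whose $i$th bit is $1$, and then invoke Lemma~\ref{lem:ar2} with the appropriate substitution for $Y'$ and $U'$. The only (inessential) difference is that you spell out the bitsize bound $\theta_N$ and the substitution formula $\phi_{Y'}$ explicitly, whereas the paper's proof gestures at these; also note that $\logic{bit}$ is a formula, so in $\phi_{Y'}$ you should write $\logic{bit}(\xi_1\dotminus\xi_2,\theta_s(\xi_2))$ rather than equating it to $1$.
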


\begin{proof}
  Let $A$ be a structure and $\Ca$ an assignment over $A$. Let
  $\CI\subseteq V(A)^k\times\Nat^\ell$ be the set of all all
  $(\vec a,\vec b)\in\Ca(X)$ such that $\vec b=(b_1,\ldots,b_\ell)$
  with $b_i<\Ca(V)(\vec a)$. Let $m\coloneqq|\CI|$ and
  $p\coloneqq \max\big\{\Ca(U)(\vec a,\vec b)\bigmid (\vec a,\vec
  b)\in \CI\big\}$. Note that $p=\sem{\logic{u-max}}^{(A,\Ca)}$ for
  the term $\logic{u-max}$ of Lemma~\ref{lem:ar3a}.  We have to add
  the family of $m$ numbers
  ${n}_{\vec c}\coloneqq\num{Y,U}^{(A,\Ca)}(\vec c)$ for
  $\vec c\in \CI$. We think of these numbers as $p$-bit numbers,
  padding them with zeroes if necessary. Note that we cannot directly
  apply Lemma~\ref{lem:ar2} to add these numbers, because the family,
  being indexed by vertices, is not ordered, and Lemma~\ref{lem:ar2}
  only applies to ordered families indexed by numbers. But there is a
  simple trick to circumvent this difficulty. (We applied the same
  trick in the proof of Lemma~\ref{lem:ar2}).

  For all $i< p$, we let
  ${n}_{{\vec c},i}\coloneqq\Bit(i,{n}_{\vec c})$ be the $i$th bit
  of ${n}_{\vec c}$, and we let
  \[
    s_{i}\coloneqq\sum_{{\vec c}\in \CI}{n}_{{\vec c},i}.
  \]
  We have
  \begin{align*}
    \sum_{{\vec c}\in \CI}{n}_{\vec c}
    =\sum_{{\vec c}\in \CI}\sum_{i=0}^{p-1}{n}_{{\vec c},i}\cdot2^i
    =\sum_{i=0}^{p-1}s_i\cdot2^i.
  \end{align*}
  This reduces the problem of adding the \emph{unordered} family of
  $m$ $p$-bit numbers ${n}_{{\vec c}}$ to adding the \emph{ordered} family
  of $p$ $p+\log m$-bit numbers ${n}_i'\coloneqq s_i\cdot2^i$.

  The partial sums $s_i$ are definable by a counting term in $\FOC$, because
  $s_i$ is the number of ${\vec c}\in \CI$ such that ${n}_{{\vec
      c},i}=1$. As the bit predicate is definable in $\FOC$, we can
  obtain the bit-representation of these numbers and then shift $s_i$ by $i$ to obtain the bit
  representation of ${n}_i'=s_i\cdot 2^i$. Then we can apply
  Lemma~\ref{lem:ar2} to compute the sum.
\end{proof}

\begin{lemma}\label{lem:ar3b}
  Let $X,Y$ be relation variables of type $\tta\ttv^k\ttn^\ell\ttz$
  and $\tta\ttn\ttv^k\ttn^\ell\ttz$, respectively, and let $U,V$ be
  function variables of type $\ttv^k\ttn^\ell\to\ttn$ and
  $\ttv^k\to\ttn$, respectively.
  Then there are $\FOC$-formulas $\logic{itmax}$, $\logic{itmin}$
  and  $\FOC$-terms
    $\logic{{bd-itmax}}$, $\logic{{bd-itmin}}$ such
    that for all structures $A$ and assignments $\Ca$ over $A$,
    \begin{align*}
      \num{\logic{itmax},\logic{bd-itmax}}^{(A,\Ca)}
      &=\max_{(\vec a,\vec b)}\num{Y,U}^{(A,\Ca)}(\vec a,\vec b),\\
      \num{\logic{itmin},\logic{bd-itmin}}^{(A,\Ca)}
      &=\min_{(\vec a,\vec b)}\num{Y,U}^{(A,\Ca)}(\vec a,\vec b),
    \end{align*}
    where max and min range over all $(\vec a,\vec b)\in\Ca(X)$ such that $\vec b=(b_1,\ldots, b_\ell)\in\Nat^\ell$ with
    $b_i<\Ca(V)(\vec a)$ for all $i\in[\ell]$.

    If\/ $k=0$, then the formula $\logic{itmax}$, $\logic{itmin}$ and the term
    $\logic{{bd-itmin}}$, $\logic{{bd-itmin}}$ are arithmetical.
\end{lemma}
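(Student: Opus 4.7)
The plan is to follow the template of Lemma~\ref{lem:ar3}, but the task is strictly easier since neither max nor min involves carry propagation. As there, I assume $\ell=1$ for notational clarity; the general case is identical. Fix a structure $A$, an assignment $\Ca$, and let ${\CI}\subseteq V(A)^k\times\Nat$ be the set of $(\vec a,b)\in\Ca(X)$ with $b<\Ca(V)(\vec a)$. For $\vec c\in\CI$, write $n_{\vec c}\coloneqq\num{Y,U}^{(A,\Ca)}(\vec c)$.

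Both $\logic{bd-itmax}$ and $\logic{bd-itmin}$ may be taken to be the $\FOC$-term $\logic{u-max}$ from Lemma~\ref{lem:ar3a} (applied to $U$), since the extrema of the family are bounded in bitsize by $p\coloneqq\max_{\vec c}\Ca(U)(\vec c)=\sem{\logic{u-max}}^{(A,\Ca)}$. It then suffices to define $\logic{itmax}(\hat y)$ such that $\sem{\logic{itmax}}^{(A,\Ca)}(j)=\Bit(j,\max_{\vec c}n_{\vec c})$ for all $j<p$, and analogously for $\logic{itmin}$.

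The next step is to build an auxiliary $\FOC$-formula $\logic{leq}^*(\vec x_1,y_1,\vec x_2,y_2)$ expressing $n_{(\vec x_1,y_1)}\le n_{(\vec x_2,y_2)}$ via lex comparison on bits: $n\le n'$ iff there is no position $j$ at which the effective bit of $n$ is $1$, the effective bit of $n'$ is $0$, and all strictly higher effective bits of $n$ and $n'$ coincide, where the ``effective bit $j$'' of $n_{\vec c}$ is the conjunction $Y(j,\vec c)\wedge j<U(\vec c)$. This is straightforward bit-level $\FOC$ following the pattern of Lemma~\ref{lem:ar1}(2). With $\logic{leq}^*$ in hand, the key observation is that bit $j$ of $\max_{\vec c}n_{\vec c}$ equals $1$ iff some $\vec c\in\CI$ simultaneously (i) has its $j$th bit set and (ii) is maximal under $\logic{leq}^*$, giving
\[
  \logic{itmax}(\hat y)\coloneqq\exists\vec x\,\exists y<V(\vec x)\,\big(X(\vec x,y)\wedge Y(\hat y,\vec x,y)\wedge\hat y<U(\vec x,y)\wedge\forall\vec x'\,\forall y'<V(\vec x')\,(X(\vec x',y')\to\logic{leq}^*(\vec x',y',\vec x,y))\big).
\]
The formula $\logic{itmin}$ is entirely symmetric (swap the arguments of $\logic{leq}^*$), and the arithmetical claim for $k=0$ is immediate since no vertex variables then occur.

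The only delicate point is $\logic{leq}^*$ itself: the cutoffs $\Ca(U)(\vec c_1)$ and $\Ca(U)(\vec c_2)$ on ``meaningful'' bits can differ between the two numbers being compared, so one must range the quantification over the putative differing bit $j$ (and its strictly higher counterparts) up to a single bound dominating both cutoffs, such as the term $U(\vec x_1,y_1)+U(\vec x_2,y_2)$, while explicitly gating each occurrence of $Y$ by the corresponding cutoff condition so that spurious high bits above $U(\vec c_i)$ do not affect the comparison. Once this small bookkeeping is handled, everything else is standard bit manipulation along the lines of Lemmas~\ref{lem:ar1} and~\ref{lem:ar3}.
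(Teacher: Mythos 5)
Your proposal is correct and follows essentially the same approach as the paper: define a formula asserting that $(\vec x,y)$ indexes the maximum (resp.\ minimum) element via a universally quantified pairwise comparison, then extract the bits of that element. The paper obtains the pairwise comparison simply by substituting $Y(z,\vec x',y')$, $U(\vec x',y')$, $Y(z,\vec x,y)$, $U(\vec x,y)$ into the formula $\logic{leq}$ of Lemma~\ref{lem:ar1}(2) (which already handles mismatched cutoffs), whereas you re-derive it by hand, and the paper uses a slightly tighter $\logic{bd-itmax}$ than your $\logic{u-max}$ — but these are cosmetic differences.
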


\begin{proof}
   Again, to reduce the notational overhead we assume $\ell=1$. We
   only give the proof for the maximum; the proof for the minimum is
   completely analogous.

   The following formula says that $(\vec x,y)$ is the index of the
   maximum number in the family:
   \[
     \logic{maxind}(\vec x,y)\coloneqq X(\vec x,y)\wedge y<V(\vec
     x)\wedge \forall \vec x'\forall y'< V(\vec x').\big(X(\vec
     x',y)\to\logic{leq}'(\vec x',y',\vec x, y)\big),
   \]
   where $\logic{leq}'(\vec x',y',\vec x,\vec y)$ is the formula
   obtained from the formula $\logic{leq}$ of Lemma~\ref{lem:ar1}(2)
   by substituting $Y_1(z)$ with $Y(z,\vec x',y')$, $U_1()$ with
   $U(\vec x',y')$, $Y_2(z)$ with $Y(z,\vec x,y)$, $U_2()$ with
   $U(\vec x,y)$.

   Then we let
   \begin{align*}
     \logic{itmax}(\hat y)
     &\coloneqq\exists \vec x\exists y< V(\vec
       x).\big(\logic{maxind}(\vec x,y)\wedge Y(\hat y,\vec
       x,y)\big),\\
     \logic{bd-itmax}&\coloneqq\# z<\logic{u-max}.\forall\vec x\forall
     y<V(\vec x)\big(\logic{maxind}(\vec x,y)\to z<U(\vec x,y)\big),
   \end{align*}
   where $\logic{u-max}$ is the formula of Lemma~\ref{lem:ar3a}.
\end{proof}

\subsection{Rational Arithmetic}
\label{sec:rat-arithmetic}
We need to lift the results of the previous section to arithmetic on
rational numbers. However, we will run into a problem with iterated
addition, because the denominator of the sum can get too large. To
avoid this problem, we will work with arithmetic on dyadic
rationals. Then we have a problem with division, because the dyadic
rationals are not closed under division, but division is not as
important for us as iterated addition.

Our representation system for dyadic rationals by relation and
function variables, or by formulas and terms, is based on a
representations of dyadic rationals by tuples
$(r,I,s,t)\in\{0,1\}\times2^{\Nat}\times\Nat\times\Nat$: such a
tuple represents the number
\[
  \num{r,I,s,t}\coloneqq(-1)^r\cdot2^{-s}\cdot\sum_{i\in I, i<t}2^i.
\]
This representation is not unique: there are distinct tuples $(r,I,s,t)$
and $(r',I',s',t')$ representing the same number. For
example, $\num{r,I,s,t}=\num{r,I',s+1,t+1}$, where $I'=\{i+1\mid i\in
I,i<t\}$. However, each dyadic rational $q$ has a unique
representation $\crep(q)=(r,I,s,t)$ satisfying the following conditions:
\begin{eroman}
\item $i<t$ for all $i\in I$;
\item $s=0$ or $0\in I$ (that is, the fraction $\frac{\sum_{i\in
      I,i<t}2^i}{2^s}$ is reduced); 
\item if $I=\emptyset$ (and hence $\num{s,I,s,t}=0$) then $r=s=t=0$.
\end{eroman}
We call $\operatorname{crep}(q)$ the \emph{canonical representation}
of $q$. 

To represent a dyadic rational in our logical framework, we thus need
four variables. As this tends to get a bit unwieldy, we
introduce some shortcuts. 
An \emph{r-schema} of type $\rtp{\vec t}$ for some
$\vec t\in\{\ttv,\ttn\}^k$ is a tuple $\vec Z=(\Zr,\ZI,\Zs,\Zt)$, where $\Zr$ is
a relation variable of type $\tta\vec t\ttz$, $\ZI$ is a relation
variable of type $\tta \ttn\vec t\ttz$, and $\Zs,\Zt$ are function
variables of type $\vec t\to\ttn$.
For a structure $A$, an interpretation $\Ca$ over $A$, and a tuple
$\vec c\in A^{\vec t}$ we
let
\begin{equation}\label{eq:num5}
\num{\vec Z}^{(A,\Ca)}(\vec c)\coloneqq (-1)^r\cdot
  2^{-\Ca(\Zs)(\vec c)}\cdot\sum_{\substack{(i,\vec c)\in\Ca(\ZI),\\i<\Ca(\Zt)(\vec c)}}2^i,
\end{equation}
where $r=1$ if $\vec c\in\Ca(\Zr)$ and $r=0$ otherwise. Note that
with
$I=\{i\in\Nat\mid (i,\vec c)\in\Ca(\ZI)\}$, $s=\Ca(\Zs)(\vec c)$, and
$t=\Ca(\Zt)(\vec c)$ we have $\num{\vec Z}^{(A,\Ca)}(\vec
c)=\num{r,I,s,t}$.

An \emph{r-expression} is a tuple
$\rexp(\vec z)=\big(\rr(\vec z),\rI(\hat y,\vec z),\rs(\vec
z),\rt(\vec z)\big)$ where $\vec z$ is a tuple of individual
variables, $\rr(\vec z),\rI(\hat y,\vec z)$ are $\FOC$-formulas, and
$\rs(\vec z),\rt(\vec z)$ are $\FOC$-terms. For a structure $A$, an
interpretation $\Ca$ over $A$, and a tuple $\vec c\in A^{\tp(\vec z)}$
we let
\begin{equation}\label{eq:num5a}
\num{\rexp}^{(A,\Ca)}(\vec c)\coloneqq \num{r,I,s,t},
\end{equation}
where $r=1$ if $(A,\Ca)\models \rr(\vec c)$ and $r=0$ otherwise, $I$
is the set of all $i\in\Nat$ such that $A\models\rI(i,\vec c)$,
$s=\sem{\rs}^{(A,\Ca)}(\vec c)$, and $t=\sem{\rt}^{(A,\Ca)}(\vec
c)$. We sometimes say that $\rexp$ \emph{defines} the representation
$(r,I,s,t)$ of the dyadic rational $\num{r,I,s,t}$ \emph{in
  $(A,\Ca)$}.

For a fragment $\LL$ of
$\FOC$, such as those introduced in Section~\ref{sec:guarded},
an \emph{r-expression in $\LL$} is an r-expression consisting of formulas and
terms from $\LL$. An
\emph{arithmetical r-expression} is an r-expression consisting of
arithmetical formulas and terms.

For an r-schema $\vec Z$ of type $\rtp{\ttn^k}$, for some $k\ge 0$,
and a numerical assignment $\Ca$ we may just write
$\num{\vec Z}^\Ca(\vec c)$ without referring to a
structure. Similarly, for an arithmetical r-expression $\rexp(\vec z)$
we may write $\num{\rexp}^\Ca(\vec c)$. We use a similar notation
for other objects, in particular the L,F-schemas and L,F-expressions
that will be introduced in Section~\ref{sec:fnnsim}.

\begin{lemma}\label{elm:reduced}
  Let $\vec Z$ be an r-schema of type $\rtp{\emptytuple}$. Then there is an
  arithmetical r-expression $\logic{crep}$ such that for all
  structures $A$ and assignments $\Ca$ over $A$, $\logic{crep}$
  defines the canonical representation of $\num{\vec Z}^{(A,\Ca)}$ in $(A,\Ca)$.
\end{lemma}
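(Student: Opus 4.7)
The plan is to produce each of the four components of $\logic{crep}$ as an explicit arithmetical $\FOC$-expression, handling the zero and non-zero cases uniformly via a single guarding subformula. First I introduce the arithmetical formula $\logic{nz} \coloneqq \exists i < \Zt.\,\ZI(i)$, which holds precisely when $\num{\vec Z}^{(A,\Ca)} \neq 0$, and I express the $2$-adic valuation of the numerator by the counting term
\[
  v \coloneqq \#\,j < \Zt.\,\forall i < j+1.\,\neg\ZI(i),
\]
which, when $\logic{nz}$ holds, equals the smallest $i \in \Ca(\ZI)$ with $i < \Ca(\Zt)$; its value when $\logic{nz}$ fails is irrelevant, since $v$ is only ever used under the guard. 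The amount to shift bits by is $d \coloneqq \logic{min}(v, \Zs)$, using the abbreviation from Section~\ref{sec:arithmetic}.

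With these in hand, $\logic{crep}$ is the tuple whose sign component is $\Zr \wedge \logic{nz}$, whose indicator component (with free number variable $\hat y$) is
\[
  \logic{nz}\,\wedge\,\hat y + d < \Zt\,\wedge\,\ZI(\hat y + d),
\]
and whose denominator-exponent and bound components are the counting terms
\[
  \#\,z < \Zs.\,\big(\logic{nz} \wedge z + d < \Zs\big)
  \quad\text{and}\quad
  \#\,z < \Zt.\,\big(\logic{nz} \wedge z + d < \Zt\big).
\]
In the non-zero case, $d \le v < \Ca(\Zt)$ and $d \le \Ca(\Zs)$, so these two terms evaluate to $\Ca(\Zs) - d$ and $\Ca(\Zt) - d$ respectively; in the zero case the guard forces both to $0$.

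It then remains to verify that $\logic{crep}$ defines the canonical representation of $\num{\vec Z}^{(A,\Ca)}$. Equality of the represented value follows by pulling the common factor $2^{-d}$ out of numerator and denominator, using that every bit index $i \in \Ca(\ZI)$ with $i < \Ca(\Zt)$ satisfies $i \ge v \ge d$. For the canonical conditions, (i) follows because $\hat y + d < \Zt$ forces $\hat y < \Ca(\Zt) - d$; (ii) splits into the subcase $d = \Ca(\Zs)$ (denominator exponent $= 0$) and $d = v < \Ca(\Zs)$, in which case $\hat y = 0$ satisfies the indicator because $v \in \Ca(\ZI)$ and $v < \Ca(\Zt)$; and (iii) holds because under $\neg\logic{nz}$ all four components are trivial. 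The only marginally non-routine point is the $\FOC$-definition of the minimum $v$, which is a standard counting trick, so no substantial obstacle arises.
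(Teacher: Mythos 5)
Your construction is correct, and since the paper dismisses this lemma with ``Straightforward,'' your fleshed-out argument is exactly the kind of routine bookkeeping intended: the guard $\logic{nz}$, the counting-term definition of the least bit index $v$, the shift by $d=\logic{min}(v,\Zs)$, and the verification of value-preservation and of conditions (i)--(iii) are all sound, and every expression you write is arithmetical and well-typed in $\FOC$.

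One small caveat: the paper asserts that the representation satisfying (i)--(iii) is \emph{unique}, which implicitly requires the bound $t$ to be pinned down (in effect $t=\max(I)+1$ for $I\neq\emptyset$); conditions (i)--(iii) alone do not force this, and your bound component evaluates to $\Ca(\Zt)-d$, which may exceed one plus the largest surviving bit index (e.g.\ $I=\{0\}$, $s=0$, $t=5$ yields $t'=5$ rather than $1$). If canonicity is read as that unique representative, replace your bound term by one plus the largest $i<\Ca(\Zt)$ with $i\in\Ca(\ZI)$, shifted by $d$ --- definable by a counting term entirely analogous to your $v$, e.g.\ $\#\,j<\Zt.\,\exists i<\Zt.\,(j\le i\wedge \ZI(i))$ composed with the same guard and shift. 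With that one-line adjustment the construction matches the intended canonical representation; nothing else in your argument needs to change.
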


\begin{proof}
  Straightforward.
\end{proof}

Using this lemma, in the following we can always assume that the
formulas and terms defining arithmetical operations, as for example,
in Lemma~\ref{lem:ar4}, \ref{lem:ar5}, et cetera, return their results
in canonical representation.

\begin{lemma}\label{lem:ar4}
  Let $\vec Z_1,\vec Z_2$ be r-schemas of type $\rtp{\emptyset}$.
  \begin{enumerate}
  \item There are arithmetical r-expressions $\logic{add}$,
    $\logic{sub}$, and $\logic{mul}$
such
    that for all structures $A$ and assignments $\Ca$ over $A$,
    \begin{align*}
      \num{\logic{add}}^{(A,\Ca)}
      &=\num{\vec Z_1}^{(A,\Ca)}+
        \num{\vec Z_2}^{(A,\Ca)},\\
      \num{\logic{sub}}^{(A,\Ca)}
      &=\num{\vec Z_1}^{(A,\Ca)}-
        \num{\vec Z_2}^{(A,\Ca)},\\
\num{\logic{mul}}^{(A,\Ca)}
      &=\num{\vec Z_1}^{(A,\Ca)}\cdot
        \num{\vec Z_2}^{(A,\Ca)}.
    \end{align*}
  \item There is an arithmetical $\FOC$-formula $\logic{leq}$ such
    that for all structures $A$ and assignments $\Ca$ over $A$,
    \[
      (A,\Ca)\models\logic{leq}\iff \num{\vec Z_1}^{(A,\Ca)}\leq
      \num{\vec Z_2}^{(A,\Ca)}.
    \]
  \end{enumerate}
\end{lemma}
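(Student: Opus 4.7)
The plan is to reduce each of the four operations to the corresponding integer-level operation on the numerator bitstrings, after first bringing the two operands to a common denominator; canonicalisation of the final output is then handled by Lemma~\ref{elm:reduced}. Writing $(r_j,I_j,s_j,t_j)$ for the representation defined by $\vec Z_j$ in $(A,\Ca)$, the goal is to produce r-expressions whose sign formula, bit formula, and bound terms are built arithmetically from the components of $\vec Z_1,\vec Z_2$ using the integer $\logic{add}$, $\logic{sub}$, $\logic{mul}$, and $\logic{leq}$ supplied by Lemmas~\ref{lem:ar1} and \ref{lem:ar2a}.

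The central manoeuvre is a \emph{shift} by an $\FOC$-term $k$: if $\rexp=(\rr,\rI(\hat y),\rs,\rt)$ is any r-expression, then
\[
  \rexp'\coloneqq\bigl(\rr,\ \hat y \ge k \wedge \rI[\hat y \dotminus k / \hat y],\ \rs+k,\ \rt+k\bigr)
\]
defines the same rational, since bit $i$ of the old numerator becomes bit $i+k$ of the new while the denominator exponent also grows by $k$. For \textbf{addition} I would set $s^{*}\coloneqq\logic{max}(\rs_1,\rs_2)$ and shift each $\vec Z_j$ by $s^{*}\dotminus\rs_j$, so that the two representations share common denominator $2^{s^{*}}$. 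If $\rr_1$ and $\rr_2$ agree in truth value, I feed the resulting two nonnegative numerators to the integer $\logic{add}$ of Lemma~\ref{lem:ar1}(1) and keep that common sign; if they disagree, I use the integer $\logic{leq}$ to determine which numerator is larger, apply the truncated integer $\logic{sub}$ to subtract the smaller from the larger, and inherit the sign of the larger-magnitude summand. \textbf{Subtraction} then reduces to addition by toggling $\rr_2$, with the $-0$ convention repaired by~$\logic{crep}$.

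\textbf{Multiplication} is more direct: since $\num{\vec Z_1}\cdot\num{\vec Z_2}=(-1)^{r_1\oplus r_2}\cdot 2^{-(s_1+s_2)}\cdot m_1m_2$, where $m_j$ is the nonnegative integer with bit set $I_j$ bounded by $t_j$, I take the new sign formula to be $(\rr_1\wedge\neg\rr_2)\vee(\neg\rr_1\wedge\rr_2)$, the new denominator exponent to be $\rs_1+\rs_2$, and apply the integer $\logic{mul}$ of Lemma~\ref{lem:ar2a} to produce the bit formula and bit-length term of~$m_1m_2$. For \textbf{comparison}, I case-split on the signs: any nonnegative value is $\ge$ any nonpositive one, while zero (detectable by the emptiness of the bit set below $\rt_j$) settles the boundary mixed cases; if both operands share a sign, I shift to a common denominator as in the additive case and invoke the integer $\logic{leq}$, reversing the inequality when both operands are negative.

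The proof involves no new arithmetic content, only careful bookkeeping, and I expect the only mildly delicate point to be the sign case-analysis for addition and comparison. The fact that the shift amount $s^{*}\dotminus\rs_j$ is itself an $\FOC$-term rather than a fixed numeric constant is no obstacle, because syntactic substitution of terms for number variables inside an $\FOC$-formula is well-defined and produces an $\FOC$-formula of the required form.
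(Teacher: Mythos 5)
Your proposal is correct and takes essentially the same approach as the paper, which simply asserts that the lemma is a straightforward consequence of the integer arithmetic Lemmas~\ref{lem:ar1} and \ref{lem:ar2a} without spelling out the details. You have correctly filled in the implicit bookkeeping — the common-denominator shift, the sign case-analysis for addition and comparison, and canonicalisation via Lemma~\ref{elm:reduced}.
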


\begin{proof}
  These are straightforward consequences of Lemmas~\ref{lem:ar1} and \ref{lem:ar2a}.
\end{proof}

\begin{lemma}\label{lem:ar5}
  Let $\vec Z$ be an r-schema of type
  $\rtp{\ttv^k\ttn^\ell}$. Furthermore, let $X$ be a relation variable of
  type $\tta\ttv^k\ttn^\ell\ttz$, and let $V$ be function variable of
  type $\ttv^k\to\ttn$.
  \begin{enumerate}
  \item There is an r-expression $\logic{it-add}$
  such that for all structures $A$ and assignments $\Ca$ over $A$,
  \[
    \num{\logic{itadd}}^{(A,\Ca)}=
      \sum_{(\vec a,\vec b)}\num{\vec Z}^{(A,\Ca)}(\vec a,\vec b),
    \]
    where the sum ranges over all $(\vec a,\vec b)\in\Ca(X)$ such that $\vec b=(b_1,\ldots, b_\ell)\in\Nat^\ell$ with
    $b_i<\Ca(V)(\vec a)$ for all $i\in[\ell]$.
  \item There are r-expressions $\logic{max}$ and $\logic{min}$
  such that for all structures $A$ and assignments $\Ca$ over $A$,
  \begin{align*}
    \num{\logic{max}}^{(A,\Ca)}&=
                                 \max_{(\vec a,\vec b)}\num{\vec Z}^{(A,\Ca)}(\vec a,\vec b),\\
        \num{\logic{min}}^{(A,\Ca)}&=
                                 \min_{(\vec a,\vec b)}\num{\vec Z}^{(A,\Ca)}(\vec a,\vec b),
  \end{align*}
    where $\max$ and $\min$ range over all $(\vec a,\vec b)\in\Ca(X)$
    such that $\vec b=(b_1,\ldots, b_\ell)\in\Nat^\ell$ with
    $b_i<\Ca(V)(\vec a)$ for all $i\in[\ell]$.
  \end{enumerate}
  Furthermore, if $k=0$, then the r-expressions $\logic{itadd}$,
  $\logic{max}$, and $\logic{min}$ are arithmetical.
  \end{lemma}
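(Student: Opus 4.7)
The plan is to reduce both parts to their natural-number analogues (Lemmas~\ref{lem:ar3} and~\ref{lem:ar3b}), supplemented by Lemma~\ref{lem:ar4} for rational comparison and Lemma~\ref{elm:reduced} for canonicalisation. In both parts, every formula and term substituted below is arithmetical whenever its inputs are, so the $k=0$ clause is handled uniformly.

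For part~(1), write each summand as $\num{\vec Z}(\vec a,\vec b)=(-1)^{r_{\vec a,\vec b}}2^{-s_{\vec a,\vec b}}M_{\vec a,\vec b}$ with $M_{\vec a,\vec b}=\sum_{i\in I_{\vec a,\vec b},\,i<t_{\vec a,\vec b}}2^i$, and bring all summands over a common denominator $2^{s^*}$, where $s^*$ is defined as a term by applying $\logic{u-max}$ of Lemma~\ref{lem:ar3a} to $\Zs$, restricted by $X$ and $V$. The bits of the aligned numerator $N_{\vec a,\vec b}\coloneqq 2^{s^*-s_{\vec a,\vec b}}M_{\vec a,\vec b}$ are definable in $\FOC$ by
\[
  Y'(\hat y,\vec x,\vec y)\coloneqq s^*\dotminus\Zs(\vec x,\vec y)\le\hat y\,\wedge\,\ZI\big(\hat y\dotminus(s^*\dotminus\Zs(\vec x,\vec y)),\vec x,\vec y\big)\,\wedge\,\hat y\dotminus(s^*\dotminus\Zs(\vec x,\vec y))<\Zt(\vec x,\vec y),
\]
and their bit-length is bounded by the term $U'(\vec x,\vec y)\coloneqq s^*+\Zt(\vec x,\vec y)$. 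Applying Lemma~\ref{lem:ar3} twice, once with $X$ replaced by $X^+(\vec x,\vec y)\equiv X(\vec x,\vec y)\wedge\neg\Zr(\vec x,\vec y)$ and once by $X^-(\vec x,\vec y)\equiv X(\vec x,\vec y)\wedge\Zr(\vec x,\vec y)$, yields the binary representations of the non-negative partial sums $P$ and $Q$ of the positive and absolute-value-of-negative aligned numerators. Lemma~\ref{lem:ar1} then delivers a formula expressing the sign $r^*$ of $P-Q$ and the bits of $|P-Q|$ via truncated subtraction, from which the output r-expression with exponent $s^*$, sign $r^*$, numerator bits those of $|P-Q|$, and length bound $U'$ is immediate.

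For part~(2), first apply Lemma~\ref{elm:reduced} pointwise to $\vec Z$ to obtain an r-expression $\rexp^{\mathrm{can}}$ in which equal values share identical canonical 4-tuples. Using the rational $\logic{leq}$-formula of Lemma~\ref{lem:ar4}(2) with $\rexp^{\mathrm{can}}(\vec x',\vec y')$ and $\rexp^{\mathrm{can}}(\vec x,\vec y)$ substituted for the two rational arguments, define
\[
  \logic{maxind}(\vec x,\vec y)\coloneqq X(\vec x,\vec y)\wedge\forall\vec x'\forall\vec y'<V(\vec x').\big(X(\vec x',\vec y')\to\logic{leq}(\vec x',\vec y',\vec x,\vec y)\big).
\]
Because every $(\vec x,\vec y)$ satisfying $\logic{maxind}$ carries identical canonical components, the sign and bit-relation of the maximum are obtained by existentially quantifying over any max-index, while the exponent and length terms are extracted via the adaptive-counting trick of Corollary~\ref{cor:bit}, using upper bounds on $\rs^{\mathrm{can}}$ and $\rt^{\mathrm{can}}$ obtained from $\logic{u-max}$ of Lemma~\ref{lem:ar3a}. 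The minimum is treated identically, reversing the order in $\logic{leq}$.

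The main obstacle is the sign handling and denominator alignment in part~(1): Lemma~\ref{lem:ar3} only adds non-negative integers indexed by a single relation, which forces the split into $P$ and $Q$ and the introduction of the shifted-bit formula $Y'$, whose correctness rests on the truncated subtraction available on number variables in $\FOC$. Part~(2) is essentially bookkeeping once one observes that canonicalisation removes the ambiguity of non-unique representations and thus allows uniform extraction of the maximum's four components.
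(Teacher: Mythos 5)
Your part (1) is essentially the paper's argument: split by sign, use $\logic{u\text{-}max}$ of Lemma~\ref{lem:ar3a} to pin down a common exponent $s^*$, use Lemma~\ref{lem:ar3} on the aligned numerators, and combine by subtraction. The only cosmetic difference is that you align denominators up front and finish with integer truncated subtraction (Lemma~\ref{lem:ar1}), whereas the paper computes the positive and negative partial sums as dyadic rationals and then subtracts via Lemma~\ref{lem:ar4}; this is the same decomposition done in a different order.

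For part (2) you take a genuinely different route. The paper again aligns to the common exponent $s\coloneqq\max s_i$, rewrites each value as the natural number $q_i=p_i2^{s-s_i}$, and reduces directly to the integer $\logic{itmax}/\logic{itmin}$ of Lemma~\ref{lem:ar3b}. You instead canonicalise with Lemma~\ref{elm:reduced}, use the rational $\logic{leq}$ of Lemma~\ref{lem:ar4}(2) to characterise a max-index, and then extract the four components of the canonical 4-tuple. Both are valid; the paper's route is a cleaner reuse of already-proved lemmas, while yours avoids the numerator-alignment step at the cost of a more delicate extraction argument. One small point to tighten: in $\logic{maxind}$ you write $X(\vec x,\vec y)$ alone but must also conjoin the range condition $y_i<V(\vec x)$ for all $i$ (as the statement and Lemma~\ref{lem:ar3b} do), both for the witness $(\vec x,\vec y)$ and inside the universally quantified clause; as written, a tuple outside the intended range could be declared a max-index. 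That is a local fix and does not affect the soundness of the overall strategy.
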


  \begin{proof}
    To express iterated addition, we first split the family of numbers
    into the positive and negative numbers. We take the sums over
    these two subfamilies separately and then combine the results
    using Lemma~\ref{lem:ar4}.  To take the sum over a family of
    nonnegative dyadic rationals, we apply Lemma~\ref{lem:ar3} for the
    numerator and Lemma~\ref{lem:ar3a} for the denominator.

    \medskip
    To express maximum and minimum, it clearly suffices to express the
    maximum and minimum of a family of nonnegative dyadic rationals
    $\left(p_i\cdot 2^{-s_i}\right)_{i\in\CI}$ for some definable finite
    index set $\CI$. Using Lemma~\ref{lem:ar3a}, we can determine
    $s\coloneqq\max_{i\in\CI}s_i$. Then we need to determine maximum
    and minimum of the natural numbers $q_i\coloneqq
    p_i2^{s-s_i}$, which we can do by applying Lemma~\ref{lem:ar3b}.
  \end{proof}

  For division, the situation is slightly more complicated, because
  the dyadic rationals are not closed under division. We only get an
  approximation. We use a $0$-ary function variable to control the
  additive approximation error.

\begin{lemma}\label{lem:ar6}
  Let $\vec Z_1,\vec Z_2$ be r-schemas of type $\rtp{\emptytuple}$, and
  let $W$ be a function variable of type $\emptytuple\to\ttn$. Then
  there is an arithmetical r-expression $\logic{div}$
such
    that for all structures $A$ and assignments $\Ca$ over $A$, if 
    $\num{\vec Z_2}^{(A,\Ca)}\neq 0$ then
    \[
      \left|\frac{\num{\vec Z_1}^{(A,\Ca)}}{\num{\vec Z_2}^{(A,\Ca)}}
      -
      \num{\logic{div}}^{(A,\Ca)}
      \right|
      < 2^{-\Ca(W)}.
    \]
\end{lemma}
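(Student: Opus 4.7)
The plan is to reduce dyadic-rational division to the integer division already provided by Lemma~\ref{lem:ar2a}. Write $q_i := \num{\vec Z_i}^{(A,\Ca)}$ and $w := \Ca(W)$, and let $(r_i,I_i,s_i,t_i)$ be the representation of $q_i$ determined by $\vec Z_i$, so that $q_i = (-1)^{r_i} p_i 2^{-s_i}$ with $p_i = \sum_{j\in I_i,\,j<t_i} 2^j$. Set
\[
  N_1 := p_1 \cdot 2^{s_2+w+1}, \qquad N_2 := p_2 \cdot 2^{s_1}, \qquad a := \floor{N_1/N_2},
\]
and let $\tilde q := (-1)^{r_1\oplus r_2}\cdot a\cdot 2^{-(w+1)}$. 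Since $N_1/N_2 = 2^{w+1}\cdot|q_1/q_2|$ and $\tilde q$ has the same sign as $q_1/q_2$, a short computation gives
\[
  \left|\frac{q_1}{q_2}-\tilde q\right|
  = \frac{1}{2^{w+1}}\Big(\frac{N_1}{N_2}-a\Big)
  < 2^{-(w+1)} < 2^{-\Ca(W)},
\]
so it suffices to produce an arithmetical r-expression whose value in $(A,\Ca)$ is $\tilde q$.

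First I build, purely in $\FOC$, bitwise $(Y,U)$-representations of the natural numbers $N_1$ and $N_2$ in the sense of~\eqref{eq:num1}. This amounts to shifting bit indices: the $j$-th bit of $N_1$ is the $(j-s_2-w-\one)$-th bit of $p_1$ when $j\ge s_2+w+\one$ and $0$ otherwise, which is expressible by an arithmetical formula in the variables $\rI(\vec Z_1)$, $\rs(\vec Z_2)$, $W$ via the bit predicate of Lemma~\ref{lem:bit}; an explicit bound on the bitsize is the term $\rt(\vec Z_1)+\rs(\vec Z_2)+W+\one$. The same recipe, with $\vec Z_1$ replaced by $\vec Z_2$ and the shift taken as $\rs(\vec Z_1)$, produces a representation of $N_2$.

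Second, I apply Lemma~\ref{lem:ar2a} to these two representations and obtain an arithmetical formula $\phi_a(\hat y)$ and term $\theta_a$ with $\num{\phi_a,\theta_a}^{(A,\Ca)} = a$; the hypothesis $\num{\vec Z_2}^{(A,\Ca)}\neq 0$ guarantees $N_2\neq 0$, so the lemma applies. Finally I assemble the desired r-expression
\[
  \logic{div}\;:=\;\Big(\,\big(\rr(\vec Z_1)\wedge\neg\rr(\vec Z_2)\big)\vee\big(\neg\rr(\vec Z_1)\wedge\rr(\vec Z_2)\big),\;\;\phi_a(\hat y),\;\;W+\one,\;\;\theta_a\,\Big),
\]
and read off from~\eqref{eq:num5a} that $\num{\logic{div}}^{(A,\Ca)} = \tilde q$; an application of Lemma~\ref{elm:reduced} afterwards converts to the canonical representation if desired.

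There is no substantive obstacle. The index-shift needed to form $N_1$ and $N_2$ is routine given Lemma~\ref{lem:bit}, and the computation of $a$ is exactly what Lemma~\ref{lem:ar2a} provides. The only slightly delicate choice is to use $w+1$ bits of precision rather than $w$, so that the $<1$ error from truncation translates into the \emph{strict} bound $<2^{-\Ca(W)}$ demanded by the lemma.
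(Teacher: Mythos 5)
Your proof is correct and fills in the details that the paper leaves to the reader: scaling both dyadic rationals to integers by shifting out the denominators plus $\Ca(W)+1$ guard bits and invoking integer division is exactly the intended route, and your handling of sign, bound, and strictness is sound. (Minor note: the paper's one-line proof cites Lemma~\ref{lem:ar1}, but the relevant lemma is in fact Lemma~\ref{lem:ar2a} providing $\logic{div}$ for naturals, which is the one you correctly use.)
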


\begin{proof}
  This follows easily from  Lemma~\ref{lem:ar1}.
\end{proof}

\subsection{Evaluating Feedforward Neural Networks}
\label{sec:fnnsim}

The most important consequence the results of the previous section
have for us is that we can simulate rational piecewise-linear
FNNs.

Let us first see how we deal with the activation functions.  To
represent a rational piecewise linear function, we need an integer $k$
as well as three families of dyadic rationals: the thresholds
$(t_i)_{1\le i\le k}$, the slopes $(a_i)_{0\le i\le k}$ and the
  constant terms $(b_i)_{0\le i\le k}$. An \emph{L-schema} of type
  $\Ltp{\vec t}$, for some $\vec t\in\{\ttv,\ttn\}^k$, is a tuple
  $\vec Z=(\Zlen,\Zth,\Zsl,\Zco)$, where $\Zlen$ is a function
  variable of type $\vec t\to\ttn$ and $\Zth,\Zsl,\Zco$ are
  r-schemas of type $\ttn\vec t$. Let $A$ be a structure, $\Ca$ an
  assignment over $A$, and $\vec c\in A^{\vec t}$. Let
  $k\coloneqq\Ca(\Zlen)(\vec c)$. For $1\le i\le k$,
  let 
  $t_i\coloneqq\num{\Zth}^{(A,\Ca)}(i,\vec c)$. For $0\le i\le k$, let
  $a_i\coloneqq\num{\Zsl}^{(A,\Ca)}(i,\vec c)$ and
$b_i\coloneqq\num{\Zco}^{(A,\Ca)}(i,\vec c)$. Then if
$t_1<\ldots<t_k$ and for all $i\in [k]$ we have
$a_{i-1}t_i+b_{i-1}=a_it_i+b_i$, we define $\Lin{\vec Z}^{(A,\Ca)}:\Real\to\Real$ to
be the rational piecewise linear function with thresholds $t_i$, slopes
$a_i$, and constants $b_i$. The condition
$a_{i-1}t_i+b_{i-1}=a_it_i+b_i$ guarantees that this function is
continuous. Otherwise, we define $\Lin{\vec
  Z}^{(A,\Ca)}:\Real\to\Real$ to be identically $0$. We can also
define \emph{L-expressions} consisting of formulas of the appropriate
types.

\begin{lemma}
  Let $\vec Y$ be an L-schema of type $\Ltp{\emptyset}$, and let $\vec Z$
  be an r-schema of type $\rtp{\emptyset}$. Then there is an
  arithmetical r-expression
  $\logic{apply}$ such that for all numerical assignments
  $\Ca$,
  \[
    \num{\logic{apply}}^{\Ca}=\Lin{\vec
      Y}^{\Ca}\Big(\num{\vec Z}^{\Ca}\Big).
  \]
\end{lemma}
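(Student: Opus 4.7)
The plan is to define $\logic{apply}$ by locating the unique piece of the piecewise linear function $L\coloneqq\Lin{\vec Y}^{\Ca}$ that contains the input value $q\coloneqq\num{\vec Z}^{\Ca}$, and then evaluating the corresponding affine function on $q$. Let $k\coloneqq\Ca(\Zlen)$ be the number of thresholds, and write $t_i\coloneqq\num{\Zth}^{\Ca}(i)$, $a_i\coloneqq\num{\Zsl}^{\Ca}(i)$, $b_i\coloneqq\num{\Zco}^{\Ca}(i)$. Because the L-schema only yields a meaningful continuous piecewise linear function when $t_1<\cdots<t_k$ and the continuity equations $a_{i-1}t_i+b_{i-1}=a_it_i+b_i$ hold, I will handle the ``bad'' case by making $\logic{apply}$ return $0$ whenever these conditions fail; both conditions are straightforwardly expressible using Lemma~\ref{lem:ar4} applied to the r-expressions extracted from $\vec Y$.

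First I would build an arithmetical formula $\logic{piece}(y)$ (with $y$ a number variable ranging over $\{0,\ldots,k\}$) such that $\sem{\logic{piece}}^{\Ca}(i)=1$ iff $q$ lies in the $i$-th piece, i.e.\ iff ($i=0$ and either $k=0$ or $q<t_1$), or ($1\le i<k$ and $t_i\le q<t_{i+1}$), or ($i=k\ge 1$ and $t_k\le q$). This is a Boolean combination of $\logic{leq}$-instances from Lemma~\ref{lem:ar4}(2), where the two sides are r-expressions obtained by substituting the index $i$ into the r-schema components of $\vec Y$. By construction exactly one $i\in\{0,\ldots,k\}$ satisfies $\logic{piece}$.

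Next, using Lemma~\ref{lem:ar4}(1) I would build an r-expression $\logic{piecevalue}(y)$ of type $\rtp{\ttn}$ (with index parameter $y$) whose value at $i$ is $a_iq+b_i$; this is a straightforward composition of addition and multiplication applied to the r-expression for $q$ and the r-expressions for $a_i,b_i$ that arise from the r-schemas $\Zsl,\Zco$ by treating their leading index slot as $y$. Then I would gate this with $\logic{piece}$: let $\logic{selected}(y)$ be the r-expression that equals $\logic{piecevalue}(y)$ when $\logic{piece}(y)$ holds and equals $0$ otherwise (this is done at the level of the four components of the r-schema: the sign formula, the bit-set formula, and the two bound terms each are conjoined/multiplied by the guard $\logic{piece}(y)$ in the obvious way, returning the canonical representation of $0$ when the guard fails, using Lemma~\ref{elm:reduced} to canonicalise if convenient).

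Finally, I would apply the iterated addition r-expression of Lemma~\ref{lem:ar5}(1) to sum $\logic{selected}(y)$ over $y\in\{0,\ldots,k\}$, using the function variable $\Zlen$ (incremented by one) as the bound on $y$ and a trivial relation variable $X$ populated with all such $y$ to satisfy the signature. Since exactly one term contributes and all others are $0$, the resulting r-expression $\logic{apply}$ defines $a_{i^*}q+b_{i^*}=L(q)$, where $i^*$ is the unique piece index containing $q$. The main obstacle is purely bookkeeping: ensuring that the r-expressions for the components of $\vec Y$ extracted with a variable index $y$ compose correctly with the arithmetic lemmas (which are stated for r-schemas rather than r-expressions) and that the boundary and degenerate cases ($k=0$; input at a threshold) are handled consistently. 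No fundamentally new technique beyond Lemmas~\ref{lem:ar4} and~\ref{lem:ar5} is needed.
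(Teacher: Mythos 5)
Your proposal is correct, but it is more roundabout than necessary and uses heavier machinery than the paper does. The paper's proof reads simply ``This follows easily from Lemma~\ref{lem:ar4},'' which signals that iterated addition (Lemma~\ref{lem:ar5}) is not needed. The more direct route is to compute the piece index itself as a single $\FOC$-term: with $q=\num{\vec Z}^{\Ca}$, $k=\Ca(\Zlen)$, and $t_i=\num{\Zth}^{\Ca}(i)$, the correct index is $i^*=|\{i : 1\le i\le k,\ t_i\le q\}|$, which is a counting term $\#(y\le\Zlen).\,(1\le y\wedge\logic{leq}')$ where $\logic{leq}'$ is the comparison formula of Lemma~\ref{lem:ar4}(2) applied to $\Zth$ at index $y$ and to $\vec Z$. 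Once $i^*$ is available as a term, you substitute it into the index slot of $\Zsl$ and $\Zco$ to get r-expressions for $a_{i^*}$ and $b_{i^*}$, and then one application each of $\logic{mul}$ and $\logic{add}$ from Lemma~\ref{lem:ar4}(1) gives $a_{i^*}q+b_{i^*}$. This avoids the detour through a piece-indicator formula, the gating-to-zero construction, and the invocation of iterated addition.

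That said, your version does work: exactly one $i$ satisfies $\logic{piece}$, so the sum of the gated r-expressions collapses to the single nonzero summand, and your handling of the degenerate case (ill-formed L-schema yields the zero function) matches the paper's convention. The bookkeeping concern you flag---passing r-expressions with a free index variable $y$ through lemmas stated for r-schemas---is exactly the standard substitution of formulas/terms for relation/function variables discussed in Section~\ref{sec:2nd-order}, so it is not a real obstacle. The trade-off is simply that you pay for Lemma~\ref{lem:ar5} (and its proof depends on the much harder Lemma~\ref{lem:ar2} for iterated addition in binary), whereas the counting-term approach stays within Lemma~\ref{lem:ar4} and the built-in counting of $\FOC$.
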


\begin{proof}
  This follows easily from Lemma~\ref{lem:ar4}.
\end{proof}

To represent an FNN we need to represent the skeleton as well as
all activation functions and parameters. An \emph{F-schema} of type
$\Ftp{\vec t}$ for some $\vec t\in\{\ttv,\ttn\}^k$ is a
tuple $\vec Z = (\ZV,\ZE,\Zac,\Zwt,\Zbi)$ where $\ZV$ is a function
variable $\vec t\to\ttn$, $\ZE$ is a relation
variable of type $\tta\ttn^2\vec t\ttz$, $\Zac$ is an L-schema of
type $\Ltp{\ttn\vec t}$, $\Zwt$ is an r-schema of
type $\rtp{\ttn^2\vec t}$, and $\Zbi$ is an r-schema of
type $\rtp{\ttn\vec t}$. Then for every structure $A$, every
assignment $\Ca$ over $A$, and every tuple $\vec c\in A^{\vec t}$, we
define $(V,E,(\Fa_v)_{v\in
  V},(w_e)_{e\in E},(b_v)_{v\in V})$ as follows:
\begin{itemize}
\item $V\coloneqq\{0,\ldots,\Ca(\ZV)(\vec c)\}$;
\item $E\coloneqq\{ij\in V^2\mid ij\vec c\in \Ca(\ZE)\}$;
\item $\Fa_i=\Lin{\Zac}^{(A,\Ca)}(i,\vec c)$ for $i\in V$;
\item $w_{ij}=\num{\Zwt}^{(A,\Ca)}(i,j,\vec c)$ for $ij\in E$;
\item $b_i=\num{\Zbi}^{(A,\Ca)}(i,\vec c)$ for $i\in V$.
\end{itemize}
Then if $(V,E)$ is a dag, $\Fin{\vec Z}^{(A,\Ca)}\coloneqq (V,E,(\Fa_v)_{v\in
  V},(w_e)_{e\in E},(b_v)_{v\in V}) $ is an FNN. The input nodes $X_1,\ldots,X_p$ of this FNN are the sources of the
dag $(V,E)$ in their natural order (as natural numbers). Similarly,
the output nodes $Y_1,\ldots,Y_q$ of the FNN are the sinks of the
dag $(V,E)$ in their natural order. If $(V,E)$ is not a
dag, we simply define $\Fin{\vec Z}^{(A,\Ca)}$  to be the trivial FNN with a
single node, which computes the identity function. We can also define
\emph{F-expressions} consisting of formulas of the appropriate types.

\begin{lemma}\label{lem:ar6a}
  Let $\vec Z$ be an F-schema of type $\Ftp{\emptytuple}$, and let
  $\vec X$ be an r-schema of type $\rtp{\ttn}$. Then for every
  $t\ge 0$ there is an arithmetical r-expression $\logic{eval}_t(y)$ such that the
  following holds. Let $\Ca$ be a numerical assignment
  and $\FF\coloneqq\Fin{\vec Z}^{\Ca}$. Suppose that the input
  dimension of $\FF$ is $p$, and let
  \[
    \vec x\coloneqq\big(\num{\vec X}^{\Ca} (0),\ldots, \num{\vec
      X}^{\Ca} (p-1)\big).
  \]
  Then for every node $v$ of $\FF$ of depth $t$ it holds that
  \[
    f_{\FF,v}(\vec x)=\num{\logic{eval}_t}^{\Ca}(v).
  \]
\end{lemma}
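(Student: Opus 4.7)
I would prove this by induction on $t$, building $\logic{eval}_t(y)$ so that whenever $y$ is interpreted by a node $v$ of $\FF$ of depth exactly $t$, the r-expression evaluates to the dyadic rational $f_{\FF,v}(\vec x)$.

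\emph{Base case $t=0$.} Nodes of depth $0$ are the sources of the dag $(V,E)$ underlying $\FF$, and the $(k{+}1)$-st source in the natural order on $V$ is the input node $X_{k+1}$, whose value under $\vec x$ is $x_{k+1} = \num{\vec X}^{\Ca}(k)$. The rank of $y$ among sources of $\Ca(\ZE)$ is given by the arithmetical term
\[
\rho(y) \coloneqq \#\,y' < y.\,\big(y' \le \ZV \wedge \forall y'' \le \ZV.\,\neg\,\ZE(y'',y')\big),
\]
and $\logic{eval}_0(y)$ is then obtained from the r-expression $\vec X$ by substituting $\rho(y)$ for the free number variable in each of its four components.

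\emph{Inductive step $t \to t{+}1$.} Assume $\logic{eval}_0, \ldots, \logic{eval}_{t}$ have been constructed. For $v$ of depth $t{+}1$, I must realise
\[
f_{\FF,v}(\vec x) = \Fa_v\!\left(b_v + \sum_{v' \in N^-(v)} f_{\FF,v'}(\vec x) \cdot w_{v'v}\right).
\]
Since every in-neighbour $v'$ of $v$ has depth in $\{0,\ldots,t\}$, and the property ``node $y'$ has depth exactly $t'$'' is expressible by an arithmetical $\FOC$-formula $\delta_{t'}(y')$ (of size depending only on $t'$, obtained by unrolling the recursive depth definition in $\Ca(\ZE)$), I first assemble an auxiliary r-expression $\logic{prev}(y',y)$ that represents $f_{\FF,v'}(\vec x)$ for every $v' \in N^-(v)$, by Boolean case analysis on $\delta_0(y'), \ldots, \delta_t(y')$ that combines $\logic{eval}_{0}, \ldots, \logic{eval}_{t}$ componentwise. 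I then multiply $\logic{prev}(y',y)$ by the weight r-expression $\Zwt$ at index $(y',y)$ using Lemma~\ref{lem:ar4}, sum the result over $y' \in N^-(v)$ by applying Lemma~\ref{lem:ar5}(1) with the index relation substituted by $\ZE(y',y)$ and bound $\ZV$, add the bias $\Zbi$ at index $y$ via Lemma~\ref{lem:ar4}, and apply the activation function $\Lin{\Zac}^{\Ca}(y,\cdot)$ via the $\logic{apply}$ lemma from the previous subsection. Lemma~\ref{elm:reduced} can be invoked after each arithmetic step to restore canonical form, yielding $\logic{eval}_{t+1}(y)$.

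\emph{Main obstacle.} The only real subtlety is the construction of $\logic{prev}(y',y)$: an r-expression is a four-tuple $(\rr, \rI, \rs, \rt)$, and the Boolean case split over the possible depths $0,\ldots,t$ of $y'$ must be performed coherently across the sign formula, the bit-set formula, and the two bound terms, so that the tuple still represents the chosen dyadic rational. Normalising to canonical form between steps (Lemma~\ref{elm:reduced}) keeps the bookkeeping manageable. The size of $\logic{eval}_t$ grows with $t$, possibly exponentially in the worst case, but since $t$ is a fixed integer parameter, $\logic{eval}_t$ is a legitimate finite arithmetical r-expression.
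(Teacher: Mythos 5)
Your proof is correct and carries out exactly what the paper's one-line proof (``it is easy to construct $\logic{eval}_t$ by induction on $t$'') gestures at: base case for sources by composing the input r-schema with the rank of $y$ among sources, inductive step by combining the weight r-schema, bias r-schema, and the activation ``apply'' lemma with iterated addition over $\ZE(y',y)$.

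One structural comment: the complication you flag in the ``main obstacle'' paragraph — the depth-classifier formulas $\delta_{t'}(y')$, the coherent case split across the four components of the r-expression, and the resulting exponential blowup — is entirely avoidable. Strengthen the inductive invariant to: $\num{\logic{eval}_t}^{\Ca}(v) = f_{\FF,v}(\vec x)$ for \emph{all} nodes $v$ of depth $\le t$ (which trivially implies the stated lemma). Then in the inductive step, every in-neighbour $y'$ of $y$ has depth $\le t$ and can be handled by $\logic{eval}_t(y')$ alone; the only case split you need is ``$y$ is a source'' versus ``$y$ is not a source'' (a single arithmetical formula $\forall y''\le\ZV.\neg\ZE(y'',y)$), giving $|\logic{eval}_{t+1}| \le |\logic{eval}_t| + O(1)$, i.e.\ linear rather than exponential growth in $t$. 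This is also the cleaner invariant for the subsequent Corollary~\ref{cor:ar6b}, whose output nodes need not all sit at depth exactly $d$.

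Two minor notational slips, neither of which affects correctness: $\vec X$ is an r-\emph{schema}, not an r-expression, so ``substituting $\rho(y)$ into $\vec X$'' should be read as instantiating the schema's variables at index $\rho(y)$; and $\ZV$ is a $0$-ary function variable, so $y'\le\ZV$ should strictly be $y'\le\ZV()$, although the paper itself uses this abbreviation.
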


\begin{proof}
  Using the formulas for multiplication and iterated addition, it easy
  to construct $\logic{eval}_t$ by induction on $t$.
\end{proof}

\begin{corollary}\label{cor:ar6b}
  Let $\vec Z$ be an F-schema of type $\Ftp{\emptytuple}$, and let
  $\vec X$ be r-schemas of type $\rtp{\ttn}$. Then for every $d>0$ there is an arithmetical r-expression $\logic{eval}_d(y)$ such that the
  following holds. Let $A$ be a structure, $\Ca$ an assignment,
  and $\FF\coloneqq\Fin{\vec Z}^{(A,\Ca)}$. Suppose that the
  depth of $(V,E)$ is at most $d$, and let $p$ be the input dimension
  and $q$ the output dimension. Let
  \[
    \vec x\coloneqq\big(\num{\vec X}^{(A,\Ca)}(0),\ldots, \num{\vec
      X}^{(A,\Ca)}(p-1)\big).
  \]
  Then 
  \[
    \FF(\vec x)=\big(\num{\logic{eval}_d}^{(A,\Ca)}(0),\ldots, \num{\logic{eval}_d}^{(A,\Ca)}(q-1)\big).
  \]
\end{corollary}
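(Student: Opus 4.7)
The plan is to bootstrap Lemma~\ref{lem:ar6a}, which already evaluates a single node at one fixed depth, into an evaluation of the designated output coordinates of $\FF$. The two tasks beyond Lemma~\ref{lem:ar6a} are (i)~identifying the $y$-th sink of the skeleton $(V,E)$ in natural order, and (ii)~dispatching to $\logic{eval}_t$ for the correct depth $t$ of that sink. Both can be handled by finite case distinctions because $d$ is a fixed parameter in the statement.

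First, I would apply Lemma~\ref{lem:ar6a} once for each $t\in\{0,1,\ldots,d\}$ to obtain arithmetical r-expressions, which I temporarily rename $\logic{eval}^{(t)}(y)$, satisfying $\num{\logic{eval}^{(t)}}^{(A,\Ca)}(v)=f_{\FF,v}(\vec x)$ whenever $v$ is a node of $\FF$ of depth exactly $t$. Since the depth of $(V,E)$ is at most $d$, every sink of $\FF$ has depth in $\{0,\ldots,d\}$, so one of these expressions applies to every output node.

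Second, I would construct arithmetical auxiliaries that locate sinks and compute depths from the data contained in $\vec Z$. An arithmetical formula $\logic{isSink}(y)$ asserting $y<\ZV()$ and $\neg\exists y'<\ZV().\ZE(y,y')$ identifies sinks. A standard counting-term indexing over $\logic{isSink}$ yields an arithmetical term $\logic{sink}(y)$ returning the $(y{+}1)$-st sink of $V=\{0,\ldots,\ZV()-1\}$ in numerical order; this is the canonical identification of the $y$-th output node $Y_{y+1}$ of $\FF$. For the depth, I would define arithmetical formulas $\logic{dpAt}_t(y)$ by induction on $t$: the base case $\logic{dpAt}_0(y)$ says that $y$ has no in-edge, and $\logic{dpAt}_{t+1}(y)$ says that $y$ has some in-neighbour satisfying $\logic{dpAt}_t$ while every in-neighbour of $y$ satisfies $\logic{dpAt}_s$ for some $s\le t$. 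Because $d$ is a fixed constant, this is a finite family of $\FOC$-formulas.

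Finally, I would assemble $\logic{eval}_d(y)$ component by component. Each of its four components (the sign formula, the bit-indicator formula, and the exponent and bit-length terms) is built from the corresponding components of $\logic{eval}^{(0)},\ldots,\logic{eval}^{(d)}$ by a $(d{+}1)$-way case distinction guarded by $\logic{dpAt}_t(\logic{sink}(y))$ and with each component of $\logic{eval}^{(t)}$ evaluated at $\logic{sink}(y)$ rather than at $y$. Correctness follows immediately from Lemma~\ref{lem:ar6a}, since exactly one guard is active for each $y\in\{0,\ldots,q-1\}$ and that guard selects the node $\logic{sink}(y)$ together with its true depth. The degenerate case where $(V,E)$ is not a dag is covered by the fallback definition of $\Fin{\vec Z}$ preceding Lemma~\ref{lem:ar6a} as the trivial identity FNN, and a disjunctive clause detecting this (for instance, by checking for a directed cycle of length $\le |V|\le\ZV()$, which can be done in $\FOC$ for each fixed bound on the depth) reduces $\logic{eval}_d(y)$ to $\vec X(y)$ in that case. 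There is no substantial obstacle beyond bookkeeping; the real analytical content is already absorbed into Lemma~\ref{lem:ar6a}, and the corollary is a packaging step that iterates it over the output layer.
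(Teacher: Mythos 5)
The paper states this corollary without proof, so there is no paper argument to compare against; but your proposal is correct, and it is the kind of bookkeeping the corollary is meant to encapsulate. The two genuine ingredients — locating the $y$-th sink via a counting term over an $\logic{isSink}$ formula, and using finitely many depth-selector formulas $\logic{dpAt}_t$ (for $t\le d$) to pick the right $\logic{eval}_t$ — are sound, and both are $\FOC$-definable for fixed $d$ exactly as you say. One small remark: the inductive construction of $\logic{eval}_t$ in Lemma~\ref{lem:ar6a} (inputs return their $\vec X$-value, non-inputs recurse through in-neighbours via $\logic{eval}_{t-1}$) in fact makes $\logic{eval}_t$ correct for all nodes of depth at most $t$, not only depth exactly $t$, so the depth dispatch is not strictly needed — applying $\logic{eval}_d$ directly at $\logic{sink}(y)$ already suffices under the hypothesis that $\depth(V,E)\le d$. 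Your dispatch is therefore a little more elaborate than necessary but strictly safer relative to the literal wording of the lemma.

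Your final paragraph about the non-dag fallback is off on a technical point and also unnecessary. Detecting whether $(V,E)$ has a directed cycle is not a "fixed-depth" question and is not first-order definable in general (it is a reachability-type condition on a graph encoded on the number side, whose size $\ZV()$ is unbounded), so the suggested clause does not obviously exist in $\FOC$. Fortunately, the corollary's hypothesis "the depth of $(V,E)$ is at most $d$" already presupposes that $(V,E)$ is a dag, so the non-dag fallback in the definition of $\Fin{\vec Z}$ never arises here and that paragraph can simply be dropped.
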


\begin{corollary}\label{cor:ar6a}
  Let $\FF$ be a rational piecewise linear FNN of input dimension $p$
  and output dimension $q$, and let $\vec X_1,\ldots,\vec X_p$ be r-schemas of type
  $\rtp{\emptytuple}$.  Then for all $i\in[q]$ there is an
  arithmetical r-expression
  $\logic{eval}_{\FF,i}$ such that for all structures $A$ and
  assignments $\Ca$ over $A$,
  \[
    \FF\Big(\num{\vec X_1}^{(A,\Ca)},\ldots, \num{\vec
      X_p}^{(A,\Ca)}\Big)=\Big(\num{\logic{eval}_{\FF,1}}^{(A,\Ca)},\ldots,
    \num{\logic{eval}_{\FF,q}}^{(A,\Ca)}\Big).
  \]
\end{corollary}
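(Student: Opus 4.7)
The plan is to derive this directly from Corollary~\ref{cor:ar6b} by (i) encoding the fixed FNN $\FF$ as a closed F-schema and (ii) packaging the input r-schemas $\vec X_1,\ldots,\vec X_p$ into a single r-expression indexed by a number variable. Once both are in place, a substitution into $\logic{eval}_d$ with $d\coloneqq\depth(\FF)$ yields the required $\logic{eval}_{\FF,i}$.

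For step (i), I would note that since $\FF$ is a fixed rational piecewise linear FNN, everything about it — the skeleton, the thresholds, slopes, and constants of each activation function, and the weights and biases — is a finite collection of dyadic rationals and integers that can be hard-coded. Concretely, I write an F-expression $\rexp_{\FF}$ of type $\Ftp{\emptytuple}$ whose components are closed arithmetical r-/L-expressions: the term $\ZV$ is the constant $|V(\FF)|-1$; the formula $\ZE(y_1,y_2)$ is the finite disjunction $\bigvee_{ij\in E(\FF)}(y_1=i\wedge y_2=j)$; each activation function's L-schema is given by constant r-expressions listing its thresholds, slopes, and constants; and each weight and bias is similarly a constant r-expression. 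Then $\Fin{\rexp_{\FF}}^{(A,\Ca)}=\FF$ for every $A,\Ca$.

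For step (ii), I would build an r-expression $\rexp_X(y)$ of type $\rtp{\ttn}$ from $\vec X_1,\ldots,\vec X_p$ such that $\num{\rexp_X}^{(A,\Ca)}(i-1)=\num{\vec X_i}^{(A,\Ca)}$ for each $i\in[p]$. For the sign and bit formulas, I set $\rr^X(y)\coloneqq\bigvee_{i=1}^p\bigl(y=i-1\wedge \Zr^{(i)}\bigr)$ and $\rI^X(\hat y,y)\coloneqq\bigvee_{i=1}^p\bigl(y=i-1\wedge \ZI^{(i)}(\hat y)\bigr)$, where $\vec X_i=(\Zr^{(i)},\ZI^{(i)},\Zs^{(i)},\Zt^{(i)})$. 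For the scale and length terms I use the case-selector $\logic{sel}_i(y)\coloneqq\# y'<1.\,y=i-1$ (which evaluates to $1$ when $y=i-1$ and $0$ otherwise) and set $\rs^X(y)\coloneqq\sum_{i=1}^p \logic{sel}_i(y)\cdot \Zs^{(i)}$ and similarly for $\rt^X(y)$.

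Having these, I apply Corollary~\ref{cor:ar6b} to $\rexp_{\FF}$ and $\rexp_X$ with depth bound $d$, obtaining an arithmetical r-expression $\logic{eval}_d(y)$ with $\FF(\vec x)=\bigl(\num{\logic{eval}_d}^{(A,\Ca)}(0),\ldots,\num{\logic{eval}_d}^{(A,\Ca)}(q-1)\bigr)$ whenever $\vec x=\bigl(\num{\rexp_X}^{(A,\Ca)}(0),\ldots,\num{\rexp_X}^{(A,\Ca)}(p-1)\bigr)$, which by construction equals $\bigl(\num{\vec X_1}^{(A,\Ca)},\ldots,\num{\vec X_p}^{(A,\Ca)}\bigr)$. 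For $i\in[q]$ I then define $\logic{eval}_{\FF,i}$ to be the r-expression obtained by substituting the constant term $i-1$ for the free variable $y$ in $\logic{eval}_d$, completing the construction. Since everything about $\FF$ is folded into constants and the input routing is a finite disjunction, no step presents a real obstacle; the mild bookkeeping burden lies in spelling out the F-expression $\rexp_{\FF}$, but this is purely mechanical.
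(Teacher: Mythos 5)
Your proof is correct. Since the paper states Corollary~\ref{cor:ar6a} without an explicit proof, the intended argument is almost certainly exactly the substitution you carry out: hard-code the fixed FNN $\FF$ as a closed F-expression, route the $p$ inputs into a single r-expression $\rexp_X(y)$ of type $\rtp{\ttn}$ using a case-selector on $y$, substitute both into the $\logic{eval}_d(y)$ of Corollary~\ref{cor:ar6b}, and specialise $y\coloneqq i-1$. Your bookkeeping of the types of the components of $\rexp_X$ is right, the resulting expression is arithmetical (it has free purely numerical relation/function variables from the $\vec X_i$ but no vertex variables), and the claimed identity follows because by construction $\Fin{\rexp_\FF}^{(A,\Ca)}=\FF$ and $\num{\rexp_X}^{(A,\Ca)}(i-1)=\num{\vec X_i}^{(A,\Ca)}$ for all $(A,\Ca)$.
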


\subsection{Fragments of \texorpdfstring{$\FOC$}{FO+C}}
\label{sec:guarded}

To describe the expressiveness of graph neural networks, we need to
consider various fragments of $\FOC$. For $k\ge 1$, the
\emph{$k$-variable fragment} $\FOC[k]$ of $\FOC$ consists of all
formulas with at most $k$ vertex variables. Importantly, the number of
number variables is unrestricted. We call an $\FOC[k]$-formula
\emph{decomposable} if it contains no relation variables or function
variables and every subformula with exactly $k$ free vertex variables
is a Boolean combination of relational atoms and formulas with at most
$k-1$ free vertex variables.  Equivalently, an $\FOC[k]$-formula is
decomposable if it contains no relation variables or function
variables and every subformula of the form $\theta\le\theta'$, for
terms $\theta,\theta'$, has at most $k-1$ free vertex variables. Note
that this implies that a decomposable $\FOC[k]$-formula contains no
terms with $k$ free vertex variables.

\begin{example}
  The $\FOC[2]$-formula
  \[
    \phi(z)\coloneqq\exists x_1.\exists
    x_2.\big(E(x_1,x_2)\wedge z=\#x_2.E(x_2,x_1)\wedge
    z=\#x_1. E(x_2,x_1)\big)
  \]
  is decomposable, whereas the $\FOC[2]$-formula
  \[
    \psi(z)\coloneqq\exists x_1.\exists
    x_2.\Big(E(x_1,x_2)\wedge z=\big(\#x_2. E(x_2,x_1)\big)\cdot
    \big(\#x_1. E(x_2,x_1)\big)\Big)
  \]
  is not. However, $\psi(z)$ is decomposable if viewed as an
  $\FOC[3]$-formula.
  \uend
\end{example}

\begin{lemma}\label{lem:fo2fo2}
  Let $\phi$ be an $\FOC$-formula of vocabulary $\tau\cup\{\les\}$
  with at most one free vertex variable and no relation or function variables. Then there is a decomposable $\FOC[2]$-formula
  $\phi'$ such that for all ordered $\tau$-structures $A$ and all
  assignments $\Ca$ over $A$ it holds that
  $A\models\phi\iff A\models\phi'$.
\end{lemma}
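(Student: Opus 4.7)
The plan is to use the linear order $\les$ to replace every bound vertex variable of $\phi$ by a number variable carrying its position in the order, while keeping the at most one free vertex variable of $\phi$, which I call $x$, as a genuine vertex; the second vertex variable $x'$ of the target $\FOC[2]$-formula is reserved as a temporary handle that dereferences a position back to a vertex whenever an atom must be evaluated. The essential building block is the position formula
\[
  \logic{pos}(x,y)\;\coloneqq\;y=\#x'.\bigl(x'\les x\wedge\neg(x=x')\bigr),
\]
which uses exactly the two vertex variables $x,x'$ and expresses that $y$ is the index of $x$ in the order, together with its sibling $\logic{pos}(x',y)$ obtained by swapping the roles of $x$ and $x'$.

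I would then define a translation $T$ by structural induction on $\FOC$-expressions, threading a partial map $\pi$ that associates with each originally-bound vertex variable a fresh number variable representing its position. Boolean connectives and arithmetic operations on terms translate compositionally; equality and $\les$ between two $\pi$-mapped variables become the corresponding number relations on their positions; a unary atom $P(z_i)$ with $z_i$ bound becomes $\exists x'.(\logic{pos}(x',\pi(z_i))\wedge P(x'))$; and a binary atom $R(z_i,z_j)$ with both $z_i,z_j$ bound becomes
\[
  \exists x.\Bigl(\logic{pos}(x,\pi(z_i))\wedge\exists x'.\bigl(\logic{pos}(x',\pi(z_j))\wedge R(x,x')\bigr)\Bigr),
\]
with obvious simplifications when one of $z_i,z_j$ is the free $x$. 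A counting term $\#(z_1,\ldots,z_k,y_1<\theta_1,\ldots,y_\ell<\theta_\ell).\psi$ with vertex-variable entries $z_1,\ldots,z_k$ is rewritten by replacing each $z_i$ with a fresh bounded number variable $p_i<\ord$ and extending $\pi$ by $z_i\mapsto p_i$ when translating the bounds and the body; since $\exists z.\psi$ is syntactic sugar for $1\le\#z.\psi$, vertex quantification is a special case.

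To see that $T(\phi)$ is a decomposable $\FOC[2]$-formula, I would verify two properties. First, only the two vertex variables $x,x'$ ever occur in $T(\phi)$, and the fact that every atom of $\tau\cup\{\les\}$ has arity at most two (the setting of labelled graphs in this paper) is essential, because it ensures at most two vertices must be in scope simultaneously. Second, every subformula of $T(\phi)$ of the form $\theta\le\theta'$ has at most one free vertex variable: comparisons inherited from $\phi$ lose all original vertex variables except possibly the outer free $x$ under $\pi$-renaming, and comparisons introduced by $\logic{pos}$ have the shape $y=\#x'.(\cdots)$ or $y=\#x.(\cdots)$ and carry exactly one free vertex variable.

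The main obstacle I anticipate is the variable-scope bookkeeping. Inside the translation of an atom $R(z_i,z_j)$ the fresh $\exists x$ and $\exists x'$ shadow the enclosing free value of $x$, and the inner counting terms inside $\logic{pos}$ shadow $x$ or $x'$ once more; what makes the argument work is that each such shadowing is confined to the scope in which it is introduced, so the outer context recovers the intended values of $x,x'$ as soon as the shadowing scope closes. Carrying out this scope analysis carefully, and choosing at each inductive step which of $x,x'$ plays the free role, is the real work of the proof.
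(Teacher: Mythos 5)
Your proposal follows essentially the same route as the paper's proof: use the linear order to identify each vertex with its position, translate bound vertex variables to number variables bounded by $\ord$, and re-materialise a vertex from its position only locally inside the translation of an atom, using the two available vertex variables. Your $\logic{pos}(x,y)$ is the paper's $\logic{bij}(x,y)$ up to an off-by-one normalisation, and the atom-dereferencing step (nested $\exists x$, $\exists x'$ guarded by $\logic{pos}$) matches the paper's $\exists x_1.\exists x_2.(\logic{bij}(x_1,y_x)\wedge\logic{bij}(x_2,y_{x'})\wedge R(x_1,x_2))$; the scope/shadowing concern you raise is real but benign for exactly the reason you state, and the paper handles it the same way. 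One correct observation you make that the paper leaves tacit is that the construction only goes through when all relation symbols in $\tau$ have arity at most two, since the dereferencing trick can only hold two vertices in scope at once; in the paper this is implicitly justified by the restriction to (labelled) graphs.
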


\begin{proof}
  We first define a bijection between the vertices of the structure
  $A$ and an initial segment of $\Nat$. We simply let
  $\logic{bij}(x,y)\coloneqq \# x'.x'\les x=\#(y'\le\ord).y'\le
  y$. We introduce a distinguished number variable $y_x$ for
  every vertex variable $x$.

  To obtain $\phi'$ from $\phi$, we first replace quantification over
  $x$ in counting terms by quantification over $y_x$, that is, we
  replace $\#(x,\ldots)$ by $\#(y_x<\ord,\ldots)$. Furthermore, we
  replace atomic formulas $x=x'$ by $y_x=y_{x'}$ (or, more formally,
  $y_x\le y_{x'}\wedge y_{x'}\le y_x$) and atomic formulas $R(x,x')$ by
  $\exists x_1.\exists
  x_2.\big(\logic{bij}(x_1,y_x)\wedge\logic{bij}(x_2,y_{x'})\wedge
  R(x_1,x_2))$. Let $\psi$ be the resulting formula. If $\phi$ has no
  free variables, we let $\phi'\coloneqq\psi$.  If $\phi$ has one free
  variable $x$, we let
  \[
    \phi'\coloneqq\exists
    y_{x}<\ord.\big(\logic{bij}(x,y_{x})\wedge\psi\big).
  \]
 Then $\phi'$ is 
  equivalent to $\phi$, and it only contains the vertex variables 
  $x_1,x_2$ and hence is in $\FOC[2]$. It is easy to check that the
  formula is decomposable.
\end{proof}

\begin{myremark}
  Note that Lemma~\ref{lem:fo2fo2} implies that on ordered structures, every
  $\FOC[2]$-formula with at most one free variable is equivalent to a decomposable formula. It is an
  open problem whether this holds on arbitrary structures.

  The definition of decomposable $\FOC[k]$ is not
  particularly intuitive, at least at first glance. However, we wonder
  if ``decomposable $\FOC[k]$'' is what we actually want as the
  $k$-variable fragment of $\FOC$. This view is not only supported by
  Lemma~\ref{lem:fo2fo2}, but also by the observation that the logic
  $\LC^k$ (the $k$-variable fragment of the extension of first-order
  logic by counting quantifiers $\exists^{\ge n}x$) is contained in
  decomposable $\FOC[k]$. Furthermore, characterisations of
  $k$-variable logics in terms of pebble games or the WL-algorithm
  only take atomic properties of $k$-tuples into account.
  \uend
\end{myremark}

The \emph{guarded fragment $\GC$} is a fragment of $\FOC[2]$ where
quantification and counting is restricted to range over neighbours of
a free variable. We fix two variables $x_1,x_2$.  A \emph{guard} is an
atomic formula of the form $R(x_i,x_{3-i})$ for some binary relation
symbol $R$.  We inductively define the sets of \emph{$\GC$-terms} and
\emph{$\GC$-formulas} as follows.
\begin{itemize}
\item
  All number variables and $\zero,\one,\ord$ are $\GC$-terms.
\item For all  $\GC$-terms $\theta,\theta'$, the expressions 
  $
    \theta+ \theta'$ and $\theta\cdot \theta'
  $
  are $\GC$-terms.
\item For all function variables $U$ of type $(t_1,\ldots,t_k)\to\ttn$
  and all tuples $(\xi_1,\ldots,\xi_k)$, where $\xi_i$ is a vertex
  variable if $t_i=\ttv$ and $\xi_i$ is a $\GC$-term if $t_i=\ttn$,
  the expression
  $U (\xi_1,\ldots,\xi_k)$ is a $\GC$-term.
\item For all  $\GC$-terms $\theta,\theta'$, the expression $\theta\le\theta'$
  is a $\GC$-formula.
\item All relational atoms whose variables are among $x_1,x_2$ are
  $\GC$-formulas.
\item For all relation variables $X$ of type
  $\tta(t_1,\ldots,t_k)\ttz$ and all tuples 
  $(\xi_1,\ldots,\xi_k)$, where $\xi_i$ is a vertex
  variable if $t_i=\ttv$ and $\xi_i$ is a $\GC$-term if $t_i=\ttn$,
  the expression 
  $X(\xi_1,\ldots,\xi_k)$ is a $\GC$-formula.
\item  For all $\GC$-formulas $\phi,\psi$ the expressions $\neg\phi$
  and $\phi\wedge\psi$ 
  are $\GC$-formulas.
\item For all $\GC$-formulas $\phi$, guards
    $\gamma$, number variables
  $y_1,\ldots,y_k$, all $\GC$-terms $\theta_1,\ldots,\theta_k$, and
  $i=1,2$, 
  \begin{equation}
    \label{eq:4}
    \#(x_{3-i},y_1<\theta_1,\ldots,y_k<\theta_k).(\gamma\wedge \phi),
  \end{equation}
  is a $\GC$-term.
\item For all $\GC$-formulas $\phi$, number variables
  $y_1,\ldots,y_k$, and $\GC$-terms $\theta_1,\ldots,\theta_k$, 
  \begin{equation}
    \label{eq:5}
    \#(y_1<\theta_1,\ldots,y_k<\theta_k).\phi,
  \end{equation}
  is a $\GC$-term.
\end{itemize}
Observe that a $\GC$-term or $\GC$-formula either has at least one
free vertex variable or contains no vertex variable at
all. Note that
we add $\ord$ as a ``built-in'' constant that is always interpreted by
the order of the input structure. We need access to the order of a
structure to bound quantification on numbers, and the closed
$\FOC[2]$-term $\ord=\#x.x=x$ defining the order is not in
$\GC$.

An r-expression is \emph{guarded} if all its formulas and terms are in
$\GC$.

\begin{myremark}\label{rem:guarded-variant}\label{rem:mc1}Our definition of the guarded fragment is relatively liberal in
  terms of which kind of formulas $\phi$ we allow inside the
  guarded counting operators in \eqref{eq:4}. In particular, we allow
  both $x_i$ and $x_{3-i}$ to occur freely in $\phi$. A more
  restrictive alternative
  definition, more in the spirit of a modal logic, would be to stipulate
  that the variable $x_i$ must not occur freely in $\phi$ and the
  bounding terms $\theta_i$. Let us call
  the restriction of $\GC$ where terms of the form \eqref{eq:4} are
  only allowed for formulas $\phi$ and terms $\theta_1,\ldots,\theta_k$ in which $x_i$ does not occur
  freely the \emph{modal fragment} of $\FOC$, denoted by $\MC$.

  While our main focus will be on $\GC$, we will explain how to adapt
  our main results to $\MC$ in a sequence of remarks. The distinction
  between the modal and guarded fragments is closely related to a
  similar distinction for graph
  neural networks (see Remark~\ref{rem:ac_vs_mpnn}).

  It is proved in the subsequent article \cite{GroheR24} that $\GC$ and
  $\MC$ have the same expressive power. The characterisation theorems
  of this article will be used there to transfer this to graph neural
  networks.\uend
\end{myremark}

By definition, $\GC$ is contained in $\FOC[2]$. The converse does not
hold. Let us introduce an intermediate fragment $\GCgc$ which extends
$\GC$ and is still in $\FOC[2]$. We call $\GCgc$ the \emph{guarded
  fragment with global counting}. In addition to the guarded counting
terms in \eqref{eq:4}, in $\GCgc$ formulas we also allow a restricted
form of unguarded counting in the form
  \begin{equation}
    \label{eq:6}
    \#(x_{3-i},y_1<\theta_1,\ldots,y_k<\theta_k).\phi,
  \end{equation}
  where the variable $x_i$ must not occur freely in
  $\phi$. Intuitively, such a term makes a ``global'' calculation that
  is unrelated to the ``local'' properties of the free
  variable $x_i$.

  Let us call a $\GC$-formula or a $\GCgc$-formula
  \emph{decomposable} if it is decomposable as an $\FOC[2]$-formula.

\begin{lemma}\label{lem:GCgr}\sloppy
  For every decomposable $\FOC[2]$-formula $\phi$ there is
  a decomposable $\GCgc$-formula $\phi'$ such that for all graphs $G$, possibly
  labelled, and all assignments $\Ca$ over $G$ we have
  \[
    (G,\Ca)\models\phi\iff(G,\Ca)\models\phi'.
  \]
\end{lemma}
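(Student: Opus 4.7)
The plan is to argue by structural induction on $\phi$ (with simultaneous induction on its terms). Atoms, Boolean connectives, and the arithmetical operators $+,\cdot$ lift trivially between $\FOC[2]$ and $\GCgc$. The only nontrivial case is a counting term $\tau \coloneqq \#(x_{3-i}, y_1<\theta_1, \ldots, y_k<\theta_k).\psi$ whose body $\psi$ mentions both $x_1$ and $x_2$ without being guarded by an edge atom. Decomposability is the key structural hypothesis: it forces each bound term $\theta_j$ to have at most one free vertex variable, and it forces $\psi$ to be a Boolean combination of 2-variable atoms together with unary subformulas $\sigma_1(x_i),\ldots,\sigma_p(x_i)$, $\rho_1(x_{3-i}),\ldots,\rho_q(x_{3-i})$ and purely numerical subformulas. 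On graphs, by symmetry and irreflexivity of $E$, the only relevant 2-variable atoms are $E(x_1,x_2)$ and $x_1=x_2$.

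First I would split the count into three pairwise disjoint parts $\tau = \tau_E + \tau_{=} + \tau_\star$ according as $E(x_i,x_{3-i})$, $x_{3-i}=x_i$, or $\neg E(x_i,x_{3-i}) \wedge x_{3-i}\neq x_i$ holds. The summand $\tau_E$ is already a guarded $\GC$-count. In $\tau_{=}$ the variable $x_{3-i}$ is pinned to $x_i$, so after substituting $x_i$ for $x_{3-i}$ throughout (each $\theta_j$ still has at most one free vertex variable) the whole term collapses to a counting term over number variables only, with $x_i$ as sole free vertex variable, hence a $\GCgc$-term by \eqref{eq:5}. The real work is $\tau_\star$, the count over vertices $x_{3-i}$ that are distinct non-neighbours of $x_i$.

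For $\tau_\star$, the case condition reduces $\psi$ to a Boolean combination $\psi_\star$ of only the unary and numerical subformulas. Expanding in DNF over the truth values of the $\rho_j(x_{3-i})$ yields $\psi_\star \equiv \bigvee_B (N_B(x_{3-i}) \wedge \psi_\star^B(x_i,\vec y))$ with pairwise disjoint disjuncts, where $N_B$ pins the truth pattern of the $\rho_j$'s (so $x_{3-i}$ no longer occurs in $\psi_\star^B$). Summing over $B$, I would apply the exact three-way partition identity rewriting the unguarded count over distinct non-neighbours of $x_i$ as (total count) $\dotminus$ ($x_{3-i}=x_i$ part) $\dotminus$ (guarded $E(x_i,x_{3-i})$ part), all taken with the same bounds $(x_{3-i},\vec y<\vec\theta)$ and common body $N_B \wedge \psi_\star^B$; this is exact because the three classes partition $V(G)$ by irreflexivity. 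The last two summands are absorbed as in $\tau_E$ and $\tau_{=}$. For the first, a second DNF expansion over the $\sigma_i(x_i)$ in $\psi_\star^B$ factors the $x_i$-dependence out as Boolean coefficients of the form $\# y_0<1.\,M_C(x_i)$ in front of counts whose body mentions neither $x_i$ nor the $\sigma_i$'s, fitting the global-counting form \eqref{eq:6} allowed in $\GCgc$.

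The main obstacle I anticipate is bookkeeping rather than conceptual: I must verify that the DNF-expanded formulas remain decomposable, that substitutions like $\vec\theta[x_i/x_{3-i}]$ produce legal $\GCgc$-terms with the intended free-variable profile, and that the adaptive bounds $\theta_j$ --- which individually involve at most one of $x_i,x_{3-i}$ by decomposability but collectively may mix them --- are carried correctly through the global-count step, where the body loses $x_i$ while the bounds may retain it, as explicitly permitted by \eqref{eq:6}. The exponential blow-up from the DNF expansions is harmless for mere expressibility. The essential ingredients throughout are decomposability (bounding the number of unary atoms and forbidding 2-vertex-variable $\le$-subformulas) and irreflexivity/symmetry of $E$ on graphs, which together yield the clean three-way partition.
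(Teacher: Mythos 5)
Your proposal captures the essential blueprint of the paper's proof — the three-way split into edge, equality, and ``distinct non-neighbour'' parts, the truncated subtraction identity, and a DNF over the unary subformulas to separate the $x_i$-dependence from the $x_{3-i}$-dependence. Two of the paper's ingredients are missing, however, and the second one is not merely bookkeeping.

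First, the induction needs a second case that you omit: counting terms $\#(x_1,x_2,y_1<\theta_1,\ldots,y_k<\theta_k).\psi$ that bind \emph{both} vertex variables at once. These are perfectly legal decomposable $\FOC[2]$-terms (they are purely numerical terms and can sit anywhere inside $\phi$), and your induction has no clause to translate them. The paper handles them in a separate Claim~2 by a parallel argument: partition by $E$, $=$, and neither; for the guarded part first collapse the inner $x'$-count to a guarded term $\zeta_1(x)$ and then wrap it as $\#(x,z<\ord).z<\zeta_1$; for the equality part substitute $x$ for $x'$ to get a term of the form~\eqref{eq:6}; and for the remainder DNF-expand and observe that $\#(x,x').(\chi_i\wedge\chi_i')$ factors as $\#x.\chi_i\cdot\#x'.\chi_i'$, both of which are of the form~\eqref{eq:6}. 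Without this case your translation is not total.

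Second, and more seriously, the final factoring step in your treatment of $\tau_\star$ is incorrect as written. You want to write $\#(x_{3-i},\vec y<\vec\theta).(N_B\wedge M_C\wedge\pi_C)$ as $\big(\#y_0<\one.M_C\big)\cdot\#(x_{3-i},\vec y<\vec\theta).(N_B\wedge\pi_C)$, and this requires that $M_C$ depend on neither $x_{3-i}$ nor the bound number variables $\vec y$. But decomposability only forbids $M_C$ (a Boolean combination of the $\sigma_i$) from having a second free \emph{vertex} variable; the $\sigma_i$ may freely contain $\vec y$. A concrete offender is a body with a conjunct of the form $U(x_i)\le y_1$: the resulting $M_C$ has $y_1$ free, and your pulled-out coefficient $\#y_0<\one.M_C$ then has $y_1$ dangling free outside the count, which changes the semantics. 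The paper resolves this by first normalising all bounds to a single vertex-variable-free bound $\theta$ (using Lemma~\ref{lem:termbound} and pushing the conditions $y_j<\theta_j$ into the body, which preserves decomposability), then pulling the number variables out via the reordering identity $\#(x',\vec y<\theta).\psi \equiv \#(\vec y<\theta,z<\ord).\,z<\#x'.\psi$, and only \emph{then} factoring $\#x'.(\chi_i\wedge\chi_i')$ as $(\#y'<\one.\chi_i)\cdot\#x'.\chi_i'$ — at that point $\vec y$ is no longer bound, so the $\vec y$-dependence of $\chi_i$ is harmless. You flag the adaptive bounds as a bookkeeping concern, but you do not flag the $\vec y$-dependence of the $x_i$-part, and without the normalisation plus reordering the shortcut you propose does not go through.
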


\begin{proof}
  Throughout this proof, formulas are without relation or function
  variables.  We write $x$ and $x'$ to refer to the variables
  $x_1,x_2$, with the understanding that if $x$ refers to $x_i$ then
  $x'$ refers to $x_{3-i}$.  We need to replace unguarded terms by
  combinations of terms of the form \eqref{eq:4}, \eqref{eq:5}, and
  \eqref{eq:6}.

  \begin{techclaim}\label{claim:b:1}
   Every term $\eta\coloneqq\#(x',y_1<\theta_1,\ldots,y_k<\theta_k).\psi$, where
   $\psi$ is a $\GCgc$-formula and the $\theta_i$ are $\GCgc$-terms,
   is equivalent to a $\GCgc$-term.

  \end{techclaim}
\begin{subproof} By Lemma~\ref{lem:termbound}, we may assume without loss of
   generality that the terms $\theta_i$ do not contain the variables
   $x,x',y_1,\ldots,y_k$ (and therefore the counting operator is
   nonadaptive). Indeed, by the lemma we can find a term $\theta$
   built from 
   $\ord$ and the free number variables of $\eta$
   such that $\theta$ bounds all $\theta_i$. Then $\eta$ is equivalent to 
   $\#(x',y_1<\theta,\ldots,y_k<\theta).(\psi\wedge
   y_1<\theta_1\wedge\ldots\wedge y_k<\theta_k)$.

  Also without loss of generality we may assume that $k\ge 1$,
  because we can always append $y<\one$ for a fresh variable $y$ in
  the counting operator without changing the result. To simplify the
  notation, let us assume that $k=1$. The generalisation to larger $k$
  is straightforward. Hence
  \begin{equation}
    \label{eq:7}
    \eta=\#(x',y<\theta).\psi\\
  \end{equation}
  and $\theta$ is a term in which neither $x$ nor $x'$ is free.

   Since $\psi$ is decomposable, we can re-write $\psi$ as
  \[
    (E(x,x')\wedge\psi_1)\vee(\neg E(x,x')\wedge\neg x=x'\wedge\psi_2)\vee\psi_3,
  \]
  where $\psi_1$ and $\psi_2$ are Boolean combinations of formulas
  with at most one free vertex variable (either $x$ or $x'$), and $x'$
  does not occur freely in $\psi_3$. To see this, note that in graphs
  all vertices $x,x'$
  satisfy exactly one of $E(x,x')$, $\neg E(x,x')\wedge\neg x=x'$, and
  $x=x'$. Thus $\psi$ is equivalent to the disjunction
  \[
    (E(x,x')\wedge\psi) \vee(\neg E(x,x')\wedge\neg
    x=x'\wedge\psi)\vee(x=x'\wedge \psi).
  \]
  Then to obtain $\psi_1$ and $\psi_2$, we eliminate all atomic formulas
  $E(x,x'),x=x'$ in both free variables from $\psi$, and  to obtain
  $\psi_3$ we substitute $x$ for all free occurrence of $x'$. Thus
  $\eta$ is equivalent to the term
  \begin{align}
    \notag
    &\#(x',
      y<\theta).(E(x,x')\wedge\psi_1)\\
    \label{eq:8}
    +\;&\#(x',
       y<\theta).(\neg E'(x,x')\wedge\neg
       x=x'\wedge\psi_2)\\
    \notag
    +\;&\#(y<\theta).\psi_3.
  \end{align}
  The first and the third term in this sum are already $\GCgc$-terms
  of the forms \eqref{eq:4} and \eqref{eq:5}, respectively. We only
  need to worry about the second,
  \[
    \eta_2\coloneqq \#(x',
    y<\theta).(\neg E'(x,x')\wedge\neg
    x=x'\wedge\psi_2).
  \]
  We can equivalently re-write $\eta_2$ as
  \begin{align*}
    &\#(x',
      y<\theta).\psi_2\\
    \dotminus\;&\#(x',
                 y<\theta).(E(x,x')\wedge\psi_2)\\
    \dotminus\;&\#(y<\theta).\psi_2\textstyle\frac{x}{x'},
  \end{align*}
  where $\psi_2\frac{x}{x'}$ denotes the formula obtained from $\psi_2$ by
  replacing all free occurrences of $x'$ by $x$. The second and the third term are already $\GC$-terms. We only
  need to worry about the first,
  \[
    \eta_2'\coloneqq \#(x',
    y<\theta).\psi_2
  \]
  Recall
  that $\psi_2$ is a Boolean combination of formulas with only one
  free vertex variable. Bringing this Boolean combination into
  disjunctive normal form, we obtain an equivalent formula
  $\bigvee_{i\in I}(\chi_{i}\wedge\chi'_{i})$, where $x$ does not
  occur freely in $\chi'_{i}$ and $x'$ does not
  occur freely in $\chi_{i}$. We can further ensure that the disjuncts
  are mutually exclusive, that is, for every pair $x,x'$ there is at
  most one $i$ such that that it satisfies
  $(\chi_{i}\wedge\chi'_{i})$. For example, if $I=\{1,2\}$, we
  note that
  $(\chi_{1}\wedge\chi'_{1})\vee (\chi_{2}\wedge\chi'_{2}) $ is
  equivalent to
  \begin{align*}
    &((\chi_{1}\wedge\chi_{2})\wedge(\chi'_{1}\wedge\chi_{2}'))\\
    \vee &((\chi_{1}\wedge\neg\chi_{2})\wedge(\chi'_{1}\wedge\chi_{2}'))\\
    \vee &((\chi_{1}\wedge\chi_{2})\wedge(\chi'_{1}\wedge\neg\chi_{2}'))\\
    \vee
    &((\chi_{1}\wedge\neg\chi_{2})\wedge(\chi'_{1}\wedge\neg\chi_{2}'))\\
        \vee &((\neg\chi_{1}\wedge\chi_{2})\wedge(\chi'_{1}\wedge\chi_{2}'))\\
        \vee &((\chi_{1}\wedge\chi_{2})\wedge(\neg \chi'_{1}\wedge\chi_{2}'))\\
        \vee &((\neg\chi_{1}\wedge\chi_{2})\wedge(\neg
               \chi'_{1}\wedge\chi_{2}')).
  \end{align*}
  Then $\eta_2'$ is equivalent to the term
  \[
    \sum_{i\in I}\#(x',
    y<\theta).(\chi_{i}\wedge\chi'_i).
  \]
  Consider a summand
  $\eta_{2,i}\coloneqq \#(x',
  y<\theta).(\chi_{i}\wedge\chi'_{i})$.
  Let $\zeta\coloneqq \# x'.(\chi_{i}\wedge\chi'_{i})$ and note that
  for all graphs $G$ and assignments $\Ca$ we have
  \[
    \sem{\eta_{2,i}}^{(G,\Ca)}=\sum_{b<\sem{\theta}^{(G,\Ca)}}\sem{\zeta}^{(G,\Ca\frac{b}{y})}=\big|\big\{(b,c)\bigmid
    b<\sem{\theta}^{(G,\Ca)},c<\sem{\zeta}^{(G,\Ca\frac{b}{y})}\big\}\big|.
  \]
  Let $\eta'_{2,i}\coloneqq\#(y<\theta,z<\ord).z<\zeta$. Since we
  always have $\sem{\zeta}^{(G,\Ca)}\le|G|$, the terms $\eta_{2,i}$ and
  $\eta_{2,i}'$ are equivalent.

  The final step is to turn $\zeta$ into
  a $\GCgc$-term. Recall that $\zeta=\# x'.(\chi_{i}\wedge\chi'_{i})$
  and that
  $x'$ is not free in $\chi_i$. If $x$ does not satisfy $\chi_{i}$, then the term
  $\zeta$ evaluates to $0$, and otherwise it has the same value as the
  term $\# x'.\chi'_{i}$, which is of the form \eqref{eq:6}.
  Note that the term
  $\#(y'<\one).\chi_{i}$, where $y'$ is a fresh number variable not
  occurring in $\chi_{i}$, is of the form \eqref{eq:5} and evaluates
  to $1$ if $\chi_{i}$ is satisfied and to $0$ otherwise.  Thus the
  term 
  \[
    \#(y'<\one).\chi_{i}\;\cdot\; \#x'.\chi_{i}'
  \]
  is a $\GCgc$-term equivalent to $\zeta$. This completes the
  proof of the claim.
  \uend
\end{subproof}

 \begin{techclaim}\label{claim:b:2}
   Every term $\eta\coloneqq\#(x,x',y_1<\theta_1,\ldots,y_k<\theta_k).\psi$, where
   $\psi$ is a $\GCgc$-formula and the $\theta_i$ are $\GCgc$-terms,
   is equivalent to a $\GCgc$-term.

   \end{techclaim}
\begin{subproof}
   Arguing as in the proof of Claim~\ref{claim:b:1}, we may assume that 
   \begin{equation}   
    \label{eq:9}
    \eta=\#(x,x',y<\theta).\psi,
  \end{equation}
  where $\theta$ is a term in which the variables $x,x'$ are not free.

  We proceed very similarly to the proof of Claim~\ref{claim:b:1}. The first step is to rewrite the term as
    the sum
  \begin{align}
    \notag
    &\#(x,x',
      y<\theta).(E(x,x')\wedge\psi_1)\\
    \label{eq:10}
    +\;&\#(x,x',
       y<\theta).(\neg E'(x,x')\wedge\neg
       x=x'\wedge\psi_2)\\
    \notag
    +\;&\#(x,y<\theta).\psi_3,
  \end{align}
  where $\psi_1$ and $\psi_2$ are Boolean combinations of formulas
  with at most one free vertex variable and $x'$ is not free in
  $\psi_3$. Then the third term is already of the form \eqref{eq:6},
  and we only have to deal with the first two.
  
  Let us look at the term
  $\eta_1\coloneqq \#(x,x',
  y<\theta).(E(x,x')\wedge\psi_1)$. Let
  $\zeta_1\coloneqq\#(x',
  y<\theta).(E(x,x')\wedge\psi_1)$. Then
  $\zeta_1$ is a term of the form \eqref{eq:4}. Moreover, for every
  graph $G$ and assignment $\Ca$ we have
  \begin{align*}
    \sem{\eta_1}^{(G,\Ca)}&=\sum_{a\in
                            V(G)}\sem{\zeta_1}^{(G,\Ca\frac{a}{x})}=\big|\big\{(a,c)\bigmid a\in V(G),c<\sem{\zeta_1}^{(G,\Ca\frac{a}{x})}\big\}\big|.
  \end{align*}
  We let
  \[
    \eta_1'\coloneqq\#(x,z<\ord).z<\zeta_1.
  \]
  Then $\eta_1'$ is a term of the form \eqref{eq:6} that is equivalent
  to $\eta_1$.

  It remains to deal with the second summand in \eqref{eq:10}, the
  term 
  \[
  \eta_2\coloneqq\#(x,x',
  y<\theta).(\neg E'(x,x')\wedge\neg
  x=x'\wedge\psi_2).
  \]
  We rewrite this term as
  \begin{align}
    \label{eq:11}
    &\#(x,x',
      y<\theta).\psi_2\\
     \label{eq:12}
    \dotminus \;&\#(x,x',
      y<\theta).(E(x,x')\wedge\psi_2)\\
     \label{eq:13}
    \dotminus \;&\#(x,
      y<\theta).\psi_2\textstyle\frac{x}{x'}.
  \end{align}
  The term \eqref{eq:13} is of the form \eqref{eq:6}, and we have
  just seen how to deal with a term of the form \eqref{eq:12}. Thus
  it remains to deal with the first term $\eta_2'\coloneqq\#(x,x',
      y<\theta).\psi_2$. As in the proof of Claim~\ref{claim:b:1}, we can find an equivalent formula 
  $\bigvee_{i\in I}(\chi_{i}\wedge\chi'_{i})$, where $x$ does not
  occur freely in $\chi'_{i}$ and $x'$ does not
  occur freely in $\chi_{i}$, and the disjuncts
  are mutually exclusive. Then $\eta_2'$ is equivalent to the sum
  \[
    \sum_{i\in I}\#(x,x',
    y<\theta).(\chi_i\wedge\chi_i').
  \]
  Consider one of the terms in the sum, $\eta_{2,i}'\coloneqq \#(x,x',
  y<\theta).(\chi_i\wedge\chi_i')$. 
We let $\zeta\coloneqq \# (x,x').(\chi_{i}\wedge\chi'_{i})$  As in the proof of Claim~\ref{claim:b:1},
  for all graphs $G$ and assignments $\Ca$ we have
  \[
    \sem{\eta_{2,i}'}^{(G,\Ca)}=\sum_{b<\sem{\theta}^{(G,\Ca)}}\sem{\zeta}^{(G,\Ca\frac{b}{y})}=\big|\big\{(b,c)\bigmid
    b<\sem{\theta}^{(G,\Ca)},c<\sem{\zeta}^{(G,\Ca)}\big\}\big|.
  \]
  Let $\eta''_{2,i}\coloneqq\#(y<\theta,z<\ord\cdot\ord).z<\zeta$. Since we
  always have $\sem{\zeta}^{(G,\Ca)}\le|G|^2$, the terms $\eta_{2,i}'$ and
  $\eta_{2,i}''$ are equivalent.

  To turn $\zeta$ into
  a $\GCgc$-term $\zeta'$, we observe that for all graphs $G$ and assignments
  $\Ca$ we have
  \[
    \sem{\zeta}^{(G,\Ca)}=\big|\big\{ a\in V(G)\bigmid
    (G,\Ca\textstyle\frac{a}{x})\models\chi_i\big\}\big|\cdot
    \big|\big\{ a'\in V(G)\bigmid
    (G,\Ca\textstyle\frac{a'}{x'})\models\chi_i'\big\}\big|.
  \]
  We let $\zeta'\coloneqq\#x.\chi_i\cdot\# x'.\chi_i'$.
  \uend
\end{subproof}

\medskip
With these two claims, it is easy to inductively translate a
decomposable $\FOC[2]$-formula into an $\GCgc$-formula.
\end{proof}

Combining Lemmas~\ref{lem:GCgr} and \ref{lem:fo2fo2} with
Corollary~\ref{cor:TC0FO2}, we obtain the following.

 \begin{corollary}\label{cor:GCgc_TC0}
   Let $\CQ$ be a unary query. Then $L_\les(\CQ)$ is in $\TC^0$ if
   and only if $\CQ$ is definable in order-invariant $\GCgcnu$.
\end{corollary}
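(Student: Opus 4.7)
The plan is to chain three facts already available: Corollary~\ref{cor:TC0FO2} for order-invariant $\FOCnu$, Lemma~\ref{lem:fo2fo2} for getting down to decomposable $\FOC[2]$ on ordered structures, and Lemma~\ref{lem:GCgr} for getting from decomposable $\FOC[2]$ down to decomposable $\GCgc$ on labelled graphs. The only subtlety is that Lemmas~\ref{lem:fo2fo2} and~\ref{lem:GCgr} are stated without built-in numerical relations, so I need to observe that they go through painlessly in the presence of such relations.

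For the easy direction, I will simply note that $\GCgcnu\subseteq\FOCnu$, so any order-invariant $\GCgcnu$-definition of $\CQ$ is in particular an order-invariant $\FOCnu$-definition, and Corollary~\ref{cor:TC0FO2} gives $L_\les(\CQ)\in\TC^0$.

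For the nontrivial direction, assume $L_\les(\CQ)\in\TC^0$. Corollary~\ref{cor:TC0FO2} supplies an order-invariant $\FOCnu$-formula $\phi(x)$ over $\tau\cup\{\les\}$ defining $\CQ$. I first apply Lemma~\ref{lem:fo2fo2} (extended to allow numerical atoms) to obtain a decomposable $\FOC[2]$-formula $\phi'(x)$ equivalent to $\phi$ on all ordered $\tau$-structures. I then view each ordered $\tau$-structure in $\CC_\les$ as a labelled graph and apply Lemma~\ref{lem:GCgr} (extended analogously) to convert $\phi'$ into an equivalent decomposable $\GCgc$-formula $\phi''(x)$ with built-in numerical relations, i.e.\ a $\GCgcnu$-formula. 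Order-invariance of $\phi$ transfers to $\phi''$ because it is preserved under equivalence on ordered structures, and $\phi''$ defines $\CQ$ as desired.

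The main, and essentially only, obstacle is verifying that Lemmas~\ref{lem:fo2fo2} and~\ref{lem:GCgr} accommodate built-in numerical relation atoms $N(y_1,\ldots,y_k)$. This should be routine: the translation in Lemma~\ref{lem:fo2fo2} only rewrites vertex quantifiers and vertex-atomic formulas, using the bijection $\logic{bij}$ induced by $\les$, and leaves every purely numerical subformula---including atoms $N(y_1,\ldots,y_k)$---syntactically untouched. Likewise, the case analysis in Lemma~\ref{lem:GCgr} pivots exclusively on the three mutually exclusive atoms $E(x,x')$, $x=x'$, and $\neg E(x,x')\wedge\neg x=x'$ between the two vertex variables; numerical atoms are harmless Boolean building blocks with no free vertex variables and are carried along without modification. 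Thus both lemmas extend to the ``nu'' versions with no further argument, completing the plan.
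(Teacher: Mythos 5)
Your high-level plan is exactly what the paper does: chain Corollary~\ref{cor:TC0FO2}, Lemma~\ref{lem:fo2fo2}, and Lemma~\ref{lem:GCgr}, together with the observation that the last two extend to $\FOCnu$. Your argument for the built-in-relation extension is correct and is the routine part.

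However, there is a second, unaddressed obstacle that your justification explicitly (and incorrectly) claims does not exist. Lemma~\ref{lem:fo2fo2} produces a formula over $\tau\cup\{\les\}$, and the $\les$-symbol does not disappear: it is baked into $\logic{bij}(x,y)=\#x'.x'\les x=\#(y'\le\ord).y'\le y$ and into the replacement of binary atoms $R(x,x')$, so the resulting decomposable $\FOC[2]$-formula $\phi'$ still contains atoms $\les(x_1,x_2)$ and $\les(x_2,x_1)$. Lemma~\ref{lem:GCgr}, on the other hand, is stated and proved over labelled graphs, where $E$ is the \emph{only} binary relation symbol. You assert that its case analysis ``pivots exclusively on the three mutually exclusive atoms $E(x,x')$, $x=x'$, and $\neg E(x,x')\wedge\neg x=x'$''---but that partition only isolates the two-vertex-variable part of a decomposable formula when $E$ is the sole binary relation. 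In the ordered vocabulary, atoms $\les(x,x')$ survive into the would-be ``one free vertex variable'' residues $\psi_1,\psi_2$, and the argument breaks: e.g.\ for $\eta=\#x'.\,x'\les x$ you get $\psi_1=\psi_2=x'\les x$, which has two free vertex variables. The fix is genuine but not ``no further argument'': refine the case split to also distinguish $\les(x,x')$ from $\les(x',x)$ (five cases total), and use the fact that, by the paper's own definition of guard (``an atomic formula of the form $R(x_i,x_{3-i})$ for some binary relation symbol $R$''), the atoms $\les(x,x')$ and $\les(x',x)$ are admissible guards, so each of the non-diagonal cases is guarded. The paper's one-line derivation is equally terse on this point, so the gap is shared with the source, but your write-up should not claim that the built-in numerical relations are the only thing to check.
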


\subsection{Arithmetic in \texorpdfstring{$\GC$}{GFO+C}}
\label{sec:GCarithmetic}

Since all arithmetical $\FOC$-formulas and terms are in $\GC$, most results of
Sections~\ref{sec:arithmetic}--\ref{sec:fnnsim} apply to
$\GC$. Exceptions are Lemmas~\ref{lem:ar3a}, \ref{lem:ar3}, and
\ref{lem:ar5} on iterated addition, which may involve vertex
variables. Here we prove variants of these lemmas for the guarded
fragment.

\begin{lemma}\label{lem:ar8}
  Let $X$ be a relation variable of type $\tta\ttv^2\ttn^\ell\ttz$,
  and let $U,V$ be function variables of types
  $\ttv^2\ttn^\ell\to\ttn$, $\ttv^2\to\ttn$, respectively.
  Furthermore, let 
  $\gamma(x,x')$ be a guard.
  \begin{enumerate}
  \item There is a $\GC$-term $\logic{u-itadd}(x)$ such that for all
    structures $A$, assignments $\Ca$ over $A$, and $a\in V(A)$,
    \[
      \sem{\logic{u-itadd}}^{(A,\Ca)}(a)=\sum_{(a',\vec
        b)}\Ca(U)(a,a',\vec b),
    \]
    where the sum ranges over all $(a',\vec b)\in V(A)\times\Nat^\ell$
    such that $A\models\gamma(a,a')$ and $(a,a',\vec b)\in\Ca(X)$ and $\vec b=(b_1,\ldots, b_\ell)\in\Nat^\ell$ with
    $b_i<\Ca(V)(a,a')$ for all $i\in[\ell]$. 
    \item
      There are $\GC$-terms $\logic{u-max}(x)$ and $\logic{u-min}(x)$ such that for
  all structures $A$, assignments $\Ca$ over $A$, and $a\in
  V(A)$, 
    \begin{align*}
      \sem{\logic{u-max}}^{(A,\Ca)}(a)&=\max_{(a',\vec
                                        b)}\Ca(U)(a,a',\vec b),\\
            \sem{\logic{u-min}}^{(A,\Ca)}(a)
      &=\min_{(a',\vec
                                        b)}\Ca(U)(a,a',\vec b),
    \end{align*}
    where $\max$ and $\min$ range over all $(a',\vec b)\in V(A)\times\Nat^\ell$
    such that $A\models\gamma(a,a')$ and $(a,a',\vec b)\in\Ca(X)$ and $\vec b=(b_1,\ldots, b_\ell)\in\Nat^\ell$ with
    $b_i<\Ca(V)(a,a')$ for all $i\in[\ell]$. 
    \end{enumerate}
\end{lemma}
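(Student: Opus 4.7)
The plan is to lift the adaptive-bounds trick of Lemma~\ref{lem:ar3a} to the guarded fragment, using the guard $\gamma(x,x')$ to guide every vertex quantification. Throughout, $x$ denotes the free vertex variable (say $x_1$) and $x'$ its bound partner (then $x_2$).

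First I would handle iterated addition. The key identity is
\[
  \sum_{(a',\vec b)} \Ca(U)(a,a',\vec b) = \Big|\Big\{(a',\vec b,c)\Bigmid A\models\gamma(a,a'),\, (a,a',\vec b)\in\Ca(X),\, b_i<\Ca(V)(a,a'),\, c<\Ca(U)(a,a',\vec b)\Big\}\Big|,
\]
so I can simply put
\[
  \logic{u-itadd}(x) \coloneqq \#\big(x',\, y_1<V(x,x'),\ldots,y_\ell<V(x,x'),\, z<U(x,x',\vec y)\big).\big(\gamma(x,x')\wedge X(x,x',\vec y)\big).
\]
This is exactly of the guarded form \eqref{eq:4}: one bound vertex variable $x'$ under the guard $\gamma$, together with number variables with adaptive bounds that are themselves $\GC$-terms.

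Next I would define $\logic{u-max}$ by counting the number of $z$ strictly below the maximum, observing that $z<\max$ iff some tuple in the index set has $U$-value exceeding $z$. The existential over $(a',\vec b)$ is packaged as a guarded counting term being $\geq 1$, giving a $\GC$-formula
\[
  \lambda(x,z) \coloneqq\;\; 1\le \#(x',\, y_1<V(x,x'),\ldots,y_\ell<V(x,x')).\big(\gamma(x,x')\wedge X(x,x',\vec y)\wedge z<U(x,x',\vec y)\big),
\]
and then $\logic{u-max}(x) \coloneqq \#(z<\logic{u-itadd}(x)).\lambda(x,z)$, which is a legal unguarded counting of the form \eqref{eq:5} since only a number variable is bound. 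For $\logic{u-min}$ I would proceed dually: $z<\min$ iff there is \emph{no} tuple with $U$-value $\le z$, expressed as a guarded counting term being $0$. Letting
\[
  \mu(x,z) \coloneqq\;\; \lambda(x,0) \,\wedge\, \#(x',\, y_1<V(x,x'),\ldots,y_\ell<V(x,x')).\big(\gamma(x,x')\wedge X(x,x',\vec y)\wedge \neg(z<U(x,x',\vec y))\big) \le 0,
\]
(the conjunct $\lambda(x,0)$ rules out the degenerate case where the index set is empty, so min is well-defined), we set $\logic{u-min}(x) \coloneqq \#(z<\logic{u-itadd}(x)+1).\mu(x,z)$.

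The only real obstacle is syntactic discipline: we must ensure that every quantification or counting over a vertex variable appears under the guard $\gamma$. This is why existentials and universals are replaced throughout by guarded counting terms being $\geq 1$ or $\leq 0$, respectively; once this is done, correctness reduces to the routine equalities $N = |\{c:c<N\}|$ and $N=|\{z:z<N\}|$ used in Lemma~\ref{lem:ar3a}. The number-variable-only counting in the outer layer of $\logic{u-max}$ and $\logic{u-min}$ falls under clause \eqref{eq:5} and so is unproblematic.
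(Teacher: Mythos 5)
Your construction is essentially the paper's: the term $\logic{u-itadd}$ is the same guarded counting term with adaptive bounds (the paper just sets $\ell=1$ for readability), and your $\logic{u-max}$, $\logic{u-min}$ are the same idea, except the paper writes the inner $\exists$ and $\forall$ as abbreviations while you spell out the guarded counting terms $\lambda$ and $\mu$ explicitly. One small stylistic point: the paper's $\logic{u-min}$ uses the tighter bound $z<\logic{u-itadd}(x)$ with no extra nonemptiness conjunct, because when the index set is empty the bound itself is $0$ and nothing is counted. Also, a tiny slip in your justification (not in the formula): $\lambda(x,0)$ does not express ``the index set is nonempty'' but rather ``$\max>0$''; the two differ when the index set is nonempty but every $U$-value is $0$ --- in that case your $\mu$ is false for all $z$, which is still fine since $\min=0$ there, so $\logic{u-min}$ evaluates correctly anyway.
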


\begin{proof}
  The proof is very similar to the proof of Lemma~\ref{lem:ar3a}, we
  just have to make sure that the terms we define are guarded. Again,
  we assume for simplicity that $\ell=1$.

  We let
  \[
    \logic{u-itadd}(x)\coloneqq\#\big(x',y< V(x,x'),z< U(x,x',y)\big).\big(\gamma(x,x')\wedge X(x,x',y)\big)
  \]
  and
  \begin{align*}
    \logic{u-max}(x)&\coloneqq \# z<\logic{u-itadd}(x).\exists
                   (x',y<V(x,x')).\big(\gamma(x,x')\wedge X(x,x',y)\wedge z< U(x,x')\big),\\
    \logic{u-min}(x)&\coloneqq \# z<\logic{u-itadd}(x).\forall
                      (x',y<V(x,x')).\big(\gamma(x,x')\wedge X(x,x',y)\to z< U(x,x')\big).
  \end{align*}
\end{proof}

\begin{lemma}\label{lem:ar7}
  Let $\vec Z$ be an r-schema of type $\rtp{\ttv^2\ttn^\ell}$.
  Let $X$ be a relation variable of type $\tta\ttv^2\ttn^\ell\ttz$,
  and let $V$ be a function variable of type
  $\ttv^2\to\ttn$. Furthermore, let
  $\gamma(x,x')$ be a guard.
  \begin{enumerate}
  \item
    There is a guarded r-expression $\logic{itadd}(x)$ such that  for all structures $A$, assignments $\Ca$
    over $A$, and $a\in V(A)$ we have
      \[
    \num{\logic{itadd}}^{(A,\Ca)}(a)=
      \sum_{(a',\vec b)}\num{\vec Z}^{(A,\Ca)}(a,a',\vec b),
    \]
    where the sum ranges over all $(a',\vec b)\in V(A)\times\Nat^\ell$
    such that $A\models\gamma(a,a')$ and $(a,a',\vec b)\in\Ca(X)$ and
    $\vec b=(b_1,\ldots, b_\ell)\in\Nat^\ell$ with $b_i<\Ca(V)(a,a')$
    for all $i\in[\ell]$.
  \item
    There are guarded r-expressions $\logic{max}(x)$ and $\logic{min}(x)$ such that  for all structures $A$, assignments $\Ca$
    over $A$, and $a\in V(A)$ we have
    \begin{align*}
      \num{\logic{max}}^{(A,\Ca)}(a)&=
      \max_{(a',\vec b)}\num{\vec Z}^{(A,\Ca)}(a,a',\vec b),
      \\
      \num{\logic{min}}^{(A,\Ca)}(a)&=
      \min_{(a',\vec b)}\num{\vec Z}^{(A,\Ca)}(a,a',\vec b),
    \end{align*}
    where $\max$ and $\min$ range over all $(a',\vec b)\in V(A)\times\Nat^\ell$
    such that $A\models\gamma(a,a')$ and $(a,a',\vec b)\in\Ca(X)$ and
    $\vec b=(b_1,\ldots, b_\ell)\in\Nat^\ell$ with $b_i<\Ca(V)(a,a')$
    for all $i\in[\ell]$.
    \end{enumerate}
\end{lemma}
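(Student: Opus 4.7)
The plan is to retrace the proof of Lemma~\ref{lem:ar5}, but taking care that every counting operator binding the vertex variable $x'$ carries the guard $\gamma(x,x')$, so that all intermediate terms and formulas remain in $\GC$. The main new technical ingredients are the guarded versions of iterated addition, max, and min on natural numbers from Lemma~\ref{lem:ar8}, combined with the bit-splitting trick from the proof of Lemma~\ref{lem:ar3}, which reduces addition of an unordered, vertex-indexed family of bit-represented numbers to addition of an ordered, number-indexed family.

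For part (1), first split the family $\{\num{\vec Z}^{(A,\Ca)}(a,a',\vec b)\}$ into its non-negative and strictly negative subfamilies using the sign component $\rr$ of $\vec Z$ (which is a $\GC$-formula); compute the two partial sums separately and combine them via subtraction using Lemma~\ref{lem:ar4}. To sum a subfamily of non-negative dyadic rationals $p_{a',\vec b}\cdot 2^{-s_{a',\vec b}}$, compute a common denominator exponent $s^\ast\coloneqq\max s_{a',\vec b}$ via Lemma~\ref{lem:ar8}(2), shift each numerator to $p_{a',\vec b}\cdot 2^{s^\ast-s_{a',\vec b}}$, and then apply the bit-splitting trick: for every bit position $j$ (whose range is polynomially bounded via Lemma~\ref{lem:ar8}(1) and Lemma~\ref{lem:ar8}(2)), count the number $s_j$ of pairs $(a',\vec b)$ satisfying $\gamma(x,x')$ such that the $j$th bit of the scaled numerator is $1$; this count is definable by a guarded counting term since the bit predicate is available from Lemma~\ref{lem:bit}. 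The ordered, number-indexed family $(s_j\cdot 2^j)_j$ can then be summed via the arithmetical Lemma~\ref{lem:ar2}, and dividing by $2^{s^\ast}$ (i.e.\ recording $s^\ast$ as the denominator exponent) yields the required guarded r-expression.

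For part (2), the same common-denominator reduction suffices: after separating non-negative and negative entries using $\rr$ and scaling to the shared denominator $2^{s^\ast}$, taking the maximum or minimum of the original dyadic rationals reduces to taking the maximum or minimum of the integer numerators under the guard $\gamma(x,x')$, which is precisely what Lemma~\ref{lem:ar8}(2) delivers. One must combine the max/min over the positive and negative subfamilies correctly (e.g.\ $\max$ equals the max of the positives if the positive subfamily is non-empty, and otherwise the negative of the min of absolute values of the negatives), but this is a straightforward Boolean combination built over the guarded definitions.

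The main obstacle is purely syntactic: guaranteeing that no auxiliary subexpression introduces an unguarded counting or quantification over $x'$, and that we never form a term with two free vertex variables outside a guard. The bit-splitting trick is well-suited to this constraint because the outer iteration over bit positions $j$ is arithmetical (and so trivially belongs to $\GC$), while the only subexpression that involves the vertex variable $x'$ is the inner count of matching pairs $(a',\vec b)$, which we can always equip with $\gamma(x,x')$ just as in the statement and proof of Lemma~\ref{lem:ar8}.
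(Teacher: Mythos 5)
Your treatment of part (1) is correct and follows essentially the paper's route: split by sign, obtain the common denominator exponent $s^\ast$ via the guarded max of Lemma~\ref{lem:ar8}(2), shift the numerators, and run the bit-splitting trick from the proof of Lemma~\ref{lem:ar3} under the guard $\gamma(x,x')$ to reduce the unordered, vertex-indexed sum to the arithmetical ordered iterated addition of Lemma~\ref{lem:ar2}. The paper compresses this to ``adapt Lemma~\ref{lem:ar5}, arguing as in Lemma~\ref{lem:ar8}''; unfolding the references gives your argument.

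Part (2) has a gap. You invoke Lemma~\ref{lem:ar8}(2) to take the max and min of the scaled integer numerators $q_{a',\vec b}=p_{a',\vec b}\cdot 2^{s^\ast-s_{a',\vec b}}$, but Lemma~\ref{lem:ar8}(2) produces a $\GC$-\emph{term} for $\max_{(a',\vec b)}\Ca(U)(a,a',\vec b)$ and applies only when the family members are given directly as \emph{values} of a function variable $U$. Here the $p_{a',\vec b}$ are given in bit representation through $\ZI$ and $\Zt$; their values can be exponentially large, and there is no $\GC$-term whose value is $\num{\ZI,\Zt}^{(A,\Ca)}(a,a',\vec b)$, let alone $q_{a',\vec b}$ --- avoiding exactly this conversion is the point of carrying bit representations around as r-schemas. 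So there is nothing admissible to substitute for $U$, and Lemma~\ref{lem:ar8}(2) does not deliver what you claim. What ``adapt Lemma~\ref{lem:ar5}'' really means for this part is to carry out a guarded version of Lemma~\ref{lem:ar3b}: compare bit representations via the appropriate substitution instance of $\logic{leq}$ (Lemma~\ref{lem:ar1}(2) after scaling, or Lemma~\ref{lem:ar4}(2) directly on the rationals), replace the unguarded $\forall\vec x'$ in $\logic{maxind}$ by $\gamma$-guarded quantification, and extract the components of $\vec Z$ at the maximising index by a guarded existential. Lemma~\ref{lem:ar8}(2) is the right tool only for the $\rt$ component of the output r-expression (the maximum \emph{bitsize}), not for the numerator values.
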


\begin{proof}
  The proof is an easy adaptation of the proof of Lemma~\ref{lem:ar5},
  arguing as in the proof of Lemma~\ref{lem:ar8} to make sure that we
  obtains guarded formulas and terms.
\end{proof}

\begin{myremark}\label{rem:mc2}Lemmas~\ref{lem:ar8} and \ref{lem:ar7} have analogues for the modal
  fragment $\MC$. For  Lemma~\ref{lem:ar8}, we let $X$ be a relation
  variable of type  $\tta\ttv\ttn^\ell\ttz$
  and $U,V$ function variables of types
  $\ttv\ttn^\ell\to\ttn$, $\ttv\to\ttn$, respectively. Then in the
  assertions, we always omit $a$ from the argument lists. For example,
  statement (1) becomes:
  {\itshape
    There is an $\MC$-term $\logic{u-itadd}(x)$ such that for all
    structures $A$, assignments $\Ca$ over $A$, and $a\in V(A)$,
    \[
      \sem{\logic{u-itadd}}^{(A,\Ca)}(a)=\sum_{(a',\vec
        b)}\Ca(U)(a',\vec b),
    \]
    where the sum ranges over all $(a',\vec b)\in V(A)\times\Nat^\ell$
    such that $A\models\gamma(a,a')$ and $(a',\vec b)\in\Ca(X)$ and $\vec b=(b_1,\ldots, b_\ell)\in\Nat^\ell$ with
    $b_i<\Ca(V)(a')$ for all $i\in[\ell]$.
  }
  The modeification for assertion (2) is similar.

  For Lemma~\ref{lem:ar7}, we let $\vec Z$ be an r-schema of type
  $\rtp{\ttv\ttn^\ell}$, $X$ a relation variable of type $\tta\ttv\ttn^\ell\ttz$,
  and $V$ be a function variable of type
  $\ttv\to\ttn$. The modification of the assertions is similar to the
  previous lemma.

  For both lemmas, the adaptation of the proof is straightforward.
  \uend
\end{myremark}

\section{Graph Neural Networks}
\label{sec:gnn}
We will work with standard message passing graph neural networks
(GNNs\footnote{We use the abbreviation GNN, but MPNN is also very common.})
\cite{GilmerSRVD17}.
A GNN consists of a finite sequence of layers.
A \emph{GNN layer} of input dimension $p$ and output dimension $q$  is
a triple $\FL=(\msg,\agg,\comb)$ of functions: a \emph{message function}
$\msg:\Real^{2p}\to\Real^{p'}$, an \emph{aggregation function}
$\agg$ mapping finite multisets of vectors in $\Real^{p'}$ to vectors
in $\Real^{p''}$, and a \emph{combination function}
$\comb:\Real^{p+p''}\to\Real^q$.
A \emph{GNN} is a tuple $\FN=(\FL^{(1)},\ldots,\FL^{(d)})$ of GNN layers, where the output dimension $q^{(i)}$ of $\FL^{(i)}$
matches the input dimension $p^{(i+1)}$ of $\FL^{(i+1)}$. We call
$q^{(0)}\coloneqq p^{(1)}$ the input dimension of $\FN$ and
$q^{(d)}$ the output dimension. 

To define the semantics, let $\FL=(\msg,\agg,\comb)$ be a GNN layer of input dimension $p$
and output dimension $q$. It computes a function $\FL\colon
\CGS_p\to\CGS_q$ (as for circuits and feedforward neural networks, we
use the same letter to denote the network and the function it
computes) defined by $\FL(G,\Cx)\coloneqq(G,\Cy)$, where
$\Cy:V(G)\to\Real^q$ is defined by 
\begin{equation}
  \label{eq:15}
    \Cy(v)\coloneqq \comb\Bigg(\Cx(v),\agg\Big(\Biglmulti \msg\big(\Cx(v),\Cx(w)\big)\Bigmid w\in N_G(v)\Bigrmulti\Big)\Bigg).
  \end{equation}
A GNN $\FN=(\FL^{(1)},\ldots,\FL^{(d)})$ composes the transformations
computed by its layers $\FL^{(i)}$, that is, it computes the function
$\FN\colon \CGS_{q^{(0)}}\to\CGS_{q^{(d)}}$ defined by
\[
  \FN(G,\Cx)\coloneqq \FL^{(d)}\circ \FL^{(d-1)}\circ\ldots\circ
  \FL^{(1)}.
\]
It will be convenient to also define $\tilde{\FN}$ as the function
mapping $(G,\Cx)$ to the signal $\Cx'\in\CS_{q^{(d)}}(G)$ such that 
$\FN(G,\Cx)=(G,\Cx')$, so $\FN(G,\Cx)=(G,\tilde\FN(G,\Cx))$, and
similarly $\tilde{\FL}$ for a single layer $\FL$.

\begin{myremark}\label{rem:ac_vs_mpnn}\label{rem:mc3}Our version of GNNs corresponds to the \emph{message passing neural
    networks} due to \cite{GilmerSRVD17}. In \cite{GroheR24}, these
  GNNs are called \emph{2-GNNs}, to distinguish them from another
  common version of message passing graph neural networks, cleanly
  formalised as the \emph{aggregate-combine GNNs} in
  \cite{BarceloKM0RS20}, and called \emph{1-GNNs} in
  \cite{GroheR24}. In these 1-GNNs, messages
  only depend on the vertex they are sent from, so the update rule
  \eqref{eq:15} becomes
  \begin{equation}
    \label{eq:101}
    \Cy(v)\coloneqq \comb\Bigg(\Cx(v),\agg\Big(\Biglmulti \msg\big(\Cx(w)\big)\Bigmid w\in N_G(v)\Bigrmulti\Big)\Bigg).
  \end{equation}
  Whenever we want to make the distinction between the two versions
  explicit, we use the 1-GNN / 2-GNN terminology, but most of the time
  we will just work with 2-GNNs and simply call them GNNs.  The reason
  I decided to focus on (2-)GNNs is that in practical work we also
  found it beneficial to use them. 

  However, we will explain how to adapt our results to 1-GNNs in a
  series of remarks leading to Theorem~\ref{theo:converse-modal}. On the logical side, 2-GNNs correspond to the
  guarded fragment $\GC$ and $1$-GNNs to the modal fragement $\MC$ of
  first-order logic with counting
  (see Remark~\ref{rem:guarded-variant}).

  The relation between 1-GNNs
  and 2-GNNs is more complicated than one might think; we refer the
  reader to \cite{GroheR24}.
  \uend
\end{myremark}

So far, we have defined GNNs as an abstract computation model
computing transformations between graph signals. To turn them into
deep learning models, we represent the functions that specify the
layers by feedforward neural networks. More precisely, we assume that
the message functions $\msg$ and the combination functions $\comb$ of
all GNN layers are specified by FNNs $\FF_\msg$ and
$\FF_{\comb}$. Furthermore, we assume that the aggregation function
$\agg$ is summation $\SUM$, arithmetic mean $\MEAN$, or
maximum $\MAX$. Note that this means that the aggregation function
does not change the dimension, that is, we always have $p'=p''$
(referring to the description of GNN layers above).
To be able to deal with isolated nodes as well, we
define $\SUM(\emptyset)\coloneqq\MEAN(\emptyset)\coloneqq\MAX(\emptyset)\coloneqq\vec0$.

If the FNNs $\FF_\msg$ and $\FF_{\comb}$ on all layers are (rational)
piecewise linear, we call the GNN \emph{(rational) piecewise
  linear}. Similarly, if they are rpl-approximable, we call the GNN
\emph{rpl-approximable}.

We mention a few extensions of the basic GNN model. Most importantly,
in a GNN with \emph{global readout} \cite{BarceloKM0RS20} (or,
equivalently, a GNN with a \emph{virtual node} \cite{GilmerSRVD17}) in
each round the nodes also obtain the aggregation of the states of all
nodes in addition to the messages they receive from their
neighbours. So the state update rule \eqref{eq:15} becomes
\begin{align*}
  \Cy(v)\coloneqq \comb\bigg(\Cx(v),
  &\,\agg\Big(\Biglmulti
    \msg\big(\Cx(v),\Cx(w)\big)\Bigmid w\in N_G(v)\Bigrmulti\Big),
  \\
  &\,\agg'\Big(\Biglmulti \Cx(w)\Bigmid w\in
  V(G)\Bigrmulti\Big)\bigg).
\end{align*}
We could also apply some function $\msg'$ to the $\Cx(w)$ before
aggregating, but this would not change the expressiveness, because we
can integrate this into the combination function.

To adapt GNNs to directed graphs, it is easiest to use separate
message functions for in-neighbours and out-neighbours and to
aggregate them separately and then combine both in the combination
function. In graphs with edge labels, or \emph{edge signals} $\Cy\colon E(G)\to\Real^k$ for
some $k$, we can give these signals as a third
argument to the message function, so the message function becomes
$\msg\big(\Cx(v),\Cx(w),\Cy(v,w)\big)$. We can also adapt this to directed
edge-labeled graphs and hence to arbitrary binary relational
structures. 
Often, we want to use GNNs to compute \emph{graph-level} functions
$\CGS_p\to\Real^q$ rather than \emph{node-level} functions
$\CGS_p\to\CGS_q$. For this, we aggregate the values of the output
signal at the nodes to a single value. A \emph{graph-level GNN}
is a triple $\FG=(\FN,\agg,\ro)$ consisting of a GNN $\FN$, say with input dimension $p$ and output
dimension $p'$, an \emph{aggregate function} $\agg$, which we assume
to be either $\SUM$ or $\MEAN$ or $\MAX$, and a
\emph{readout function} $\ro:\Real^{p'}\to\Real^q$, which we assume to
be computed by an FNN $\FF_{\ro}$.

\emph{All of our results have straightforward extensions to all these
  variants of the basic model.} Since the article is lengthy and
technical as it is, I decided to focus just on the basic model
here. Occasionally, I comment on some of the extensions, pointing
out which modifications on the logical side need to be made.

\subsection{Useful Bounds}

\begin{lemma}\label{lem:gnnbound1}
  Let $\FL$ be a GNN layer of input dimension $p$. Then there is a
   $\gamma\in\PNat$ such that for
  all graphs $G$, all signals $\Cx\in\CS_p(G)$, and all vertices $v\in V(G)$ we have
  \begin{align}
     \label{eq:16}
    \inorm{\tilde\FL(G,\Cx)(v)}&\le\gamma\cdot\Big(\inorm{\Cx|_{N[v]}}+1\Big)\max\{\deg(v),1\}\\
    \label{eq:17}
                               &\le\gamma\cdot\big(\inorm{\Cx}+1\big)|G|.
  \end{align}
\end{lemma}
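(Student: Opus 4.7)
The plan is to unfold the definition of $\tilde\FL(G,\Cx)(v)$ from \eqref{eq:15} and apply Lemma~\ref{lem:fnngrowth}(2) to the two FNNs $\FF_\msg$ and $\FF_\comb$ that specify the message and combination functions. Let $\gamma_\msg \coloneqq \gamma(\FF_\msg)$ and $\gamma_\comb \coloneqq \gamma(\FF_\comb)$ denote the constants provided by that lemma.

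First I would bound each individual message. Since both $v$ and every neighbour $w \in N_G(v)$ lie in $N_G[v]$, we have $\inorm{\Cx(v)}, \inorm{\Cx(w)} \le \inorm{\Cx|_{N[v]}}$, so by Lemma~\ref{lem:fnngrowth}(2),
\[
  \inorm{\msg(\Cx(v),\Cx(w))} \le \gamma_\msg\bigl(2\inorm{\Cx|_{N[v]}} + 1\bigr) \le 3\gamma_\msg\bigl(\inorm{\Cx|_{N[v]}}+1\bigr).
\]
Next I would bound the aggregated message. For $\agg = \SUM$ the $\ell_\infty$-norm of the aggregate is at most $\deg(v)$ times the maximum message norm; for $\agg \in \{\MEAN, \MAX\}$ it is bounded by the maximum message norm; and for $\deg(v)=0$, by convention the aggregate is $\vec 0$. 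In every case,
\[
  \inorm{\agg(\ldots)} \le 3\gamma_\msg\bigl(\inorm{\Cx|_{N[v]}}+1\bigr)\max\{\deg(v),1\}.
\]

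Then I would apply Lemma~\ref{lem:fnngrowth}(2) to $\FF_\comb$, whose input is the concatenation of $\Cx(v)$ and $\agg(\ldots)$:
\[
  \inorm{\tilde\FL(G,\Cx)(v)} \le \gamma_\comb\bigl(\inorm{\Cx(v)} + \inorm{\agg(\ldots)} + 1\bigr).
\]
Substituting the bound on the aggregate and using $\inorm{\Cx(v)} \le \inorm{\Cx|_{N[v]}}$ and $\max\{\deg(v),1\} \ge 1$, we can find a positive integer $\gamma$ depending only on $\FL$ (for instance $\gamma \coloneqq 5\gamma_\comb\gamma_\msg$, rounded up to an integer if need be) such that \eqref{eq:16} holds. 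Finally, \eqref{eq:17} follows immediately from \eqref{eq:16} using $\inorm{\Cx|_{N[v]}} \le \inorm{\Cx}$ and $\max\{\deg(v),1\} \le |G|$. There is no real obstacle here; the only thing to watch is the bookkeeping of constants across the two FNN compositions and the degenerate case $\deg(v)=0$, which is handled by the convention on empty multisets.
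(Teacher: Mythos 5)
Your proof takes essentially the same route as the paper: unfold the layer definition, bound the individual messages and the aggregate using Lemma~\ref{lem:fnngrowth}(2) for $\FF_\msg$, handle $\SUM$/$\MEAN$/$\MAX$ by observing that the aggregate is at most $\max\{\deg(v),1\}$ times the largest message, then apply Lemma~\ref{lem:fnngrowth}(2) again to $\FF_\comb$, and finally pass from the local bound to the global one via $\inorm{\Cx|_{N[v]}}\le\inorm{\Cx}$ and $\max\{\deg(v),1\}\le|G|$. One small imprecision: Lemma~\ref{lem:fnngrowth} is stated for the $\ell_\infty$-norm, so the norm of the concatenated input $(\Cx(v),\Cx(w))$ is $\max\{\inorm{\Cx(v)},\inorm{\Cx(w)}\}\le\inorm{\Cx|_{N[v]}}$, not a sum; your factors of $2$ and your additive treatment of $\inorm{\Cx(v)}+\inorm{\agg(\ldots)}$ for the combination step are therefore over-estimates rather than what the cited lemma directly yields. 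Since you only need upper bounds this does no harm — your constants are simply a bit looser than the paper's $\gamma=2\gamma_\comb\gamma_\msg$ — but citing the lemma as giving the sum misstates it slightly.
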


Recall that $\Cx|_{N[v]}$ denotes the restriction of a signal
$\Cx$ to the closed neighbourhood $N[v]$ of $v$, and we have
$\inorm{\Cx|_{N[v]}}=\max_{w\in N[v]}\inorm{\Cx(w)}$.
The bound \eqref{eq:16} is \emph{local}, it only depends on the
neighbourhood of $v$. The \emph{global} bound
\eqref{eq:17} is simpler, but  a bit weaker.

\begin{proof}
  Clearly, \eqref{eq:16} implies \eqref{eq:17}, so we only have to
  prove the local bound \eqref{eq:16}.  Let $\msg$, $\agg$, $\comb$ be the
  message, aggregation, and combination functions of $\FL$. Let
  $\FF_{\msg}$ and $\FF_{\comb}$ be FNNs computing $\msg$ and $\comb$,
  respectively, and let $\gamma_{\msg}\coloneqq\gamma(\FF_{\msg})$ and
  $\gamma_{\comb}\coloneqq\gamma(\FF_{\comb})$ be the constants of
  Lemma~\ref{lem:fnngrowth}(2).
  
  Let $G$ be a graph, $\Cx\in\CS_p(G)$, and $v\in V(G)$. Then for all $w\in N_G(v)$ we have
  \[
    \inorm{\msg\big(\Cx(v),\Cx(w)\big)}\le\gamma_\msg\cdot\Big(\inorm{\big(\Cx(v),\Cx(w)\big)}+1\Big)
    \le\gamma_{\msg}\cdot\big(\inorm{\Cx|_{N[v]}}+1\big)
    .
  \]
  Since for every multiset $M$ we have $\agg(M)\le |M|\cdot m$, where
  $m$ is the the maximum absolute value of the entries of $M$, it
  follows that
  \[
    \Cz(v)\coloneqq\agg\Big(\Biglmulti
    \msg\cdot\big(\Cx(v),\Cx(w)\big)\Bigmid w\in N^G(v)\Bigrmulti\Big)\le
    \gamma_\msg\big(\inorm{\Cx|_{N[v]}}+1\big)\deg(v).
    \]
    Since $\gamma_\msg\ge 1$, this implies
    \[
      \inorm{\big(\Cx(v),\Cz(v)\big)}\le
      \gamma_\msg\cdot\big(\inorm{\Cx|_{N[v]}}+1\big)\max\{\deg(v),1\}.
    \]
    Hence
    \begin{align*}
      \inorm{\tilde{\FL}(G,\Cx)(v)}&=\inorm{\comb\Big(\big(\Cx(v),\Cz(v)\big)\Big)}\\
                    &\le \gamma_{\comb}\cdot\big(\inorm{\big(\Cx(v),\Cz(v)\big)}+1\big)\\
                    &\le \gamma_{\comb}\cdot\big(\gamma_\msg\big(\inorm{\Cx|_{N[v]}}+1\big)\max\{\deg(v),1\}+1\big)\\
      &\le 2\gamma_{\comb}\gamma_{\msg}\cdot(\inorm{\Cx|_{N[v]}}+1) \max\{\deg(v),1\}.
    \end{align*}
    We let $\gamma\coloneqq 2\gamma_{\comb}\gamma_{\msg}$.
\end{proof}

\begin{lemma}\label{lem:gnnbound2}
  Let $\FL$ be a GNN layer of input dimension $p$. Then there is a $\lambda\in\PNat$ such that for all
  graphs $G$, all signals $\Cx,\Cx'\in\CS_p(G)$, and all vertices $v\in V(G)$
  we have
  \begin{align}
    \label{eq:18}
    \inorm{\tilde\FL(G,\Cx)(v)-\tilde\FL(G,\Cx')(v)}&\le
                                                      \lambda\inorm{\Cx|_{N[v]}-\Cx'|_{N[v]}}\max\{\deg(v),1\}\\
    \label{eq:19}
    &\le \lambda\inorm{\Cx-\Cx'}|G|.
  \end{align}
\end{lemma}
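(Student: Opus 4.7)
The plan is to mirror the proof of Lemma~\ref{lem:gnnbound1}, but replace the linear-growth bound from Lemma~\ref{lem:fnngrowth}(2) by the Lipschitz bound from Lemma~\ref{lem:fnngrowth}(1), and combine it with the fact that each of $\SUM$, $\MEAN$, $\MAX$ is Lipschitz with respect to the entrywise $\ell_\infty$-norm on multisets of equal size.

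First I would fix FNNs $\FF_\msg$ and $\FF_\comb$ computing $\msg$ and $\comb$, set $\lambda_\msg\coloneqq\lambda(\FF_\msg)$, $\lambda_\comb\coloneqq\lambda(\FF_\comb)$ as given by Lemma~\ref{lem:fnngrowth}(1), and compare messages pointwise: for each $w\in N_G(v)$,
\[
  \inorm{\msg(\Cx(v),\Cx(w))-\msg(\Cx'(v),\Cx'(w))}\le\lambda_\msg\inorm{\Cx|_{N[v]}-\Cx'|_{N[v]}}.
\]
Denote the aggregated messages by $\Cz(v)$ and $\Cz'(v)$. The next step is a short case analysis on $\agg\in\{\SUM,\MEAN,\MAX\}$: for $\SUM$ the two multisets have identical size $\deg(v)$, giving $\inorm{\Cz(v)-\Cz'(v)}\le\deg(v)\cdot\lambda_\msg\inorm{\Cx|_{N[v]}-\Cx'|_{N[v]}}$; for $\MEAN$ the factor $\deg(v)$ is absent; and for $\MAX$ we use $|\max_i a_i-\max_i b_i|\le\max_i|a_i-b_i|$ componentwise. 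Isolated vertices give $\Cz(v)=\Cz'(v)=\vec 0$ by our convention. In all three cases we obtain
\[
  \inorm{\Cz(v)-\Cz'(v)}\le\lambda_\msg\inorm{\Cx|_{N[v]}-\Cx'|_{N[v]}}\max\{\deg(v),1\}.
\]

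Applying the Lipschitz constant of $\FF_\comb$ then yields
\[
  \inorm{\tilde\FL(G,\Cx)(v)-\tilde\FL(G,\Cx')(v)}\le\lambda_\comb\inorm{\big(\Cx(v),\Cz(v)\big)-\big(\Cx'(v),\Cz'(v)\big)}.
\]
Since $\lambda_\msg\in\PNat$ is at least $1$, the right-hand side is bounded by $\lambda_\comb\cdot\lambda_\msg\inorm{\Cx|_{N[v]}-\Cx'|_{N[v]}}\max\{\deg(v),1\}$, so setting $\lambda\coloneqq\lambda_\comb\lambda_\msg$ (or $2\lambda_\comb\lambda_\msg$ to match the style of Lemma~\ref{lem:gnnbound1}) establishes~\eqref{eq:18}. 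The global bound~\eqref{eq:19} follows because $\max\{\deg(v),1\}\le|G|$ and $\inorm{\Cx|_{N[v]}-\Cx'|_{N[v]}}\le\inorm{\Cx-\Cx'}$.

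The only mildly subtle step is the aggregation case analysis, particularly for $\MAX$, where one must argue coordinatewise that the maximum operation is $1$-Lipschitz; but this is standard. Everything else is a direct transcription of the proof of Lemma~\ref{lem:gnnbound1} with growth bounds replaced by Lipschitz bounds.
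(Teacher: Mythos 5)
Your proposal is correct and follows essentially the same route as the paper's own proof: use the Lipschitz constants $\lambda_\msg,\lambda_\comb$ from Lemma~\ref{lem:fnngrowth}(1), compare messages pointwise, bound the aggregated difference by $\lambda_\msg\inorm{\Cx|_{N[v]}-\Cx'|_{N[v]}}\deg(v)$, and then apply $\lambda_\comb$ to conclude with $\lambda=\lambda_\comb\lambda_\msg$. The only notable difference is that you spell out the case split on $\agg\in\{\SUM,\MEAN,\MAX\}$ (observing that $\MEAN$ and $\MAX$ in fact need no $\deg(v)$ factor), whereas the paper states the $\deg(v)$ bound without breaking into cases; this is a small clarification rather than a different argument.
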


\begin{proof}
  Again, the local bound \eqref{eq:18} implies the global bound
  \eqref{eq:19}. So we only need to prove \eqref{eq:18}.
  Let
  $\msg,\agg,\comb$ be the message, aggregation, and combination
  functions of $\FL$. Let $\FF_{\msg}$ and $\FF_{\comb}$ be FNNs
  computing $\msg$ and $\comb$, respectively, and let
  $\lambda_{\msg}\coloneqq\lambda(\FF_{\msg})$ and
  $\lambda_{\comb}\coloneq\lambda(\FF_{\comb})$ be their Lipschitz constants
  (from Lemma~\ref{lem:fnngrowth}(1)).

    Let $G$ be a graph, $\Cx,\Cx'\in\CS_p(G)$, and
  $\Cy\coloneqq \tilde\FL(G,\Cx)
  $,
  $\Cy'\coloneqq \tilde\FL(G,\Cx')$. Let $v\in V(G)$. 
  For all $w\in N(v)$ we have
  \begin{align*}
    \inorm{\msg(\Cx(v),\Cx(w))-\msg(\Cx'(v),\Cx'(w))}&
    \le\lambda_\msg\inorm{(\Cx(v),\Cx(w))-(\Cx'(v),\Cx'(w))}.
  \end{align*}
  Thus for
  \begin{align*}
    \Cz(v)&\coloneqq\agg\Big(\biglmulti \msg(\Cx(v),\Cx(w))
            \bigmid w\in N_G(v)\bigrmulti\Big),\\
    \Cz'(v)&\coloneqq\agg\Big(\biglmulti \msg(\Cx'(v),\Cx'(w))
             \bigmid w\in N_G(v)\bigrmulti\Big)
  \end{align*}
  we have
  \begin{equation}
    \label{eq:20}
    \inorm{\Cz(v)-\Cz'(v)}\le \lambda_\msg \inorm{\Cx|_{N[v]}-\Cx'|_{N[v]}}\deg(v).
  \end{equation}
  It follows that
  \begin{align}
    \notag
    \inorm{\Cy(v)-\Cy'(v)}&=\inorm{\comb(\Cx(v),\Cz(v))-\comb(\Cx(v),\Cz'(v)}\\
    \notag
                          &\le\lambda_{\comb}\inorm{(\Cx(v),\Cz(v))-(\Cx'(v),\Cz'(v))}\\
    \notag
    &\le
      \lambda_{\comb}\max\Big\{\inorm{\Cx(v)-\Cx'(v)},\inorm{\Cz(v)-\Cz'(v)}\Big\}\\
    \label{eq:21}
    &\le \lambda_{\comb}\lambda_{\msg}\inorm{\Cx|_{N[v]}-\Cx'|_{N[v]}}\max\{\deg(v),1\}.
  \end{align}
  This implies the assertion of the lemma for
  $\lambda\coloneqq\lambda_\comb\lambda_\msg$.
\end{proof}

\section{The Uniform Case: GNNs with Rational Weights}
\label{sec:uniform}
In this section, we study the descriptive complexity of rational
piecewise linear GNNs. The following theorem, which is the main result
of this section, states that the signal transformations computed by
rational piecewise linear GNNs can be approximated arbitrarily closely by \GC-formulas and
terms.

\begin{theorem}\label{theo:uniform}
  Let $\FN$ be a rational piecewise linear GNN of input dimension $p$
  and output dimension $q$. Let $\vec X_1,\ldots,\vec X_p$ be
  r-schemas of type $\rtp{\ttv}$, and let $W$ be a function variable
  of type $\ttv\to\ttn$.  
Then there are guarded
  r-expressions $\logic{gnn-eval}_1(x),\ldots, \logic{gnn-eval}_q(x)$ such that the following holds for all graphs $G$ and assignments
  $\Ca$ over $G$. Let $\Cx\in\CS_p(G)$ be the signal defined by
  \begin{equation}
    \label{eq:22}
    \Cx(v)\coloneqq\Big(\num{\vec X_1}^{(G,\Ca)}(v),\ldots,
    \num{\vec X_p}^{(G,\Ca)}(v)\Big),
  \end{equation}
  and let $\Cy=\tilde\FN(G,\Cx)$. Then for
  all $v\in V(G)$, 
  \begin{equation}
    \label{eq:23}
     \inorm{\Cy(v)
    -\big(\num{\logic{gnn-eval}_1}^{(G,\Ca)}(v),\ldots, \num{\logic{gnn-eval}_q}^{(G,\Ca)}(v)\big)}
    \le 2^{-\Ca(W)(v)}.
  \end{equation}
\end{theorem}

The main step in the proof of the theorem is the following lemma,
which is the analogue of the theorem for a single GNN layer.

\begin{lemma}\label{lem:uniform}
    Let $\FL$ be a rational piecewise linear GNN layer of input dimension $p$
  and output dimension $q$. Let $\vec X_1,\ldots,\vec X_p$ be
  r-schemas of type $\rtp{\ttv}$, and let $W$ be a function variable
  of type $\ttv\to\ttn$.  Then there are guarded
  r-expressions $\logic{l-eval}_1(x),\ldots, \logic{l-eval}_q(x)$ such
  that the following holds for all graphs $G$ and assignments
  $\Ca$ over $G$.
  Let $\Cx\in\CS_p(G)$ be the signal defined by
  \begin{equation}
    \label{eq:24}    \Cx(v)\coloneqq\Big(\num{\vec X_1}^{(G,\Ca)}(v),\ldots,
    \num{\vec X_p}^{(G,\Ca)}(v)\Big)
  \end{equation}
  and let $\Cy\coloneqq\tilde\FL(G,\Cx)$. Then for
  all $v\in V(G)$, 
  \begin{equation}
    \label{eq:25}
     \inorm{\Cy(v)
    -\big(\num{\logic{l-eval}_1}^{(G,\Ca)}(v),\ldots, \num{\logic{l-eval}_q}^{(G,\Ca)}(v)\big)}
    \le2^{-\Ca(W)(v)} 
  \end{equation}
\end{lemma}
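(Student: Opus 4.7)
The plan is to mirror, in the language of guarded r-expressions, the three-phase structure of a GNN layer: message computation, neighborhood aggregation, and combination. Let $\FF_\msg$ and $\FF_\comb$ be the rational piecewise linear FNNs computing the message and combination functions of $\FL$, with $p'$ being the output dimension of $\FF_\msg$. Throughout, I will build guarded r-expressions in the two vertex variables $x, x'$, using $\gamma(x,x') \coloneqq E(x,x')$ as the edge guard.

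First (message phase), I apply Corollary~\ref{cor:ar6a} to $\FF_\msg$ with its $2p$ input r-schemas substituted by the r-expressions $\vec X_1(x),\ldots,\vec X_p(x),\vec X_1(x'),\ldots,\vec X_p(x')$. This yields, for each $k \in [p']$, an r-expression $\logic{msg}_k(x,x')$ such that $\num{\logic{msg}_k}^{(G,\Ca)}(v,w)$ equals the $k$-th coordinate of $\msg(\Cx(v),\Cx(w))$ exactly; as the substituted inputs use only the variables $x,x'$ and the expressions inside the FNN-evaluation are arithmetical, the result is guarded. Second (aggregation phase), I split by the aggregation function. For $\agg = \SUM$, Lemma~\ref{lem:ar7}(1), applied with guard $\gamma$ and the relation variable interpreted as ``$E(x,x')$ holds'', directly yields a guarded r-expression $\logic{agg}_k(x)$ computing $\sum_{w \in N(v)} \msg(\Cx(v),\Cx(w))_k$ exactly (when $v$ is isolated, the sum ranges over the empty set and gives $0$, matching $\SUM(\emptyset)=\vec 0$). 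For $\agg = \MAX$, I use Lemma~\ref{lem:ar7}(2) coordinate-wise in the same manner, with the empty-neighborhood case again yielding $0$ as required. For $\agg = \MEAN$, I first build the guarded r-expression $\logic{sum}_k(x)$ as in the SUM case, form the guarded r-expression $\#x'.E(x,x')$ for $\deg(v)$ (padded to an r-schema of type $\rtp{\ttv}$ representing a nonnegative integer), and then invoke Lemma~\ref{lem:ar6} to obtain a guarded r-expression $\logic{agg}_k(x)$ that approximates $\logic{sum}_k/\deg$ to within $2^{-\Ca(W')(v)}$ for a precision parameter $W'$ chosen below, using a conditional to fall back to $0$ when $\deg(v)=0$. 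Third (combination phase), I apply Corollary~\ref{cor:ar6a} to $\FF_\comb$ with its $p+p'$ input r-schemas substituted by $\vec X_1(x),\ldots,\vec X_p(x),\logic{agg}_1(x),\ldots,\logic{agg}_{p'}(x)$ to obtain the final guarded r-expressions $\logic{l-eval}_1(x),\ldots,\logic{l-eval}_q(x)$.

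For the error analysis: in the SUM and MAX cases, every step is exact, so the right-hand side of \eqref{eq:25} is satisfied trivially by $0 \le 2^{-\Ca(W)(v)}$. The only nontrivial case is MEAN, where the division introduces an error of at most $2^{-\Ca(W')(v)}$ in each of the $p'$ aggregated coordinates. By Lemma~\ref{lem:fnngrowth}(1), $\FF_\comb$ is Lipschitz with an integer constant $\lambda \coloneqq \lambda(\FF_\comb)$, so the final output error satisfies $\inorm{\Cy(v) - \text{computed value}} \le \lambda \cdot 2^{-\Ca(W')(v)}$. To obtain the bound $2^{-\Ca(W)(v)}$, I set $\Ca(W')(v) \coloneqq \Ca(W)(v) + c$ with $c \coloneqq \bsize(\lambda)$; the required function is definable in $\GC$ since $W(\cdot)$ is a given function variable and $c$ is a closed term. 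Substituting this $W'$ into the application of Lemma~\ref{lem:ar6} closes the estimate.

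The main obstacle is the MEAN case, and specifically the need to propagate the requested precision $\Ca(W)(v)$ \emph{inward} through the combination FNN — this is where the Lipschitz quantitative control of Lemma~\ref{lem:fnngrowth}(1) is essential and where it matters that we defined approximation parameters as \emph{function} variables (rather than constants), so that $W'(v)$ can depend on $v$. A secondary but routine obstacle is keeping every formula guarded: the message and combination evaluations are arithmetical once the input r-expressions are plugged in, and Lemma~\ref{lem:ar7} is designed precisely to produce a guarded r-expression from a guarded family indexed by neighbors, so guardedness is preserved at each step.
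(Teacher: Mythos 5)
Your proof matches the paper's proof essentially step for step: applying Corollary~\ref{cor:ar6a} to the message and combination FNNs, handling $\SUM$ and $\MAX$ exactly via Lemma~\ref{lem:ar7}, handling $\MEAN$ via Lemma~\ref{lem:ar6} with a shifted precision parameter, and propagating the division error through $\comb$ via its Lipschitz constant. The only cosmetic difference is that you shift the precision by $\bsize(\lambda)$ where the paper shifts by $\lambda$ directly; both suffice.
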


\begin{proof}
  For the presentation of the proof it will be easiest to fix a graph
  $G$ and an assignment $\Ca$ over $G$, though of course the formulas
  and terms we shall define will not depend on this graph and
  assignment.  Let $\Cx\in\CS_p(G)$ be the signal defined in
  \eqref{eq:24}, and let $\Cy\coloneqq\tilde\FL(G,\Cx)$. Let
  $\msg:\Real^{2p}\to\Real^r$, $\agg$, and
  $\comb:\Real^{p+r}\to\Real^q$ be the message, aggregation, and
  combination functions of $\FL$. Let $\FF_{\msg}$ and $\FF_{\comb}$
  be rational piecewise linear FNNs computing $\msg,\comb$,
  respectively, and recall that $\agg$ is either $\SUM$, $\MEAN$, or
  $\MAX$ defined on finite multisets of vectors in $\Real^{r}$. Let
  $\lambda\in\PNat$ be a Lipschitz constant for $\comb$.

  \begin{techclaim}\label{claim:c:1}
    There are guarded r-expressions
    $\boldsymbol{\mu}_1(x,x'),\ldots,\boldsymbol{\mu}_r(x,x')$ such that for all $v,v'\in V(G)$ 
    \begin{align*}
      &\msg\Big(\num{\vec X_1}^{(G,\Ca)}(v),\ldots, \num{\vec
        X_{p}}^{(G,\Ca)}(v), \num{\vec X_1}^{(G,\Ca)}(v'),\ldots, \num{\vec
        X_{p}}^{(G,\Ca)}(v')\Big)\\
      &\hspace{6.5cm}=\Big(\num{\boldsymbol{\mu}_1}^{(G,\Ca)}(v,v'),\ldots,
      \num{\boldsymbol{\mu}_r}^{(G,\Ca)}(v,v')\Big)
    \end{align*}

   \end{techclaim}
\begin{subproof}
    This follows from Corollary~\ref{cor:ar6a} applied to
    $\FF_{\msg}$. The expressions we obtain are guarded, because we
    just substitute atoms containing the variables $x$ or $x'$ in the
    arithmetical formulas we obtain from Corollary~\ref{cor:ar6a}, but
    never quantify over vertex variables.
    \uend
  \end{subproof}

  \begin{techclaim}\label{claim:c:2}
    Let $\vec Z_1,\ldots,\vec Z_{r}$ be r-schemas of type
    $\rtp{\ttv}$. Then there  are guarded r-expressions
    $\boldsymbol{\gamma}_1(x),\ldots, \boldsymbol{\gamma}_q(x)$ such that for all assignments $\Ca'$
    over $G$,
    \begin{align*}
      &\comb\Big(\num{\vec X_1}^{(G,\Ca)}(v),\ldots, \num{\vec
        X_{p}}^{(G,\Ca)}(v), \num{\vec Z_1}^{(G,\Ca)}(v),\ldots, \num{\vec
        Z_{r}}^{(G,\Ca)}(v)\Big)\\
      &\hspace{6.5cm}=\Big(\num{\boldsymbol{\gamma}_1}^{(G,\Ca)}(v),\ldots,
      \num{\boldsymbol{\gamma}_q}^{(G,\Ca)}(v)\Big)
    \end{align*}

   \end{techclaim}
\begin{subproof}
    Again, this follows from Corollary~\ref{cor:ar6a}.
    \uend
  \end{subproof}

  To complete the proof, we need to distinguish between the different
  aggregation functions. \MEAN-aggregation is most problematic,
  because it involves a division, which we can only approximate in our
  logic.
  
 \begin{cs}
   \case1 $\agg=\SUM$.\\
   We substitute the r-expressions
   $\boldsymbol{\mu}_i(x,x')$ of Claim~\ref{claim:c:1} for the r-schema $\vec Z$ in the
   r-expression $\logic{itadd}$ of Lemma~\ref{lem:ar7} to obtain guarded r-expressions
   $\boldsymbol{\sigma}_i$, for $i\in[r]$, such that for all $v\in V(G)$
   we have
   \begin{equation}
     \label{eq:26}
     \num{\boldsymbol{\sigma}_i}^{(G,\Ca)}(v)=\sum_{v'\in 
       N(v)}\num{\boldsymbol{\mu}_i}^{(G,\Ca)}(v,v'). 
   \end{equation}
   Then we substitute the r-expressions $\boldsymbol{\sigma}_i(x)$ for
   the variables $\vec Z_i$ in the formulas $\boldsymbol{\gamma}_j$ of
   Claim~\ref{claim:c:2} and obtain the desired r-expressions $\logic{l-eval}_j(x)$
   such that
   \begin{align*}
     &\Big(\num{\logic{l-eval}_1}^{(G,\Ca)}(v),\ldots,
       \num{\logic{l-eval}_q}^{(G,\Ca)}(v)\Big)\\
     &\hspace{1cm}=\comb\Big(\num{\vec X_1}^{(G,\Ca)}(v),\ldots, \num{\vec
        X_{p}}^{(G,\Ca)}(v),
          \num{\boldsymbol{\sigma}_1}^{(G,\Ca)}(v),\ldots,
          \num{\boldsymbol{\sigma}_{r}}^{(G,\Ca)}(v)\Big)\\
     &\hspace{1cm}=\Cy(v).
   \end{align*}
   Thus in this case, the r-expressions $\logic{l-eval}_j(x)$ even
   define $\Cy$ exactly. Of course this implies that they satisfy
   \eqref{eq:25}.
   \case2 $\agg=\MAX$.\\
   We can argue as in Case~1, using the r-expression $\logic{max}$ of
   Lemma~\ref{lem:ar7} instead of $\logic{itadd}$. Again, we obtain
   r-expressions $\logic{l-eval}_j(x)$ that define $\Cy$ exactly.

   \case3 $\agg=\MEAN$.\\
   The proof is similar to Case~1 and Case~2, but we need
   to be careful. We cannot define the mean of a family of
   numbers exactly, but only approximately, because of the division it
   involves.

  Exactly as in Case~1 we define r-expressions
  $\boldsymbol{\sigma}_i(x)$ satisfying \eqref{eq:26}. Recall that
  $\lambda$ is a Lipschitz constant for $\comb$. Let $\boldsymbol{\delta}(x)\coloneqq\#
  x'.E(x,x')$ be a term defining the degree of a vertex. Using
  Lemma~\ref{lem:ar6} we can construct an r-expression
  $\boldsymbol{\nu}_i$ such that 
  \[
    \left|\frac{\num{\boldsymbol{\sigma}_i}^{(G,\Ca)}(v)}{\num{\boldsymbol{\delta}}^{(G,\Ca)}(v)}
      -
      \num{\boldsymbol{\nu}_i}^{(G,\Ca)}(v)
    \right|
    < 2^{-\Ca(W)-\lambda}\le\lambda^{-1}2^{-\Ca(W)}
  \]
  if $\num{\boldsymbol{\delta}}^{(G,\Ca)}(v)=\deg(v)\neq 0$ and $
  \num{\boldsymbol{\nu}_i}^{(G,\Ca)}(v)=0$ otherwise. Thus, letting
  \begin{align*}
    \Cz(v)&\coloneqq \MEAN\big(\biglmulti
    \msg(\Cx(v),\Cx(v'))\bigmid v'\in N(v)\bigrmulti\big)\\
    &=\begin{cases}
      0&\text{if }\deg(v)=0,\\
      \left(\frac{\num{\boldsymbol{\sigma}_1}^{(G,\Ca)}(v)}{\num{\boldsymbol{\delta}}^{(G,\Ca)}(v)},\ldots, \frac{\num{\boldsymbol{\sigma}_r}^{(G,\Ca)}(v)}{\num{\boldsymbol{\delta}}^{(G,\Ca)}(v)}\right)&\text{otherwise}
 \end{cases}
  \end{align*}
  we have
  \[
    \inorm{\Cz(v)-\Big(\num{\boldsymbol{\nu}_1}^{(G,\Ca)}(v),\ldots, \num{\boldsymbol{\nu}_r}^{(G,\Ca)}(v)\Big)}\le\lambda^{-1}2^{-\Ca(W)}
  \]
  for all $v\in V(G)$. By the Lipschitz continuity of $\comb$, this
  implies
  \begin{equation}
    \label{eq:27}
    \inorm{\comb\big(\Cx(v),\Cz(v)\big)-\comb\Big(\Cx(v),
      \num{\boldsymbol{\nu}_1}^{(G,\Ca)}(v),\ldots,
      \num{\boldsymbol{\nu}_r}^{(G,\Ca)}(v)\Big)}\le 2^{-\Ca(W)}.
  \end{equation}
  We substitute the r-expressions $\boldsymbol{\nu}_i(x)$ for
   the variables $\vec Z_i$ in the formulas $\boldsymbol{\gamma}_j$ of
   Claim~\ref{claim:c:2} and obtain r-expressions $\logic{l-eval}_j(x)$
   such that
   \begin{align*}
     &\Big(\num{\logic{l-eval}_1}^{(G,\Ca)}(v),\ldots,
       \num{\logic{l-eval}_q}^{(G,\Ca)}(v)\Big)\\
     &\hspace{5mm}=\comb\Big(\num{\vec X_1}^{(G,\Ca)}(v),\ldots, \num{\vec
        X_{p}}^{(G,\Ca)}(v),
          \num{\boldsymbol{\nu}_1}^{(G,\Ca)}(v),\ldots,
          \num{\boldsymbol{\nu}_{r}}^{(G,\Ca)}(v)\Big)\\
     &\hspace{5mm}=\comb\Big(\Cx(v),
          \num{\boldsymbol{\nu}_1}^{(G,\Ca)}(v),\ldots,
          \num{\boldsymbol{\nu}_{r}}^{(G,\Ca)}(v)\Big).
   \end{align*}
   Since $\Cy(v)=\comb\big(\Cx(v),\Cz(v)\big)$, the assertion
   \eqref{eq:25} follows from \eqref{eq:27}.
   \qedhere
 \end{cs}
\end{proof}

\begin{proof}[Proof of Theorem~\ref{theo:uniform}]
  We fix a graph $G$ and assignment $\Ca$ over $G$ for the
  presentation of the proof; as usual the formulas we shall define will not
  depend on this graph and assignment.  Let $\Cx\in\CS_p(G)$ be the
  signal defined in \eqref{eq:22}.

  Suppose that $\FN=(\FL^{(1)},\ldots,\FL^{(d)})$. Let $p^{(i-1)}$ be
  the input dimension of $\FL^{(i)}$, and let $p^{(i)}$ be the output
  dimension. Then $p=p^{(0)}$ and $q=p^{(d)}$.
  Moreover, let
  $\Cx^{(0)}\coloneqq\Cx$ and
  $\Cx^{(i)}\coloneqq\tilde\FL^{(i)}(G,\Cx^{(i-1)})$ for
  $i\in[d]$. Note that $\Cx^{(d)}=\Cy$.

  For every $i\in[d]$, let $\lambda^{(i)}\coloneqq\lambda(\FL^{(i)})$
  be the constant of Lemma~\ref{lem:gnnbound2}. We inductively define
  a sequence of $\GC$-terms $\logic{err}^{(i)}(x)$, which will give us the
  desired error bounds. We let $\logic{err}^{(d)}(x)\coloneqq W(x)$. To
  define $\logic{err}^{(i)}(x)$ for $0\le i<d$, we first note that by
  Lemma~\ref{lem:ar8}, for every $\GC$-term $\theta(x)$ there is a $\GC$-term $\logic{maxN}_\theta(x)$ such that for every $v\in V(G)$ we have
  \[
    \sem{\logic{maxN}_\theta}^{(G,\Ca)}(v)=\max\Big\{\sem{\theta}^{(G,\Ca)}(w)\Bigmid
    w\in N_G[v]\Big\}.
  \]
  We let $\logic{dg}(x)\coloneqq(\# x'.E(x,x'))+\one$ and
  \[
    \logic{err}^{(i)}(x)\coloneqq 
    \logic{maxN}_{\logic{err}^{(i+1)}}(x)+\lambda^{(i+1)}\cdot\logic{maxN}_\logic{dg}(x)+\one.
  \]
  Letting
  \[
    k^{(i)}(v)\coloneqq\sem{\logic{err}^{(i)}}^{(G,\Ca)}(v),
  \]
  for every 
  $v\in V(G)$ and $0\le i< d$, we have
  \begin{equation}
    \label{eq:28}
    k^{(i)}(v)=\max\big\{k^{(i+1)}(w)\bigmid w\in
    N_G[v]\big\}+\lambda^{(i+1)}\max\big\{\deg_G(w)+1\bigmid w\in
    N_G[v]\big\}+1.
  \end{equation}
  Furthermore, $k^{(d)}(v)=\Ca(W)(v)$.

  \medskip
  Now for $i\in[d]$ and $j\in[p^{(i)}]$ we shall define guarded
  r-expressions $\rexp_j^{(i)}(x)$ such that for all $v\in
  V(G)$, with
  \[
    \Cz^{(i)}(v)\coloneqq\Big(\num{\rexp_1^{(i)}}(v),\ldots,
    \num{\rexp_{p^{(i)}}^{(i)}}(v)\Big)
  \]
  we have 
  \begin{equation}
    \label{eq:29}
    \inorm{\Cx^{(i)}(v)-\Cz^{(i)}(v)}\le2^{-k^{(i)}(v)}.
  \end{equation}
  For $i=d$ and with $\logic{gnn-eval}_j\coloneqq\rexp^{(d)}_j$, this
  implies \eqref{eq:23} and hence the assertion of the theorem.

  To define $\rexp_j^{(1)}(x)$, we apply
  Lemma~\ref{lem:uniform} to the first layer $\FL^{(1)}$ and
  substitute $\logic{err}^{(0)}$ for $W$ in the
  resulting-expression. Then \eqref{eq:29} for $i=1$
  follows directly from Lemma~\ref{lem:uniform} and the fact that
  $k^{(0)}(v)\ge k^{(1)}(v)$ for all $v$.

  For the inductive step, let $2\le i\le d$ and suppose that we have
  defined $\rexp^{(i-1)}_j(x)$ for all $j\in[p^{(i-1)}]$. To define
  $\rexp_j^{(i)}(x)$, we apply Lemma~\ref{lem:uniform} to the $i$th
  layer $\FL^{(i)}$ and substitute
  $\rexp_1^{(i-1)},\ldots, \rexp_{p^{(i-1)}}^{(i-1)}$ for
  $\vec X_1,\ldots,\vec X_{p^{(i-1)}}$ and $\logic{err}^{(i-1)}$ for $W$ in
  the resulting r-expression. Then by Lemma~\ref{lem:uniform}, for
  all $v\in V(G)$ we have
  \begin{equation}
    \label{eq:30}
    \inorm{\hat\FL^{(i)}(G,\Cz^{(i-1)})(v)-\Cz^{(i)}(v)}\le
    2^{-k^{(i-1)}(v)}.
  \end{equation}
  Moreover, by Lemma~\ref{lem:gnnbound2} applied to
  $\FL^{(i)}$ and
  $\Cx\coloneqq\Cx^{(i-1)}$, $\Cx'\coloneqq\Cz^{(i-1)}$ we have
  \begin{align*}
    \notag
    &\inorm{\Cx^{(i)}(v)-\hat\FL^{(i)}(G,\Cz^{(i-1)})(v)}\\
    \notag
    \le\;&
    \lambda^{(i)}\max\left\{\left.\inorm{\Cx^{(i-1)}(w)-\Cz^{(i-1)}(w)}\;\right|\;
           w\in N_G[v]\right\}\big(\deg_G(v)+1\big)\\
    \notag
    \le\;&
    \lambda^{(i)}\max\Big\{2^{-k^{(i-1)}(w)}\Bigmid
      w\in N_G[v]\Big\}\big(\deg_G(v)+1\big)\\
\le\;&
    \max\Big\{2^{-k^{(i-1)}(w)+\lambda^{(i)}\cdot(\deg_G(v)+1)}\Bigmid
      w\in N_G[v]\Big\}.
  \end{align*}
  We choose a $w\in N_G[v]$ minimising $k^{(i-1)}(w)$. Then
  \begin{equation}
    \label{eq:31}
    \Big\|\Cx^{(i)}(v)-\hat\FL^{(i)}(G,\Cz^{(i-1)})(v)\Big\|\le 2^{-(k^{(i-1)}(w)-\lambda^{(i)}\cdot(\deg_G(v)+1))}.
  \end{equation}
  Combining \eqref{eq:30} and \eqref{eq:31} by the triangle
  inequality, we get
   \begin{equation*}
    \big\|\Cx^{(i)}(v)-\Cz^{(i)}(v)\big\|\le2^{-k^{(i-1)}(v)}+2^{-(k^{(i-1)}(w)-\lambda^{(i)}\cdot(\deg_G(v)+1))}.
  \end{equation*}
  Observe that $k^{(i-1)}(v)\ge k^{(i)}(v)+1$ and
  \begin{align*}
    &k^{(i-1)}(w)-\lambda^{(i)}\cdot(\deg_G(v)+1)\\
    &=\max\big\{k^{(i)}(v')\bigmid v'\in
      N_G[w]\big\}+\lambda^{(i)}\max\big\{\deg_G(v')+1\bigmid v'\in
      N_G[w]\big\}+1\\
    &\hspace{2cm}-\lambda^{(i)}\cdot(\deg_G(v)+1)\\
    &\ge
      k^{(i)}(v)+\lambda^{(i)}\big(\deg_G(v)+1\big)+1
    -\lambda^{(i)}\cdot(\deg_G(v)+1)\hspace{2cm}\text{since
    }v\in N_G[w]\\
    &= k^{(i)}(v)+1.
  \end{align*}
  Thus
  \[
    \big\|\Cy^{(i)}(v)-\Cz^{(i)}(v)\big\|\le
    2^{-k^{(i)}(v)-1}+2^{-k^{(i)}(v)-1}=2^{-k^{(i)}(v)},
  \]
  which proves \eqref{eq:29} and hence the theorem.
\end{proof}

\begin{myremark}
  It is worth mentioning that the approximation error in
  Theorem~\ref{theo:uniform} is due to the division involved in MEAN
  aggregations. If we only consider rational piecewise linear GNNs
  with SUM and MAX aggregation, we obtain an exact simulation by
  $\GC$-formulas, that is, we can
  replace inequality \eqref{eq:23} by the equality
  \[
    \Cy(v)=\big(\num{\logic{gnn-eval}_1}^{(G,\Ca)}(v),\ldots,
    \num{\logic{gnn-eval}_q}^{(G,\Ca)}(v)\big).
  \]
  This follows easily from the proof.

  However, for MEAN aggregation, we cannot achieve such an exact
  result, at least not if we work with dyadic rationals. If we try to
  use arbitrary rationals, we run into other difficulties, because the
  denominators of the fractions we obtain can get very large.
  \uend
\end{myremark}

Since logics define queries, when comparing the expressiveness of graph neural networks with that of
logics, it is best to focus on queries. Recall from
Section~\ref{sec:graph} that we identified
$\ell$-labelled graphs with graphs carrying an $\ell$-dimensional
Boolean signal. A unary query on the class $\CGS^\bool_\ell$ is an equivariant signal transformations from
$\CGS_\ell^\bool$ to $\CGS_1^\bool$.

We say that a GNN $\FN$ \emph{computes} a unary query ${\CQ}:
\CGS_\ell^\bool\to\CGS_1^\bool$ if for all $(G,\Cb)\in
\CGS_\ell^\bool$ and all $v\in V(G)$ it holds that
\begin{equation}
  \label{eq:34}
    \begin{cases}
    \tilde\FN(G,\Cb)(v)\ge \frac{3}{4}&\text{if }{\CQ}(G,\Cb)(v)=1,\\
    \tilde\FN(G,\Cb)(v)\le \frac{1}{4}&\text{if }{\CQ}(G,\Cb)(v)=0.
  \end{cases}
\end{equation}
Observe that if we allow $\lsig$ or $\relu$ activations, we can
replace the $\ge\frac{3}{4}$ and $\le\frac{1}{4}$ in \eqref{eq:34} by
$=1$ and $=0$ and thus require
$\tilde\FN(G,\Cb)(v)={\CQ}(G,\Cb)(v)$. We simply apply the
transformation $\lsig(2x-\frac{1}{2})$ to the output. It maps the interval
$(-\infty,\frac{1}{4}]$ to $0$ and the interval $[\frac{3}{4},\infty)$
to $1$. With other activations such as the logistic function, this is
not possible, which is why we chose our more flexible definition.

\begin{corollary}\label{cor:uniform}
  Every unary query on $\CGS_\ell$ that is computable by a rational
  piecewise linear GNN is
  definable in $\GC$.
\end{corollary}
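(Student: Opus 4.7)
The strategy is to instantiate Theorem~\ref{theo:uniform} with the Boolean labels as inputs, choose $W$ large enough to squeeze the approximation error below $\tfrac14$, and then threshold the resulting approximate output against $\tfrac12$.

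Let $\FN$ be a rational piecewise linear GNN of input dimension $\ell$ and output dimension $1$ computing $\CQ$. For each $i \in [\ell]$, introduce a guarded r-expression $\rexp^{\mathrm{in}}_i(x)$ with components $\rr_i \coloneqq (\one \le \zero)$, $\rI_i(\hat y, x) \coloneqq (\hat y = \zero \wedge P_i(x))$, $\rs_i \coloneqq \zero$, and $\rt_i \coloneqq \one$. Then for every $(G,\Cb) \in \CGS^\bool_\ell$, viewing $G_{\Cb}$ as the associated $\ell$-labelled graph, we have $\num{\rexp^{\mathrm{in}}_i}^{G_{\Cb}}(v) = \Cb(v)_i$ for every $v \in V(G)$ and $i \in [\ell]$.

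Next, apply Theorem~\ref{theo:uniform} to $\FN$ to obtain a guarded r-expression $\logic{gnn-eval}_1(x)$ that still mentions the r-schemas $\vec X_1,\dots,\vec X_\ell$ and the function variable $W$. Substitute $\rexp^{\mathrm{in}}_i$ for $\vec X_i$ throughout, and replace every term of the form $W(t)$ by the closed $\GC$-term $\one+\one+\one$; call the resulting r-expression $\tilde\rexp(x)$. This substitution preserves the $\GC$-structure, since the only new vertex atoms are of the form $P_i(z)$ for an argument variable $z$ already present in (and appropriately guarded by) the original expression, and a closed numerical term is unconditionally in $\GC$. Theorem~\ref{theo:uniform} then yields
\[
  \bigl|\tilde\FN(G,\Cb)(v) - \num{\tilde\rexp}^{G_{\Cb}}(v)\bigr| \;\le\; 2^{-3} \;=\; \tfrac18
\]
for every $v \in V(G)$. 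Combining this with \eqref{eq:34}, we conclude $\num{\tilde\rexp}^{G_{\Cb}}(v) \ge \tfrac58$ whenever $\CQ(G,\Cb)(v)=1$, and $\num{\tilde\rexp}^{G_{\Cb}}(v) \le \tfrac38$ whenever $\CQ(G,\Cb)(v)=0$.

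Finally, using the formula $\logic{leq}$ from Lemma~\ref{lem:ar4}(2), the guarded formula
\[
  \phi(x) \;\coloneqq\; \logic{leq}\bigl(\rexp^{1/2},\;\tilde\rexp(x)\bigr),
\]
where $\rexp^{1/2}$ is any closed arithmetical r-expression denoting $\tfrac12$, defines $\CQ$ on $\ell$-labelled graphs. The argument has no genuine obstacle; the heavy lifting has already been performed in Theorem~\ref{theo:uniform}, and the only mild care required is to verify that the substitution of r-expressions for r-schemas, and of a closed term for $W(x)$, respects the guarded structure, which it does by construction.
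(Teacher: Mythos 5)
Your proof is correct and takes the natural route that the paper leaves implicit (the paper states Corollary~\ref{cor:uniform} without an explicit proof, treating it as an immediate consequence of Theorem~\ref{theo:uniform}): encode the Boolean input signal via simple guarded r-expressions, substitute a constant of value $3$ for the function variable $W$ to pin the approximation error below $\tfrac14$, and threshold the resulting approximate value at $\tfrac12$ using $\logic{leq}$. The only point worth making slightly more explicit is why the final formula lands in $\GC$: this is because the substituted subformulas $P_i(\xi)$ are guarded atoms (relational atoms whose variables lie in $\{x_1,x_2\}$), the substituted constants are closed $\GC$-terms, and the arithmetical comparison $\logic{leq}$ contains no vertex variables and hence stays in $\GC$ after substitution — exactly the substitution pattern the paper already relies on in the proof of Lemma~\ref{lem:uniform}, Claim~1.
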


Note that this Corollary is  Theorem~\ref{theo:main3} stated
in the introduction.  

The reader may wonder if the converse of the previous corollary holds,
that is, if every query definable in $\GC$ is computable by a rational
piecewise linear GNN. It is not; we refer the reader to
Remark~\ref{rem:converse}. 

  As mentioned earlier, there are versions of the theorem for all the
extensions of basic GNNs that we discussed in
Section~\ref{sec:gnn}. For later reference, we state the version for GNNs with global
readout.

\begin{theorem}\label{theo:uniform-gc}
  Let $\FN$ be a rational piecewise linear GNN with global readout of input dimension $p$
  and output dimension $q$. Let $\vec X_1,\ldots,\vec X_p$ be
  r-schemas of type $\rtp{\ttv}$, and let $W$ be a function variable
  of type $\ttv\to\ttn$.  
Then there are 
  r-expressions $\logic{gnn-eval}_1(x)$, $\ldots$, $\logic{gnn-eval}_q(x)$
  in $\GCgc$ such that the following holds for all graphs $G$ and assignments
  $\Ca$ over $G$. Let $\Cx\in\CS_p(G)$ be the signal defined by
  \begin{equation}
    \label{eq:32}
    \Cx(v)\coloneqq\Big(\num{\vec X_1}^{(G,\Ca)}(v),\ldots,
    \num{\vec X_p}^{(G,\Ca)}(v)\Big),
  \end{equation}
  and let $\Cy=\tilde\FN(G,\Cx)$. Then for
  all $v\in V(G)$, 
  \begin{equation}
    \label{eq:33}
     \inorm{\Cy(v)
    -\big(\num{\logic{gnn-eval}_1}^{(G,\Ca)}(v),\ldots, \num{\logic{gnn-eval}_q}^{(G,\Ca)}(v)\big)}
    \le 2^{-\Ca(W)(v)}.
  \end{equation}
\end{theorem}

The proof of the this theorem is completely analogous to the proof of
Theorem~\ref{theo:uniform}. However, when expressing the global
aggregations (as in \eqref{eq:26}), we need to use unguarded counting
terms of the shape \eqref{eq:6} and hence end up with $\GCgc$-expressions.

\begin{myremark}\label{rem:mc4}Theorem~\ref{theo:uniform} and Corollary~\ref{cor:uniform} also have
  versions for 1-GNNs and the modal fragment $\MC$. In
  Theorem~\ref{theo:uniform}, if $\FN$ is a 1-GNN, we can obtain
  \emph{modal} r-expressions $\logic{gnn-eval}_i(x)$. To prove this,
  we need a corresponding version of Lemma~\ref{lem:uniform}. In the
  proof of this lemma, in Claim~\ref{claim:c:1} we have to only need to simulate a
  message function
  \[
  \msg\Big(\num{\vec X_1}^{(G,\Ca)}(v'),\ldots, \num{\vec
    X_{p}}^{(G,\Ca)}(v')\Big),
    \]  
    dropping    the arguments depending on
  $v$, and hence we only need to construct r-expressions
  $\mu_i(x')$. Then in the aggregations, we can apply the modal
  version of Lemma~\ref{lem:ar7} to obtain modal r-expressions
  $\boldsymbol{\sigma}_i$ satisfying \eqref{eq:26} (in Case~1, and the
  corresponding equalities in Cases~2 and 3).

  The rest of the proofs of Lemma~\ref{lem:uniform} and
  Theorem~\ref{theo:uniform} go through without any changes. The modal
  version of Corollary~\ref{cor:uniform} then follows; it reads as:
  {\itshape Every unary query on $\CGS_\ell$ that is computable by a
    rational piecewise linear 1-GNN is definable in $\MC$.}
  \uend
\end{myremark}

\section{The Non-Uniform Case:
  GNNs with Arbitrary Weights and Families of GNNs}
\label{sec:nonuniform}

Now we consider the general case where the weights in the neural
networks are arbitrary real numbers. We also drop the assumption that
the activation functions be piecewise linear, only requiring rpl
approximability. The price we pay is a non-uniformity on the side of
the logic and a slightly weaker approximation
guarantee as well as a boundedness assumption on the input signal.

\begin{theorem}\label{theo:nonuniform}
  Let $\FN$ be an rpl-approximable GNN of input dimension $p$
  and output dimension $q$. Let $\vec X_1,\ldots,\vec X_p$ be
  r-schemas of type $\rtp{\ttv}$, and let $W,W'$ be function variables
  of type $\emptyset\to\ttn$.

  Then there are
  r-expressions $\logic{gnn-eval}_1(x)$, $\ldots$, $\logic{gnn-eval}_q(x)$
  in $\GCnu$ such
  that the following holds for all graphs $G$ and assignments
  $\Ca$ over $G$.
 Let $\Cx\in\CS_p(G)$ be the signal defined by
  \begin{equation}
    \label{eq:35}
    \Cx(v)\coloneqq\Big(\num{\vec X_1}^{(G,\Ca)}(v),\ldots,
    \num{\vec X_p}^{(G,\Ca)}(v)\Big),
  \end{equation}
  and let $\Cy=\tilde\FN(G,\Cx)$. Assume that
  $\inorm{\Cx}\le\Ca(W)$ and that $\Ca(W')\neq0$. Then for
  all $v\in V(G)$, 
  \begin{equation}
    \label{eq:36}
     \inorm{\Cy(v)
    -\big(\num{\logic{gnn-eval}_1}^{(G,\Ca)}(v),\ldots, \num{\logic{gnn-eval}_q}^{(G,\Ca)}(v)\big)}
    \le\frac{1}{\Ca(W')}.
\end{equation}  
\end{theorem}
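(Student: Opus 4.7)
The plan is to reduce to the uniform Theorem~\ref{theo:uniform} by approximating the rpl-approximable GNN $\FN$, which may have arbitrary real weights and non-piecewise-linear activation functions, by a rational piecewise linear GNN whose parameters are encoded via built-in numerical relations in $\GCnu$. The non-uniformity is exactly what allows us to pass from real parameters to dyadic-rational approximations whose bitsize may depend on the graph order $n = |G|$ and on the magnitudes $k = \Ca(W)$, $k' = \Ca(W')$ read off the assignment.

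Concretely, for each triple $(n, k, k') \in \Nat^3$ I would first construct a rational piecewise linear GNN $\FN^{n,k,k'}$ of the same depth $d$ and the same layer dimensions as $\FN$, such that
\[
  \inorm{\tilde\FN(G,\Cx) - \tilde\FN^{n,k,k'}(G,\Cx)} \le \frac{1}{2k'}
\]
for every graph $G$ with $|G| \le n$ and every signal $\Cx \in \CS_p(G)$ with $\inorm{\Cx} \le k$. Its parameters are obtained by truncating the real weights and biases of each FNN in $\FN$ to sufficiently precise dyadic rationals and by replacing each activation function with a rational piecewise linear $\epsilon$-approximation furnished by rpl-approximability. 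Iterating Lemma~\ref{lem:gnnbound1} through the constant depth $d$ bounds every intermediate signal in the $\FN$-computation by some $B = B(n,k) \in \PNat$ polynomial in $n$ and $k$; iterating Lemma~\ref{lem:gnnbound2} together with the FNN Lipschitz bound of Lemma~\ref{lem:fnngrowth}(1) then shows that replacing each FNN inside each layer by a $\delta$-perturbation on inputs bounded by $B$ changes the final GNN output by at most $C \delta (\Lambda n)^d$ for constants $C, \Lambda \in \PNat$ depending only on $\FN$. Choosing $\delta$ to be roughly $1/\big(2C k' (\Lambda n)^d\big)$ delivers the displayed bound, and both the precision required for weight truncation and the bitsize of the rational piecewise linear activation approximations remain polynomial in $n, k, k'$.

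To access $\FN^{n,k,k'}$ inside the logic, I would introduce built-in numerical relations that, given indices $(n, k, k')$ together with layer, FNN-node, and FNN-edge indices, encode the dyadic weights and biases as well as the parameters (thresholds, slopes, constants) of the piecewise linear activation functions of $\FN^{n,k,k'}$. From these numerical relations the L-schemas and F-schemas of Section~\ref{sec:fnnsim} can be assembled directly, using the term $\ord$ for $n$ and the function variables $W, W'$ for $k, k'$. Plugging these schemas into a parameterised version of Lemma~\ref{lem:uniform}, in which the FNNs inside each GNN layer are evaluated by Corollary~\ref{cor:ar6b} rather than by the rigid Corollary~\ref{cor:ar6a}, and aggregation over neighbours is performed using the guarded iterated-addition and iterated-maximum machinery of Lemma~\ref{lem:ar7}, and then composing the $d$ layers exactly as in the proof of Theorem~\ref{theo:uniform} yields guarded r-expressions $\logic{gnn-eval}_j$ in $\GCnu$ that compute $\tilde\FN^{n,k,k'}$ within error $1/(2k')$. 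The desired bound \eqref{eq:36} then follows by the triangle inequality. I expect the main obstacle to be the careful bookkeeping of magnitudes and Lipschitz constants needed to keep the precision parameters driving the built-in relations polynomial in $n, k, k'$ and to ensure that the weight-truncation and activation-approximation errors compose correctly through the constantly many layers without uncontrolled blow-up.
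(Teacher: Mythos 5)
Your proposal is correct and takes essentially the same route as the paper: the paper proves the more general family version (Theorem~\ref{theo:snonuniform}, of which Theorem~\ref{theo:nonuniform} is the single-GNN special case) by approximating the GNN, for each order $n$ and precision parameter, by a rational piecewise linear GNN with the same skeleton (Lemma~\ref{lem:gnnbound4}), encoding its dyadic weights, biases, and activation parameters in built-in numerical relations, evaluating these inside $\GCnu$ via the F-schema machinery (Lemma~\ref{lem:ar6a}) in parameterised analogues of Lemma~\ref{lem:uniform} and Theorem~\ref{theo:uniform}, and finally fixing the precision by a term built from $\ord$, $W$, $W'$ and concluding with the triangle inequality. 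Your bookkeeping of intermediate magnitudes, Lipschitz constants, and polynomial bitsizes corresponds to the paper's Lemmas~\ref{lem:ub1}--\ref{lem:ub4} and \ref{lem:gnnbound1a}--\ref{lem:gnnbound4}.
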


Let us comment on the role of the two 0-ary functions (that is,
constants) $W,W'$. We introduce them to add flexibility in the
bounds. Their values depend on the assignment $\Ca$, which means that
we can freely choose them. For example, we can let
$\Ca(W)=\Ca(W')=n\coloneqq|G|$. Then we get an approximation error of
$1/n$ for input signals bounded by $n$. Or we could let $\Ca(W)=1$ and
$\Ca(W')=100$. Then we get an approximation error of 1\% for Boolean
input signals.

Since we move to a non-uniform regime anyway, to obtain the most
general results we may as well go all
the way to a non-uniform GNN model where we have different GNNs for
every size of the input graphs.

We need additional terminology. We define the \emph{bitsize}
$\bsize(\FF)$ of a rational piecewise linear FNN $\FF$ to be the sum
of the bitsizes of its skeleton, all its weights and biases, and all
its activations. We define the \emph{weight} of an arbitrary FNN
$\FF=(V,E,(\Fa_v)_{v\in V},\vec w,\vec b)$ to be
\[
  \wt(\FF)\coloneqq|V|+[E|+\inorm{\vec w}+\inorm{\vec b}+\max_{v\in V}\big(\lambda(\Fa_v)+\Fa_v(0)\big)
\]
Here $\lambda(\Fa_v)$ denotes the least integer that is a Lipschitz
constant for $\Fa_v$.
The \emph{size} $\size(\FF)$ of a rational piecewise linear FNN $\FF$
is the maximum of its bitsize and its weight. 
The \emph{depth} $\depth(\FF)$ of an FNN $\FF$ is the depth of its
skeleton, that is, the length of a longest path
from an input node to an output node of $\FF$.

The \emph{weight} $\wt(\FN)$ of a GNN $\FN$ is the sum of the weights of the FNNs for the message and
combination functions of all layers of $\FN$.
The \emph{bitsize} $\bsize(\FN)$ and the \emph{size} $\size(\FN)$ of a rational piecewise linear GNN
$\FN$ is the sum of the (bit)sizes of all its FNNs. The
\emph{skeleton} of a GNN $\FN$ consists of the directed acyclic graphs
underlying the FNNs for the message and combination functions of all
layers of $\FN$.
Thus if two GNNs have the same skeleton they have the same number of layers and the same input
and output dimensions on all layers, but they
may have different activation functions and different weights. The \emph{depth} $\depth(\FN)$ of a
GNN $\FN$ is the number of layers of $\FN$ times the
maximum depth of all its FNNs.

Let
$\CN=(\FN^{(n)})_{n\in\Nat}$ be a family of GNNs. Suppose that the input dimension of $\FN^{(n)}$ is $p^{(n)}$ and the
output dimension is $q^{(n)}$. It will be convenient to call $(p^{(n)})_{n\in\Nat}$ the
\emph{input dimension} of $\CN$ and $(q^{(n)})_{n\in\Nat}$ the
\emph{output dimension}.
Then for every graph $G$ of order $n$ and every $\Cx\in\CS_{p(n)}(G)$
we let 
$\CN(G,\Cx)\coloneqq \FN^{(n)}(G,\Cx)$ and
$\tilde\CN(G,\Cx)\coloneqq\tilde\FN^{(n)}(G,\Cx)$. Thus $\CN$ computes
a generalised form of signal transformation where the input and output
dimension depend on the order of the input graph.

We say that $\CN$ is of \emph{polynomial weight} if there is a
polynomial $\pi(X)$ such that $\wt(\FN^{(n)})\le \pi(n)$ for all
$n$. \emph{Polynomial (bit)size} is defined similarly. The family $\CN$ is of
\emph{bounded depth} if there is a $d\in\Nat$ such that
$\depth(\FN^{(n)})\le d$ for all $n$. The family
$\CN$ is \emph{rpl approximable} if there is a polynomial $\pi'(X,Y)$
such that for all $n\in\PNat$ and all $\epsilon>0$, every activation
function of $\FN^{(n)}$ is $\epsilon$-approximable by a rational
piecewise linear function of bitsize at most
$\pi'(\epsilon^{-1},n)$.

\begin{theorem}\label{theo:snonuniform}
  Let $\CN$ be an rpl-approximable polynomial-weight, bounded-depth
  family of GNNs of input dimension $(p^{(n)})_{n\in\Nat}$ and
output dimension $(q^{(n)})_{n\in\Nat}$. Let $\vec X$ be an r-schema of type
  $\rtp{\ttv\ttn}$, and let $W,W'$ be function variables of type
  $\emptyset\to\ttn$.

  Then there is 
  an r-expression $\logic{gnn-eval}(x,y)$ in ${\GCnu}$ such that the following
  holds for all graphs $G$ and assignments $\Ca$ over $G$.  Let
  $n\coloneqq|G|$, and let 
  $\Cx\in\CS_{p^{(n)}}(G)$ be the signal defined by
  \begin{equation}
    \label{eq:37}
    \Cx(v)\coloneqq\Big(\num{\vec X}^{(G,\Ca)}(v,0),\ldots,
    \num{\vec X}^{(G,\Ca)}(v,p^{(n)}-1)\Big).
  \end{equation}
 Assume that
  $\inorm{\Cx}\le\Ca(W)$ and that $\Ca(W')\neq0$. Let  $\Cy=\tilde\CN(G,\Cx)\in\CS_{q^{(n)}}(G)$. Then for
  all $v\in V(G)$, 
  \begin{equation}
    \label{eq:38}
     \inorm{\Cy(v)
    -\big(\num{\logic{gnn-eval}}^{(G,\Ca)}(v,0),\ldots, \num{\logic{gnn-eval}}^{(G,\Ca)}(v,q^{(n)}-1)\big)}
    \le\frac{1}{\Ca(W')}.
\end{equation}  
\end{theorem}

Observe that Theorem~\ref{theo:snonuniform} implies
Theorem~\ref{theo:nonuniform}, because we can simply let $\CN$ be the
family consisting of the same GNN for every $n$. So we only need to
prove Theorem~\ref{theo:snonuniform}.
The basic idea of the proof is simple. We exploit the continuity of the functions computed by FNNs
and GNNs not only in terms of the input signals but also in terms of
the weights and the biases. This allows us to approximate the
functions computed by GNNs with arbitrary real weights by GNNs with
rational weights. However, the bitsize of the rationals we need to get
a sufficiently precise approximation depends on the size of the input
graph, and this leads to the non-uniformity.

Before we delve into the proof, let us state one important corollary.
Extending the definition for single GNNs in the obvious way, we say
that a family $\CN=(\FN^{(n)})_{n\in\Nat}$ of GNNs \emph{computes} a
unary query ${\CQ}:\CGS^\bool_p\to\CGS^\bool_1$ on $p$-labelled graphs if
for all $n\in\Nat$, all $(G,\Cb)\in \CGS_p^\bool$ with $n=|G|$, and
all $v\in V(G)$ it holds that
\[
  \begin{cases}
    \tilde\FN^{(n)}(G,\Cb)(v)\ge \frac{3}{4}&\text{if }{\CQ}(G,\Cb)(v)=1,\\
    \tilde\FN^{(n)}(G,\Cb)(v)\le \frac{1}{4}&\text{if }{\CQ}(G,\Cb)(v)=0.
  \end{cases}
\]

\begin{corollary}\label{cor:snonuniform}
  Every unary query on $\CGS^\bool_p$ that is computable by an
  rpl-approximable polynomial-weight bounded-depth family of GNNs is
    definable in ${\GCnu}$.
\end{corollary}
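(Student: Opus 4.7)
The plan is to invoke Theorem~\ref{theo:snonuniform} with a suitable encoding of the Boolean input and then collapse the resulting numerical approximation to a crisp Boolean answer by comparing it with $1/2$. The hypotheses match up perfectly: the rpl-approximable polynomial-weight bounded-depth family $\CN$ in the corollary is exactly what the theorem consumes, and the output dimension is the constant $1$.

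I would first encode the input. A $p$-labelled graph $(G, \Cb) \in \CGS^\bool_p$ is determined by its label predicates $P_1, \ldots, P_p$, with $\Cb(v)_i = 1$ iff $v \in P_i(G)$. Since $p$ is a fixed constant, I can build a guarded r-expression $\vec\rexp(x, y)$ of type $\rtp{\ttv\ttn}$ such that $\num{\vec\rexp}^{(G,\Ca)}(v, i) = \Cb(v)_i$ for all $0 \le i < p$, using the case distinction $\bigvee_{i=0}^{p-1}(y = i \wedge P_{i+1}(x))$ inside the $\rI$-formula and setting the sign bit, denominator exponent and bitlength to trivial closed terms. In particular $\inorm{\Cb} \le 1$. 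Applying Theorem~\ref{theo:snonuniform} to $\CN$ (constant input dimension $p$, constant output dimension $1$) produces an r-expression $\logic{gnn-eval}(x, y)$ in $\GCnu$. I substitute $\vec\rexp$ for $\vec X$, the closed term $\one$ for $W$, the closed term $7 \coloneqq \one+\one+\one+\one+\one+\one+\one$ for $W'$, and fix $y \coloneqq \zero$. This yields an r-expression $\rexp^*(x)$ in $\GCnu$ satisfying
\[
\bigl|\tilde\CN(G, \Cb)(v) - \num{\rexp^*}^{(G,\Ca)}(v)\bigr| \le \tfrac{1}{7}
\]
for every graph $G$, every assignment $\Ca$, and every $v \in V(G)$.

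To convert this approximation into a Boolean output, I use the fact that since $\CN$ computes $\CQ$, the true GNN output lies in $[2/3, \infty)$ whenever $\CQ(G, \Cb)(v) = 1$ and in $(-\infty, 1/3]$ otherwise. Combining with the $1/7$-error bound gives $\num{\rexp^*}^{(G,\Ca)}(v) \ge \tfrac{2}{3} - \tfrac{1}{7} = \tfrac{11}{21} > \tfrac{1}{2}$ in the accepting case and $\num{\rexp^*}^{(G,\Ca)}(v) \le \tfrac{1}{3} + \tfrac{1}{7} = \tfrac{10}{21} < \tfrac{1}{2}$ in the rejecting case. Comparing $\rexp^*(x)$ with the closed r-expression for $\tfrac{1}{2}$ (sign bit $0$, $\rI(\hat y) \coloneqq (\hat y = \zero)$, $\rs \coloneqq \one$, $\rt \coloneqq \one$) via the formula $\logic{leq}$ from Lemma~\ref{lem:ar4}(2) therefore yields a $\GCnu$-formula $\phi(x)$ that holds iff $\num{\rexp^*}^{(G,\Ca)}(v) > \tfrac{1}{2}$, which is equivalent to $\CQ(G, \Cb)(v) = 1$. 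Hence $\phi(x)$ defines $\CQ$ in $\GCnu$.

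There is no deep step once Theorem~\ref{theo:snonuniform} is in place; the only point that needs verification is that substituting closed arithmetical terms and guarded r-expressions for the function and relation variables of $\logic{gnn-eval}$ preserves membership in $\GCnu$. This follows syntactically: each substituted subexpression lies in the guarded fragment and refers only to the two distinguished vertex variables $x_1, x_2$, so the resulting formula stays within $\GCnu$. In other words, the main (and only) obstacle is bookkeeping, not genuinely new logical content.
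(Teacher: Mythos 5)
Your proposal is correct and takes essentially the route the paper implicitly intends (the corollary is stated without an explicit proof, mirroring the pattern of Corollary~\ref{cor:uniform}): apply Theorem~\ref{theo:snonuniform} with a guarded r-expression encoding the Boolean labels, instantiate $W$ and $W'$ with closed terms ($W=1$ suffices since $\inorm{\Cb}\le 1$, and any $W'\ge 7$ makes the error smaller than the $\tfrac{2}{3}$–$\tfrac{1}{3}$ gap), and threshold the resulting r-expression against $\tfrac{1}{2}$ via $\logic{leq}$. All arithmetic checks out ($\tfrac{11}{21}>\tfrac12>\tfrac{10}{21}$), and your observation that substitution of guarded r-expressions and closed arithmetical terms preserves membership in $\GCnu$ is the only bookkeeping point, as you say.
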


The exact analogues of Theorems~\ref{theo:nonuniform} and
  \ref{theo:snonuniform} hold for GNNs with global readout and the
  logic $\GCgc$, with only small modifications of the proof. For later
  reference, we state the analogue of Theorem~\ref{theo:snonuniform}.
\begin{theorem}\label{theo:snonuniform-gc}
  Let $\CN$ be an rpl-approximable polynomial-weight, bounded-depth
  family of GNNs with global readout of input dimension $(p^{(n)})_{n\in\Nat}$ and
output dimension $(q^{(n)})_{n\in\Nat}$. Let $\vec X$ be an r-schema of type
  $\rtp{\ttv\ttn}$, and let $W,W'$ be function variables of type
  $\emptyset\to\ttn$.

  Then there is 
  an r-expression $\logic{gnn-eval}(x,y)$ in ${\GCgcnu}$ such that the following
  holds for all graphs $G$ and assignments $\Ca$ over $G$.  Let
  $n\coloneqq|G|$, and let 
  $\Cx\in\CS_{p^{(n)}}(G)$ be the signal defined by
  \begin{equation}
    \label{eq:137}
    \Cx(v)\coloneqq\Big(\num{\vec X}^{(G,\Ca)}(v,0),\ldots,
    \num{\vec X}^{(G,\Ca)}(v,p^{(n)}-1)\Big).
  \end{equation}
 Assume that
  $\inorm{\Cx}\le\Ca(W)$ and that $\Ca(W')\neq0$. Let  $\Cy=\tilde\CN(G,\Cx)\in\CS_{q^{(n)}}(G)$. Then for
  all $v\in V(G)$, 
  \begin{equation}
    \label{eq:138}
     \inorm{\Cy(v)
    -\big(\num{\logic{gnn-eval}}^{(G,\Ca)}(v,0),\ldots, \num{\logic{gnn-eval}}^{(G,\Ca)}(v,q^{(n)}-1)\big)}
    \le\frac{1}{\Ca(W')}.
\end{equation}  
\end{theorem}

\begin{myremark}\label{rem:mc5}As the uniform simulation results, the nonuniform
  Theorems~\ref{theo:nonuniform} and \ref{theo:snonuniform} as well as Corollary~\ref{cor:snonuniform} have
  versions for 1-GNNs and the modal fragment $\MC$. In
  Theorem~\ref{theo:nonuniform}, if $\FN$ is a 1-GNN then we obtain
  r-expressions $\logic{gnn-eval}_i(x)$ in $\MCnu$, the modal fragment
  with built-in numerical relations. Similarly, in
  Theorem~\ref{theo:snonuniform}, if $\CN$ is a family of 1-GNN then
  we obtain an
  r-expression $\logic{gnn-eval}(x,y)$ in $\MCnu$. And the modified
  version of Corollary~\ref{cor:snonuniform} states that queries
  computable by 
  rpl-approximable polynomial-weight bounded-depth families of 1-GNNs
  are definable in $\MCnu$.

  As the modified versions of both Theorem~\ref{theo:nonuniform} and
  Corollary~\ref{cor:snonuniform} follow easily from the modified
  version of Theorem~\ref{theo:snonuniform}, we only need to adapt the
  proof of Theorem~\ref{theo:snonuniform}. Within this long proof, the
  only place where the exact messaging mechanism plays a role is in
  the proof of Claim~\ref{claim:g:3} (on page~\pageref{page:claim3}). The changes we
  need to make there are analogous to the changes we needed to make in
  the proof of the modified version of Lemma~\ref{lem:uniform} (see
  Remark~\ref{rem:mc4}).
  \uend
\end{myremark}

\subsection{Bounds and Approximations for FNNs}

In this section, we shall prove that we can approximate rpl
approximable FNNs by rational piecewise linear FNNs whose size is
bounded in terms of the approximation ratio. For this, we first need
to establish bounds on the Lipschitz constant and growth of
an FNN in terms of its structure, its activation functions, and its
parameters.

Throughout this section, we let $\FA=\big(V,E,(\Fa_v)_{v\in V}\big)$
be an FNN architecture of input dimension $p$ and output dimension $q$. We let $d$ be the depth and $\Delta$ the
maximum in-degree of the directed graph $(V,E)$. Without loss of
generality, we assume $\Delta\ge 1$ and thus $d\ge 1$. If $\Delta=0$,
we simply add a dummy edge of weight $0$ to the network.
Moreover, we let
$\lambda\in\PNat$ be a Lipschitz constant for all
activation function $\Fa_v$ for $v\in V$, and we let
\[
  \mu\coloneqq\max\big\{\ceil{|\Fa_v(0)|}\bigmid v\in V\big\}.
\]
For vectors $\vec x\in\Real^p$, $\vec w\in\Real^E$, $\vec
b\in\Real^V$, we assume that $\vec x=(x_1,\ldots,x_p)$, 
$\vec w=(w_e)_{e\in E}$, and $\vec b=(b_v)_{v\in V}$.

In the first two lemmas we analyse the dependence of the growth and variation of
the functions $f_{\FA,v}(\vec x,\vec w,\vec b)$ and 
$\FA(\vec x,\vec w,\vec b)$ on the constants
$d,\Delta,\lambda,\mu$ and $\inorm{\vec x},\inorm{\vec w},\inorm{\vec
  b}$ (more precisely than in Lemma~\ref{lem:fnngrowth}).

\begin{lemma}\label{lem:ub1}
  Let $\gamma\coloneqq2\Delta\lambda\max\{\lambda,\mu\}$. Then for all $\vec x\in\Real^p$, $\vec b\in\Real^{V}$,
  and $\vec w\in\Real^E$, and all $v\in V$ of depth $t$ we have
  \begin{equation}\label{eq:39}
    \big|f_{\FA,v}(\vec x,\vec b,\vec
    w)\big|\le \gamma^t(\inorm{\vec w}+1)^t\big(\inorm{\vec x}+\inorm{\vec b}+1).
  \end{equation}
\end{lemma}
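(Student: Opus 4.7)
The natural approach is induction on the depth $t$ of the node $v$ in the skeleton of $\FA$. Throughout, abbreviate $A := \inorm{\vec x} + \inorm{\vec b} + 1$ and $M := \max\{\lambda,\mu\}$, and note that $\lambda,\Delta,M \ge 1$, so $\gamma = 2\Delta\lambda M \ge 2$.

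For the base case $t = 0$, the node $v$ is a source of the dag, hence an input node $X_i$ by our convention, so $f_{\FA,v}(\vec x,\vec b,\vec w) = x_i$. Then $|f_{\FA,v}(\vec x,\vec b,\vec w)| \le \inorm{\vec x} \le A = \gamma^0(\inorm{\vec w}+1)^0 A$, as required.

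For the inductive step, let $t\ge 1$ and suppose \eqref{eq:39} holds for every node of depth less than $t$. Every in-neighbour $v'\in N^-(v)$ has $\depth_G(v')\le t-1$. Writing
\[
  S \coloneqq b_v + \sum_{v'\in N^-(v)} f_{\FA,v'}(\vec x,\vec b,\vec w)\cdot w_{v'v},
\]
Lipschitz continuity of $\Fa_v$ gives $|f_{\FA,v}(\vec x,\vec b,\vec w)| = |\Fa_v(S)| \le |\Fa_v(0)| + \lambda|S| \le \mu + \lambda|S|$. Since $|N^-(v)|\le \Delta$, the triangle inequality combined with the induction hypothesis yields
\[
  |S| \le \inorm{\vec b} + \Delta\inorm{\vec w}\cdot \gamma^{t-1}(\inorm{\vec w}+1)^{t-1}A.
\]
Combining, and using $\mu + \lambda\inorm{\vec b}\le M\cdot A$ together with $\inorm{\vec w}\le \inorm{\vec w}+1$,
\[
  |f_{\FA,v}(\vec x,\vec b,\vec w)| \le M\cdot A + \lambda\Delta\gamma^{t-1}(\inorm{\vec w}+1)^{t}\cdot A.
\]

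The (mild) obstacle is just verifying that the constant $\gamma = 2\Delta\lambda M$ is large enough to absorb both terms. It suffices to show
\[
  M + \lambda\Delta\gamma^{t-1}(\inorm{\vec w}+1)^{t} \;\le\; \gamma^t(\inorm{\vec w}+1)^t.
\]
Because $\gamma^t(\inorm{\vec w}+1)^t = 2\Delta\lambda M\cdot\gamma^{t-1}(\inorm{\vec w}+1)^t$, the right-hand side is the sum of two equal halves each equal to $\Delta\lambda M\cdot\gamma^{t-1}(\inorm{\vec w}+1)^t$. The second summand on the left, $\lambda\Delta\gamma^{t-1}(\inorm{\vec w}+1)^t$, is bounded by the first half since $M\ge 1$. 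The first summand $M$ is bounded by the second half since $\Delta\lambda\gamma^{t-1}(\inorm{\vec w}+1)^t\ge 1$. This completes the inductive step, and hence the proof.
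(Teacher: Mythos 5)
Your proof is correct and follows essentially the same route as the paper: induction on the depth of the node, using the Lipschitz estimate $|\Fa_v(z)|\le\lambda|z|+\mu$ and the bound $\Delta\inorm{\vec w}$ on the weighted sum over in-neighbours. The only difference is cosmetic: the paper first proves a sharper intermediate claim of the form $(\Delta\lambda\inorm{\vec w})^t\inorm{\vec x}+\sum_{s<t}(\Delta\lambda\inorm{\vec w})^s(\lambda\inorm{\vec b}+\mu)$ and then relaxes it, whereas you induct directly on the stated inequality and absorb the additive term using the factor $2$ built into $\gamma$, which works since $\lambda,\Delta,\max\{\lambda,\mu\}\ge1$.
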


\begin{proof}
  Note that for all $x\in\Real$ we have 
  \begin{equation}
    \label{eq:40}
    |\Fa_v(x)|\le \lambda|x|+\mu.
  \end{equation}
  For all input nodes
  $X_i$ we have
  \begin{equation}
    \label{eq:41}
    \big|f_{\FA,X_i}(\vec x,\vec b,\vec
    w)\big|=|x_i|\le\inorm{\vec x}.
  \end{equation}
  This implies \eqref{eq:39} for $t=0$.

  \begin{techclaim}\label{claim:d:1}
    For all nodes $v\in V$ of depth $t\ge 1$ we have
    \begin{equation}
      \label{eq:42}
      \big|f_{\FA,v}(\vec x,\vec b,\vec
      w)\big|\le \big(\Delta\lambda\inorm{\vec w})^t\inorm{\vec x}+
      \sum_{s=0}^{t-1}\big(\Delta\lambda\inorm{\vec w})^{s}\big(\lambda\inorm{\vec b}+\mu\big).
    \end{equation}
  
  \end{techclaim}
\begin{subproof}
  We prove
  \eqref{eq:42} by induction on $t\ge 1$.
  Suppose that $v\in V$ is a node of depth $t$, and let
  $v_1,\ldots,v_k$ be its in-neighbours. Let $b\coloneqq b_v$ and
  $w_i\coloneqq w_{v_iv}$ for $i\in[k]$. Moreover, let
  $y_i\coloneqq f_{\FA,v_i}(\vec x,\vec b,\vec w)$ and
  $\vec y=(y_1,\ldots,y_k)$. If $t=1$, by \eqref{eq:41} we have
  \begin{equation}
    \label{eq:43}
    \inorm{\vec y}\le\inorm{\vec x}
  \end{equation}
  If $t>1$, by the induction hypothesis we have
  \begin{equation}
    \label{eq:44}
    \inorm{\vec y}\le \big(\Delta\lambda\inorm{\vec w})^{t-1}\inorm{\vec x}+ \sum_{s=0}^{t-2}\big(\Delta\lambda\inorm{\vec w})^{s}\big(\lambda\inorm{\vec b}+\mu\big).
  \end{equation}
  Thus
  \begin{align*}
    \big|f_{\FA,v}(\vec x,\vec b,\vec
    w)\big|
    &=\Big|\Fa_v\Big(b+\sum_{i=1}^kw_iy_i\Big)\Big|\\
    &\le \Bigg|\lambda\Big|b+\sum_{i=1}^kw_iy_i\Big|+\mu\Bigg|&\text{by
                                                        }\eqref{eq:40}\\
    &\le \lambda\sum_{i=1}^k|w_i|\cdot|y_i|+\lambda|b|+\mu\\
    &\le \lambda \Delta\inorm{\vec w} \inorm{\vec y}+\lambda\inorm{\vec b}+\mu.
  \end{align*}
  Now if $t=1$, assertion \eqref{eq:42} follows immediately from
  \eqref{eq:43}. If $t>1$, by \eqref{eq:44}
  we obtain
  \begin{align*}
    &\big|f_{\FA,v}(\vec x,\vec b,\vec
    w)\big|\\
    &\le \Delta\lambda\inorm{\vec w} \inorm{\vec y}+\lambda\inorm{\vec b}+\mu\\
    &\le \Delta\lambda\inorm{\vec w} \Big(\big(\Delta\lambda\inorm{\vec w})^{t-1}\inorm{\vec x}+ \sum_{s=0}^{t-2}\big(\Delta\lambda\inorm{\vec w})^{s}\big(\lambda\inorm{\vec b}+\mu\big)\Big)+\lambda\inorm{\vec b}+\mu\\
    &=\big(\Delta\lambda\inorm{\vec w})^{t}\inorm{\vec x}+\sum_{s=0}^{t-1}\big(\Delta\lambda\inorm{\vec w})^{s}\big(\lambda\inorm{\vec b}+\mu\big).
  \end{align*}
  This proves the claim.
  \uend
    \end{subproof}

    It remains to prove that the claim yields \eqref{eq:39} for $t\ge
    1$. Since
    $\Delta\lambda\ge 1$, we have
    \[
      \sum_{s=0}^{t-1}\big(\Delta\lambda\inorm{\vec w})^{s}\le
      \big(2\Delta\lambda(\inorm{\vec w}+1))^{t}.
    \]
    Thus by Claim~\ref{claim:d:1} we have
    \begin{align*}
      \big|f_{\FA,v}(\vec x,\vec b,\vec
      w)\big|
&\le \big(2\Delta\lambda(\inorm{\vec w}+1)\big)^t\big(\inorm{\vec x}+\lambda\inorm{\vec b}+\mu\big)\\
&\le  \big(2\Delta\lambda(\inorm{\vec w}+1)\big)^t\max\{\lambda,\mu\}\big(\inorm{\vec x}+\inorm{\vec b}+1\big)\\
&\le\gamma^t \big(\inorm{\vec w}+1\big)^t\big(\inorm{\vec x}+\inorm{\vec b}+1\big).
\qedhere
\end{align*}
\end{proof}

\begin{lemma}\label{lem:ub1a}
  For all $\vec x,\vec x'\in\Real^p$, $\vec b\in\Real^{V}$,
  and $\vec w\in\Real^E$ it holds that
  \[
    \inorm{\FA(\vec x,\vec w,\vec b)-\FA(\vec x',\vec w,\vec
      b)}\le(\lambda\Delta)^d\inorm{\vec w}^d\inorm{\vec x-\vec x'}
  \]
\end{lemma}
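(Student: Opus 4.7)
The plan is to prove the lemma by induction on the depth $t$ of a node $v \in V$, establishing the stronger node-wise statement
\[
  \big|f_{\FA,v}(\vec x,\vec w,\vec b) - f_{\FA,v}(\vec x',\vec w,\vec b)\big|
  \le (\lambda\Delta)^t \inorm{\vec w}^t \inorm{\vec x - \vec x'}
\]
for all $v$ of depth $t$, all $\vec x,\vec x' \in \Real^p$, and all fixed $\vec w \in \Real^E$, $\vec b \in \Real^V$. The assertion of the lemma then follows by applying this inequality at each output node $Y_j$ (whose depth is at most $d$) and using that $\lambda\Delta \ge 1$ so that $(\lambda\Delta\inorm{\vec w})^t \le (\lambda\Delta\inorm{\vec w})^d$ in the regime $\inorm{\vec w} \ge 1$; if $\inorm{\vec w} < 1$, one obtains the claim in exactly the same way since the factors $w_{v_iv}$ that appear in every inductive multiplication already contribute the full depth-$d$ product via the path of strongest sensitivity.

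The base case $t=0$ is immediate: an input node $X_i$ has $f_{\FA,X_i}(\vec x,\vec w,\vec b) = x_i$, so the difference is $|x_i - x_i'| \le \inorm{\vec x - \vec x'}$.

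For the inductive step, let $v$ be a node of depth $t \ge 1$, let $v_1,\ldots,v_k$ (with $k \le \Delta$) be its in-neighbours, and write $y_i \coloneqq f_{\FA,v_i}(\vec x,\vec w,\vec b)$ and $y_i' \coloneqq f_{\FA,v_i}(\vec x',\vec w,\vec b)$. Since $\vec b$ and $\vec w$ are the same on both sides, the bias $b_v$ cancels inside $\Fa_v$, and Lipschitz continuity of $\Fa_v$ with constant $\lambda$ gives
\[
  \big|f_{\FA,v}(\vec x,\vec w,\vec b) - f_{\FA,v}(\vec x',\vec w,\vec b)\big|
  \le \lambda \Big|\sum_{i=1}^{k} w_{v_iv}(y_i - y_i')\Big|
  \le \lambda \Delta \inorm{\vec w}\, \max_{i \in [k]} |y_i - y_i'|.
\]
Since each $v_i$ has depth at most $t-1$, the induction hypothesis bounds every $|y_i - y_i'|$ by $(\lambda\Delta)^{t-1} \inorm{\vec w}^{t-1} \inorm{\vec x - \vec x'}$, and the inductive claim follows by multiplying through.

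This is a standard telescoping Lipschitz argument, so there is no real obstacle: the only thing to be careful with is that $\vec b$ enters $f_{\FA,v}$ only additively and is fixed, so it does not contribute to the Lipschitz constant in $\vec x$, and that the bound $k \le \Delta$ together with the $\ell_\infty$-norm on $\vec w$ cleanly yields the factor $\Delta \inorm{\vec w}$ per layer.
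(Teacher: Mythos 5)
Your induction is, in substance, exactly the paper's proof of this lemma: the same node-wise claim $\big|f_{\FA,v}(\vec x,\vec w,\vec b)-f_{\FA,v}(\vec x',\vec w,\vec b)\big|\le(\lambda\Delta\inorm{\vec w})^t\inorm{\vec x-\vec x'}$ for nodes $v$ of depth $t$, the same base case at the input nodes, and the same inductive step using the Lipschitz constant $\lambda$ of $\Fa_v$ and the estimate $\sum_{i}|w_{v_iv}|\,|y_i-y_i'|\le\Delta\inorm{\vec w}\inorm{\vec y-\vec y'}$ (the bias cancelling because $\vec w,\vec b$ are fixed). So the core of your argument matches the paper's and is correct.

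The one point where you say more than the paper is the passage from the depth-$t$ bound to the exponent $d$ in the lemma, and there your added justification for the case $\inorm{\vec w}<1$ is not sound. If $\lambda\Delta\inorm{\vec w}<1$ and some output node has depth $t<d$, then $(\lambda\Delta\inorm{\vec w})^t>(\lambda\Delta\inorm{\vec w})^d$, and the appeal to ``the path of strongest sensitivity'' does not supply the missing factors: take $\lambda=\Delta=1$ with identity activations, one output joined to an input by a single edge of weight $\tfrac12$, and a second output at the end of a chain of length $d=3$; the first output changes by $\tfrac12|x-x'|$, which exceeds the stated bound $(\lambda\Delta)^3\inorm{\vec w}^3|x-x'|=\tfrac18|x-x'|$. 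So in that regime the literal statement cannot be salvaged by any argument; the inequality should be read with $\max\{1,\inorm{\vec w}\}^d$ (or restricted to output nodes of depth exactly $d$), which is all that is needed where the lemma is applied (a bound polynomial in $\wt(\FF)$). The paper's own proof has the same silent monotonicity gap—it proves the depth-$t$ bound and stops—so this is really a blemish of the statement rather than of your argument; just replace the hand-waving for $\inorm{\vec w}<1$ by the monotone formulation.
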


\begin{proof}
  Let $\vec x,\vec x'\in\Real^p$, $\vec b\in\Real^{V}$,
  and $\vec w\in\Real^E$.
    We shall prove by induction on $t$ that for all nodes $v\in V$ of
  depth $t$ we have
  \begin{equation}
    \label{eq:45}
    \inorm{f_{\FA,v}(\vec x,\vec w,\vec b)-f_{\FA,v}(\vec x',\vec w,\vec
      b)}\le(\lambda\Delta\inorm{\vec w})^t\inorm{\vec x-\vec x'}.
  \end{equation}
  Nodes of depth $t=0$ are input nodes, and we have
  \begin{equation*}
    \big|f_{\FA,X_i}(\vec x,\vec b,\vec
    w)-f_{\FA,X_i}(\vec x',\vec b,\vec
    w)\big|=|x_i-x_i'|\le\inorm{\vec x-\vec x'}.
  \end{equation*}
  For the inductive step, let $v\in V$ be a node of depth $t>0$, and
  let $v_1,\ldots,v_k$ be its in-neighbours. Let $b\coloneqq
  b_v,b'\coloneqq b'_v$ and $w_i\coloneqq w_{v_iv},w'_i\coloneqq
  w'_{v_iv}$ for $i\in[k]$. Moreover, let  $\vec
  y=(y_1,\ldots,y_k)$ and $\vec
  y'=(y_1',\ldots,y_k')$ with
  $y_i\coloneqq f_{\FA,v_i}(\vec x,\vec b,\vec w)$ and
  $y_i'\coloneqq f_{\FA,v_i}(\vec x',\vec b,\vec w)$. Then
  \begin{align*}
    \big|f_{\FA,v}(\vec x,\vec b,\vec
    w)-f_{\FA,v}(\vec x,\vec b',\vec
    w')\big|&=\Big|\Fa_v\Big(b+\sum_{i=1}^kw_iy_i\Big)-\Fa_v\Big(b+\sum_{i=1}^kw_iy'_i\Big)\Big|\\
            &\le \lambda\Big(\sum_{i=1}^kw_i|y_i-y_i'|\Big)\\
            &\le \lambda\Delta\inorm{\vec w}\inorm{\vec y-\vec y'}.
  \end{align*}
  Since by the induction hypothesis we have $\inorm{\vec y-\vec y'}\le
  (\lambda\Delta\inorm{\vec w})^{t-1}\inorm{\vec x-\vec x'}$, the
  assertion \eqref{eq:45} follows.
\end{proof}

\begin{lemma}\label{lem:ub2}
  Let $\nu\coloneqq (4\Delta\lambda\gamma)^d$, where $\gamma$ is the
  constant of Lemma~\ref{lem:ub1}. Then for
  all $\epsilon\in\Real$, $\vec x\in\Real^p$, $\vec b,\vec b'\in\Real^{V}$,
  and $\vec w,\vec w'\in\Real^E$ with
  \begin{equation}
    \label{eq:46}
    0\le \max\big\{\|\vec b-\vec
  b'\|_\infty,\|\vec w-\vec
  w'\|_\infty\big\}\le\epsilon\le 1
  \end{equation}
we have
  \[
    \big\|\FA(\vec x,\vec b,\vec
    w)-\FA(\vec x,\vec b',\vec
    w')\big\|_\infty\le
\nu\big(\inorm{\vec w}+1\big)^{d}\big(\inorm{\vec x}+\inorm{\vec b}+1\big)\epsilon.
  \]
\end{lemma}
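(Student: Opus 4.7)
My plan is to prove Lemma~\ref{lem:ub2} by induction on the depth $t$ of nodes $v\in V$, establishing the pointwise bound
\[
  \big|f_{\FA,v}(\vec x,\vec b,\vec w) - f_{\FA,v}(\vec x,\vec b',\vec w')\big|
  \;\le\; (4\Delta\lambda\gamma)^{t}\,(\inorm{\vec w}+1)^{t}\,(\inorm{\vec x}+\inorm{\vec b}+1)\,\epsilon
\]
for every $v$ of depth $t$. Once this is proved for all $t\le d$, specialising to $v=Y_1,\ldots,Y_q$ and taking the maximum immediately gives the asserted bound on $\inorm{\FA(\vec x,\vec b,\vec w)-\FA(\vec x,\vec b',\vec w')}$.

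The base case $t=0$ is trivial: input nodes compute $x_i$, which does not depend on $\vec b,\vec w$, so the difference is $0$. For the inductive step, fix $v$ of depth $t\ge 1$ with in-neighbours $v_1,\ldots,v_k$ ($k\le\Delta$), and write $y_i\coloneqq f_{\FA,v_i}(\vec x,\vec b,\vec w)$, $y_i'\coloneqq f_{\FA,v_i}(\vec x,\vec b',\vec w')$, $w_i\coloneqq w_{v_iv}$, $w_i'\coloneqq w'_{v_iv}$. Using Lipschitz continuity of $\Fa_v$ with constant $\lambda$ and the telescoping identity $w_iy_i-w_i'y_i'=w_i(y_i-y_i')+(w_i-w_i')y_i'$, together with $|b_v-b_v'|,|w_i-w_i'|\le\epsilon$, I obtain
\[
  \big|f_{\FA,v}(\vec x,\vec b,\vec w) - f_{\FA,v}(\vec x,\vec b',\vec w')\big|
  \;\le\;\lambda\epsilon \;+\; \lambda\Delta\inorm{\vec w}\max_i|y_i-y_i'| \;+\; \lambda\Delta\epsilon\max_i|y_i'|.
\]

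For the first maximum I apply the inductive hypothesis at depth $\le t-1$. For the second, I invoke Lemma~\ref{lem:ub1} on $(\vec x,\vec b',\vec w')$, using $\inorm{\vec w'}\le\inorm{\vec w}+\epsilon\le\inorm{\vec w}+1$ and $\inorm{\vec b'}\le\inorm{\vec b}+1$ (which hold by \eqref{eq:46}), so that
\[
  \max_i|y_i'| \;\le\; \gamma^{t-1}(\inorm{\vec w}+2)^{t-1}(\inorm{\vec x}+\inorm{\vec b}+2)
  \;\le\; 2^{t}\gamma^{t-1}(\inorm{\vec w}+1)^{t-1}(\inorm{\vec x}+\inorm{\vec b}+1).
\]
Substituting these two bounds and using $\lambda\epsilon\le\lambda(\inorm{\vec w}+1)^{t}(\inorm{\vec x}+\inorm{\vec b}+1)\epsilon$ collapses the three terms into a single expression of the desired form, provided the constant factor $4\Delta\lambda\gamma$ per layer absorbs the $\lambda\Delta$ from the recursion, the extra $2^{t}$ from the slack in $\inorm{\vec w'}$, and the standalone $\lambda\epsilon$ term; a short arithmetic verification shows this is exactly what $(4\Delta\lambda\gamma)^{t}$ allows.

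The proof is essentially a careful bookkeeping argument; the only conceptual ingredient is the telescoping identity that splits the weight perturbation from the propagated perturbation of the previous layer's output, and then invoking the already-established growth bound (Lemma~\ref{lem:ub1}) to control $|y_i'|$. The main place where one must be attentive is the constant chasing: one has to verify that the slack factor of $2$ in $(\inorm{\vec w}+2)\le 2(\inorm{\vec w}+1)$, accumulated $t$ times, is comfortably absorbed by the extra factor of $4$ (as opposed to $2$) in the base $4\Delta\lambda\gamma$ of the inductive bound. Since this is purely mechanical, I do not expect any genuine obstacle.
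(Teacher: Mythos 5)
Your proof is correct and follows essentially the same inductive scheme as the paper, with the same base-case inequality, the same use of Lemma~\ref{lem:ub1} to control the magnitude of intermediate node values, and the same final constant $(4\Delta\lambda\gamma)^t$ per depth level. The one technical deviation is in the telescoping of $w_i y_i - w_i' y_i'$: you use the two-term split $w_i(y_i-y_i') + (w_i-w_i')y_i'$, which leaves the \emph{perturbed} value $y_i'$ to be bounded, whereas the paper uses the three-term split $(y_i-y_i')w_i + (y_i'-y_i)(w_i-w_i') + y_i(w_i-w_i')$, which leaves only the \emph{unperturbed} $y_i$ plus an extra $\|\vec y - \vec y'\|_\infty \epsilon$ term. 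Your choice forces you to convert $\inorm{\vec w'}+1\le\inorm{\vec w}+2\le 2(\inorm{\vec w}+1)$ and $\inorm{\vec b'}+1\le\inorm{\vec b}+2$ when invoking Lemma~\ref{lem:ub1} on the perturbed parameters, which is the source of your $2^t$ slack factor; the paper's split avoids this conversion entirely by keeping $\vec y$ in terms of the original $(\vec b,\vec w)$. Your arithmetic does close — one can check that the three terms contribute at most $\tfrac{1}{8}+\tfrac{1}{8}+\tfrac{1}{4} = \tfrac{1}{2}$ of the target $(4\Delta\lambda\gamma)^t(\inorm{\vec w}+1)^t(\inorm{\vec x}+\inorm{\vec b}+1)\epsilon$, using $\Delta,\lambda\ge 1$ and $\gamma\ge 2$ — so the extra factor of $4$ in the base absorbs the $2^t$ slack comfortably. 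Both routes are valid; the paper's telescoping is marginally cleaner in that it keeps all growth bounds in terms of the original parameters, while yours is slightly more direct in the algebra at the cost of one extra conversion step.
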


\begin{proof}
  Let $\vec x\in\Real^p$ and
  $\epsilon\in[0,1]$, $\vec b,\vec b'\in\Real^{V}$,
  $\vec w,\vec w'\in\Real^E$ such that \eqref{eq:46} holds.
 
  We shall prove by induction on $t$ that for all nodes $v\in V$ of
  depth $t$ we have
  \begin{equation}
    \label{eq:47}
    \big|f_{\FA,v}(\vec x,\vec b,\vec
    w)-f_{\FA,v}(\vec x,\vec b',\vec
    w')\big|\le \big(4\Delta\lambda\gamma (\inorm{\vec w}+1)\big)^{t}\big(\inorm{\vec x}+\inorm{\vec b}+1\big)\epsilon.
  \end{equation}
  Applied to the output nodes $v=Y_i$ of depth $\le d$, this yields the assertion of
  the lemma.

  Nodes of depth $t=0$ are input nodes, and we have
  \begin{equation}
    \label{eq:48}
    \big|f_{\FA,X_i}(\vec x,\vec b,\vec
    w)-f_{\FA,X_i}(\vec x,\vec b',\vec
    w')\big|=|x_i-x_i|=0.
  \end{equation}
  For the inductive step, let $v\in V$ be a node of depth $t>0$, and
  let $v_1,\ldots,v_k$ be its in-neighbours. Let $b\coloneqq
  b_v,b'\coloneqq b'_v$ and $w_i\coloneqq w_{v_iv},w'_i\coloneqq
  w'_{v_iv}$ for $i\in[k]$. Moreover, let  $\vec
  y=(y_1,\ldots,y_k)$ and $\vec
  y'=(y_1',\ldots,y_k')$ with
  $y_i\coloneqq f_{\FA,v_i}(\vec x,\vec b,\vec w)$ and
  $y_i'\coloneqq f_{\FA,v_i}(\vec x,\vec b',\vec w')$. 
  
  \begin{techclaim}\label{claim:e:1}
    \[
      \big|f_{\FA,v}(\vec x,\vec b,\vec
      w)-f_{\FA,v}(\vec x,\vec b',\vec
      w')\big|\le
\Delta\lambda \inorm{\vec y-\vec y'}\inorm{\vec w}+\Delta\lambda\big(\inorm{\vec y}+\|\vec y-\vec
              y'\|_\infty+1\big)\epsilon
   \]
  
  \end{techclaim}
\begin{subproof}
  By the definition of $f_{\FA,v}$ and the Lipschitz continuity of the
  activation functions we have
  \begin{align*}
    \big|f_{\FA,v}(\vec x,\vec b,\vec
    w)-f_{\FA,v}(\vec x,\vec b',\vec
    w')\big|&=\Big|\Fa_v\Big(b+\sum_{i=1}^kw_iy_i\Big)-\Fa_v\Big(b'+\sum_{i=1}^kw'_iy'_i\Big)\Big|\\
            &\le \lambda\cdot\Big(|b-b'|+\sum_{i=1}^k|w_iy_i-w_i'y_i'|\Big)
  \end{align*}
  Observe that $|b-b'|\le\inorm{\vec b-\vec b'}\le\epsilon$ and 
  \begin{align*}
    y_iw_i-y_i'w_i'&=(y_i-y_i')w_i+y_i'(w_i-w_i')\\
                   &=(y_i-y_i')w_i+(y_i'-y_i)(w_i-w_i')+y_i(w_i-w_i')\\
    &\le\inorm{\vec y-\vec y'}\inorm{\vec w}+\epsilon\|\vec y-\vec
      y'\|_\infty+\epsilon\|\vec y\|_\infty.
  \end{align*}
  Hence
  \begin{align*}
    \big|f_{\FA,v}(\vec x,\vec b,\vec
    w)-f_{\FA,v}(\vec x,\vec b',\vec
    w')\big|&\le\lambda\Big(\epsilon+\Delta\big(\inorm{\vec y-\vec y'}\inorm{\vec w}+\|\vec y-\vec
              y'\|_\infty \epsilon+\|\vec y\|_\infty\epsilon\big)\Big)\\
    &\le\Delta\lambda \inorm{\vec y-\vec y'}\inorm{\vec w}+\Delta\lambda\big(\inorm{\vec y}+\|\vec y-\vec
              y'\|_\infty+1\big)\epsilon.
  \end{align*}
  This proves the claim.
  \uend
  \end{subproof}
  
  By the inductive hypothesis \eqref{eq:47}, we have
  \begin{equation}
    \label{eq:49}
    \inorm{\vec y-\vec y'}\le  \big(4\Delta\lambda\gamma (\inorm{\vec w}+1)\big)^{t-1}\big(\inorm{\vec x}+\inorm{\vec b}+1\big)\epsilon.
  \end{equation}
  Thus 
  \begin{equation}
    \label{eq:50}
    \Delta\lambda \inorm{\vec y-\vec y'}\inorm{\vec w}\le 4^{t-1}\big(\Delta\lambda\gamma (\inorm{\vec w}+1)\big)^{t}\big(\inorm{\vec x}+\inorm{\vec b}+1\big)\epsilon
  \end{equation}
  and
  \begin{align}
    \notag
    \Delta\lambda \|\vec y-\vec
    y'\|_\infty\epsilon&\le
    \Delta\lambda\big(4\Delta\lambda\gamma (\inorm{\vec w}+1)\big)^{t-1}\big(\inorm{\vec x}+\inorm{\vec b}+1\big)\epsilon^2\\
    \label{eq:51}
    &\le 4^{t-1}\big(\Delta\lambda\gamma (\inorm{\vec w}+1)\big)^{t}\big(\inorm{\vec x}+\inorm{\vec b}+1\big)\epsilon&\text{because }\epsilon\le 1.
  \end{align}
  By Lemma~\ref{lem:ub1} we have $\inorm{\vec y}\le  \gamma^{t-1}\big(\inorm{\vec w}+1\big)^{t-1}\big(\inorm{\vec x}+\inorm{\vec b}+1)$
  and thus
  \begin{align}
\notag
    \Delta\lambda\inorm{\vec y}\epsilon
    &\le
    \Delta\lambda\gamma^{t-1}\big(\inorm{\vec w}+1\big)^{t-1}\big(\inorm{\vec x}+\inorm{\vec b}+1\big)\epsilon\\
    \label{eq:52}
    &\le
    \big(\Delta\lambda\gamma(\inorm{\vec w}+1)\big)^{t}\big(\inorm{\vec x}+\inorm{\vec b}+1\big)\epsilon
  \end{align}
  Plugging \eqref{eq:50}, \eqref{eq:51}, and \eqref{eq:52} into
  Claim~\ref{claim:e:1}, we get
  \begin{align*}
    \big|f_{\FA,v}(\vec x,\vec b,\vec
    w)-f_{\FA,v}(\vec x,\vec b',\vec
    w')\big|\le\;& 4^{t-1}\big(\Delta\lambda\gamma( \inorm{\vec w}+1)\big)^{t}\big(\inorm{\vec x}+\inorm{\vec b}+1\big)\epsilon\\
    &+(\Delta\lambda\gamma( \inorm{\vec w}+1))^{t}\big(\inorm{\vec x}+\inorm{\vec b}+1)\epsilon\\
              &+4^{t-1}\big(\Delta\lambda\gamma( \inorm{\vec w}+1)\big)^{t}\big(\inorm{\vec x}+\inorm{\vec b}+1\big)\epsilon\\
                 &+\Delta\lambda\epsilon\\
    \le\;&4^{t}\big(\Delta\lambda\gamma( \inorm{\vec w}+1)\big)^{t}\big(\inorm{\vec x}+\inorm{\vec b}+1\big)\epsilon.
  \end{align*}
This proves \eqref{eq:47} and thus completes the proof of the lemma.
\end{proof}

\begin{lemma}\label{lem:ub3}
  Let $\beta\coloneqq(4\Delta\lambda\gamma)^d$, where $\gamma$ is the
  constant of Lemma~\ref{lem:ub1}. Let $\epsilon>0$. For all $v\in V$, let $\Fa_v':\Real\to\Real$ be an
  $\epsilon$-approximation of
  $\Fa_v$ that is Lipschitz continuous with constant $2\lambda$, and
  let $\FA'\coloneqq(V,E,(\Fa_v')_{v\in V})$. Then for
  all $\vec x\in\Real^p$, $\vec b\in\Real^{V}$,
  and $\vec w\in\Real^E$,
  \[
    \big\|\FA(\vec x,\vec b,\vec
    w)-\FA'(\vec x,\vec b,\vec
    w)\big\|_\infty\le
\beta \big(\inorm{\vec w}+1\big)^d\big(\inorm{\vec x}+\inorm{\vec b}+1\big)\epsilon.
  \]
\end{lemma}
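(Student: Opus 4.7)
The plan is to prove the bound by induction on the depth $t$ of a node $v\in V$, showing that
\[
\big|f_{\FA,v}(\vec x,\vec b,\vec w)-f_{\FA',v}(\vec x,\vec b,\vec w)\big|\le (4\Delta\lambda\gamma)^t(\inorm{\vec w}+1)^t(\inorm{\vec x}+\inorm{\vec b}+1)\epsilon.
\]
Applying this to the output nodes $Y_i$, which have depth at most $d$, will yield the claim. The base case $t=0$ is immediate since both $f_{\FA,X_i}$ and $f_{\FA',X_i}$ equal $x_i$, so the difference is $0$.

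For the inductive step at a node $v$ of depth $t>0$ with in-neighbours $v_1,\dots,v_k$, write $b\coloneqq b_v$, $w_i\coloneqq w_{v_iv}$, $y_i\coloneqq f_{\FA,v_i}(\vec x,\vec b,\vec w)$, $y_i'\coloneqq f_{\FA',v_i}(\vec x,\vec b,\vec w)$, and set $s\coloneqq b+\sum_i w_iy_i$, $s'\coloneqq b+\sum_iw_iy_i'$. Split the difference through the intermediate term $\Fa_v(s')$:
\[
\big|\Fa_v(s)-\Fa_v'(s')\big|\le \big|\Fa_v(s)-\Fa_v(s')\big|+\big|\Fa_v(s')-\Fa_v'(s')\big|.
\]
The first term is bounded by $\lambda|s-s'|\le \lambda\Delta\inorm{\vec w}\inorm{\vec y-\vec y'}$, and by the induction hypothesis $\inorm{\vec y-\vec y'}\le (4\Delta\lambda\gamma)^{t-1}(\inorm{\vec w}+1)^{t-1}(\inorm{\vec x}+\inorm{\vec b}+1)\epsilon$, giving a contribution of at most $4^{t-1}(\Delta\lambda\gamma)^t(\inorm{\vec w}+1)^t(\inorm{\vec x}+\inorm{\vec b}+1)\epsilon$. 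The second term is controlled by the $\epsilon$-approximation property: it is $\le \epsilon|\Fa_v(s')|+\epsilon$.

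The main technical step is bounding $|\Fa_v(s')|$. Using $|\Fa_v(x)|\le \lambda|x|+\mu$ and $|s'|\le \inorm{\vec b}+\Delta\inorm{\vec w}\inorm{\vec y'}$, together with $\inorm{\vec y'}\le \inorm{\vec y}+\inorm{\vec y-\vec y'}$, Lemma~\ref{lem:ub1} applied to $\vec y$, and the induction hypothesis for $\inorm{\vec y-\vec y'}$, we obtain a bound of the form $C\cdot\gamma^{t-1}(\inorm{\vec w}+1)^{t-1}(\inorm{\vec x}+\inorm{\vec b}+1)$ plus lower-order terms (here absorbing $\lambda\Delta$ into $\gamma$ via the definition $\gamma=2\Delta\lambda\max\{\lambda,\mu\}$). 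Multiplying by $\epsilon$ and adding the extra $\epsilon$ from the approximation inequality yields a contribution that also fits inside $4^{t-1}(\Delta\lambda\gamma)^t(\inorm{\vec w}+1)^t(\inorm{\vec x}+\inorm{\vec b}+1)\epsilon$.

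Summing the two contributions gives the desired factor $4^t(\Delta\lambda\gamma)^t=\beta_t$ at depth $t$, completing the induction. The main obstacle I expect is the arithmetic bookkeeping in the bound on $|\Fa_v(s')|$: because the approximation error is multiplicative (it scales with $|\Fa_v(s')|$ itself), one has to be careful that the factor of $4$ in the constant $\beta=(4\Delta\lambda\gamma)^d$ is large enough to absorb all three sources of error at each inductive step (the Lipschitz error in $s$ versus $s'$, the additive part of the approximation, and the multiplicative part scaled by $|\Fa_v(s')|$). A clean bookkeeping using the inequality $\lambda\Delta\le\gamma$ and the base-case slack should make this go through without the hypothesis $\epsilon\le 1$ that was needed in Lemma~\ref{lem:ub2}.
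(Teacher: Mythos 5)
Your inductive skeleton matches the paper's (induction on depth $t$, base case trivial, triangle inequality at each node), but you have chosen the wrong intermediate term, and this creates a genuine gap rather than merely fussy bookkeeping. You split as $\big|\Fa_v(s)-\Fa_v(s')\big|+\big|\Fa_v(s')-\Fa_v'(s')\big|$ with $s=b+\sum_i w_iy_i$, $s'=b+\sum_i w_iy_i'$, so the approximation error is evaluated at the \emph{primed} argument $s'$. Since the approximation is multiplicative, $\big|\Fa_v(s')-\Fa_v'(s')\big|\le\epsilon|\Fa_v(s')|+\epsilon$, and $|\Fa_v(s')|$ has no $\epsilon$-free bound: $s'$ is built from the primed values $y_i'=f_{\FA',v_i}(\cdot)$, and when you write $\inorm{\vec y'}\le\inorm{\vec y}+\inorm{\vec y-\vec y'}$ and apply the induction hypothesis to the second summand, that piece is already of order $\epsilon$, so multiplying by the outer $\epsilon$ genuinely produces an $\epsilon^2$ contribution. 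A bound linear in $\epsilon$ cannot absorb an $\epsilon^2$ term for all $\epsilon>0$, so your final claim — that the inequality $\lambda\Delta\le\gamma$ and some base-case slack remove the need for $\epsilon\le 1$ — is incorrect. With your split you are forced to reintroduce $\epsilon\le 1$, exactly the hypothesis you set out to avoid, while the lemma as stated has no such restriction.

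The paper's split is $\big|\Fa_v(s)-\Fa_v'(s)\big|+\big|\Fa_v'(s)-\Fa_v'(s')\big|$, which reverses the roles. The approximation error is now evaluated at the \emph{unprimed} argument $s$, and $|\Fa_v(s)|=|f_{\FA,v}(\vec x,\vec b,\vec w)|$ is bounded directly by Lemma~\ref{lem:ub1}, with no dependence on $\epsilon$ whatsoever; the second piece uses the Lipschitz constant $2\lambda$ of $\Fa_v'$ (part of the hypothesis) together with the induction hypothesis, contributing a single power of $\epsilon$. This asymmetry — approximate at $s$, use Lipschitz continuity of $\Fa_v'$ for the shift from $s$ to $s'$ — is precisely what makes the bound hold for all $\epsilon>0$. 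To repair your proof, replace the intermediate term $\Fa_v(s')$ with $\Fa_v'(s)$ and the rest goes through cleanly.
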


\begin{proof}
  We shall prove by induction on $t$ that for all nodes $v\in V$ of
  depth $t$ we have
  \begin{equation}
    \label{eq:53}
    \big|f_{\FA,v}(\vec x,\vec b,\vec
    w)-f_{\FA',v}(\vec x,\vec b,\vec
    w)\big|\le (4\Delta\lambda\gamma)^t\big(\inorm{\vec w}+1\big)^t\big(\inorm{\vec x}+\inorm{\vec b}+1\big)\epsilon. 
  \end{equation}
 This yields the assertion of
  the lemma.

  Nodes of depth $t=0$ are input nodes, and we have
  \begin{equation*}
    \big|f_{\FA,X_i}(\vec x,\vec b,\vec
    w)-f_{\FA',X_i}(\vec x,\vec b,\vec
    w)\big|=|x_i-x_i|=0.
  \end{equation*}
  For the inductive step, let $v\in V$ be a node of depth $t>0$, and
  let $v_1,\ldots,v_k$ be its in-neighbours. Let $b\coloneqq
  b_v$ and $w_i\coloneqq w_{v_iv}$ for $i\in[k]$. Moreover, let  $\vec
  y=(y_1,\ldots,y_k)$ and $\vec
  y'=(y_1',\ldots,y_k')$ with
  $y_i\coloneqq f_{\FA,v_i}(\vec x,\vec b,\vec w)$ and
  $y_i'\coloneqq f_{\FA',v_i}(\vec x,\vec b,\vec w)$.

  \begin{techclaim}\label{claim:f:1}
    \[
      \big|f_{\FA,v}(\vec x,\vec b,\vec
    w)-f_{\FA',v}(\vec x,\vec b,\vec
    w)\big|\le 2
     \gamma^t\big(\|\vec w\|+1\big)^t\big(\|\vec x\|_\infty+\inorm{\vec b}+1)\epsilon
    +2\Delta\lambda\inorm{\vec w}\inorm{\vec y-\vec y'}
   \]
  
  \end{techclaim}
\begin{subproof}
   We have
   \begin{align}
     \notag
    \big|f_{\FA,v}(\vec x,\vec b,\vec
    w)-f_{\FA',v}(\vec x,\vec b,\vec
     w)\big|&=\Big|\Fa_v\Big(b+\sum_{i=1}^kw_iy_i\Big)-\Fa'_v\Big(b+\sum_{i=1}^kw_iy'_i\Big)\Big|\\
     \label{eq:54}
            &\le\Big|\Fa_v\Big(b+\sum_{i=1}^kw_iy_i\Big)-\Fa'_v\Big(b+\sum_{i=1}^kw_iy_i\Big)\Big|\\
     \notag
    &\hspace{\widthof{=}}+\Big|\Fa_v'\Big(b+\sum_{i=1}^kw_iy_i\Big)-\Fa'_v\Big(b+\sum_{i=1}^kw_iy'_i\Big)\Big|
  \end{align}
  By Lemma~\ref{lem:ub1} we have
   \begin{equation}
     \label{eq:55}
     \Big|\Fa_v\Big(b+\sum_{i=1}^kw_iy_i\Big)\Big|=\big|f_{\FA,v}(\vec x,\vec b,\vec
    w)\big|\le \gamma^t\big(\|\vec w\|+1\big)^t\big(\|\vec x\|_\infty+\inorm{\vec b}+1).
   \end{equation}
   Since $\Fa'_v$ $\epsilon$-approximates $\Fa_v$, this implies
   \begin{align}
     \notag
     \Big|\Fa_v\Big(b+\sum_{i=1}^kw_iy_i\Big)-\Fa'_v\Big(b+\sum_{i=1}^kw_iy_i\Big)\Big|&\le\epsilon
     \gamma^t\big(\|\vec w\|+1\big)^t\big(\|\vec x\|_\infty+\inorm{\vec b}+1)+\epsilon\\
     \label{eq:56}
     &\le 2
     \gamma^t\big(\|\vec w\|+1\big)^t\big(\|\vec x\|_\infty+\inorm{\vec b}+1)\epsilon.
   \end{align}
   Furthermore, by the Lipschitz continuity of $\Fa'_v$ we have
   \begin{equation}
      \label{eq:57}
     \Big|\Fa_v'\Big(b+\sum_{i=1}^kw_iy_i\Big)-\Fa'_v\Big(b+\sum_{i=1}^kw_iy'_i\Big)\Big|\le2\lambda\sum_{i=1}^k|w_i|\cdot|y_i-y_i'|
    \le 2\Delta\lambda\inorm{\vec w}\inorm{\vec y-\vec y'}.
   \end{equation}
  The assertion of the claim from \eqref{eq:54}, \eqref{eq:56}, and \eqref{eq:57}.
  \uend
  \end{subproof}

  By the inductive hypothesis \eqref{eq:53}, we have
  \begin{equation*}
    \inorm{\vec y-\vec y'}\le
 (4\Delta\lambda\gamma)^{t-1}\big(\|\vec
 w\|_\infty+1\big)^{t-1}\big(\|\vec x\|_\infty+\inorm{\vec b}+1\big)\epsilon. 
  \end{equation*}
  Thus by the claim,
  \begin{align*}
    \big|f_{\FA,v}(\vec x,\vec b,\vec
    w)-f_{\FA',v}(\vec x,\vec b,\vec
    w)\big|
    &\le 2
     \gamma^t\big(\|\vec w\|+1\big)^t\big(\|\vec x\|_\infty+\inorm{\vec b}+1)\epsilon
      +2\Delta\lambda\inorm{\vec w}\inorm{\vec y-\vec y'}\\
    &\le 2\gamma^t\big(\|\vec w\|+1\big)^t\big(\|\vec
      x\|_\infty+\|\vec b\|_\infty+1)\epsilon\\
    &\hspace{\widthof{$\le$}}
      + 2\cdot4^{t-1}(\Delta\lambda\gamma)^t \big(\inorm{\vec w}+1\big)^{t}\big(\inorm{\vec x}+\inorm{\vec b}+1\big)\epsilon\\
    &\le (4\gamma\Delta\lambda)^t \big(\inorm{\vec w}+1\big)^{t}\big(\inorm{\vec x}+\inorm{\vec b}+1\big)\epsilon.
  \end{align*}
  This proves \eqref{eq:53} and hence the lemma.
\end{proof}

\begin{lemma}\label{lem:lip-app}
  Let $f:\Real\to\Real$ be Lipschitz continuous with constant
  $\lambda>0$ such that $f$ is rpl approximable. Then for every
  $\epsilon>0$ there is a rational piecewise linear function $L$ of
  bitsize polynomial in $\epsilon^{-1}$ such that $L$ is an
  $\epsilon$-approximation of $f$ and $L$ is Lipschitz continuous with
  constant $(1+\epsilon)\lambda$.
\end{lemma}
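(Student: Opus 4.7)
Proof proposal. The plan is to construct $L$ by linearly interpolating a rational piecewise linear approximation of $f$ at an adaptively spaced grid of dyadic rational sample points. The Lipschitz hypothesis on $f$ lets us turn a potentially ``wild'' rpl approximation---whose slopes may be arbitrary---into one whose slopes are controlled by $\lambda$.

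First I would fix $\delta \coloneqq c\epsilon^2/\lambda$ for a small absolute constant $c$, and use rpl approximability to obtain a continuous rational piecewise linear $\delta$-approximation $L^*$ of $f$ with $\bsize(L^*)$ polynomial in $\delta^{-1}$, hence polynomial in $\epsilon^{-1}$. Let $t_1 < \cdots < t_n$ be the thresholds and $a_0,\ldots,a_n$ the slopes of $L^*$. A preliminary observation is that the outer slopes already satisfy $|a_0|, |a_n| \le \lambda(1+\delta) \le \lambda(1+\epsilon)$: if $|a_n|$ exceeded $\lambda$ then for large $x$ one would have $|L^*(x)|/|x| \to |a_n|$ while $|f(x)|/|x| \le \lambda$ by Lipschitzness, making the asymptotic relative error $|L^*(x)-f(x)|/|f(x)|$ at least $(|a_n|-\lambda)/\lambda$ and contradicting the $\delta$-approximation unless $|a_n| \le \lambda(1+\delta)$.

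I then set $L \coloneqq L^*$ on the tails $(-\infty, t_1]$ and $[t_n,\infty)$, inheriting both the $\delta$-approximation and the good outer slopes there. On the interior $[t_1, t_n]$, I place dyadic rational sample points $x_0 = t_1 < x_1 < \cdots < x_N = t_n$ with adaptive spacing $h_i = x_{i+1} - x_i$ of order $\epsilon(1+|x_i|)/\lambda$: essentially uniform spacing $\Theta(\epsilon/\lambda)$ while $|x_i|$ stays bounded (say, at most $|f(0)|/\lambda + 1$), and geometric spacing with ratio close to $1+\epsilon$ for larger $|x_i|$. Since $|t_1|, |t_n| \le 2^{\bsize(L^*)}$, the total number of samples is $O(\log(|t_1|+|t_n|)/\epsilon)$, which is polynomial in $\epsilon^{-1}$. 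At each sample I set $y_i \coloneqq L^*(x_i)$, a dyadic rational of polynomial bitsize because $L^*$ has dyadic parameters and $x_i$ is dyadic, and define $L$ on $[x_i, x_{i+1}]$ as the linear interpolant between $(x_i, y_i)$ and $(x_{i+1}, y_{i+1})$. Continuity at the joins $t_1, t_n$ is automatic since $y_0 = L^*(t_1)$ and $y_N = L^*(t_n)$.

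Three verifications remain, all routine calculations. First, the slope of $L$ on $[x_i, x_{i+1}]$ is at most $\lambda + 2\delta(|f(x_i)|+1)/h_i$, by combining the Lipschitz bound on $f$ with the $\delta$-approximation at $x_i$ and $x_{i+1}$; the adaptive choice of $h_i$ keeps the correction term below $\epsilon\lambda$. Second, the approximation error on $[x_i, x_{i+1}]$ is bounded by $2\lambda h_i + 3\delta(|f(x)|+\lambda h_i + 1)$, which stays within $\epsilon|f(x)| + \epsilon$ for the chosen parameters. The main obstacle I expect is that the thresholds of $L^*$ may have magnitude exponential in $\bsize(L^*)$, which would naively require exponentially many samples; the adaptive (essentially logarithmic for large $|x|$) spacing is the key trick to resolve this, exploiting that the multiplicative component $\epsilon|f(x)|$ of the approximation tolerance grows with $|x|$, so fewer samples per unit length are needed where $|f|$ is large.
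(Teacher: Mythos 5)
You are right to flag the issue that the paper's one-paragraph sketch glosses over: polynomial bitsize of $L^*$ only bounds $\log|t_i|$, so uniform $\Theta(\epsilon/\lambda)$ subdivision of $[t_1,t_n]$ can produce exponentially many pieces. But your adaptive spacing has the wrong shape, and the ``routine'' error verification conceals a hidden assumption that breaks. The interpolation error on $[x_i,x_{i+1}]$ is of order $\lambda h_i$, and with $h_i\approx\epsilon(1+|x_i|)/\lambda$ this is $\approx\epsilon(1+|x_i|)$, while the tolerance at $x$ is $\epsilon(|f(x)|+1)$; your bound that the error stays within tolerance therefore tacitly assumes $|f(x)|\gtrsim |x|$, which fails in general since $f$ Lipschitz gives an upper bound on $|f(x)|$ in terms of $|x|$ but no lower bound. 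Concrete counterexample: let $f$ be the piecewise linear bump of height $1$ supported on $[N,N+3M]$ with slopes $\pm 1/M$, so $\lambda=1/M$ and $f\le 1$ everywhere, and take $L^*=f$ (exact, constant bitsize). With $N\ge 3/\epsilon$, the spacing at $x_0=t_1=N$ is $h_0\approx\epsilon M(1+N)\ge 3M$, so the only sample points are $t_1$ and $t_n$, the interpolant is identically $0$ on $[N,N+3M]$, and the error at the top of the bump is $1$, far exceeding the tolerance $2\epsilon$. Even for smaller $N>3$ the interpolation error $\Theta(\epsilon N)$ dwarfs the tolerance $\Theta(\epsilon)$.

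To salvage a fresh-grid argument you would need spacing proportional to $|f(x_i)|+1$ (read off from $L^*(x_i)$), not to $|x_i|+1$; but then bounding the number of samples by $\int_{t_1}^{t_n}\lambda\,dx/(\epsilon(|f(x)|+1))$ is not obviously polynomial when $f$ is bounded while $t_n-t_1$ is exponential. A more reliable strategy is to repair $L^*$ piece by piece instead of introducing a grid decoupled from $L^*$: any piece of $L^*$ whose slope exceeds $(1+\epsilon)\lambda$ must have length $O\big(\delta(\max|f|+1)/(\epsilon\lambda)\big)$, since on a longer piece $L^*$ would separate from the Lipschitz-$\lambda$ function $f$ by more than the $\delta$-error budget permits; since $L^*$ has only polynomially many pieces and the others can be retained as is, the repair is local and the piece count stays polynomial.
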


\begin{proof}
  Let $0<\epsilon\le 1$ and $\epsilon'\coloneqq\frac{\epsilon}{10}$. Let $L'$ be a piecewise linear
  $\epsilon'$-approximation of $f$ of bitsize polynomial in
  $\epsilon^{-1}$. Let $t_1<\ldots<t_{n}$ be the thresholds of $L'$, and
  let $a_0,\ldots,a_{n}$ and $b_0,\ldots,b_{n}$ be its slopes
  and constants. Then $|a_0|\le(1+\epsilon)\lambda$; otherwise the
  slope of the linear function $a_0x+b_0$ would be too large (in
  absolute value) to approximate the function $f$ whose slope is
  bounded by $\lambda$. For the same reason,
  $|a_n'|\le(1+\epsilon)\lambda$.

  Let $s\coloneqq t_1$ and $s'\coloneqq t_n$. We subdivide the
  interval $[s,s']$ into sufficiently small subintervals (of length at
  most $\epsilon'\cdot\lambda^{-1}$). Within each such interval, $f$
  does not change much, because it is Lipschitz continuous, and we can
  approximate it sufficiently closely by a linear function with
  parameters whose bitsize is polynomially bounded in
  $\epsilon^{-1}$. The slope of theses linear functions will not be
  significantly larger than $\lambda$, because the slope of $f$ is
  bounded by $\lambda$. We can combine all these linear pieces with
  the linear functions $a_0x+b_0$ for the interval $(-\infty,s]$ and
  $a_nx+b_n$ for the interval $[s',\infty)$ to obtain the desired
  piecewise linear approximation of $f$.
\end{proof}

Now we are ready to prove the main result of this subsection.

\begin{lemma}\label{lem:ub4}
  For every $d\in\PNat$ there is a polynomial $\pi'(X,Y)$ such that the
  following holds.
  Let $\FF=\big(V,E,(\Fa_v)_{v\in V},\vec w,\vec b\big)$ be an
  rpl-approximable FNN
  architecture of depth $d$. Let
  $\epsilon>0$. Then there exists a rational piecewise-linear FNN
  $\FF'=\big(V,E,(\Fa'_v)_{v\in V},\vec w',\vec b'\big)$ of size at most $\pi'\big(\epsilon^{-1},\wt(\FF)\big)$
  such that for all $v\in V$ it holds that $\lambda(\Fa'_v)\le2\lambda(\Fa_v)$ and for all $\vec x\in\Real^p$ it holds that
  \[
    \inorm{\FF(\vec x)-\FF'(\vec x)}\le \big(\inorm{\vec
      x}+1\big)\epsilon.
  \]
\end{lemma}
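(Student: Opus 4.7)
The plan is to build $\FF'$ from $\FF$ by separately approximating (i) each activation function $\Fa_v$ by a rational piecewise linear function $\Fa_v'$, and (ii) each weight $w_e$ and bias $b_v$ by a dyadic rational $w_e'$, $b_v'$. The skeleton $(V,E)$ stays the same, so the size of $\FF'$ is controlled by the bitsize of the new activations and of the new rational parameters. I will then bound the total error by inserting an intermediate FNN $\FF''=(V,E,(\Fa_v')_{v\in V},\vec w,\vec b)$ and applying Lemma~\ref{lem:ub3} to pass from $\FF$ to $\FF''$ and Lemma~\ref{lem:ub2} to pass from $\FF''$ to $\FF'$.

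Concretely, let $\Delta,\lambda,\mu$ be as in the setup of Section~5.1; all three are bounded by a polynomial in $\wt(\FF)$, and so are the constants $\gamma,\beta,\nu$ appearing in Lemmas~\ref{lem:ub1}--\ref{lem:ub3}. I will choose an auxiliary accuracy
\[
\epsilon'\coloneqq \frac{\epsilon}{C\cdot(\wt(\FF)+1)^{d+1}}
\]
for a sufficiently large absolute constant $C=C(d)$ so that $(\beta+\nu)\cdot(\inorm{\vec w}+1)^d\cdot(\inorm{\vec b}+1)\cdot\epsilon'\le \epsilon$, and also $\epsilon'\le 1$. Applying Lemma~\ref{lem:lip-app} to each $\Fa_v$ with tolerance $\epsilon'$, I obtain a rational piecewise linear $\Fa_v'$ that $\epsilon'$-approximates $\Fa_v$, has Lipschitz constant at most $(1+\epsilon')\lambda(\Fa_v)\le 2\lambda(\Fa_v)$, and has bitsize polynomial in $(\epsilon')^{-1}$, hence polynomial in $\epsilon^{-1}$ and $\wt(\FF)$. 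For the weights and biases I round each $w_e$ and $b_v$ to a dyadic rational $w_e'$, $b_v'$ within $\epsilon'$ in absolute value; these rationals have bitsize $O(\log(\wt(\FF)/\epsilon'))$.

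To bound the error, for any $\vec x\in\Real^p$ write
\[
  \inorm{\FF(\vec x)-\FF'(\vec x)}\le \inorm{\FA(\vec x,\vec w,\vec b)-\FA''(\vec x,\vec w,\vec b)}+\inorm{\FA''(\vec x,\vec w,\vec b)-\FA''(\vec x,\vec w',\vec b')},
\]
where $\FA,\FA''$ are the architectures of $\FF$ and $\FF''$ respectively (note $\FF'$ arises from $\FA''$ by installing $\vec w',\vec b'$). Lemma~\ref{lem:ub3} bounds the first summand by $\beta(\inorm{\vec w}+1)^d(\inorm{\vec x}+\inorm{\vec b}+1)\epsilon'$. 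Lemma~\ref{lem:ub2}, applied to the architecture $\FA''$ (whose activations have Lipschitz constant at most $2\lambda$, so one just doubles the constants defining $\nu$), bounds the second summand by $\nu(\inorm{\vec w}+1)^d(\inorm{\vec x}+\inorm{\vec b}+1)\epsilon'$. Adding and using the choice of $\epsilon'$ yields $\inorm{\FF(\vec x)-\FF'(\vec x)}\le(\inorm{\vec x}+1)\epsilon$, as required.

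The only non-routine step is checking that the sizes work out: the skeleton is unchanged, so $|V|+|E|$ is already part of $\wt(\FF)$; the new weights and biases have bitsize logarithmic in $(\epsilon')^{-1}$; and the activations have bitsize polynomial in $(\epsilon')^{-1}$ by Lemma~\ref{lem:lip-app}. Since $(\epsilon')^{-1}$ is itself polynomial in $\epsilon^{-1}$ and $\wt(\FF)$, composing polynomials gives a single polynomial $\pi'(X,Y)$ depending only on $d$ that bounds $\size(\FF')$. The main bookkeeping obstacle is just verifying that the exponent $d$ in the error bounds does not force super-polynomial growth; but because $d$ is fixed, $(\wt(\FF)+1)^{d+1}$ is polynomial in $\wt(\FF)$, and the construction goes through.
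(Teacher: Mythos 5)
Your proposal is correct and is essentially the paper's own argument: both decompose the error via an intermediate FNN and use Lemma~\ref{lem:ub3} for the activation perturbation and Lemma~\ref{lem:ub2} for the weight/bias perturbation, then observe that all the constants involved ($\Delta,\lambda,\mu,\gamma,\beta,\nu$) are polynomially bounded in $\wt(\FF)$. The only cosmetic difference is the order of the two steps: the paper first rounds $\vec w,\vec b$ (so the intermediate object is $\FA(\cdot,\vec b',\vec w')$) and then replaces activations, whereas you first replace activations (intermediate object $\FF''=(V,E,(\Fa_v')_v,\vec w,\vec b)$) and then round the weights by applying Lemma~\ref{lem:ub2} to the modified architecture. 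Either order works; the price you pay is that your second step uses the architecture $\FA''$ whose Lipschitz and bias constants are at most $2\lambda$ and $2\mu+1$, so the constant $\nu$ of Lemma~\ref{lem:ub2} is replaced by a somewhat larger $\nu''$, but you correctly note this is still polynomial in $\wt(\FF)$; symmetrically, in the paper's order the constant $\alpha'$ must be stated in terms of $\inorm{\vec w'},\inorm{\vec b'}$ rather than $\inorm{\vec w},\inorm{\vec b}$. Both bookkeeping variants come to the same thing.
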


Note that $\FF'$ has the same skeleton as $\FF$.

\begin{proof}
  Without loss of generality we assume $\epsilon\le 1$.
  Let $\FA\coloneqq\big(V,E,(\Fa_v)_{v\in
    V}\big)$. Define the parameters $\Delta,\lambda,\mu$ with
  respect to $\FA$ as before. Note that
  $\Delta,\lambda,\mu\le\wt(\FF)$.
  Choose the constants $\gamma$ according to Lemma~\ref{lem:ub1}, $\nu$ according to
  Lemma~\ref{lem:ub2}, and $\beta$ according to
  Lemma~\ref{lem:ub3} and note that for fixed $d$ they depend
  polynomially on $\Delta,\lambda,\mu$ and hence on $\wt(\FF)$.

  Let
  $\alpha\coloneqq 2\nu\big(\inorm{\vec w}+1\big)^d \big(\inorm{\vec
    b}+1\big)$.  Let $\vec w'\in\DRat^E$, $\vec b'\in\DRat^V$ such
  that $\|\vec w-\vec w'\|_\infty\le\frac{\epsilon}{\alpha}$ and
  $\|\vec b-\vec b'\|_\infty\le\frac{\epsilon}{\alpha}$. Clearly, we
  can choose such $\vec w'=(w_e')_{e\in E}$ and
  $\vec b'=(b'_v)_{v\in V}$ such that all their entries have bitsize
  polynomial in $\frac{\alpha}{\epsilon}$, which is polynomial in
  $\epsilon^{-1}$ and in $\wt(\FF)$.  Then by Lemma~\ref{lem:ub2},
  for all $\vec x\in\Real^p$ we have
  \[
    \inorm{\FA(\vec x,\vec b,\vec
    w)-\FA(\vec x,\vec b',\vec
    w')}\le
    \nu \big(\inorm{\vec
    w}+1\big)^d\big(\inorm{\vec x}+\inorm{\vec
    b}+1\big)\frac{\epsilon}{\alpha}\le\big(\inorm{\vec
    x}+1\big)\frac{\epsilon}{2},
  \]
  Let $\alpha'\coloneqq 2\beta \big(\inorm{\vec w'}+1\big)^d
  \big(\inorm{\vec b'}+1\big)$.
  For every $v\in V$, we let $\Fa'_v$ be a rational piecwise-linear
  function of bitsize polynomial
  in in $\epsilon^{-1}$ that is an
  $\frac{\epsilon}{\alpha'}$-approximation of $\Fa_v$ and Lipschitz
  continuous with constant $2\lambda$. Such an $\Fa'_v$ exists by
  Lemma~\ref{lem:lip-app}, because $\Fa_v$
  is rpl-approximable and Lipschitz continuous with constant $\lambda$. Let $\FA'\coloneqq\big(V,E,(\Fa'_v)_{v\in
    V}\big)$. By Lemma~\ref{lem:ub3}, for all $\vec x\in\Real^p$  we have
  \[
    \inorm{\FA(\vec x,\vec b',\vec
    w')-\FA'(\vec x,\vec b',\vec
    w')}\le
    \beta \big(\inorm{\vec
    w'}+1\big)^d\big(\inorm{\vec x}+\inorm{\vec
    b'}+1\big)\frac{\epsilon}{\alpha'}\le
  \big(\inorm{\vec x}+1\big)\frac{\epsilon}{2}.
\]
Overall,
  \begin{align*}
    \inorm{\FF(\vec x)-\FF'(\vec x)}
    &= \inorm{\FA(\vec x,\vec b,\vec w)-\FA'(\vec x,\vec b',\vec w')}\\
    &\le \inorm{\FA(\vec x,\vec b,\vec
    w)-\FA(\vec x,\vec b',\vec
    w')}+\inorm{\FA(\vec x,\vec b',\vec
    w')-\FA'(\vec x,\vec b',\vec
      w')}\\
    &\le \big(\inorm{\vec x}+1\big)\epsilon.
  \end{align*}
    
\end{proof}

\subsection{Bounds and Approximations for GNNs}

We start with a more explicit version of Lemmas \ref{lem:gnnbound1} and~\ref{lem:gnnbound2},
the growth bounds for GNN layers. Recall that the depth of a GNN layer
is the maximum of the depths of the FNNs for the combination and the
message function. 

\begin{lemma}\label{lem:gnnbound1a}
    For every $d\in\PNat$ there is a polynomial $\pi(X)$ such that the following
  holds. Let $\FL$ be a GNN layer of depth $d$, and let $p$
  be the input dimension of $\FL$. Then for all
  graphs $G$ and all signals $\Cx\in\CS_p(G)$
  we have
  \begin{align}
    \label{eq:58}
    \inorm{\tilde\FL(G,\Cx)}\le \pi\big(\wt(\FL)\big)\big(\inorm{\Cx}+1\big)|G|.
  \end{align}
\end{lemma}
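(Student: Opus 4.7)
The plan is to apply the explicit growth bound from Lemma~\ref{lem:ub1} to the FNNs $\FF_\msg$ and $\FF_\comb$ computing the message and combination functions of $\FL$, trace through the aggregation step, and observe that each factor is polynomially bounded in $\wt(\FL)$ for fixed depth $d$.

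First I would fix a GNN layer $\FL = (\msg, \agg, \comb)$ of depth $d$ whose message and combination functions are computed by FNNs $\FF_\msg$ and $\FF_\comb$, each of depth at most $d$. Write each FNN as $(V_*, E_*, (\Fa_v)_{v\in V_*}, \vec w_*, \vec b_*)$ for $* \in \{\msg, \comb\}$. For fixed $d$, the constant $\gamma_* = 2\Delta_*\lambda_*\max\{\lambda_*, \mu_*\}$ from Lemma~\ref{lem:ub1}, as well as $\inorm{\vec w_*}$ and $\inorm{\vec b_*}$, are all bounded by $\wt(\FL)$, since $\Delta_*, \lambda_*, \mu_* \le \wt(\FL)$ by definition of weight. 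Thus the bound of Lemma~\ref{lem:ub1} applied to the output nodes yields a polynomial $\pi_*(X)$ such that
\[
\inorm{\FF_*(\vec x)} \le \pi_*(\wt(\FL))(\inorm{\vec x} + 1)
\]
for all inputs $\vec x$.

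Next, fix a graph $G$ and $\Cx \in \CS_p(G)$. Applying the above to $\FF_\msg$ with input $(\Cx(v), \Cx(w))$, whose $\ell_\infty$-norm is at most $\inorm{\Cx}$, gives $\inorm{\msg(\Cx(v), \Cx(w))} \le \pi_\msg(\wt(\FL))(\inorm{\Cx}+1)$ for every neighbour $w$ of $v$. For the aggregation step, let
\[
\Cz(v) \coloneqq \agg\Big(\biglmulti \msg(\Cx(v), \Cx(w)) \bigmid w \in N_G(v) \bigrmulti\Big).
\]
Regardless of whether $\agg$ is $\SUM$, $\MEAN$, or $\MAX$, each entry of $\Cz(v)$ is a sum, mean, or maximum of at most $\deg_G(v) \le |G|$ numbers each bounded in absolute value by $\pi_\msg(\wt(\FL))(\inorm{\Cx}+1)$, so
\[
\inorm{\Cz(v)} \le \pi_\msg(\wt(\FL))(\inorm{\Cx}+1)|G|.
\]

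Finally, applying the bound for $\FF_\comb$ to the input $(\Cx(v), \Cz(v))$, whose $\ell_\infty$-norm is at most $\pi_\msg(\wt(\FL))(\inorm{\Cx}+1)|G|$ (using $|G| \ge 1$), yields
\[
\inorm{\tilde\FL(G,\Cx)(v)} \le \pi_\comb(\wt(\FL))\Big(\pi_\msg(\wt(\FL))(\inorm{\Cx}+1)|G| + 1\Big) \le \pi(\wt(\FL))(\inorm{\Cx}+1)|G|
\]
for a suitable polynomial $\pi(X)$ depending only on $d$. Since this bound holds for every $v \in V(G)$, it holds for $\inorm{\tilde\FL(G,\Cx)}$ as well. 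The only subtle point, and the main place to be careful, is to check that the constants $\gamma_*, \nu_*, \beta_*$ arising from the FNN analysis depend only polynomially on $\wt(\FL)$ when $d$ is fixed; this follows directly from the definitions of these constants in Lemma~\ref{lem:ub1} and the definition of $\wt(\FF)$.
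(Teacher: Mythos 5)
Your proof is correct and takes essentially the same route as the paper: both reduce to Lemma~\ref{lem:ub1} to show that the growth constants of the FNNs for $\msg$ and $\comb$ are polynomial in $\wt(\FL)$, and then trace through the aggregation step exactly as in the proof of Lemma~\ref{lem:gnnbound1}. The paper simply compresses this by citing the proof of Lemma~\ref{lem:gnnbound1} rather than re-deriving the intermediate bound on $\Cz(v)$ inline.
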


\begin{proof}
  The proof of Lemma~\ref{lem:gnnbound1} yields
  \[
    \inorm{\tilde\FL(G,\Cx)(v)}\le
    2\gamma_\msg\gamma_\comb\big(\inorm{\Cx}+1\big)|G|,
  \]
  where $\gamma_\msg$,
  $\gamma_\comb$ are growth bounds for the message and
  combination functions of $\FL$. It follows from Lemma~\ref{lem:ub1}
  that $\gamma_\msg$,
  $\gamma_\comb$ can be chosen polynomial in the weight of $\FL$.
\end{proof}

\begin{lemma}\label{lem:gnnbound2a}
  For every $d\in\PNat$ there is a polynomial $\pi(X)$ such that the following
  holds. Let $\FL$ be a GNN layer of depth $d$, and let $p$
  be the input dimension of $\FL$. Then for all
  graphs $G$ and all signals $\Cx,\Cx'\in\CS_p(G)$
  we have
  \begin{align}
    \label{eq:59}
    \inorm{\tilde\FL(G,\Cx)-\tilde\FL(G,\Cx')}\le \pi\big(\wt(\FL)\big)\inorm{\Cx-\Cx'}|G|.
  \end{align}
\end{lemma}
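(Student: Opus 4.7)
The plan is to mirror the proof of Lemma~\ref{lem:gnnbound1a}: start from the abstract Lipschitz bound already established in Lemma~\ref{lem:gnnbound2}, and then replace the abstract Lipschitz constant $\lambda(\FL)=\lambda_{\msg}\lambda_{\comb}$ by an explicit polynomial bound in $\wt(\FL)$ using the structural bound of Lemma~\ref{lem:ub1a} for FNN architectures.

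More concretely, first I would fix $\FL=(\msg,\agg,\comb)$ with $\msg$ and $\comb$ computed by FNNs $\FF_\msg$ and $\FF_\comb$ of depth at most $d$. By Lemma~\ref{lem:gnnbound2} (specifically the global bound \eqref{eq:19}) we already know
\[
\inorm{\tilde\FL(G,\Cx)-\tilde\FL(G,\Cx')}\le \lambda(\FL)\inorm{\Cx-\Cx'}|G|,
\]
with $\lambda(\FL)=\lambda(\FF_\msg)\lambda(\FF_\comb)$. So the only work is to show $\lambda(\FF_\msg)$ and $\lambda(\FF_\comb)$ can each be bounded by a polynomial $\pi_0(\wt(\FL))$ depending only on $d$.

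For this, I invoke Lemma~\ref{lem:ub1a} applied to the architecture of $\FF_\msg$ (respectively $\FF_\comb$): it yields the Lipschitz bound $(\lambda\Delta)^d\inorm{\vec w}^d$, where $\lambda$ is a common Lipschitz constant for the activation functions, $\Delta$ the maximum in-degree of the skeleton, $\vec w$ the weight vector, and $d$ the depth. By the very definition of $\wt(\FL)$ in Section~\ref{sec:nonuniform}, each of $\lambda$, $\Delta$, and $\inorm{\vec w}$ is bounded by $\wt(\FL)$, so $(\lambda\Delta)^d\inorm{\vec w}^d \le \wt(\FL)^{3d}$. Thus $\lambda(\FF_\msg), \lambda(\FF_\comb) \le \wt(\FL)^{3d}$, and setting $\pi(X)\coloneqq X^{6d}$ gives $\lambda(\FL)\le \pi(\wt(\FL))$, which finishes the proof.

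I do not expect any real obstacle here: the lemma is purely a bookkeeping step that upgrades the abstract Lipschitz bound of Lemma~\ref{lem:gnnbound2} by substituting the explicit polynomial bound from Lemma~\ref{lem:ub1a} and noting that the relevant parameters of the constituent FNNs are absorbed by $\wt(\FL)$. The only mildly delicate point is checking that the bound of Lemma~\ref{lem:ub1a} really is polynomial in $\wt(\FL)$ for fixed $d$, which is immediate from the definition of weight.
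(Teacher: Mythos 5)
Your proposal matches the paper's proof exactly: both start from the bound $\inorm{\tilde\FL(G,\Cx)(v)-\tilde\FL(G,\Cx')(v)}\le\lambda_\msg\lambda_\comb\inorm{\Cx-\Cx'}|G|$ extracted from the proof of Lemma~\ref{lem:gnnbound2}, and both invoke Lemma~\ref{lem:ub1a} to replace $\lambda_\msg,\lambda_\comb$ by explicit bounds that are polynomial in $\wt(\FL)$ for fixed depth. The only difference is that you spell out the arithmetic $(\lambda\Delta\inorm{\vec w})^d\le\wt(\FL)^{3d}$, which the paper leaves implicit; the bookkeeping is correct.
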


\begin{proof}
  The proof of Lemma~\ref{lem:gnnbound2} yields
  \[
    \inorm{\tilde\FL(G,\Cx)(v)-\tilde\FL(G,\Cx')(v)}\le
    \lambda_\msg\lambda_\comb\inorm{\Cx-\Cx'}|G|,
  \]
  where $\lambda_\msg$,
  $\lambda_\comb$ are Lipschitz constants for the message and
  combination functions of $\FL$. It follows from Lemma~\ref{lem:ub1a}
  that $\lambda_\msg$,
  $\lambda_\comb$ can be chosen polynomial in the weight of $\FL$.
\end{proof}

\begin{corollary}\label{cor:gnnbound2a}
  For every $d\in\PNat$ there is a polynomial $\pi(X)$ such that the following
  holds. Let $\FN$ be a GNN of depth $d$, and let $p$
  be the input dimension of $\FN$. Then for all
  graphs $G$ and all signals $\Cx,\Cx'\in\CS_p(G)$
  we have
  \begin{align*}
    \inorm{\tilde\FN(G,\Cx)-\tilde\FN(G,\Cx')}\le
    \pi\big(\wt(\FL)\big)\inorm{\Cx-\Cx'}|G|^d.
  \end{align*}
\end{corollary}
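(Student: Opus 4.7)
The plan is to apply Lemma~\ref{lem:gnnbound2a} layer by layer and track how the Lipschitz bound compounds through the composition.

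Write $\FN = (\FL^{(1)}, \ldots, \FL^{(k)})$. Since the depth of $\FN$ is the number of layers times the maximum depth of the underlying FNNs, we have $k \le d$ and the depth of each individual layer $\FL^{(i)}$ is at most $d$. Let $\pi_d(X)$ be the polynomial provided by Lemma~\ref{lem:gnnbound2a} for depth $d$; this polynomial applies uniformly to all layers of $\FN$. Also, by definition of $\wt(\FN)$ as a sum over the message and combination FNNs of all layers, we have $\wt(\FL^{(i)}) \le \wt(\FN)$ for every $i \in [k]$.

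Fix a graph $G$ and signals $\Cx,\Cx' \in \CS_p(G)$. Set $\Cx^{(0)} \coloneqq \Cx$, $\Cx'^{(0)} \coloneqq \Cx'$, and inductively $\Cx^{(i)} \coloneqq \tilde\FL^{(i)}(G, \Cx^{(i-1)})$, $\Cx'^{(i)} \coloneqq \tilde\FL^{(i)}(G, \Cx'^{(i-1)})$. I would prove by induction on $i$ that
\[
  \inorm{\Cx^{(i)} - \Cx'^{(i)}} \le \pi_d(\wt(\FN))^i \inorm{\Cx - \Cx'} |G|^i.
\]
The base case $i=0$ is trivial. For the inductive step, apply Lemma~\ref{lem:gnnbound2a} to the layer $\FL^{(i)}$ with input signals $\Cx^{(i-1)}, \Cx'^{(i-1)}$, obtaining
\[
  \inorm{\Cx^{(i)} - \Cx'^{(i)}} \le \pi_d(\wt(\FL^{(i)})) \inorm{\Cx^{(i-1)} - \Cx'^{(i-1)}} |G|,
\]
then use $\wt(\FL^{(i)}) \le \wt(\FN)$ and the induction hypothesis.

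Taking $i = k$ gives $\inorm{\tilde\FN(G,\Cx) - \tilde\FN(G,\Cx')} \le \pi_d(\wt(\FN))^k \inorm{\Cx - \Cx'} |G|^k$. Since $k \le d$ and the quantities involved are at least $1$, this is bounded by $\pi_d(\wt(\FN))^d \inorm{\Cx - \Cx'} |G|^d$. Defining the polynomial $\pi(X) \coloneqq \pi_d(X)^d$ (which has degree polynomial in $d$, hence a fixed polynomial once $d$ is fixed) yields the claim. There is no real obstacle here; the proof is a routine telescoping of the single-layer bound, and the only mild subtlety is recognising that the number of layers is itself bounded by the depth $d$ so that the product of per-layer constants remains polynomial in $\wt(\FN)$.
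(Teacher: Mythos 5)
Your proof is correct and is essentially the argument the paper intends (the corollary is stated without proof, relying on exactly this iterated application of Lemma~\ref{lem:gnnbound2a}). You correctly note the key points: the number of layers $k$ is bounded by the depth $d$ because $\depth(\FN)$ is defined as (number of layers) $\times$ (max FNN depth), each layer's weight is bounded by $\wt(\FN)$, and telescoping the per-layer bound gives $\pi_d(\wt(\FN))^k|G|^k \le \pi_d(\wt(\FN))^d|G|^d$, a polynomial in $\wt(\FN)$ for fixed $d$. (You also implicitly and correctly read the statement's $\pi(\wt(\FL))$ as a typo for $\pi(\wt(\FN))$.)
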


\begin{lemma}\label{lem:gnnbound3}
  For every $d\in\PNat$ there exist polynomials $\pi(X)$ and
  $\pi'(X,Y)$ such that the following holds.  Let $\FL$ be an
  rpl-approximable GNN layer of depth $d$, and let $p$ be the input
  dimension of $\FL$. Then for all $\epsilon>0$ there exists a
  rational piecewise-linear GNN layer $\FL'$ of size at most
  $\pi'(\epsilon^{-1},\wt(\FL))$ with the same skeleton as $\FL$ such
  that the Lipschitz constants of the activation functions of $\FL'$
  are at most twice the Lipschitz constants of the corresponding
  activation functions in $\FL$ and for all graphs $G$, all signals
  $\Cx,\Cx'\in\CS_p(G)$, and all vertices $v\in V(G)$ it holds that
  \begin{align*}
    &\inorm{\tilde\FL(G,\Cx)-\tilde\FL'(G,\Cx')}
    \le
    \pi\big(\wt(\FL)\big)|G|\Big(\inorm{\Cx-\Cx'}
    +
    \big(\inorm{\Cx}+1\big)\epsilon\Big).
  \end{align*}
\end{lemma}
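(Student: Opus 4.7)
The plan is to construct $\FL'$ by applying Lemma~\ref{lem:ub4} separately to the message FNN $\FF_\msg$ and the combination FNN $\FF_\comb$ of $\FL$, with approximation parameters $\epsilon_1, \epsilon_2$ that will be chosen as small constant multiples of $\epsilon$. The resulting GNN layer $\FL'$ has the same skeleton as $\FL$, all its activation Lipschitz constants are at most doubled, and its size is polynomial in $\epsilon^{-1}$ and $\wt(\FL)$.

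To bound the error, I would split the difference by the triangle inequality:
\[
  \tilde\FL(G,\Cx)-\tilde\FL'(G,\Cx') \;=\; \big[\tilde\FL(G,\Cx)-\tilde\FL(G,\Cx')\big] \;+\; \big[\tilde\FL(G,\Cx')-\tilde\FL'(G,\Cx')\big].
\]
The first bracket is handled directly by Lemma~\ref{lem:gnnbound2a}, contributing at most $\pi_1(\wt(\FL))|G|\inorm{\Cx-\Cx'}$. For the second bracket, fix a vertex $v$ and write $\Cz(v), \Cz'(v)$ for the aggregated messages using $\msg, \msg'$ respectively. A further triangle split gives
\[
  \comb(\Cx'(v),\Cz(v))-\comb'(\Cx'(v),\Cz'(v))
  = \big[\comb(\Cx'(v),\Cz(v))-\comb(\Cx'(v),\Cz'(v))\big] + \big[\comb(\Cx'(v),\Cz'(v))-\comb'(\Cx'(v),\Cz'(v))\big].
\]
The first term I would bound using the Lipschitz continuity of $\comb$ (with constant polynomial in $\wt(\FL)$ by Lemma~\ref{lem:ub1a}) together with the pointwise message approximation from Lemma~\ref{lem:ub4}, which gives $\inorm{\msg(\Cx'(v),\Cx'(w))-\msg'(\Cx'(v),\Cx'(w))}\le(\inorm{\Cx'}+1)\epsilon_1$ for every neighbour $w$; aggregation (whether $\SUM$, $\MEAN$, or $\MAX$) then costs at most a factor of $|G|$. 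The second term I would bound by Lemma~\ref{lem:ub4} applied to $\FF_\comb$, which yields an error of at most $(\inorm{(\Cx'(v),\Cz'(v))}+1)\epsilon_2$; here the growth bound of Lemma~\ref{lem:ub1} applied to $\FF_\msg$ (and the aggregation) shows that $\inorm{\Cz'(v)}$ is at most $\mathrm{poly}(\wt(\FL))(\inorm{\Cx'}+1)|G|$.

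Combining these, the network-change error is at most $\pi_2(\wt(\FL))|G|(\inorm{\Cx'}+1)(\epsilon_1+\epsilon_2)$. I would then finish by using $\inorm{\Cx'}\le \inorm{\Cx}+\inorm{\Cx-\Cx'}$ and absorbing the $\inorm{\Cx-\Cx'}(\epsilon_1+\epsilon_2)$ term into the input-change bound; choosing $\epsilon_1,\epsilon_2$ as suitable constant multiples of $\epsilon$ produces the claimed inequality with a single polynomial $\pi(\wt(\FL))$.

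The main obstacle is bookkeeping the constants: one has to verify that every Lipschitz constant and growth bound that arises in the argument (for $\FF_\msg$, $\FF_\comb$, and the approximants $\FF'_\msg, \FF'_\comb$, whose activations have Lipschitz constants up to twice those of the originals and whose weights, by Lemma~\ref{lem:ub4}, can be controlled by $\wt(\FL)$) remains polynomial in $\wt(\FL)$. Using Lemmas~\ref{lem:ub1} and \ref{lem:ub1a}, this is routine but requires care, especially to ensure that the bound on $\inorm{\Cz'(v)}$ does not reintroduce an unwanted dependence on $\epsilon^{-1}$.
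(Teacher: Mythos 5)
Your proof is correct and takes essentially the same route as the paper: apply Lemma~\ref{lem:ub4} to $\FF_\msg$ and $\FF_\comb$, then control the total error via two nested triangle-inequality splits, using the Lipschitz/growth bounds of Lemmas~\ref{lem:ub1}, \ref{lem:ub1a}, \ref{lem:gnnbound2a}. The only difference is that you mirror the paper's decomposition: the paper writes $\tilde\FL(G,\Cx)-\tilde\FL'(G,\Cx')$ as (network change at $\Cx$) $+$ (input change under $\FL'$), whereas you use (input change under $\FL$) $+$ (network change at $\Cx'$). Your order forces the extra bookkeeping step $\inorm{\Cx'}\le\inorm{\Cx}+\inorm{\Cx-\Cx'}$ and the absorption of the resulting cross term $\inorm{\Cx-\Cx'}\epsilon$ into the input-change term (which works after assuming $\epsilon\le1$ without loss of generality, as the bound is monotone in $\epsilon$), while the paper's order avoids this entirely because its network-change term is already expressed in terms of $\inorm{\Cx}$. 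One small imprecision: the growth bound for $\inorm{\Cz'(v)}$ should come from Lemma~\ref{lem:ub1} applied to $\FF'_\msg$ (not $\FF_\msg$), or alternatively from the growth bound for $\FF_\msg$ plus the message-approximation error, as the paper does; you acknowledge the needed check that $\wt(\FF'_\msg)$ is polynomial in $\wt(\FL)$ with no hidden $\epsilon^{-1}$ dependence, which indeed holds since the perturbations of weights and activations produced by Lemma~\ref{lem:ub4} are bounded by $1$ and a factor of $2$, respectively.
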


\begin{proof}
  Let $\msg:\Real^{2p}\to\Real^r$ and $\comb:\Real^{p+r}\to\Real^q$ be
  the message and combination functions of $\FL$, and let $\FF_\msg$
  and $\FF_\comb$ be the FNNs for these functions. By
  Lemma~\ref{lem:ub4} there are rational piecewise linear FNNs
  $\FF'_{\msg}$ and $\FF'_{\comb}$ of size polynomial in $\wt(\FF_\msg),\wt(\FF_\comb)\le\wt(\FL)$
  with activation functions of Lipschitz constants at most twice the Lipschitz constants of the
  corresponding activation functions in $\FF_\msg$, $\FF_\comb$ such that for all
  $\vec x,\vec x'\in\Real^p$ and $\vec z\in\Real^r$ we have
  \begin{align}
    \label{eq:60}
    \inorm{\msg(\vec x,\vec x')-\msg'(\vec x,\vec
    x')}&\le\big(\inorm{(\vec x,\vec x')}+1\big)\epsilon,\\
    \label{eq:61}
    \inorm{\comb(\vec x,\vec z)-\comb'(\vec x,\vec
    z)}&\le\big(\inorm{(\vec x,\vec z)}+1\big)\epsilon.
  \end{align}
  Let $\FL'$ be the GNN layer with message function $\msg'$,
  combination function $\comb'$, and the same aggregation function
  $\agg$ as $\FL$. By Lemma~\ref{lem:ub1} there is an $\alpha\in\PNat$
  that is polynomial in $\wt(\FF_\msg)$ and hence polynomial in
  $\wt(\FL)$ such that
  for all $\vec x,\vec x'\in\Real^p$ it holds that
  \begin{equation}
    \label{eq:62}
    \inorm{\msg(\vec x,\vec x')}\le\alpha\big(\max\big\{\inorm{\vec
      x},\inorm{\vec x'}\big\}+1\big).
  \end{equation}
  By Lemma~\ref{lem:ub1a} there is an $\alpha'\in\PNat$ polynomial in $\wt(\FF_\comb')$ and hence polynomial in
  $\wt(\FL)$ such that
  for all $\vec x,\vec x'\in\Real^{p},\vec z,\vec z'\in\Real^r$ it holds that
  \begin{equation}
    \label{eq:63}
    \inorm{\comb'(\vec x,\vec z)-\comb'(\vec x',\vec z')}\le\alpha'\max\big\{\inorm{\vec
      x-\vec x'},\inorm{\vec z-\vec z'}\big\}.
  \end{equation}
  By Lemma~\ref{lem:gnnbound2a} there is an $\alpha''\in\PNat$ polynomial in
  $\wt(\FL')$ and thus polynomial in $\wt(\FL)$ such
  that for all $G$ and $\Cx\in\CS_p(G)$,
  \begin{equation}
    \label{eq:64}
    \inorm{\tilde{\FL}'(G,\vec x)-\tilde{\FL}'(G,\vec x')}\le\alpha''\inorm{\Cx-\Cx'}|G|. 
  \end{equation}

  Let $G$ be a graph of order $n\coloneqq|G|$ and $v\in
  V(G)$. Furthermore, let $\Cx,\Cx'\in\CS_p(G)$ and $\Cy\coloneqq\tilde\FL(G,\Cx)$,
  $\Cy'\coloneqq\tilde\FL'(G,\Cx)$,
  $\Cy''\coloneqq\tilde\FL'(G,\Cx')$.  Then
  \[
    \inorm{\tilde\FL(G,\Cx)(v)-\tilde\FL'(G,\Cx')(v)}\le
    \inorm{\Cy(v)-\Cy'(v)}+\inorm{\Cy'(v)-\Cy''(v)}.
  \]
  By \eqref{eq:64} we have
  \begin{equation}
    \label{eq:65}
    \inorm{\Cy'(v)-\Cy''(v)}\le\alpha'' n\inorm{\Cx-\Cx'}.
  \end{equation}
  Thus we need to
  bound $\inorm{\Cy(v)-\Cy'(v)}$. Let
  \begin{align*}
    \Cz(v)&\coloneqq\agg\Big(\biglmulti \msg(\Cx(v),\Cx(w))
            \bigmid w\in N_G(v)\bigrmulti\Big),\\
    \Cz'(v)&\coloneqq\agg\Big(\biglmulti \msg'(\Cx(v),\Cx(w))
             \bigmid w\in N_G(v)\bigrmulti\Big).
  \end{align*}
  Then by \eqref{eq:60}, we have
  \begin{equation}
    \label{eq:66}
    \inorm{\Cz(v)-\Cz'(v)}\le n\left(\inorm{\Cx}+1\right)\epsilon.
  \end{equation}
  Furthermore, by \eqref{eq:62}, for all $w\in N(v)$ we have
  \[
    \inorm{\msg\big(\Cx(v),\Cx(w)\big)}
    \le\alpha\Big(\inorm{\Cx}+1\Big)
  \]
  Thus, since $\alpha\ge 1$ and $n\ge 1$,
  \begin{equation}
    \label{eq:67}
    \inorm{\big(\Cx(v),\Cz(v)\big)}=\max\Big\{\inorm{\Cx(v)},
    \inorm{\Cz(v)}\Big\}\le \alpha n\Big(\inorm{\Cx}+1\Big).
  \end{equation}
  Putting things together, we get
  \begin{align*}
    \inorm{\Cy(v)-\Cy'(v)}&=\inorm{\comb\big(\Cx(v),\Cz(v)\big)-\comb'\big(\Cx(v),\Cz'(v)\big)}\\
    &\le
      \inorm{\comb\big(\Cx(v),\Cz(v)\big)-\comb'\big(\Cx(v),\Cz(v)\big)}\\
    &\hspace{\widthof{$\le$ }}
      +\inorm{\comb'\big(\Cx(v),\Cz(v)\big)-\comb'\big(\Cx(v),\Cz'(v)\big)}\\
                          &\le
                            \big(\inorm{(\Cx(v),\Cz(v))}+1)\epsilon&\text{by
    \eqref{eq:61}}\\
     &\hspace{\widthof{$\le$
       }}+\alpha'\inorm{\Cz(v)-\Cz'(v)}&\text{by \eqref{eq:63}}\\
    &\le
      2\alpha n\Big(\inorm{\Cx}+1\Big)\epsilon+\alpha' n\big(\inorm{\Cx}+1\big)\epsilon&\text{by
                                                    \eqref{eq:67} and
                                                                                 \eqref{eq:66}}\\
    &\le
      (2\alpha+\alpha')n\big(\inorm{\Cx}+1\big)\epsilon
  \end{align*}
  Combined with \eqref{eq:65}, this yields the assertion of the lemma.
  \end{proof}

Now we are ready to prove the main lemma of this section.

\begin{lemma}\label{lem:gnnbound4}
  For every $d\in\PNat$ there exist a polynomial $\pi(X,Y)$ such that
  the following holds.  Let $\FN$ be an rpl-approximable GNN of depth
  $d$, and let $p$ be the input dimension of $\FN$. Then for all
  $\epsilon>0$ there exists a rational piecewise-linear GNN $\FN'$ of
  size at most $\pi(\epsilon^{-1},\wt(\FN))$ with the same skeleton
  as $\FN$ such that the Lipschitz constants of the activation
  functions of $\FN'$ are at most twice the Lipschitz constants of the
  corresponding activation functions in $\FN$ and for all graphs $G$
  and all signals $\Cx\in\CS_p(G)$ it holds that
  \[
    \inorm{\tilde\FN(G,\Cx)-\tilde\FN'(G,\Cx)}\le |G|^d\big(\inorm{\Cx}+1\big)\epsilon.
  \]
\end{lemma}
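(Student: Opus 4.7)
The plan is to proceed by induction on the number of layers $L$ of $\FN$, noting that since $\depth(\FN)=d$ is (by definition) the number of layers times the maximum FNN depth, we have $L\le d$. Writing $\FN=(\FL^{(1)},\ldots,\FL^{(L)})$, the idea is to approximate each layer separately via Lemma~\ref{lem:gnnbound3} with a carefully chosen per-layer tolerance $\epsilon_i$, obtaining a rational piecewise-linear layer $\FL'^{(i)}$ with the same skeleton, activation-function Lipschitz constants at most doubled, and size polynomial in $\epsilon_i^{-1}$ and $\wt(\FL^{(i)})$. Setting $\FN'\coloneqq(\FL'^{(1)},\ldots,\FL'^{(L)})$ then yields a GNN of the same skeleton as $\FN$ with the required activation-Lipschitz property.

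For the error analysis I would introduce the intermediate signals $\Cx^{(0)}\coloneqq\Cx'^{(0)}\coloneqq\Cx$ and $\Cx^{(i)}\coloneqq\tilde\FL^{(i)}(G,\Cx^{(i-1)})$, $\Cx'^{(i)}\coloneqq\tilde\FL'^{(i)}(G,\Cx'^{(i-1)})$. Applying Lemma~\ref{lem:gnnbound3} at the $i$th layer with tolerance $\epsilon_i$ yields the recurrence
\[
\inorm{\Cx^{(i)}-\Cx'^{(i)}}\le \pi_0(\wt(\FL^{(i)}))|G|\Big(\inorm{\Cx^{(i-1)}-\Cx'^{(i-1)}}+\big(\inorm{\Cx^{(i-1)}}+1\big)\epsilon_i\Big),
\]
where $\pi_0$ is the polynomial from Lemma~\ref{lem:gnnbound3}. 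Iterating Lemma~\ref{lem:gnnbound1a} bounds $\inorm{\Cx^{(i-1)}}+1$ by a polynomial in $\wt(\FN)$ times $(\inorm{\Cx}+1)|G|^{i-1}$. Unrolling the recurrence from $i=1$ to $L$, and using $L\le d$, the accumulated error is bounded by a polynomial in $\wt(\FN)$ times $|G|^d(\inorm{\Cx}+1)\max_i\epsilon_i$.

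Choosing each $\epsilon_i\coloneqq\epsilon/Q$, where $Q$ is a polynomial in $\wt(\FN)$ large enough to absorb all multiplicative constants arising in the unrolled recurrence, gives $\inorm{\tilde\FN(G,\Cx)-\tilde\FN'(G,\Cx)}\le |G|^d(\inorm{\Cx}+1)\epsilon$ as required. Since each $\epsilon_i^{-1}$ is then polynomial in $\epsilon^{-1}$ and $\wt(\FN)$, Lemma~\ref{lem:gnnbound3} guarantees that $\size(\FL'^{(i)})$ is polynomial in $\epsilon^{-1}$ and $\wt(\FN)$, and summing over the at most $d$ layers keeps $\size(\FN')$ polynomial in $\epsilon^{-1}$ and $\wt(\FN)$.

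The main obstacle is the careful bookkeeping of the compounded error. Two sources of blow-up feed into each other: the intermediate signals $\Cx^{(i)}$ can grow like $|G|^i$ (Lemma~\ref{lem:gnnbound1a}), which amplifies the approximation error $\epsilon_i$ introduced at layer $i$; and the per-layer Lipschitz factor $\pi_0(\wt(\FL^{(i)}))|G|$ propagates prior errors forward. Both factors must be absorbed into the polynomial $Q$ when choosing $\epsilon_i$, and one must verify that $Q$ depends polynomially (not exponentially) on $\wt(\FN)$ — which works because $d$ is fixed, so a product of $d$ polynomials in $\wt(\FN)$ is still a polynomial in $\wt(\FN)$.
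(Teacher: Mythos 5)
Your proposal is correct and follows essentially the same route as the paper's proof: approximate each layer via Lemma~\ref{lem:gnnbound3} with a uniformly shrunken tolerance, track the intermediate signals, bound their growth with Lemma~\ref{lem:gnnbound1a}, and absorb the compounded per-layer amplification (which stays polynomial in $\wt(\FN)$ because $d$ is fixed) into the choice of tolerance. The paper merely packages the same bookkeeping as a clean induction showing $\inorm{\Cx_t-\Cx_t'}\le\beta^t|G|^t(\inorm{\Cx}+1)\epsilon'$ with $\epsilon'=\epsilon/\beta^d$, which is your unrolled recurrence in disguise.
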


\begin{proof}
  Suppose that $\FN=(\FL_1,\ldots,\FL_d)$. For every $t\in[d]$, let
  $p_{t-1}$ be the input dimension of $\FL_t$. Then $p=p_0$.
  By Lemma~\ref{lem:gnnbound1a} there is an
  $\alpha$ polynomial in $\wt(\FN)$ such that for all $t\in[d]$, $G$, and
  $\Cx\in\CS_{p_{t-1}}(G)$ we have
  \begin{equation}
    \label{eq:68}
    \inorm{\tilde\FL_t(G,\Cx)}\le\alpha |G| \big(\inorm{\Cx}+1\big).
  \end{equation}
  Let $\pi'(X)$ be the polynomial of Lemma~\ref{lem:gnnbound3},
  \[
    \alpha'\coloneqq\max_{t\in[d]}\pi'\big(\wt(\FL_t)\big),
  \]
  and 
  \begin{align*}
    \beta&\coloneqq 3\max\{\alpha,\alpha'\},\\
    \epsilon'&\coloneqq\frac{\epsilon}{\beta^d}.
  \end{align*}
  Note that $\epsilon'$ is polynomial in $\wt(\FN)$. For every $t\in[d]$, we apply
  Lemma~\ref{lem:gnnbound3} to $\FL_t$ and $\epsilon'$ and obtain a
  rational piecewise linear GNN layer 
  $\FL_t'$ such that for all graphs $G$ and all
  signals $\Cx,\Cx'\in\CS_{p_{t-1}}(G)$ we have
  \begin{equation}
    \label{eq:69}
    \inorm{\tilde\FL_t(G,\Cx)-\tilde\FL_t'(G,\Cx')}\le\alpha'|G|
    \Big(\inorm{\Cx-\Cx'}+\big(\inorm{\Cx}+1\big)\epsilon'\Big).
  \end{equation}

  Let $G$ be a graph of order $n\coloneqq|G|$, and
  $\Cx\in\CS_p(G)$. 
  Let $\Cx_0\coloneqq\Cx_0'\coloneqq\Cx$, and for $t\in[d]$, let
  $\Cx_t\coloneqq\tilde\FL_t(G,\Cx_{t-1})$ and
  $\Cx_t'\coloneqq\tilde\FL'_i(G,\Cx_{t-1}')$. We shall prove that
  for all $t\in\{0,\ldots,d\}$ we have
  \begin{align}
    \label{eq:70}
    \inorm{\Cx_t}&\le\beta^tn^t\big(\inorm{\Cx}+1\big),\\
    \label{eq:71}
    \inorm{\Cx_t-\Cx_t'}&\le
    \beta^tn^t\big(\inorm{\Cx}+1\big)\epsilon'.
  \end{align}
  Since
  $\beta^d\epsilon'=\epsilon$ and $\Cx_d=\tilde\FN(G,\Cx)$,
  $\Cx_d'\coloneqq\tilde\FN'(G,\Cx)$, \eqref{eq:71} implies the
  assertion of the lemma.
                   
  We prove \eqref{eq:70} and \eqref{eq:71} by induction on $t$. The
  base step $t=0$ is trivial, because $\Cx_0=\Cx$ and
  $\Cx_0=\Cx'$. For the inductive step, let $t\ge 1$.
  By \eqref{eq:68} and the induction hypothesis we have
  \begin{align*}
    \inorm{\Cx_t}&\le\alpha n\big(\inorm{\Cx_{t-1}}+1\big)\\
    &\le \alpha
      n\Big(\beta^{t-1}n^{t-1}\big(\inorm{\Cx}+1\big)+1\Big)\\
    &\le \beta^tn^t \big(\inorm{\Cx}+1\big),
  \end{align*}
  where the last inequality holds because $2\alpha\le \beta$.
  This proves \eqref{eq:70}.
  
  By \eqref{eq:69} we have
  \begin{equation}
    \label{eq:72}
    \inorm{\Cx_t-\Cx_t'}\le\alpha' n 
                          \Big(\inorm{\Cx_{t-1}-\Cx_{t-1}'}+\big(\inorm{\Cx_{t-1}}+1\big)\epsilon'\Big).
  \end{equation}
  By induction hypothesis \eqref{eq:71},
  \begin{align}
    \notag
    \alpha'n
    \inorm{\Cx_{t-1}-\Cx_{t-1}'}&\le\alpha'n\beta^{t-1}n^{t-1}\big(\inorm{\Cx}+1\big)\epsilon'\\
\label{eq:73}
    &\le \frac{1}{3}\beta^{t}n^t \big(\inorm{\Cx}+1\big)\epsilon'.
  \end{align}
  By induction hypothesis \eqref{eq:70},
  \begin{align}
    \notag
    \alpha'n \big(\inorm{\Cx_{t-1}}+1\big)\epsilon'&\le
    \alpha'n
                                                     \Big(\beta^{t-1}n^{t-1}\big(\inorm{\Cx}+1\big)+1\Big)\epsilon'\\
    \label{eq:74}
    &\le \frac{2}{3}\beta^{t}n^t \big(\inorm{\Cx}+1\big)\epsilon'
  \end{align}
  Plugging \eqref{eq:73} and \eqref{eq:74} into \eqref{eq:72}, we obtain
  the desired inequality \eqref{eq:71}.
\end{proof}

\subsection{Proof of Theorem~\ref{theo:snonuniform}}

Let us first remark that we cannot directly apply
Theorem~\ref{theo:uniform} (the ``uniform theorem'') to a family of rational
piecewise linear GNN approximating the GNNs in our family $\CN$. The reason is that in Theorem~\ref{theo:uniform} the GNN is
``hardwired'' in the formula, whereas in our non-uniform setting we
obtain a different GNN for every input size. Instead, we encode the
sequence of rational piecewise linear GNNs approximating the GNNs in
the original family into the numerical
built-in relations. Then our formula evaluates these GNNs directly on
the numerical side of the structures.

\begin{proof}[Proof of Theorem~\ref{theo:snonuniform}]
  Let $\CN=(\FN^{(n)})_{n\in\Nat}$. Furthermore, let $d$ be an upper
  bound on the depth of all the $\FN^{(n)}$. Without loss of
  generality we assume that every $\FN^{(n)}$ has exactly $d$ layers
  $\FL^{(n)}_1,\ldots,\FL^{(n)}_d$. For $t\in[d]$, let $p^{(n)}_{t-1}$
  and $p^{(n)}_t$ be
  the input and output dimension of $\FL_t^{(n)}$. Then
  $p^{(n)}\coloneqq p^{(n)}_0$ is the input dimension of $\FN^{(n)}$
  and $q^{(n)}\coloneqq p_d^{(n)}$ is the output dimension. By the definition
  of the weight of a GNN, the
  $p^{(n)}_t$ are polynomially bounded in $n$.
  Let
  $\lambda^{(n)}\in\Nat$ be a Lipschitz constant for all activation
  functions in $\FN^{(n)}$. By the definition of the weight of an FNN
  and GNN we may choose $\lambda^{(n)}$ polynomial in $\wt(\FN^{(n)})$
  and thus in $n$.  

  For all $k,n\in\PNat$, we let $\FN^{(n,k)}$ be a rational piecewise
  linear GNN with the same skeleton as $\FN^{(n)}$ of size polynomial
  in $\wt(\FN^{(n)})$, hence also polynomial in $n$, and $k$ such that
  all activation functions of $\FN^{(n,k)}$ have Lipschitz constant at
  most $2\lambda^{(n)}$, and
  for all graphs $G$ of order $n$ and all
  signals $\Cx\in\CS_p(G)$ it holds that
  \begin{equation}
    \label{eq:75}
    \inorm{\tilde\FN^{(n)}(G,\Cx)-\tilde\FN^{(n,k)}(G,\Cx)}\le
    \frac{n^d}{k}\big(\inorm{\Cx}+1\big). 
  \end{equation}
  Such an $\FN^{(n,k)}$ exists by Lemma~\ref{lem:gnnbound4}. Let
  $\FL^{(n,k)}_1,\ldots,\FL^{(n,k)}_d$ be the layers of $\FN^{(n,k)}$.

  We want to describe the GNNs $\FN^{(n,k)}$ with built-in relations,
  using an encoding similar to F-schemes. We
  cannot just use the same encoding as for the F-schemes because the
  non-uniform logic $\FOC_{nu}$ does not allow for
  built-in numerical functions.\footnote{The reader may wonder why
    we do not simply allow for built-in functions to avoid
    this difficulty. The reason is that then we could built terms
    whose growth is no longer polynomially bounded in the size of the
    input structure, which would make the logic too powerful. In
    particular, the logic would no longer be contained in $\TC^0$.}

  Let $t\in[d]$. In the following, we define the built-in relations
  that describe the $t$th layer $\FL_t^{(n,k)}$ of all the
  $\FN^{(n,k)}$. We need to describe
  FNNs $\FF^{(n,k)}_\msg$ and $\FF^{(n,k)}_{\comb}$ for the message and combination
  functions of the layers, and in addition we need to describe the
  aggregation function. For the aggregation functions we
  use three relations
  $A^{\SUM}_t,A^{\MEAN}_t,A^{\MAX}_t\subseteq\Nat^2$,
  where
  \[
    A^{\SUM}_t\coloneqq\big\{(n,k)\in\Nat^2\bigmid\Fa^{(n,k)}_t=\SUM\big\},
  \]
  and $A^{\MEAN}_t$ and $A^{\MAX}_t$ are defined similarly.
  For each of the two FNNs
  $\FF^{(n,k)}_\msg$ and $\FF^{(n,k)}_{\comb}$ we use $18$ relations.
  We only describe the encoding of $\FF^{(n,k)}_\msg$ using relations
  $M_t^1,\ldots,M_t^{18}$. The encoding of
  $\FF^{(n,k)}_\comb$ is analogous using a fresh set of 18 relations $C_t^1,\ldots,C_t^{18}$.

  Say,
  \[
    \FF^{(n,k)}_\msg=\Big(V^{(n,k)},E^{(n,k)},(\Fa_v^{(n,k)})_{v\in
      V^{(n,k)}}, (w_e^{(n,k)})_{e\in E^{(n,k)}}, (b_v^{(n,k)})_{v\in
      V^{(n,k)}}\Big),
  \]
  where without loss of generality we assume that
  $V^{(n,k)}$ is an initial segment of $\Nat$.
  We use relation $M_t^{1}\subseteq\Nat^3$ and $M_t^2\subseteq\Nat^4$
  to describe the vertex set $V$ and
  the edge set $E$, letting
  \begin{align*}
    M_t^1&\coloneqq\big\{(n,k,v)\bigmid v\in V^{(n,k)}\big\},\\
    M_t^2&\coloneqq\big\{(n,k,u,v)\bigmid (u,v)\in E^{(n,k)}\big\}.
  \end{align*}
  As the bitsize of the skeleton of $\FN^{(n)}$ and hence
  $|V^{(n,k)}|$ is polynomially bounded in $n$, there is an
  arithmetical term $\theta_V(y,y')$ such that for all graphs $G$ and
  all assignments $\Ca$ we have
  \[
    \sem{\theta_V}^{(G,\Ca)}(n,k)=|V^{(n,k)}|
  \]
 This term uses the constant $\ord$ as well as the built-in relation $M_t^1$. It
  does not depend on the graph $G$ or the assignment $\Ca$.

  For the weights, we use the relations
  $M_t^3,M_t^4,M_t^5\subseteq\Nat^5$, letting
  \begin{align*}
    M_t^3&\coloneqq\big\{(n,k,u,v,r)\bigmid
                 e\coloneqq (u,v)\in E^{(n,k)}\text{ with
                 }w_e^{(n,k)}=(-1)^{r}2^{-s}m\big\},\\
    M_t^4&\coloneqq\big\{(n,k,u,v,s)\bigmid
                 e\coloneqq(u,v)\in E^{(n,k)}\text{ with
                 }w_e^{(n,k)}=(-1)^{r}2^{-s}m\big\},\\
    M_t^5&\coloneqq\big\{(n,k,u,v,i)\bigmid
                 e\coloneqq (u,v)\in E^{(n,k)}\text{ with
                 }w_e^{(n,k)}=(-1)^{r}2^{-s}m\text{ and }\Bit(i,m)=1\big\}.                 
  \end{align*}
  We always assume that $w_e^{(n,k)}=(-1)^{r}2^{-s}m$ is the
  \emph{canonical representation} of $w_e^{(n,k)}$ with $r=0,m=0,s=0$
  or $r\in\{0,1\}$ and $s=0$ and $m\neq 0$ or $r\in\{0,1\}$ and $s\neq0$ and $m\neq 0$ odd. As the bitsize
  of the numbers $w_e^{(n,k)}$ is polynomial in $k$,
  we can easily construct an arithmetical r-expression
  $\rexp_w(y,y',z,z')$ such that for all graphs
  $G$ and assignments $\Ca$ and $n,k,u,v\in\Nat$ such that $e\coloneqq
  (u,v)\in
  E^{(n,k)}$, 
  \[
    \num{\rexp_w}^{(G,\Ca)}(n,k,u,v)=w_e^{(n,k)}.
  \]
  This r-expression depends on the built-in relations $M_t^{i}$, but not on
  $G$ or $\Ca$.

  Similarly, we define the relations  $M_t^{6},M_t^7,M_t^8\subseteq\Nat^4$ for the biases and
  an r-expression
  $\rexp_b(y,y',z)$ such that for all graphs
  $G$ and assignments $\Ca$ and $n,k,v\in\Nat$ with $v\in
  V^{(n,k)}$, 
  \[
    \num{\rexp_b}^{(G,\Ca)}(n,k,v)=b_v^{(n,k)}.
  \] 
  To store the activation
  functions $\Fa^{(n,k)}_v$ we use the remaining ten relations
  $M_t^{9}\subseteq\Nat^4$, $M_t^{10},\ldots,M_t^{18}\subseteq\Nat^5$.
  The relation $M_t^{9}$ is used to store the number
  $m^{(n,k)}_v$ of
  thresholds of $\Fa^{(n,k)}_v$:
  \[
    M_t^{9}\coloneqq\{(n,k,v,m^{(n,k)}_v)\bigmid v\in V^{(n,k)}\big\}.
  \]
  As the bitsize of $\Fa_v^{(n,k)}$ is polynomial in $k$, the number
  $m^{(n,k)}_v$ is bounded by a polynomial in $k$. Thus we can
  construct an arithmetical term $\theta_{\Fa}(y,y',z)$ such that for
  all graphs $G$, all assignments $\Ca$, all $n,k\in\Nat$, and all
  $v\in V^{(n,k)}$ we have
  \[
    \sem{\theta_\Fa}^{(G,\Ca)}(n,k,v)=m^{(n,k)}_v
  \]
  Of course this term needs to use the built-in relation $M_t^9$. It
  does not depend on the graph $G$ or the assignment $\Ca$.
  
  The relations $M_t^{10},N_t^{11},
  N_t^{12}$ are used to store thresholds. Say, the
  thresholds of $\Fa_v^{(n,k)}$ are
  $t_{v,1}^{(n,k)}<\ldots<t_{v,m}^{(n,k)}$, where $m=m^{(n,k)}_v$. We let
  \begin{align*}
    N_t^{10}&\coloneqq\big\{(n,k,v,i,r)\bigmid
                 v\in V^{(n,k)},1\le i\le m^{(n,k)}_v\text{ with
                 }t_{v,i}^{(n,k)}=(-1)^{r}2^{-s}m\big\},\\
    N_t^{11}&\coloneqq\big\{(n,k,v,i,s)\bigmid
                 v\in V^{(n,k)},1\le i\le m^{(n,k)}_v\text{ with
                 }t_{v,i}^{(n,k)}=(-1)^{r}2^{-s}m\big\},\\
    N_t^{12}&\coloneqq\big\{(n,k,v,i,j)\bigmid
                 v\in V^{(n,k)},1\le i\le m^{(n,k)}_v\text{ with
                 }t_{v,i}^{(n,k)} =(-1)^{r}2^{-s}m\text{ and }\Bit(j,m)=1\}.                 
  \end{align*}
  We always assume that $t_{v,i}^{(n,k)} =(-1)^{r}2^{-s}m$ is the
  canonical representation of $t_{v,i}^{(n,k)}$.
As the bitsize
  of $\Fa_v^{(n,k)}$ is polynomial in $k$,
  we can construct an arithmetical r-expression
  $\rexp_{t}(y,y',z,z')$  such that for all graphs
  $G$ and assignments $\Ca$ and $n,k,v,i\in\Nat$ with $v\in
  V^{(n,k)}$, $i\in[m^{(n,k)}_v]$, 
  \[
    \num{\rexp_t}^{(G,\Ca)}(n,k,v,i)=t_{v,i}^{(n,k)}.
  \] 
  Similarly, we use the relations $N_t^{13},N_t^{14},
  N_t^{15}$ to represent the slopes of $\Fa_v^{(n,k)}$
  and the relations $N_t^{16},N_t^{17},
  N_t^{18}$ to represent the constants. Furthermore, we construct arithmetical r-expressions
  $\rexp_{s}(y,y',z,z')$ and $\rexp_{c}(y,y',z,z')$ to access them.
We can combine the term $\theta_\Fa$ and the r-expressions
$\rexp_{t}(y,y',z,z')$, $\rexp_{s}(y,y',z,z')$, $\rexp_{c}(y,y',z,z')$
 to an L-expression $\boldsymbol{\chi}(y,y',z)$ such that for all graphs
  $G$, 
  all assignments $\Ca$, all $n,k\in\Nat$, and all $v\in V^{(n,k)}$ we
  have
  \[
    \Lin{\boldsymbol{\chi}}^{(G,\Ca)}(n,k,v)=\Fa_v^{(n,k)}.
  \]
  Then we can combine the term $\theta_V$, the relation $N^E_t$, the
  r-expressions $\rexp_w,\rexp_b$, and the L-expression
  $\boldsymbol{\chi}$ to an F-expression $\boldsymbol{\phi}^\msg_t(y,y')$ such
  that for all graphs
  $G$, 
  all assignments $\Ca$, and all $n,k\in\Nat$ we
  have
  \[
    \Fin{\boldsymbol{\phi}^\msg_t}^{(G,\Ca)}(n,k)=\FF_\msg^{(n,k)}.
  \]
  Similarly, we obtain an F-expression
  $\boldsymbol{\phi}^\comb_t(y,y')$ for the combination function which
  uses the relation $C_t^1,\ldots,C_t^{18}$ that represent
  $\FF_\comb^{(n,k)}$.

  In addition to the $\FN^{(n,k)}$, we also need access to the
  Lipschitz constants $\lambda^{(n)}$. We use one more built-in
  relation
  \[
    L\coloneqq\{(n,\lambda^{(n)})\bigmid n\in\Nat\}\subseteq\Nat^2.
  \]
  As $\lambda^{(n)}$ is polynomially bounded in $n$, there is a term
  $\theta_\lambda(y)$ using the built-in relation $L$ such that for all
  $G$, $\Ca$ and all $n$,
  \[
    \sem{\theta_\lambda}^{(G,\Ca)}(n)=\lambda^{(n)}.
  \]

  \begin{techclaim}\label{claim:g:1}
    For all $t\in[d]$, there is an arithmetical term $\eta_t(y,y')$ such that for all
    $n,k\in\Nat$ and all $G$, $\Ca$, the value $\sem{\eta_t}^{(G,a)}(n,k)$ is a
    Lipschitz constant for the combination function $\comb^{(n,k)}_t$
    of $\FL^{(n,k)}_t$.

   \end{techclaim}
\begin{subproof}
    Using Lemma~\ref{lem:ub1a} we can easily obtain such a Lipschitz
    constant using the fact that $2\lambda^{(n)}$ is a Lipschitz
    constant for all activation functions of
    $\comb^{(n,k)}_t$, and using the term $\theta_\lambda(y)$ to access
    $\lambda^{(n)}$ and the F-expression
    $\boldsymbol{\phi}^\comb_t(y,y')$ to access the FNN for
    $\comb^{(n,k)}_t$.
    \uend
  \end{subproof}

  \begin{techclaim}\label{claim:g:2}
    \label{page:claim3}
    For all $t\in[d]$, there is an arithmetical term $\zeta_t(y,y')$
    such that for all $n,k\in\Nat$, all $G$, $\Ca$, and all
    $\Cx,\Cx'\in \CS_{p_{t-1}}(G)$,
    \[
      \inorm{\tilde{\FL}^{(n,k)}_t(G,\vec
        x)-\tilde{\FL}^{(n,k)}_t(G,\vec x)}\le\sem{\zeta_t}^{(G,\Ca)}(n,k)n\inorm{\Cx-\Cx'}
    \]

    \end{techclaim}
\begin{subproof}
    To construct this term we use Lemma~\ref{lem:gnnbound2a}.
    \uend
  \end{subproof}

  Note that the bound $\sem{\zeta_t}^{(G,\Ca)}(n,k)$ is independent of
  the graph $G$ and the assignment $\Ca$ and only depends on
  $\tilde{\FL}^{(n,k)}_t$ and $\lambda^{(n)}$.

  The following claim is an analogue of Lemma~\ref{lem:uniform}.

  \begin{techclaim}\label{claim:g:3}
    Let $t\in[d]$. Let $\vec X$ be an r-schema of type
    $\rtp{\ttv\ttn}$, and let $W$ be a function variable of type
    $\emptytuple\to\ttn$. Then there is a guarded r-expression
    $\logic{l-eval}_{t}(y,y',x,y'')$ such that
    the following holds for all $n,k\in\Nat$, all graphs $G$, and all
    assignments $\Ca$ over $G$.  Let $\Cx\in\CS_{p^{(n)}_{t-1}}(G)$ be the signal
    defined by
    \begin{equation*}
      \Cx(v)\coloneqq\Big(\num{\vec X}^{(G,\Ca)}(v,0),\ldots,
      \num{\vec X}^{(G,\Ca)}(v,p^{(n)}_{t-1}-1)\Big)
    \end{equation*}
    and let $\Cy\coloneqq\tilde\FL^{(n,k)}_t(G,\Cx)\in\CS_{p^{(n)}_t}(G)$. Then for
    all $v\in V(G)$, 
    \begin{equation*}
      \inorm{\Cy(v)
        -\big(\num{\logic{l-eval}}^{(G,\Ca)}(n,k,v,0),\ldots, \num{\logic{l-eval}}^{(G,\Ca)}(n,k,v,p^{(n)}_t-1)\big)}
      \le2^{-\Ca(W)} 
    \end{equation*}
    
   \end{techclaim}
\begin{subproof} The proof is completely analogous to the proof of
    Lemma~\ref{lem:uniform}, except that in the proof of the analogues
    of Claims~\ref{claim:c:1} and
    \ref{claim:c:2} we use Lemma~\ref{lem:ar6a} instead of Corollary~\ref{cor:ar6a}
    to evaluate the FNNs computing the message function and
    combination function. We substitute suitable instantiations of the
    r-expression $\boldsymbol{\chi}(y,y')$ and the L-expression
    $\boldsymbol{\psi}(y,y')$ for the r-schemas $\vec Z_v,\vec Z_e$
    representing the parameters of the FNN and the L-schemas
    $\vec Y_v$ representing the activation functions in
    Lemma~\ref{lem:ar6a}.

    Furthermore, in Case~3 of the proof of Lemma~\ref{lem:uniform}
    (handling $\MEAN$-aggregation) we need a term that defines a
    Lipschitz constant for the combination function of
    $\FL^{(n,k)}_t$. (In the proof of Lemma~\ref{lem:uniform}, this is
    the constant $\lambda$.) We can use the term $\eta_t(y,y')$ of Claim~\ref{claim:g:1}.
    \uend
  \end{subproof}

The next claim is the analogue of Theorem~\ref{theo:uniform} for our
setting with built-in relations.

  \begin{techclaim}\label{claim:g:4}
     Let $\vec X$ be an r-schema of type
    $\rtp{\ttv\ttn}$, and let $W$ be a function variable of type
    $\emptytuple\to\ttn$. Then there is a guarded
  r-expression $\logic{n-eval}(y,y',x,y'')$ such
  that the following holds for all $n,k\in\Nat$, all graphs $G$, and all
  assignments
  $\Ca$ over $G$.
  Let $\Cx\in\CS_{p^{(n)}}(G)$ be the signal defined by
  \begin{equation*}
    \Cx(v)\coloneqq\Big(\num{\vec X}^{(G,\Ca)}(v,0),\ldots,
    \num{\vec X}^{(G,\Ca)}(v,p^{(n)}-1)\Big),
  \end{equation*}
  and let $\Cy\coloneqq\tilde\FN^{(n,k)}(G,\Cx)\in\CS_{q^{(n)}}(G)$. Then for
  all $v\in V(G)$, 
  \begin{equation}
    \label{eq:76}
     \inorm{\Cy(v)
    -\big(\num{\logic{n-eval}}^{(G,\Ca)}(n,k,v,0),\ldots, \num{\logic{n-eval}}^{(G,\Ca)}(n,k,v,q^{(n)}-1)\big)}
    \le2^{-\Ca(W)} 
  \end{equation}

 \end{techclaim}
\begin{subproof}
  The proof is completely analogous to the proof of
  Theorem~\ref{theo:uniform}, using Claim~\ref{claim:g:3} instead of
  Lemma~\ref{lem:uniform}. In the proof of Theorem~\ref{theo:uniform}
  we need access to a term that defines a constant $\lambda^{(t)}$ that
  bounds the growth of the $t$th layer of $\FN^{(n,k)}$. We use the
  term $\zeta_t(y,y')$ of Claim~\ref{claim:g:2}. 
  \uend
\end{subproof}

What remains to be done is choose the right $k$ to achieve the
desired approximation error in
\eqref{eq:38}. We will define $k$ using a closed term $\logic{err}$ that
depends on $W,W'$ as well as the order of the input graph. We let
\[
  \logic{err}\coloneqq2\ord\cdot(W+1)\cdot W'.
\]
In the following, let us assume that $G$ is a graph of order $n$ and
$\Ca$ is an assignment over $G$ satisfying the two assumptions
$\Ca(W')\neq0$ and $\inorm{\Cx}\le\Ca(W)$ for the signal
$\Cx\in\CS_{p^{(n)}}(G)$ defined by $\vec X$ as in \eqref{eq:37}. Let
\[
  k\coloneqq
\sem{\logic{err}}^{(G,\Ca)}=2n(\Ca(W)+1)\Ca(W')\ge2n\big(\inorm{\Cx}+1\big)\Ca(W')
\]
and thus
\begin{equation*}
  n\big(\inorm{\Cx}+1\big)\frac{1}{k}\le\frac{1}{2\Ca(W')}.
\end{equation*}
Thus by \eqref{eq:75},
\begin{equation}
  \label{eq:77}
  \inorm{\tilde\FN^{(n)}(G,\Cx)-\tilde\FN^{(n,k)}(G,\Cx)}\le 
  \frac{1}{2\Ca(W')}. 
\end{equation}
Now we let $\logic{gnn-eval}(x,y)$ be the formula obtained from the
formula $\logic{gnn-eval}(y,y',x,y'')$ of Claim~\ref{claim:g:4} by substituting
$\ord$ for $y$, $\logic{err}$
for $y'$, $y$ for $y''$, and $W'$ for $W$. Then by \eqref{eq:76},
we have
\begin{align*}
  &\inorm{\tilde\FN^{(n,k)}(G,\Cx)(v)
    -\big(\num{\logic{gnn-eval}}^{(G,\Ca)}(v,0),\ldots, \num{\logic{gnn-eval}}^{(G,\Ca)}(v,q^{(n)}-1)\big)}\\
  &\le2^{-\Ca(W')}\le \frac{1}{2\Ca(W')} 
\end{align*}
 Combined with \eqref{eq:77}, this yields
\[
  \inorm{\tilde\FN(G,\Cx)(v)
    -\big(\num{\logic{gnn-eval}}^{(G,\Ca)}(v,0),\ldots, \num{\logic{gnn-eval}}^{(G,\Ca)}(v,q^{(n)}-1)\big)}\le
  \frac{1}{\Ca(W')}, 
\]
that is, the desired inequality \eqref{eq:38}.
\end{proof}

\section{A Converse}
\label{sec:converse}

The main result of this section is a converse of Corollary~\ref{cor:snonuniform}.
For later reference, we
prove a slightly more general lemma that not only applies to queries over
labelled graphs, that is, graphs with Boolean signals, but actually to
queries over graphs with integer
signals within some range. 

\begin{lemma}\label{lem:converse}
  Let $U_1,\ldots,U_p$ be function variables of type $\ttv\to\ttn$, 
  and let $\phi(x)$ be a ${\GCnu}$-formula that contains no relation or
  function variables except possibly the $U_i$. Then there is a polynomial-size bounded-depth family $\CN$
  of rational piecewise-linear GNNs of input dimension $p$ such that
  for all graphs $G$ of order $n$ and all assignments $\Ca$ over $G$ the following
  holds. Let $\Cu\in\CS_p(G)$ be the signal defined by
  \begin{equation}
    \label{eq:78}
    \Cu(v)\coloneqq\big(\Ca(U_1)(v),\ldots,\Ca(U_p)(v)\big). 
  \end{equation}
  Assume that $\Ca(U_i)(v)< n$ for all $i\in[p]$ and $v\in V(G)$. Then for all $v\in V(G)$, $\tilde\CN(G,\Cu)\in\{0,1\}$ and
  \[
    \tilde\CN(G,\Cu)(v)=1\iff (G,\Ca)\models\phi(v).
  \]
  Furthermore, all GNNs in $\CN$ only use $\lsig$-activations and
  $\SUM$-aggregation. 
\end{lemma}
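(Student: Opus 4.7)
The plan is to prove the lemma by induction on the structure of the $\GCnu$-expression, with a stronger inductive claim: for every $\GCnu$-subexpression $\xi$ with free number variables $\vec y=(y_1,\ldots,y_k)$ and at most one free vertex variable $x$, there is a polynomial-weight bounded-depth family of rational piecewise-linear GNNs with $\lsig$-activations and $\SUM$-aggregation whose output signal at each vertex $v$ tabulates, in bit representation for terms and as a Boolean value for formulas, the values $\sem{\xi}^{(G,\Ca)}(v,\vec b)$ for every tuple $\vec b$ of number values within the polynomial-in-$n$ bound given by Lemma~\ref{lem:termbound}. Since $k$ is bounded per fixed $\xi$ and each $b_i$ is polynomially bounded in $n$, this tabulation has polynomial dimension.

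For the base cases, the constants $\zero,\one$ become constant signals, $\ord$ becomes a constant signal after one $\SUM$-aggregation round counting all vertices, and each input value $U_i(v)$ is converted to its bit representation via indicator signals $\mathbb{1}[U_i(v)=m]=\lsig(2(U_i(v)-m)+1)-\lsig(2(U_i(v)-m)-1)$ for each $m<n$ (this is where non-uniformity enters: the family uses one GNN per graph size $n$). Numerical built-in relations $N(\vec y)$ on polynomially-bounded arguments are hardcoded per $n$ as constant signals. For the arithmetic and Boolean inductive cases ($+,\cdot,\le,\neg,\wedge$, and pure-number counting $\#(\vec y<\vec\theta).\psi$), the required operations are all in $\TC^0$ by Lemma~\ref{lem:tc0ar}; the corresponding bounded-depth polynomial-size threshold circuits translate to FNNs by Lemma~\ref{lem:c2fnn}, and one rewrites them with $\lsig$ instead of $\relu$ using $\lsig(1-x)$ for negation, $\lsig(x+y-1)$ for conjunction on Boolean inputs, and $\lsig(\sum x_i-t+1)$ for a threshold $\sum x_i\ge t$ on Boolean inputs (the clipping at $1$ is harmless since the gate is Boolean-valued). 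These FNNs, applied per vertex, combine the inductively-obtained signals into new ones.

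The core case is the guarded counting $\xi=\#(x',\vec y<\vec\theta).(\gamma(x,x')\wedge\psi(x,x',\vec y))$. Here the body $\psi$ has two free vertex variables, so I use an inner induction: every maximal subexpression of $\psi$ with a single free vertex variable (either $x$ or $x'$) is itself a $\GCnu$-expression to which the outer induction applies, yielding per-vertex signals by a previously built GNN; the only remaining atoms in the Boolean skeleton of $\psi$ are two-vertex atoms, namely the guard $\gamma(x,x')$ (true for all sampled $(v,w)$ since the sum will aggregate precisely over $\gamma$-neighbors) and $x=x'$ (false in simple graphs without self-loops, handled as a special case otherwise). I then add one GNN layer in which each vertex $w$ emits as its outgoing message its precomputed $x'$-subexpression signals; at each vertex $v$, the combine FNN receives for each neighbor $w$ the $x'$-signals of $w$ and, together with $v$'s own $x$-signals, evaluates $\psi(v,w,\vec b)$ as an explicit Boolean combination of the inputs for every $\vec b$ in the tabulated range, and then counts the satisfying $\vec b$ by iterated addition (in $\TC^0$). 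The $\SUM$-aggregation across neighbors then produces $\sem{\xi}^{(G,\Ca)}(v)$. Lemma~\ref{lem:termbound} ensures that both the tabulation dimension and the per-layer FNN sizes remain polynomial in $n$, and since the guarded-counting nesting depth of $\phi$ is bounded, so is the total depth of the resulting GNN family. Applied to $\phi(x)$ itself, the claim yields the desired output $\tilde\CN(G,\Cu)(v)=\mathbb{1}[(G,\Ca)\models\phi(v)]$.

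The main obstacle is the bookkeeping around the two-variable body $\psi(x,x',\vec y)$: one must verify that the maximal one-vertex subexpressions are correctly isolated and are themselves $\GCnu$-expressions eligible for the outer induction, that the two-vertex atoms are handled cleanly at the combine step despite a GNN's inability to distinguish vertex identities, and that the tabulation dimensions over $\vec b$ do not blow up super-polynomially through successive layers. Given the polynomial bounds from Lemma~\ref{lem:termbound} and the $\TC^0$-to-FNN simulation of Lemma~\ref{lem:c2fnn} realized via $\lsig$, all these budgets close.
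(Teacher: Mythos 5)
Your overall blueprint --- tabulate, for each subexpression of $\phi$ with a single free vertex variable, the values or truth-values over all polynomially many number-variable assignments, push these tabulations forward layer by layer, realize per-vertex arithmetic and Boolean combinations as $\TC^0$-circuits-turned-FNNs (Lemmas~\ref{lem:tc0ar} and~\ref{lem:c2fnn}), and use $\SUM$-aggregation to handle guarded counting terms --- coincides with the paper's proof. The paper encodes each tabulation as a one-hot vector $\angles{S_\xi(v)}\in\{0,1\}^{\tm}$ rather than as a vector of bit representations, but the two encodings are polynomially interconvertible, so this difference is immaterial.

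There is, however, an architectural error in your treatment of the guarded counting term $\xi=\#(x',\vec y<\vec\theta).(\gamma(x,x')\wedge\psi)$. You write that each vertex $w$ ``emits as its outgoing message its precomputed $x'$-subexpression signals'' and that ``at each vertex $v$, the combine FNN receives for each neighbor $w$ the $x'$-signals of $w$ and \ldots evaluates $\psi(v,w,\vec b)$,'' followed by $\SUM$-aggregation. This conflicts with the GNN architecture of Section~\ref{sec:gnn} in two ways: aggregation happens \emph{before} $\comb$, not after, and $\comb$ receives only the vertex's own state together with the \emph{aggregate} of the messages, never the individual per-neighbor contributions. By the time $\comb$ runs, $w$'s state is gone, so the evaluation of $\psi(v,w,\vec b)$ cannot live there. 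The paper instead places this computation inside the \emph{message} function $\msg(\Cx(v),\Cx(w))$, which does see both endpoints' states: $\msg$ evaluates $\psi(v,w,\cdot)$, counts over the bound variables $y_1,\ldots,y_k$, and emits for each pair $(v,w)$ the rescaled vector $\frac{1}{m}\vec c(v,w)$ of inner counts (indexed by the remaining free number-variable assignments), so that $\SUM$-aggregation yields $\frac{1}{m}\sem{\xi}^{G}(v,\cdot)$ directly. The factor $\frac{1}{m}$ is not cosmetic: $\lsig$-activations clamp the message outputs to $[0,1]$, and Lemma~\ref{lem:termbound} guarantees the aggregated value stays below $m$, so the rescaled quantities survive the clamping while the unscaled counts would not. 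Finally, after aggregation one still needs a decoding step (the paper's Lemma~\ref{lem:decode}) to convert the resulting real-valued count back into the tabulated encoding that your inductive invariant requires for the next layer. Moving the $\psi$-evaluation into $\msg$, inserting the $\frac{1}{m}$ rescaling, and re-encoding after aggregation are exactly the repairs needed.
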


In the following, we will use the following more suggestive notation
for the setting of the lemma: for a signal $\Cu:V(G)\to\{0,\ldots,n-1\}^p$, we write
\[
  (G,\Cu)\models\phi(v)
\]
if
$(G,\Ca)\models\phi(v)$ for some assignments $\Ca$
with $\Ca(U_i)(v)=\Cu(v)_i$ for all $i\in[p]$ and $v\in V(G)$.
This is reasonable because $\phi$ only
depends on the assignment to $U_i$ and to
$x$. Thus if $(G,\Cu)\models\phi(v)$ then $(G,\Ca)\models\phi$ for all assignments $\Ca$
with with $\Ca(U_i)(v)=\Cu(v)_i$ and $\Ca(x)=v$.

To prove Lemma~\ref{lem:converse}, we need to construct an FNN that
transforms a tuple of nonnegative integers into a ``one-hot encoding''
of these integers mapping $i$ to the $\{0,1\}$-vector with a $1$ in
the $i$th position and $0$s everywhere else.

\begin{lemma}\label{lem:decode}
  Let $m,n\in\Nat$. Then there is an FNN\/ $\FF$\/ of input
  dimension $m$ and output dimension $m\cdot n$ such that for all
  $\vec x=(x_0,\ldots,x_{m-1})\in\{0,\ldots,n-1\}^m$ the following
  holds. Suppose that $\FF(\vec x)=\vec y=(y_0,\ldots,y_{mn-1})$. Then
  for $k=in+j$, $0\le i<m,0\le j<n$, 
  \[
    y_k=
    \begin{cases}
      1&\text{if }j=x_i,\\
      0&\text{otherwise}.
    \end{cases}
  \]
  Furthermore, $\FF$ has size $O(mn)$, depth $2$, and it only uses $\lsig$
  activations.
\end{lemma}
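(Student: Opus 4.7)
The plan is to exploit the fact that, on integer inputs, the linearised sigmoid behaves as a threshold: for every $j,x\in\Nat$ we have $\lsig(x-j)=1$ if $x\ge j+1$ and $\lsig(x-j)=0$ if $x\le j$ (because $\lsig(0)=0$). Consequently, for integer $x$ and $j$ the identity
\[
  \mathbf 1[x=j]\;=\;\mathbf 1[x\ge j]-\mathbf 1[x\ge j+1]\;=\;\lsig(x-j+1)-\lsig(x-j)
\]
holds, and the difference on the right hand side always lies in $\{0,1\}$. This immediately suggests a depth-$2$ architecture.

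First, for each input coordinate $i\in\{0,\ldots,m-1\}$ and each shift $k\in\{-1,0,1,\ldots,n-1\}$ I would introduce a hidden node $H_{i,k}$, connected by a single edge of weight $1$ from the input node $X_{i+1}$, with bias $-k$ and activation $\lsig$; this node computes $\lsig(x_i-k)$. Second, for each pair $(i,j)$ with $0\le i<m$ and $0\le j<n$ I would introduce an output node $Y_{in+j+1}$ with bias $0$, activation $\lsig$, and exactly two incoming edges: from $H_{i,j-1}$ with weight $+1$ and from $H_{i,j}$ with weight $-1$. By the identity above, the pre-activation input to $Y_{in+j+1}$ equals $\mathbf 1[x_i=j]\in\{0,1\}$, and since $\lsig(0)=0$ and $\lsig(1)=1$, the outer $\lsig$ preserves this value. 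Thus the FNN computes exactly the one-hot encoding as required.

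For the resource bounds, the network has $m$ input nodes, $m(n+1)$ hidden nodes and $mn$ output nodes, giving order $O(mn)$; the edge count is $m(n+1)+2mn=O(mn)$; every non-input activation is $\lsig$; and the longest input-to-output path has length $2$, so the depth is $2$. No part of the construction needs more than the identity above and the fact that $\lsig$ is idempotent on $\{0,1\}$, so there is no real obstacle; the only thing to be careful about is covering the extreme shift $k=-1$ (used when $j=0$), which is why the hidden layer carries the extra node $H_{i,-1}$ per input.
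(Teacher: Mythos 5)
Your proof is correct and uses the same key identity as the paper's own proof: the observation that $f(x)=\lsig(x+1)-\lsig(x)$ is the indicator of $x=0$ on the integers, applied at $x_i-j$, with $\lsig$'s idempotence on $\{0,1\}$ justifying the outer activation. The only (immaterial) difference is that you share hidden nodes across consecutive shifts $j$ rather than allocating two per $(i,j)$ pair, which changes constants but not the $O(mn)$ bound.
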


\begin{example}
  Suppose that $m=3,n=4$, and $\vec x=(1,3,0)$. Then we want $\FF(\vec
  x)$ to be
  \[
    (0,1,0,0,\;0,0,0,1,\;1,0,0,0).\uende
  \]
\end{example}

\begin{proof}[Proof of Lemma~\ref{lem:decode}]
  Observe that the function $f(x)\coloneqq\lsig(x+1)-\lsig(x)$ satisfies
  $f(0)=1$ and $f(k)=0$ for all integers $k\neq 0$.
  We design the network such that
  \[
    y_{im+j}=f(x_i-j)=\lsig(x_i-j+1)-\lsig(x_i-j).
  \]
  On the middle layer we compute the values $\lsig(x_i-j+1)$ and
  $\lsig(x_i-j)$ for all $i,j$.
\end{proof}

\begin{proof}[Proof of Lemma~\ref{lem:converse}]
  Let us fix an $n\in\Nat$.
  We need to define a GNN $\FN$ of size polynomial
  in $n$ and of depth only depending  on $\phi$, but not on $n$,
  such that for every graph $G$ of order $n$, every signal
  $\Cu:V(G)\to\{0,\ldots,n-1\}^p$, and every $v\in V(G)$ it holds that
  $\tilde\FN(G,\Cu)(v)\in\{0,1\}$ and
  \begin{equation}
    \label{eq:79}
      (G,\Cu)\models\phi(v)\iff \FN(G,\Cu)(v)=1.
    \end{equation}
    We will prove \eqref{eq:79} by induction on the formula
    $\phi(x)$. Let us understand the structure of this formula. The formula uses two vertex variables
    $x_1,x_2$. We usually refer to them as $x,x'$, with the
    understanding that if $x=x_i$ then $x'=x_{3-i}$. In addition, the
    formula uses an arbitrary finite set of number
    variables, say, $\{y_1,\ldots,y_\ell\}$. When
    we want to refer to any of these variables without specifying a
    particular $y_i$, we use the notations like $y,y',y^j$.

    By a slight extension of Lemma~\ref{lem:termbound} to a setting
    where we allow function variables, but require them to take values
    smaller than the order of the input structure, all subterms of
    $\phi(x)$ are polynomially bounded in the order $n$ of the input
    graph. Thus in every counting subterm
    $\#(x',y^1<\theta_1,\ldots,y^k<\theta_k).\big(E(x,x')\wedge
    \psi\big)$ or $\#(y^1<\theta_1,\ldots,y^k<\theta_k).  \psi$,
    we may replace the $\theta_i$ by a fixed closed term
    $\theta\coloneqq(\ord+1)^r$ for some constant $r\in\Nat$ and
    rewrite the counting terms as
    $\#(x',y^1<\theta,\ldots,y^k<\theta).\big(E(x,x')\wedge
    y^1<\theta_1\wedge\ldots \wedge y^k<\theta_k\wedge
    \psi\big)$ and $\#(y^1<\theta,\ldots,y^k<\theta).\big(y^1<\theta_1\wedge\ldots \wedge y^k<\theta_k\wedge
    \psi\big)$, respectively. We fix $\theta$ for the rest of the
    proof and assume that it is used as the bound in all counting
    terms appearing in $\phi$.

    Arguably the most important building blocks of the formula $\phi$
    are subterms of the form
    \begin{equation}
      \label{eq:80}
      \#(x',y^1<\theta,\ldots,y^k<\theta).
      \big(E(x,x')\wedge \psi\big). 
    \end{equation}
    Let us call these the \emph{neighbourhood terms} of $\phi$. Note
    that the only guards available in our vocabulary of graphs are
    $E(x,x')$ and $E(x',x)$, and since we are dealing with undirected
    graphs these two are equivalent. This is why we always assume that
    the guards $\gamma$ of subterms
    $\#(x',y_1<\theta,\ldots,y_k<\theta).(\gamma\wedge\ldots)$
    are of the form $E(x,x')$. Furthermore, we may assume that atoms
    $E(x,x')$ only appear as guards of neighbourhood terms. This
    assumption is justified by the observation that atoms $E(x,x')$
    must always occur within some neighbourhood term (otherwise both
    $x$ and $x'$ would occur freely in $\phi$), and it would make no
    sense to have either $E(x,x')$ or its negation in the subformula
    $\psi$ in \eqref{eq:80} unless it appeared within a neighbourhood
    term of $\psi$. We may also
    assume that $\phi$ has no atomic subformulas $E(x,x)$, because
    they always evaluate to false (in the undirected simple graphs we
    consider), or $x=x$, because they always evaluate to true.

    In the following, we will use the term ``expression'' to refer to
    subformulas and subterms of $\phi$, and we denote expressions by
    $\xi$. An \emph{vertex-free expression} is an expression
    with no free vertex variables. A
    \emph{vertex expression} is an expression with exactly one
    free vertex variable $x$ (so $\phi=\phi(x)$ itself is a vertex expression), and
    an \emph{edge expression} is an expression with two free
    vertex variables. This terminology is justified by the observation
    that edge expressions must be guarded, so the two
    free variables must be interpreted by the endpoints of an
    edge. The most important edge formulas are the formulas
    $E(x,x')\wedge\psi$ appearing within the neighbourhood terms.

   To simplify the presentation, let us a fix a graph $G$ of order $n$
    and a signal $\Cu:V(G)\to\{0,\ldots,n-1\}^p$ in the
    following. Of course the GNN we shall define will not depend on
    $G$ or $\Cu$; it will only depend on $n$ and $\phi$.

    Recall that all subterms of $\phi$ take values polynomially
    bounded in $n$. Let $m\in\Nat$ be polynomial in $n$
    such that all subterms of $\psi$ take values strictly smaller than
    $m$. Let $M\coloneqq\{0,\ldots,m-1\}$. When evaluating $\phi$, we
    only need to consider assignments that map number variables
    $y_1,\ldots,y_{\ell}$ appearing in $\phi$ to
    values in $M$. Then every vertex-free formula $\psi$ defines a
    relation $S_\psi\subseteq M^{\ell+1}$ consisting of all tuples
    $(b,a_1,\ldots,a_\ell)\in M^{\ell+1}$ such that
    $(G,\Ca)\models\psi$ for all assignments $\Ca$ over $G$ with
    $\Ca(y_j)=a_j$ for all $j\in[\ell]$. (The first coordinate $b$ is
    just a dummy coordinate that will allow us to work with
    $\ell+1$-ary relations throughout.) Every vertex-free term
    $\theta$ defines a relation $S_\theta\subseteq M^{\ell+1}$
    consisting of all tuples $(b,a_1,\ldots,a_\ell)\in M^{\ell+1}$
    such that $\sem{\theta}^{(G,\Ca)}=b$ for all assignments $\Ca$
    over $G$ with $\Ca(y_j)=a_j$ for all $j\in[\ell]$. Similarly,
    every vertex expression $\xi$ defines a relation
    $S_\xi(v)\subseteq M^{\ell+1}$ for every $v\in V(G)$ and every
    edge expression $\xi$ defines a relation
    $S_\xi(v,v')\subseteq M^{\ell+1}$ for every pair
    $(v,v')\in V(G)^2$.

    Let $\tm\coloneqq m^{\ell+1}$ and
    $\tM\coloneqq\{0,\ldots,\tm-1\}$.  Let
    $\langle\cdot\rangle:M^{\ell+1}\to\tM$ be the bijection defined by
    $\big\langle(a_0,\ldots,a_{\ell})\big\rangle=\sum_{i=0}^\ell
    a_im^i$. Note that $\langle\cdot\rangle$ maps relations
    $R\subseteq M^{\ell+1}$ to subsets $\langle R\rangle\subseteq\tM$,
    which we may also view as vectors in $\{0,1\}^\tm$: for $i\in\tm$
    we let $\langle R\rangle_i=1$ if $i\in\langle R\rangle$ and
    $\langle R\rangle_i=0$ otherwise.

    Thus every vertex-free expression $\xi$ defines a vector
    $\vec x_\xi\coloneqq\langle S_\xi\rangle\in \{0,1\}^\tm$. Every
    vertex expression defines an $\tm$-ary Boolean signal $\Cx_\xi$
    defined by
    \[
      \Cx_\xi(v)\coloneqq \angles{S_\xi(v)}.
    \]
    Every edge expression $\xi$ defines
    an ``edge signal'' $\Cy_{\xi }$ defined by
    \[
      \Cy_\xi(v,v')\coloneqq \angles{S_\xi(v,v')}.
    \]
    Note that, formally, $\Cy_\xi(v,v')$ is defined for all pairs
    $v,v'$ and not only for the pairs of endpoints of edges.

    Assume that the vertex expressions are
    $\xi^{(1)},\ldots,\xi^{(d)}$, ordered in such a way that if
    $\xi^{(i)}$ is a subexpression of $\xi^{(j)}$ then
    $i<j$. Then $\xi^{(d)}=\phi$.  Our GNN $\FN$ will have $d+1$ layers
    $\FL^{(1)},\ldots,\FL^{(d+1)}$. For $t\in[d]$, the layer $\FL^{(t)}$ will have
    input dimension $p_{t-1}\coloneqq p+(t-1)\tm$ and output
    dimension $p_t\coloneqq p+t\tm$. Layer
    $\FL^{(d+1)}$ will have input dimension $p_d$ and output dimension
    $p_{d+1}\coloneqq1$. Note that $p_0=p$. So our GNN
    $\FN=(\FL^{(1)},\ldots,\FL^{(d+1)})$ will have input dimension $p$
    and output dimension $1$, which is exactly what we need.

    Let $\vec x^{(0)}\coloneqq\Cu$ be the input
    signal, and for every $t\in[d+1]$, let
    $\Cx^{(t)}\coloneqq\tilde\FL^{(t)}(G,\Cx^{(t-1)})$. We shall define
    the layers in such a way that for all $t\in [d]$, all $v\in
    V(G)$ we have
    \begin{equation}
      \label{eq:81}
      \Cx^{(t)}(v)=\Cu(v)\Cx_{\xi^{(1)}}(v)\ldots \Cx_{\xi^{(t)}}(v).
    \end{equation}
    Furthermore,
    \begin{equation}
      \label{eq:82}
      \Cx^{(d+1)}(v)=
                        \begin{cases}
                          1&\text{if }G\models\phi(v),\\
                          0&\text{otherwise}.
                        \end{cases}
    \end{equation}
    So the GNN will take care of the vertex expressions. We also need
    to take care of the vertex-free expressions and the edge
    expressions. For vertex-free expressions, this is easy. Observe
    that a vertex-free expression contains no vertex variables at all,
    free or bound, because once we introduce a vertex variable the
    only way to bind it is by a neighbourhood term, and such a term
    always leaves one vertex variable free. Thus the value of a
    vertex-free expression does not depend on the input graph $G$, but
    only on its order $n$, the built-in numerical relations, and the
    integer arithmetic that is part of the logic. This means that for
    a vertex-free expression $\xi$ the relation $S_\xi$ is ``constant'' and can be
    treated like a built-in numerical relation that can be hardwired
    into the GNN. Dealing with edge expressions is more
    difficult. We will handle them when dealing with the neighbourhood
    terms.

    Let us turn to the vertex expressions. Let $t\in[d]$, and let $\xi\coloneqq\xi^{(t)}$. We distinguish
    between several cases depending on the shape of $\xi$.
    \begin{cs}
      \case1 $\xi=U_i(x)$ for some $i\in[p]$.
      
      Observe that for $v\in V(G)$ we have
      \[
        S_{\xi}(v)=\{\Cu(v)_i\}\times M^{\ell}.
      \]
      Thus $\Cx_\xi(v)_k=1$ if $k=\Cu(v)_i+\sum_{j=1}^\ell a_j m^j$
      for some $(a_1,\ldots,a_\ell)\in M^\ell$ and $\Cx_\xi(v)_k=0$
      otherwise.

      Using Lemma~\ref{lem:decode} in the special case $m=1$, we can
      design an FNN $\FF_1$ of input dimension $1$ and output
      dimension $m$ that maps $\Cu(v)_i$ to
      $(0,\ldots,0,1,0,\ldots,0)\in\{0,1\}^m$ with a $1$ at index
      $\Cu(v)_i$. We put another layer of $\tm$ output nodes on top of
      this and connect the node $k=\sum_{j=0}^\ell a_jm^j$ on this
      layer to the output node of $\FF_1$ with index $a_0$ by an edge
      of weight $1$. All nodes have bias $0$ and use $\lsig$
      activations. The resulting FNN $\FF_2$ has input dimension $1$
      and output dimension $\tm$, and it maps an input $x$ to the
      vector $\angles{\{x\}\times M^\ell}\in\{0,1\}^{\tm}$. Thus in
      particular, it maps $\Cu(v)_i$ to $\angles{S_{\xi}(v)}$. Padding
      this FNN with additional input and output nodes, we obtain an
      FNN $\FF_3$ of input dimension $p_{t-1}+1$ and output dimension
      $p_t=p_{t-1}+\tm$ that map $(x_1,\ldots,x_{p_{t-1}+1})$ to
      $(x_1,\ldots,x_{p_{t-1}},\FF_2(x_i))$.
      
      We use $\FF_3$ to define the combination functions
      $\comb^{(t)}: \Real^{p_{t-1}+1}\to\Real^{p_t}$ of the
      layer $\FL^{(t)}$. We define the message function by
      $\msg^{(t)}:\Real^{2p_{t-1}}\to \Real$ by $\msg^{(t)}(\vec
      x)\coloneqq 0$ for all $\vec x$, and we use sum aggregation.
      Then clearly, $\FL^{(t)}$ computes the transformation
      $(G,\Cx^{(t-1)})\mapsto (G,\Cx^{(t)})$ satisfying \eqref{eq:81}.

      \case2 $\xi=\xi'*\xi''$, where $\xi',\xi''$ are vertex
      expressions and either $*\in\{+,\cdot,\le\}$ and $\xi,\xi',\xi'$ are
      terms or $*=\wedge$ and
      $\xi,\xi',\xi''$ are formulas.
      
      We could easily (though tediously) handle this case by
      explicitly constructing the appropriate FNNs, as we did in
      Case~1. However, we will give a general argument that will help
      us through the following cases as well.

      As the encoding $R\subseteq M^{\ell+1}\mapsto\langle R\rangle
      \subseteq\tM$ and the decoding $\langle R\rangle
      \subseteq\tM\mapsto R\subseteq M^{\ell+1}$ are definable by
      arithmetical $\FOC$-formulas, the transformation
      \[
        \big(\langle S_{\xi'}(v)\rangle, \langle S_{\xi''}(v)\rangle\big)\mapsto
        \langle S_{\xi'*\xi''}(v)\rangle,
      \]
      which can be decomposed as
      \[
        \big(\langle S_{\xi'}(v)\rangle, \langle S_{\xi''}(v)\rangle\big)
        \mapsto
        \big(S_{\xi'}(v),S_{\xi'}(v)\big)\mapsto S_{\xi*\xi''}(v)\mapsto
        \langle S_{\xi'*\xi''}(v)\rangle,
      \]
      is also definable by an arithmetical $\FOC$-formula, using
      Lemmas~\ref{lem:ar1} and \ref{lem:ar2a} for the main step
      $\big(S_{\xi'}(v),S_{\xi'}(v)\big)\mapsto S_{\xi*\xi''}(v)$.
      Hence by Corollary~\ref{cor:nuTC}, it is computable by a
      threshold circuit $\FC^*$ of bounded depth (only depending on
      $*$) and polynomial size. Hence by Lemma~\ref{lem:c2fnn} it is
      computable by an FNN $\FF^*$ of bounded depth and polynomial
      size. Note that for every vertex $v$, this FNN $\FF^*$ maps
      $\big(\Cx_{\xi'}(v),\Cx_{\xi'}(v)\big)$ to $\Cx_\xi(v)$.

      We have $\xi'=\xi^{(t')}$ and $\xi''=\xi^{(t'')}$ for some
      $t',t''<t$. 
      Based on $\FF^*$ we construct an
      FNN $\FF$ of input dimension $p_{t-1}+1$ and output dimension $p_t=p_{t-1}+\tm$
      such that for $\vec u\in\{0,\ldots,n-1\}^p$ and $\vec x_1,\ldots,\vec
      x_{t-1}\in \Real^{\tm}$, $x\in\Real$
      \[
        \FF(\vec u\vec x_1\ldots\vec x_{t-1}x)=
        \vec u\vec x_1\ldots\vec x_{t-1}\FF^*(\vec x_{t'},\vec x_{t''}).
      \]
      Continuing as in Case~1, we use $\FF$ to define the combination functions
      $\comb^{(t)}$ of the
      layer $\FL^{(t)}$, and again we use a trivial message function.

      \case3 $\xi=\neg\xi'$, where $\xi'$ is vertex
      formula.
      
      This case can be handled as Case~2.

      \case4 $\xi=\xi'*\xi''$, where $\xi'$ is a vertex expression,
      $\xi''$ is a vertex-free expression, and either
      $*\in\{+,\cdot,=,\le\}$ and $\xi,\xi',\xi''$ are terms or $*\in\wedge$ and
      $\xi,\xi',\xi''$ are formulas.
      
      As in Case~2, we construct a threshold circuit  $\FC^*$ of bounded depth and polynomial
      size that computes the mapping $\big(\langle S_{\xi'}(v)\rangle, \langle S_{\xi''}\rangle\big)\mapsto
        \langle S_{\xi'*\xi''}(v)\rangle$. As we argued above, the
        relation $S_\xi''$ for the vertex-free expression $\xi''$ and
        hence the vector $\angles{S_{\xi''}}$ do not depend on the
        input graph. Hence we can simply hardwire the
        $\angles{S_{\xi''}}$ into $\FC^*$, which gives us a circuit
        that computes the mapping $\langle S_{\xi'}(v)\rangle\mapsto
        \langle S_{\xi'*\xi''}(v)\rangle$. From this circuit we obtain an
        FNN $\FF^*$ that computes the same mapping, and we can
        continue as in Case~2.

      \case5 $\xi=\#(y^1<\theta,\ldots,y^k<\theta).
      \psi$, where $\psi$ is a vertex formula.

We argue as in Cases~2-4.
      We construct a threshold circuit that computes the mapping
      \[
        \langle S_{\psi}(v)\rangle\mapsto
        \langle S_{\xi}(v)\rangle.
      \]
      Turning this circuit into an FNN $\FF^*$ that computes the same
      mapping, we can continue as in Case~2.

      \case 6
      $\xi=\#(x',y^1<\theta,\ldots,y^k<\theta).
      \big(E(x,x')\wedge\psi\big)$, where $\psi$ is an edge
      formula or a vertex formula.

      Without loss of generality, we assume that $\psi$ is an edge
      formula. If it is not, instead of $\psi$ we take the conjunction
      of $\psi$ with $U_1(x)+U_1(x')\ge 0$, which is always true,
      instead. Moreover, we may assume that $\psi$
      contains no equality atoms $x=x'$, because the guard $E(x,x')$
      forces $x$ and $x'$ to be distinct.  Thus $\psi$ is constructed
      from vertex expressions and vertex-free expressions using
      $+,\cdot,\le$ to combine terms, $\neg,\wedge$ to combine formulas,
      and counting terms
      $\#((y')^1<\theta,\ldots,(y')^{k'}<\theta).\psi'$. Let
      $\xi^{(t_1)},\ldots,\xi^{(t_s)}$ be the maximal (with respect to the
      inclusion order on expressions) 
      vertex expressions
      occurring in $\psi$. Assume that $x$ is the free vertex variable
      of $\xi^{(t_1)},\ldots,\xi^{(t_r)}$ and $x'$ is the free vertex variable
      of $\xi^{(t_{r+1})},\ldots,\xi^{(t_s)}$. Arguing
      as in Case~2 and Case~5, we can construct a threshold circuit
      $\FC$ of bounded depth and polynomial size that computes the
      mapping
     \[
        \big(\langle S_{\xi^{(t_1)}}(v)\rangle, \ldots, \langle S_{\xi^{(t_r)}}(v)\rangle, \langle S_{\xi^{(t_{r+1})}}(v')\rangle, \ldots, \langle S_{\xi_{i_{s}}}(v')\rangle\big)\mapsto
        \langle S_{\psi}(v,v')\rangle
      \]
      To simplify the notation, let us assume that
      $y^i=y_{i}$. Thus the free variables of $\xi$ are among
      $x,y_{k+1},\ldots,y_{\ell}$, and we may write
      $\xi(x,y_{k+1},\ldots,y_{\ell})$. Let
      \[
        \zeta(x,x',y_{k+1},\ldots,y_\ell)\coloneqq\#(y_1<\theta,\ldots,y_k<\theta).\psi
      \]
      and observe that for all $v\in V(G)$ and
      $a_{k+1},\ldots,a_\ell\in M$ we have
      \[
        \sem{\xi}^G(v, a_{k+1},\ldots,a_\ell)=\sum_{v'\in
          N(v)}\sem{\zeta}^G(v,v',a_{k+1},\ldots,a_\ell).
      \]
      For $v,v'\in V(G)$, let
      \[
        R_\zeta(v,v')\coloneqq\{(a_{0},\ldots,a_{\ell-k-1},b)\bigmid
        b<\sem{\zeta}^G(v,v',a_{0},\ldots,a_{\ell-k-1})\}\subseteq
        M^{\ell-k+1},
      \]
      and slightly abusing notation, let
      \[
        \angles{R_\zeta(v,v)}=\left\{\left.\sum_{i=0}^{\ell-k}a_im^i\;\right|\;(a_0,\ldots,a_{\ell-k})\in
        R_\zeta(v)\right\}\subseteq\big\{0,\ldots,m^{\ell-k+1}-1\big\}
      \]
      which we may also view as a vector in $\{0,1\}^{m^{\ell-k+1}}$. 
      Again arguing via $\FOC$, we can construct a threshold circuit
      $\FC'$ that computes the transformation
      \[
        \big(\langle S_{\xi^{(t_1)}}(v)\rangle, \ldots, \langle S_{\xi^{(t_r)}}(v)\rangle, \langle S_{\xi^{(t_{r+1})}}(v')\rangle, \ldots, \langle S_{\xi_{i_{s}}}(v')\rangle\big)\mapsto
        \langle R_\zeta(v,v')\rangle.
      \]
      Using Lemma~\ref{lem:c2fnn}, we can turn $\FC'$ into an FNN
      $\FF'$ that computes the same Boolean function. Let
      $\vec
      c(v,v')=(c_0,\ldots,c_{m^{\ell-k}-1})\in M^{m^{\ell-k}}$ be
      the vector defined by as follows: for
      $(a_0,\ldots,a_{\ell-k-1})\in M^{\ell-k}$ and 
      $j=\sum_{i=0}^{\ell-k-1}a_im^i$ we let
      \[
        c_j\coloneqq\sem{\zeta}^G(v,v',a_{0},\ldots,a_{\ell-k-1}).
      \]
      Then
      \begin{align*}
        c_j&=\big|\big\{ b\bigmid
             b<\sem{\zeta}^G(v,v',a_{0},\ldots,a_{\ell-k-1})\big\}\big|\\
        &=\big|\big\{ b\bigmid (a_0,\ldots,a_{\ell-k-1},b)\in
          R_\zeta(v,v')\big\}\big|\\
        &=\sum_{b\in M}\langle R_\zeta(v,v')\rangle_{j+bm^{\ell-k}}.
      \end{align*}
      Thus an FNN of depth $1$ with input
      dimension $m^{\ell-k+1}$ and output dimension $m^{\ell-k}$ can
      transform $\langle R_\theta(v,v')\rangle$ into $\frac{1}{m}\vec
      c(v,v')$. We take the factor $\frac{1}{m}$ because $0\le c_j<
      m$ and thus $\frac{1}{m}\vec
      c(v,v')\in[0,1]^{m^{\ell-k}}$, and we can safely use
      $\lsig$-activation. We add this FNN on top of the $\FF'$ and
      obtain an FNN $\FF''$ that computes the transformation
      \[
        \big(\langle S_{\xi^{(t_1)}}(v)\rangle, \ldots, \langle S_{\xi^{(t_r)}}(v)\rangle, \langle S_{\xi^{(t_{r+1})}}(v')\rangle, \ldots, \langle S_{\xi_{i_{s}}}(v')\rangle\big)\mapsto \frac{1}{m}\vec
        c(v,v').
      \]
      The message function $\msg^{(t)}$ of the GNN layer $\FL^{(t)}$ computes the
      function
      \[
        \big(\Cx^{(t-1)}(v),\Cx^{(t-1)}(v')\big)\to \frac{1}{m}\vec
        c(v,v')
      \]
      which we can implement by an FNN based on $\FF''$. 
      Aggregating, we obtain the signal $\Cz$ such that
      \[
        \Cz(v)=\sum_{v'\in
          N(v)}\msg^{(t)}(v,v')=\frac{1}{m}\sum_{v'\in N(v)}\vec
        c(v,v')
      \]
      Thus  for
      $(a_0,\ldots,a_{\ell-k-1})\in M^{\ell-k}$ and 
      $j=\sum_{i=0}^{\ell-k-1}a_im^i$ we have
      \begin{align*}
        \Cz(v)_j&=\frac{1}{m}\sum_{v'\in N(v)}\vec
                  c(v,v')\\
        &=\frac{1}{m}\sum_{v'\in
          N(v)}\sem{\zeta}^G(v,v',a_{0},\ldots,a_{\ell-k-1})\\
        &=\frac{1}{m}\sem{\xi}^G(v,a_{0},\ldots,a_{\ell-k-1}).
      \end{align*}
      Our final task will be to transform the vector
      $\Cz(v)\in[0,1]^{m^{\ell-k}}$ to the vector $\angles{S_\xi}\in\{0,1\}^{\tm}$. In a first step,
      we transform $\Cz(v)$ into a vector
      $\Cz'(v)\in M^{m^{\ell}}$ with entries
      \[
        \Cz'(v)_j=\sem{\xi}^G(v,a_{k+1},\ldots,a_{\ell})
      \]
      for $(a_1,\ldots,a_\ell)\in M^{\ell}$ and
      $j=\sum_{i=0}^{\ell-1}a_{i+1}m^i$. We need to transform $\Cz'(v)$ into
      $\angles{S_\xi(v)}\in\{0,1\}^{\tm}$, which for every
      $(a_1,\ldots,a_\ell)\in M^{\ell}$ has a single $1$-entry in
      position
      $j=\sum_{i=0}^{\ell}a_{i}m^i$ for
      $a_0=\sem{\xi}^G(v,a_{k+1},\ldots,a_{\ell})$ and $0$-entries in
      all positions $j'=\sum_{i=0}^{\ell}a_{i}m^i$ for
      $a_0\neq\sem{\xi}^G(v,a_{k+1},\ldots,a_{\ell})$. We can use Lemma~\ref{lem:decode} for this transformation. 

      Thus we can construct an FNN $\FF''$ that
      transforms the output $\vec z(v)$ of the aggregation into
      $\angles{S_\xi(v)}$. We define the combination function
      $\comb^{(t)}:\Real^{p_{t-1}+m^{\ell-k}}\to\Real^{p_t}$ by
      $
        \comb^{(t)}(\vec x,\vec z):=(\vec x,\FF''(\vec z)).
      $
      Then
      \[
        \comb^{(t)}(\Cx(v),\Cz(v)):=(\vec x,\Cx_\xi(v)).
      \]
      Thus the layer $\FL^{(t)}$ with message function $\msg^{(t)}$
      and combination function $\comb^{(t)}$ satisfies \eqref{eq:81}.
    \end{cs}
    All that remains is to define the last layer $\FL^{(d+1)}$ satisfying
    \eqref{eq:82}. Since $\xi^{(d)}=\phi$, by \eqref{eq:81} with $t=d$,
    the vector $\Cx_\phi(v)=\angles{S_\phi(v)}$ is the projection of
    $\Cx^{(d)}(v)$ on the last $\tm$ entries. As $\phi$ has no free
    number variables, we have
    \[
      \angles{S_\phi(v)}=
      \begin{cases}
        \vec 1&\text{if }G\models\phi(v),\\
        \vec 0&\text{if }G\not\models\phi(v).
      \end{cases}
    \]
    In particular, the last entry of $\angles{S_\phi(v)}$ and hence of
    $\Cx^{(d)}(v)$ is $1$ if $G\models\phi(v)$ and $0$ otherwise. Thus
    all we need to do on the last layer is project the output on the
    last entry.

    This completes the construction.
\end{proof}

The following theorem directly implies Theorem~\ref{theo:main1} stated
in the introduction.

\begin{theorem}\label{theo:converse}
  Let ${\CQ}$ be a unary query on $\CGS^\bool_p$. Then the following are equivalent:
  \begin{enumerate}
  \item ${\CQ}$ is definable in ${\GCnu}$.
  \item There is a polynomial-size bounded-depth family
    of rational piecewise-linear GNNs using only $\lsig$-activations and
    $\SUM$-aggregation that computes ${\CQ}$.
  \item There is a rpl-approximable polynomial weight bounded-depth
    family of GNNs that computes ${\CQ}$.
  \end{enumerate}
\end{theorem}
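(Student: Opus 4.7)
The plan is to close the cycle $(1) \Rightarrow (2) \Rightarrow (3) \Rightarrow (1)$, with each implication essentially packaging a previously established result.

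For $(3) \Rightarrow (1)$: This is exactly Corollary~\ref{cor:snonuniform}, applied to the Boolean query $\CQ$, so no additional work is needed.

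For $(2) \Rightarrow (3)$: This is essentially definitional. A rational piecewise-linear GNN using $\lsig$-activations is automatically rpl-approximable (each $\lsig$ is its own rpl approximation of constant bitsize). And by the definition of $\size$ as the maximum of bitsize and weight, polynomial size of the family implies polynomial weight. So the family witnessing (2) already witnesses (3).

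For $(1) \Rightarrow (2)$: This is the bulk of the work, and it is handled by Lemma~\ref{lem:converse}. The only issue is a change of viewpoint: a Boolean signal $\Cb \in \CGS^\bool_p$ is encoded in $\GCnu$ by atoms $P_i(x)$ over the labelled-graph vocabulary, whereas Lemma~\ref{lem:converse} takes inputs presented as function variables $U_i: V(G) \to \Nat$ with $\Ca(U_i)(v) < n$. First I would rewrite a $\GCnu$-formula $\phi(x)$ defining $\CQ$ into an equivalent formula $\phi'(x)$ in which each atomic subformula $P_i(x_j)$ is replaced by $\one \le U_i(x_j)$, so that $\phi'$ contains no relation variables and no function variables other than the $U_i$. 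When $\Ca(U_i)(v) \coloneqq \Cb(v)_i \in \{0,1\}$ and $n = |G| \ge 2$, the hypothesis of Lemma~\ref{lem:converse} is satisfied and its conclusion yields a polynomial-size, bounded-depth family $\CN$ of rational piecewise-linear GNNs using only $\lsig$-activations and $\SUM$-aggregation such that $\tilde\CN(G,\Cu) \in \{0,1\}$ and $\tilde\CN(G,\Cu)(v) = 1 \iff (G,\Cb) \models \phi(v)$. Since we require exact $\{0,1\}$-valued output, the thresholds $\tfrac34$ and $\tfrac14$ in the definition of ``computes'' are met with room to spare. The only graph of order $n = 1$ can be handled by hardcoding its (finitely many) possible outputs into the single GNN $\FN^{(1)}$.

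The main obstacle, namely the syntactic translation of bitwise and iterated arithmetic into GNN computations, has already been overcome inside the proof of Lemma~\ref{lem:converse}; at the present level we only assemble the pieces. There is no new analytic or combinatorial content: the theorem is the formal statement that the logical, uniform-circuit, and analog computation viewpoints on polynomial-size bounded-depth GNNs coincide, and each direction has been prepared by a dedicated earlier section.
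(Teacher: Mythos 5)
Your proposal follows the paper's own proof exactly: $(3)\Rightarrow(1)$ is Corollary~\ref{cor:snonuniform}, $(2)\Rightarrow(3)$ is definitional, and $(1)\Rightarrow(2)$ is Lemma~\ref{lem:converse} with Boolean-valued $U_i$ identified with the unary relations $P_i$. Your extra attention to the $P_i\leftrightarrow U_i$ translation and the degenerate case $n=1$ (where the hypothesis $\Ca(U_i)(v)<n$ would otherwise fail) is a small but genuine improvement in rigor over the paper's terse one-line treatment of $(1)\Rightarrow(2)$.
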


\begin{proof}
  The implication (1)$\implies$(2) follows from
  Lemma~\ref{lem:converse} in the special case that the $U_i$ are
  Boolean, that is, only take values in $\{0,1\}$. We can then replace
  them by the unary relations $P_i$ that we usually use to represent
  Boolean signals.

  The implication (2)$\implies$(3) is trivial.

  The implication (3)$\implies$(1) is Corollary~\ref{cor:snonuniform}.
\end{proof}

Again, we have a version of our theorem for GNNs with global readout. The proofs can easily be adapted. 

\begin{theorem}\label{theo:converse-gc}
  Let ${\CQ}$ be a unary query on $\CGS^\bool_p$. Then the following are equivalent:
  \begin{enumerate}
  \item ${\CQ}$ is definable in ${\GCgcnu}$.
  \item There is a polynomial-size bounded-depth family
    of rational piecewise-linear GNNs with global readount using only $\lsig$-activations and
    $\SUM$-aggregation that computes ${\CQ}$.
  \item There is a rpl-approximable polynomial weight bounded-depth
    family of GNNs with global readout that computes ${\CQ}$.
  \end{enumerate}
\end{theorem}

We also have a version 
  Theorem~\ref{theo:converse} for 1-GNNs and the modal fragment. For
  once, let us state the theorem explicitly.

\begin{theorem}\label{theo:converse-modal}
  Let ${\CQ}$ be a unary query on $\CGS^\bool_p$. Then the following are equivalent:
  \begin{enumerate}
  \item ${\CQ}$ is definable in ${\MCnu}$.
  \item There is a polynomial-size bounded-depth family
    of rational piecewise-linear 1-GNNs using only $\lsig$-activations and
    $\SUM$-aggregation that computes ${\CQ}$.
  \item There is a rpl-approximable polynomial weight bounded-depth
    family of 1-GNNs that computes ${\CQ}$.
  \end{enumerate}
\end{theorem}

\begin{proof}
  The implication (2)$\implies$(3) is trivial, and the implication
  (3)$\implies$(1) is the modal version of
  Corollary~\ref{cor:snonuniform} (see Remark~\ref{rem:mc5}).

  The implication (1)$\implies$(2) follows from a modal version of
  Lemma~\ref{lem:converse} where $\phi(x)$ is a ${\MCnu}$-formula and
  we obtain a family $\CN$ of 1-GNNs. We need to adapt the proof of
  the lemma. Actually, the proof becomes simpler in this case. Let us
  first understand the structure of the $\MC$-formula
  $\phi(x)$. Recall the definition of $\MC$: in $\MC$-formulas, the counting terms of the
  form \eqref{eq:4} are are only permitted if $x_i$ does not occur
  freely in $\phi$ or any of the $\theta_j$. Since counting terms of
  the form \eqref{eq:5} do not affect the vertex variables, it follows
  that an $\MC$-formula with only one free vertex variable cannot
  contain a subformula or subterm with two free vertex variables,
  unless it is an atomic subformula that occurs as a guard in a term
  of the form \eqref{eq:4}.

  With this initial observation, let us turn to the proof of modal
  version of Lemma~\ref{lem:converse}. We follow the proof of the
  original lemma. In a counting term of the form \eqref{eq:80}, the
  subformula $\psi$ can only have $x'$ as a free vertex
  variable. Moreover, we do not have any edge expressions (in the
  terminology of the original proof) except for the guards $E(x,x')$
  in terms \eqref{eq:80}. We proceed inductively exactly as in the
  proof of the original lemma. The only adaptations necessary
  are in Case~6, where we consider a counting term
  \[
    \xi=\#(x',y^1<\theta,\ldots,y^k<\theta).
    \big(E(x,x')\wedge\psi\big)
  \]
  This case becomes easier now, because we can
  assume that $\psi$ is a vertex formula with free variable $x'$. By
  induction, we have already computed the $\tilde m$-ary signal
  $\Cx_\psi$ with $\Cx_\psi(v)=\angles{S_\psi(v)}$ as part of the
  output of the previous layer of the 1-GNN we construct.

  As in the original proof, we assume $y^i=y_{i}$ and hence that the
  free variables of $\xi=\xi(x,y_{k+1},\ldots,y_{\ell})$ are among
  $x,y_{k+1},\ldots,y_{\ell}$. We let
  \[
    \zeta(x',y_{k+1},\ldots,y_\ell)\coloneqq\#(y_1<\theta,\ldots,y_k<\theta).\psi.
  \]
  Then for all $v\in V(G)$ and
  $a_{k+1},\ldots,a_\ell\in M$ we have
  \[
    \sem{\xi}^G(v, a_{k+1},\ldots,a_\ell)=\sum_{v'\in
      N(v)}\sem{\zeta}^G(v',a_{k+1},\ldots,a_\ell).
  \]
  Now we let
  \[
    R_\zeta(v')\coloneqq\{(a_{0},\ldots,a_{\ell-k-1},b)\bigmid
    b<\sem{\zeta}^G(v',a_{0},\ldots,a_{\ell-k-1})\}\subseteq
    M^{\ell-k+1},
  \]
  define $\angles{R_\zeta(v')}$ accordingly. We construct a threshold circuit
  and from this an FNN $\FF'$ that computes the transformation
  \[
    \angles{S_\psi(v')}\mapsto\angles{R_\zeta(v')}.
  \]
  We let $\vec c(v')=(c_0,\ldots,c_{m^{\ell-k}-1})\in M^{m^{\ell-k}}$
  be the vector defined by as follows: for
  $(a_0,\ldots,a_{\ell-k-1})\in M^{\ell-k}$ and
  $j=\sum_{i=0}^{\ell-k-1}a_im^i$ we let
  \[
    c_j\coloneqq\sem{\zeta}^G(v',a_{0},\ldots,a_{\ell-k-1}) =\sum_{b\in M}\langle R_\zeta(v')\rangle_{j+bm^{\ell-k}}.
  \]
  Since we can compute the sum by a single FNN-layer, we can modify
  $\FF'$ to obtain an FNN $\FF''$ that computes the tranformation
   \[
    \angles{S_\psi(v')}\mapsto\frac{1}{m}\vec c(v').
  \]
 Then the message function of the 1-GNN-layer $\FL^{(t)}$ we are constructing
 computes the transformation
 \[
   \Cx^{(t-1)}(v')\mapsto\frac{1}{m}\vec c(v')\eqqcolon\msg^{(t)}(v').
 \]
 We aggregate and obtain the signal $\Cz(v)$ defined by
 \[
   \Cz(v) =\sum_{v'\in
     N(v)}\msg^{(t)}(v')=\frac{1}{m}\sum_{v'\in N(v)}\vec
   c(v')
 \]
 with
 \[
   \Cz(v)_j =\frac{1}{m}\sum_{v'\in
     N(v)}\sem{\zeta}^G(v',a_{0},\ldots,a_{\ell-k-1})=\frac{1}{m}\sem{\xi}^G(v,a_{0},\ldots,a_{\ell-k-1}).
 \]
 From
 now on, the proof continues exactly as the proof of the original lemma.
  \end{proof}

\begin{myremark}\label{rem:converse}
  Let us finally address a question which we we already
  raised at the end of Section~\ref{sec:uniform}. Is every unary query
  definable in $\GC$ computable by single rational piecewise linear or at
  least by an rpl approximable GNN? In other words: do we really need
  families of GNNs in Theorem~\ref{theo:converse}, or could we just
  use a single GNN?

  It has been observed in \cite{RosenbluthTG23} that the answer to
  this question is ``no''. Intuitively, the reason is that GNNs cannot
  express ``alternating'' queries like nodes having an even degree. To
  prove this, we analyse the behaviour of GNNs on stars $S_n$ with $n$
  leaves, for increasing $n$. The signal at the root node that the GNN
  computes is approximately piecewise polynomial as a function of
  $n$. However, a function that is $1$ for all even natural numbers
  $n$ and $0$ for all odd numbers is very far from polynomial.
  \uend
\end{myremark}

\section{Random Initialisation}
\label{sec:ri}
A GNN with random initialisation receives finitely many random
features together with its $p$-dimensional input signal.  We assume
that the random features are chosen independently uniformly from the
interval $[0,1]$. We could consider other distributions, like the
normal distribution $N(0,1)$, but in terms of expressiveness this
makes no difference, and the uniform distribution is easiest to
analyse. The random features at different vertices are chosen
independently. As in \cite{AbboudCGL21}, we always assume that GNNs
with random initialisation have global readout.\footnote{There is no
  deeper reason for this choice, it is just that the results get
  cleaner this way. This is the same reason as in \cite{AbboudCGL21}.}

We denote the uniform distribution on $[0,1]$ by $\uni$, and for a
graph $G$ we write $\Cr\sim\uni^{r\times V(G)}$ to denote that
$\Cr\in\CS_r(G)$ is obtained by picking the features $\Cr(v)_i$
independently from $\uni$. Moreover, for a signal $\Cx\in\CS_p(G)$, by
$\Cx\Cr$ we denote the $(p+r)$-dimensional signal with
$\Cx\Cr(v)=\Cx(v)\Cr(v)$. Formally, a \emph{$(p,q,r)$-dimensional GNN with ri} 
is a GNN $\FN$ with global readout of input dimension $p+r$ and output dimension $q$. It computes a random
variable mapping pairs $(G,\Cx)\in\CGS_p$ to the space
$\CS_q(G)$, which we view as a product measurable space
$\Real^{q\times V(G)}$ equipped with a Borel $\sigma$-algebra (or Lebesgue
$\sigma$-algebra, this does not matter here). Abusing (but hopefully also
simplifying) notation, we use $\FR$ to denote a GNN that we interpret
as a GNN with random initialisation, and we use $\tilde\FR$ to
denote the random variable. Sometimes it is also convenient
to write $\FR(G,\Cx)\coloneqq(G,\tilde\FR(x))$. It is not hard to
show that the mapping $\tilde\FR$ is measurable with respect
to the Borel $\sigma$-algebras on the product spaces
$\CS_p(G)=\Real^{(p+r)\times V(G)}$ and $\CS_q(G)=\Real^{q\times
  V(G)}$. Here we use that the activation functions of $\FR$ are continuous.
To define the probability
distribution of the random variable $\tilde\FR$, for all
$(G,\Cx)\in\CGS_p$ and all events (that is,
measurable sets) $\CY\subseteq  \CS_q(G)$ we let
\begin{equation}
  \label{eq:83}
    \Pr\big(\tilde\FR(G,\Cx)\in
  \CY\big)\coloneqq\Pr_{\Cr\sim\uni^{r\times V(G)}}\big(\tilde\FR(G,\Cx\Cr)\in \CY\big),
\end{equation}
where on the left-hand side we interpret $\FR$ as
a $(p,q,r)$-dimensional GNN with ri and on the right-hand side 
just as an ordinary GNN of input dimension $(p+r)$ and output
dimension $q$.

For a $(p,q,r)$-dimensional GNN with ri we call $p$ the \emph{input
  dimension}, $q$ the \emph{output dimension}, and $r$ the
\emph{randomness dimension}.

Let ${\CQ}$ be a unary query on $\CGS_p^{\bool}$. We say that
a GNN with ri $\FR$ of input dimension $p$ and output dimension $1$ \emph{computes} ${\CQ}$ if for all
$(G,\Cb)\in\CGS_p^\bool$ and all $v\in V(G)$ it holds that
\[
  \begin{cases}
    \Pr\big(\tilde\FR(G,\Cb)\ge\frac{3}{4}\big)\ge\frac{3}{4}&\text{if }{\CQ}(G,\Cb)=1,\\
    \Pr\big(\tilde\FR(G,\Cb)\le\frac{1}{4}\big)\ge\frac{3}{4}&\text{if }{\CQ}(G,\Cb)=0.
  \end{cases}
\]
It is straightforward to extend this definition to families
$\CR=(\FR^{(n)})_{n\in\Nat}$ of GNNs with ri.

By a fairly standard probability amplification result, we can make the
error probabilities exponentially small.

\begin{lemma}\label{lem:ri1}
  Let ${\CQ}$ be a unary query over $\CGS_p$ that is computable by family
  $\CR=(\FR^{(n)})_{n\in\Nat}$ of GNNs
  with ri, and let $\pi(X)$ be a polynomial. Then there is a family $\CR'=(\FR')^{(n)})_{n\in\Nat}$ of GNNs
  with ri such that the following holds.
  \begin{itemize}
    \item[(i)] $\CR'$ computes ${\CQ}$, and for every $n$, every 
      $(G,\Cb)\in\CGS_p^\bool$ of order $n$, and every $v\in V(G)$,
      \begin{equation}
        \label{eq:84}
         \begin{cases}
    \Pr\big(\tilde\CR'(G,\Cb)=1\big)\ge 1-2^{-\pi(n)}&\text{if }{\CQ}(G,\Cb)=1,\\
    \Pr\big(\tilde\CR'(G,\Cb)=0\big)\ge 1-2^{-\pi(n)}&\text{if }{\CQ}(G,\Cb)=0.
  \end{cases}
      \end{equation}
\item[(ii)] The weight of $(\FR')^{(n)}$ is polynomially bounded in the
  weight of $\FR^{(n)}$ and $n$.
\item[(iii)] The depth of $(\FR')^{(n)}$ is at most the depth of $\FR^{(n)}$
  plus $2$.
\item[(iv)] The randomness dimension of $(\FR')^{(n)}$ is polynomially
  bounded in $n$ and the randomness dimension
  of $\FR^{(n)}$.
  \end{itemize}
\end{lemma}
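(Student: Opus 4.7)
The plan is to carry out a standard probability-amplification argument by majority voting, adapted to the GNN-with-ri setting. Fix $n$ and let $\FR = \FR^{(n)}$ be the given GNN with ri of input dimension $p$, output dimension $1$, and randomness dimension $r = r^{(n)}$. Choose $k = k(n)$ an odd integer that is $\Theta(\pi(n))$, with the exact constant to be determined by a Chernoff bound below. I will build $(\FR')^{(n)}$ as a GNN with randomness dimension $kr$ that runs $k$ independent copies of $\FR$ in parallel and then takes the majority of their outputs.

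First, the parallelization. The underlying signal dimension of $(\FR')^{(n)}$ is widened by a factor of $k$: on every layer, the state vector at each vertex is partitioned into $k$ slices, where the $i$th slice simulates the $i$th copy of $\FR$. The message, aggregation, and combination functions are applied slice-wise using $k$ disjoint copies of the FNNs underlying $\FR$, with the randomness dimension split into $k$ blocks of size $r$, one block per copy. This costs a factor of $k$ in the weight and leaves the depth unchanged, so after this stage each copy produces a scalar output $Y_1, \ldots, Y_k$ at every vertex, and the $Y_i$ are mutually independent conditional on $(G,\Cb)$.

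Next, the majority readout. For each $i$, the promise on $\FR$ gives $\Pr(Y_i \ge 3/4) \ge 3/4$ in the ``yes'' case and $\Pr(Y_i \le 1/4) \ge 3/4$ in the ``no'' case. I convert $Y_i$ to a Boolean $V_i \in \{0,1\}$ via $V_i = \lsig(2Y_i - 1/2)$; this is $0$ whenever $Y_i \le 1/4$ and $1$ whenever $Y_i \ge 3/4$, so $V_i$ is, with probability at least $3/4$, the correct bit. Then $\sum_i V_i$ is a sum of independent Bernoulli-like variables whose conditional expectations differ from $k/2$ by at least $k/4$, so by a Chernoff/Hoeffding bound the majority of the $V_i$ coincides with the correct answer except with probability at most $e^{-k/8}$. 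Choosing $k = \lceil 8\ln 2 \cdot \pi(n) \rceil$ (rounded up to the next odd integer) makes this failure probability at most $2^{-\pi(n)}$, which is \eqref{eq:84}. The Boolean-valued threshold ``is $\sum_i V_i \ge (k+1)/2$?'' can be computed by a bounded-depth threshold circuit of size polynomial in $k$, hence by Lemma~\ref{lem:c2fnn} by a $\relu$-FNN of the same asymptotic size; equivalently a two-layer computation with $\lsig$ activations suffices (one layer to form $V_i$ slice-wise, one layer for the thresholded sum). These two layers can be folded into one additional GNN layer with a trivial message function and a combination function realised by the FNN just described, adding at most $2$ to the depth.

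The four conditions (i)--(iv) follow directly: (i) is the Chernoff calculation above; (ii) holds because the parallel copies multiply the weight by $k = O(\pi(n))$ and the majority FNN has size polynomial in $k$; (iii) is immediate from the two extra layers; (iv) holds because the randomness dimension is exactly $k\cdot r$. The main obstacle is really just technical bookkeeping: ensuring that the $k$ slices are computationally independent across all layers (which reduces to verifying that the widened message/combination FNNs process each slice separately and draw on disjoint blocks of the random input) and that the majority can be realised within the allowed activation class. Both are straightforward given that $\lsig$ and $\relu$ can implement the indicator and thresholded sum exactly on the relevant range, using Lemma~\ref{lem:c2fnn} to translate the threshold circuit for majority into an FNN of the required depth and size.
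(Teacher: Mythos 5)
Your proof is correct and takes essentially the same approach as the paper: run polynomially many independent copies of the GNN in parallel, take a majority vote, and use $\lsig$ (or $\relu$) to obtain hard $\{0,1\}$ outputs, with a Chernoff/Hoeffding bound controlling the error. The only cosmetic difference is the order of operations in the final two layers: you apply the Booleanisation $\lsig(2Y_i - 1/2)$ to each copy before the majority, whereas the paper takes the (soft) majority first and applies $\lsig(2x-1/2)$ to that single output afterward; both orderings give depth $+2$ and the same error bound.
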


\begin{proof}
  We just run sufficiently many copies of $\CR$ in parallel
  (polynomially many suffice) and then
  take a majority vote in the end, which is responsible for one of the
  additional layers. This way we obtain a family $\CR''$ that achieves
  \[
    \begin{cases}
    \Pr\big(\tilde\CR''(G,\Cb)\ge \frac{3}{4}\big)\ge 1-2^{-\pi(n)}&\text{if }{\CQ}(G,\Cb)=1,\\
    \Pr\big(\tilde\CR'(G,\Cb)\le\frac{1}{4}\big)\ge 1-2^{-\pi(n)}&\text{if }{\CQ}(G,\Cb)=0.
  \end{cases}
\]
To get the desired $0,1$-outputs, we apply the transformation
$\lsig(2x-\frac{1}{2})$ to the output on an additional
layer.
\end{proof}

\begin{lemma}\label{lem:ri2}
  Let ${\CQ}$ be a unary query over $\CGS_p$ that is computable by an
  rpl-approximable polynomial-weight, bounded-depth family $\CR$ of
  GNNs with ri. Then there is an order-invariant
  $\GCgcnu$-formula that defines ${\CQ}$.
\end{lemma}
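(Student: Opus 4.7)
The plan is to trade randomness for non-uniformity via Adleman's trick. First, apply Lemma~\ref{lem:ri1} with a polynomial $\pi(n)$ of sufficiently high degree to obtain an amplified family $\CR' = ((\FR')^{(n)})_{n\in\Nat}$ whose per-input error probability is at most $2^{-\pi(n)}$. Next, discretize the random features. Each $(\FR')^{(n)}$ is Lipschitz continuous in its input signal with constant polynomial in $n$ by Corollary~\ref{cor:gnnbound2a}, so snapping every random feature to the nearest multiple of some $\delta = 2^{-\Theta(\log n)}$ perturbs the pre-threshold output by less than $1/4$. Consequently, the final $\lsig(2x - 1/2)$ rounding introduced in the proof of Lemma~\ref{lem:ri1} still produces the correct $0/1$ answer on every random vector that was correct to begin with (after one extra round of amplification to ensure a probabilistic margin $\eta = 1/\mathrm{poly}(n)$ on the pre-threshold value).

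Let $D \coloneqq \{0, \delta, 2\delta, \ldots, 1-\delta\}^{r(n) \cdot n}$, so $\log|D| = \mathrm{poly}(n)$. Comparing the uniform measure on $D$ with Lebesgue measure on $[0,1]^{r(n) \cdot n}$, using the cubes $\prod_i [r_i, r_i + \delta]$ centred at grid points together with the margin argument from the previous step, the probability of failure under uniform $\vec r \in D$ is still at most $2^{-\pi(n)}$ for any fixed labelled graph $(G, \Cb)$ of order $n$, any vertex $v$, and any linear order $\les$ of $V(G)$. For $\pi(n)$ polynomial of high enough degree this is $\leq 2^{-n^3}$, and a union bound over the at most $2^{O(n^2)} \cdot n \cdot n!$ such quadruples $(G, \Cb, v, \les)$ yields a fixed vector $\vec r^{(n)} \in D$ that makes $(\FR')^{(n)}$ produce the correct answer on every labelled graph and at every vertex whenever the $i$-th block of $\vec r^{(n)}$ is assigned to the $\les$-$i$-th vertex.

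Encode the sequence $(\vec r^{(n)})_{n \in \Nat}$ by built-in numerical relations of polynomial bitsize; together with $\les$, these relations define an r-expression that supplies the random part of the input signal to $(\FR')^{(n)}$, while the Boolean labels are supplied by the predicates $P_i$. What remains is a deterministic, rpl-approximable, polynomial-weight, bounded-depth family of GNNs with global readout, so the global-readout analogue of Corollary~\ref{cor:snonuniform} (see Remark~\ref{rem:gr2}) yields a $\GCgcnu$-formula $\phi(x)$ that defines $\CQ$ on every ordered labelled input. Because the union bound above ranged over all linear orders $\les$, the value of $\phi(x)$ is the same regardless of which $\les$ is chosen, so $\phi$ is order-invariant as required.

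The genuinely delicate point in this plan is the discretisation step: the thresholded $0/1$ output must remain stable under small perturbations of the random input, so that failure probabilities under the continuous distribution transfer to failure probabilities over the grid $D$. The cleanest route, as sketched above, is to amplify enough so that the pre-threshold output has a probabilistic margin $\eta$ that strictly exceeds the Lipschitz-induced perturbation $L(n)\delta$; everything else is bookkeeping.
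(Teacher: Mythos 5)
Your proof is correct and follows essentially the same route as the paper: amplify with Lemma~\ref{lem:ri1}, discretize the random features using the polynomial Lipschitz bound from Corollary~\ref{cor:gnnbound2a}, translate the resulting computation to logic, fix a good random vector by a union bound (Adleman), and hard-wire it into the built-in numerical relations so that, together with $\les$, the formula can reconstruct the random signal. The only difference is cosmetic: you apply Adleman's trick \emph{before} translating to logic and then feed the fixed random vector to the formula via an r-expression built from the built-in relations and $\les$, whereas the paper first invokes Theorem~\ref{theo:snonuniform} with the random signal held abstractly in an r-schema $\vec X_r$, and only afterwards fixes a good $R^{(n)}$ and substitutes it in. Both orderings yield the same formula; the paper's order makes the substitution step a bit more explicit. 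One small imprecision in your write-up: in the final step you need Theorem~\ref{theo:snonuniform} (the version accepting a real-valued input signal through an r-schema), not Corollary~\ref{cor:snonuniform}, since the supplied random features are dyadic rationals rather than Boolean. Also, the ``one extra round of amplification for a margin $\eta = 1/\mathrm{poly}(n)$'' is unnecessary: Lemma~\ref{lem:ri1} already gives a \emph{constant} margin on the pre-threshold value ($\ge 3/4$ or $\le 1/4$), so any $\delta$ with $L(n)\delta$ below that constant suffices.
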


\begin{proof}
  Suppose that $\CR=(\FR^{(n)})_{n\in\Nat}$, and for every $n$, let
  $r^{(n)}$ be the randomness dimension of $\FR^{(n)}$. Viewed as a
  standard GNN, $\FR^{(n)}$ has input dimension $p+r^{(n)}$ and output
  dimension $1$. By the previous lemma, we may assume that our family
  satisfies \eqref{eq:84} for $\pi(X)=3p X^3$.

  Our first step is to observe that we can safely truncate the random
  numbers, which we assume to be randomly chosen from $[0,1]$, to
  $O(n)$ bits. This follows from Corollary~\ref{cor:gnnbound2a}: truncating the numbers to $cn$ bits
  means that we replace the random signal $\Cr$ by a $\Cr'$ such
  that $\inorm{\Cr-\Cr'}\le 2^{-cn}$, and the corollary implies that if we
  choose $c$ sufficiently large, we will approximate the original GNN
  up to an additive error of $\frac{1}{10}$. Thus for some constant $c$ we may assume
  that the random strings are not drawn uniformly from $[0,1]$, but
  from the set
  \[
    U_n\coloneqq\left\{\left.\sum_{i=0}^{cn-1}a_i2^{-i-1}\;\right|\;a_0,\ldots,a_{cn-1}\in\{0,1\}\right\}.
  \]
  Let us denote the uniform distribution on this set by $\CU_n$
  Hence for  for every $n$, every 
  $(G,\Cb)\in\CGS_p^\bool$ of order $n$, and every $v\in V(G)$,
  \begin{equation}
    \label{eq:85}
    \begin{cases}
      \Pr_{\Cr\sim \CU_n^{r^{(n)}\times V(G)}}\big(\tilde\CR(G,\Cb\Cr)\ge\frac{9}{10}\big)\ge 1-2^{-\pi(n)}&\text{if }{\CQ}(G,\Cb)=1,\\
      \Pr_{\Cr\sim \CU_n^{r^{(n)}\times V(G)}}\big(\tilde\CR(G,\Cb\Cr)\le\frac{1}{10}\big)\ge 1-2^{-\pi(n)}&\text{if }{\CQ}(G,\Cb)=0.
    \end{cases}
  \end{equation}
  Next, we want to apply Theorem~\ref{theo:snonuniform} in the version for GNNs
  with global readout and the logic $\GCgcnu$. Let $\vec X_r$ be an
  r-schema of type $\rtp{\ttv\ttn}$.
  Suppose that $(G,\Cb)\in\CGS_p^\bool$,
  and let $\Ca$ be an assignment over $G$. We view $(G,\Cb)$ as a $p$-labelled graph
  here, that is, as an $\{E,P_1,\ldots,P_p\}$-structure. So
  $((G,\Cb),\Ca)$ is the pair consisting of this structure together
  with the assignment $\Ca$. 
  Let
  \begin{equation}
    \label{eq:86}
  \Cr(v)\coloneqq\Big(\num{\vec X_r}^{((G,\Cb),\Ca)}(v,0),\ldots, 
    \num{\vec X_r}^{((G,\Cb),\Ca)}(v,r^{(n)}-1)\Big).
  \end{equation}
  In the following we will write $(G,\Cb,\Cr)$ instead of
  $((G,\Cb),\Ca)$ if $\Cr$ is obtained from some assignment $\Ca$ via
  \eqref{eq:86}, always assuming that $\num{\vec
    X_r}^{((G,\Cb),\Ca)}(v,k)=0$ for $k\ge r^{(n)}$.
  Then if $\phi(x)$ is a formula whose only free variables are among
  $x$ and the relation and function variables appearing in
  $\vec X_r$, the value $\sem{\phi}^{((G,\Cb),\Ca)}$ only depends on
  $(G,\Cb)$, $\Cr$
  and $v\coloneqq\Ca(x)$, and we may ignore the rest of $\Ca$. In particular, we may write
  $(G,\Cb,\Cr)\models\phi(v)$ instead of
  $\sem{\phi}^{(G,\Cb,\Ca)}=1$. 

  With is notation at hand, let us apply
  Theorem~\ref{theo:snonuniform-gc}. Replacing $W$ by the constant $1$ and $W'$ by $10$, we obtain an
  r-expression $\rexp$ in $\GCgc$ such that for all
  $(G,\Cb)\in\CGS_p$ of order $n$ and $\Cr\in\CS_{r^{(n)}}(G)$ defined
    as in \eqref{eq:86}, if $\inorm{\Cr}\le 1$, then for all $v\in V(G)$ we have
\[
  \Big|\tilde\CR(G,\Cb,\Cr)(v)-\num{\rexp}^{(G,\Cb,\Cr)}(v)\Big|\le\frac{1}{10}.
\]
Note that if $\Cr(v)_i\in U_n$ for
  $0\le i<r^{(n)}$, then we have $\inorm{\Cr}\le 1$.
Hence, 
 \begin{equation*}
    \begin{cases}
      \Pr_{\Cr\sim \CU_n^{r^{(n)}\times V(G)}}\big(\num{\rexp}^{(G,\Cb,\Cr)}(v)\ge\frac{8}{10}\big)\ge 1-2^{-\pi(n)}&\text{if }{\CQ}(G,\Cb)=1,\\
      \Pr_{\Cr\sim \CU_n^{r^{(n)}\times V(G)}}\big(\num{\rexp}^{(G,\Cb,\Cr)}(v)\le\frac{2}{10}\big)\ge 1-2^{-\pi(n)}&\text{if }{\CQ}(G,\Cb)=0.
    \end{cases}
  \end{equation*}
  From $\rexp$ we obtain a formula $\phi(x)$, just saying
  $\rexp\ge\frac{1}{2}$, such that
  \begin{equation}
    \label{eq:87}
   \Pr_{\Cr\sim
     \CU_n^{r^{(n)}\times V(G)}}\big(\sem{\phi}^{(G,\Cb,\Cr)}(v)={\CQ}(G,\Cb)(v)\big)\ge 1-2^{-\pi(n)}.
 \end{equation}
 Our next step will be to simplify the representation of the random
 features in the signal $\Cr\in U_n^{r^{(n)}\times V(G)}$. Each number in $U_n$ can
 be described by a subset of $\{0,\ldots,cn-1\}$: the set $S$
 represents the number $\sum_{s\in S}2^{-s-1}$. Thus we can represent
 $\Cr\in U_n^{r^{(n)}\times V(G)}$ by a relation $R\subseteq
 V(G)\times\{0,\ldots,r^{(n)}-1\}\times\{0,\ldots,cn-1\}$, and we can
 transform $\phi(x)$ into a formula $\phi'(x)$ that uses a relation
 variable $X_r$ of type $\tta\ttv\ttn\ttn\ttz$ instead of the r-schema
 $\vec X_r$ such that  
 \begin{equation}
    \label{eq:88}
   \Pr_{R}\Big(\sem{\phi'}^{(G,\Cb,R)}(v)={\CQ}(G,\Cb)(v)\Big)\ge 1-2^{-\pi(n)},
 \end{equation}
 where the probability is over all $R$ chosen uniformly at random from
 $V(G)\times\{0,\ldots,r^{(n)}-1\}\times\{0,\ldots,cn-1\}$ and
 $(G,\Cb,R)$ is $((G,\Cb),\Ca)$ for some assignment $\Ca$ with
 $\Ca(X_r)=R$. 

 The next step is to introduce a linear order and move towards an
 order invariant formula. We replace the relation variable $X_r$ of
 type $\tta\ttv\ttn\ttn\ttz$ by a relation variable $Y_r$ of type
 $\tta\ttn\ttn\ttn\ttz$, and we introduce a linear order $\les$ on
 $V(G)$. Then we replace atomic subformulas $X_r(x',y,y')$ of
 $\phi'(x)$ by
 \[
   \exists y''\le\ord\big(\# x.x\les x' = \#
   y'''\le\ord. y'''\le y''\wedge Y_r(y'',y,y')\big)
 \]
 The equation $\# x.x\les x' = \#
   y'''\le\ord. y'''\le y''$ just says that $x'$ has the same position
   in the linear order $\les$ on $V(G)$ as $y''$ has in the natural
   linear order $\le$. So basically, we store the random features for
   the $i$th vertex in the linear order $\les$ in the $i$-entry of
   $Y_r$. We obtain a new formula $\phi''(x)$ satisfying
 \begin{equation}
    \label{eq:89}
   \Pr_{R}\Big(\sem{\phi''}^{(G,\Cb,\les,R)}(v)={\CQ}(G,\Cb)(v)\Big)\ge 1-2^{-\pi(n)},
 \end{equation}
 where the probability is over all $R$ chosen uniformly at random from
 $\{0,\ldots,n-1\}\times\{0,\ldots,r^{(n)}-1\}\times\{0,\ldots,cn-1\}$.
 Importantly, this holds for all linear orders $\les$ on $V(G)$. Thus
 in some sense, the formula is order invariant, because ${\CQ}(G,\Cb)(v)$
 does not depend on the order. However, the set of $R\subseteq
 \{0,\ldots,n-1\}\times\{0,\ldots,r^{(n)}-1\}\times\{0,\ldots,cn-1\}$
 for which we have $\sem{\phi''}^{(G,\Cb,\les,R)}(v)={\CQ}(G,\Cb)(v)$ may
 depend on $\les$; \eqref{eq:89} just says that this set contains
 all $R$ except for an exponentially small fraction.

 In the final step, we apply a standard construction to turn
 randomness into non-uniformity, which is known as the ``Adleman
 trick''. It will be convenient to let
 \[
   \Omega\coloneqq 2^{
     \{0,\ldots,n-1\}\times\{0,\ldots,r^{(n)}-1\}\times\{0,\ldots,cn-1\}}.
 \]
 This is the sample space from which we draw the relations $R$
 uniformly at random.
 By \eqref{eq:89}, for each triple $(G,\Cb,\les)$ consisting
 of a graph $G$ of order $n$, a signal $\Cb\in\CS_p^\bool(G)$, and
 a linear order $\les$ on $V(G)$ there is a ``bad'' set ${B}(G,\Cb,\les)\subseteq\Omega$ 
 such that
 \begin{equation}
   \label{eq:90}
   \forall R\not\in{B}(G,\Cb,\les):\; \sem{\phi''}^{(G,\Cb,\les,R)}(v)={\CQ}(G,\Cb)(v)
 \end{equation}
 and
 \begin{equation}
   \label{eq:91}
   \frac{\big|{B}(G,\Cb,\les)\big|}{|\Omega|}\le 2^{-\pi(n)}.
 \end{equation}
 Observe that the number of triples $(G,\Cb,\les)$ is bounded
 from above by $2^{n^2+ pn+n\log n}$ and that $n^2+pn+n\log n<\pi(n)$.
 Thus we have
 \[
   \frac{\left|\bigcup_{(G,\Cb,\les)}{B}(G,\Cb,\les)\right|}{|\Omega|}\le
   \sum_{(G,\Cb,\les)}\frac{\big|{B}(G,\Cb,\les)\big|}{|\Omega|}\le
   2^{n^2+pn+n\log n}\cdot2^{-\pi(n)}<1.
   \]
   This means that there is a $R^{(n)}\in \Omega\setminus
   \bigcup_{(G,\Cb,\les)}{B}(G,\Cb,\les)$, and by \eqref{eq:90} we
   have
   \begin{equation}
     \label{eq:92}
     \sem{\phi''}^{(G,\Cb,\les,R^{(n)})}(v)={\CQ}(G,\Cb)(v) 
   \end{equation}
   for all graphs $G$ of order $n$, signals $\Cb\in\CS_p^\bool(G)$, and
 linear orders $\les$ on $V(G)$. Let
 $R^*\coloneqq\bigcup_{n\in\Nat}\{n\}\times R^{(n)}\subseteq\Nat^4$. We
 use $R^*$ as built-in numerical relation (in addition to the
 numerical relations already in $\phi''$). In $\phi''(x)$ we replace
 atomic subformulas $Y_r(y,y',y'')$ by $R^*(\ord,y,y',y'')$. Then it
 follows from \eqref{eq:92} that the resulting formula is
 order-invariant and defines ${\CQ}$.
\end{proof}

Before we prove the converse of the previous lemma, we need another
small technical lemma about FNNs.

\begin{lemma}\label{lem:random-function}
  Let $k,n\in\PNat$. There is a rational piecewise
  linear FNN $\FF$ of input and output
  dimension $1$, size polynomial in $n$ and $k$
  such that
  \[
    \Pr_{r\sim\CU}\big(\FF(r)\in\{0,\ldots,n-1\}\big)\ge 1-2^{-k},
  \]
  and for all $i\in\{0,\ldots,n-1\}$, 
  \[
    \frac{1-2^{-k}}{n}\le\Pr_{r\sim\CU}\big(\FF(r)=i\big)\le \frac{1}{n}.
  \]
  Furthermore, $\FF$ only uses $\relu$-activations.
\end{lemma}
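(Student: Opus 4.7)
The plan is to construct $\FF$ as a ReLU network computing a carefully chosen piecewise linear function $f\colon\Real\to\Real$. Set $m\coloneqq k+\lceil\log n\rceil+2$, $Q\coloneqq\lfloor 2^m/n\rfloor$, and $a_i\coloneqq iQ/2^m\in\DRat$ for $i=1,\ldots,n$, and define
\[
  f(x)\coloneqq\sum_{i=1}^n\lsig\big(2^m(x-a_i)\big).
\]
Then $f$ is a continuous piecewise linear function that on $[0,1]$ is constant at integer value $i$ on the interval $[a_i+2^{-m},a_{i+1}]$ (or $[0,a_1]$ for $i=0$) and makes a linear ramp of width $2^{-m}$ from $i-1$ to $i$ at each $a_i$. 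A direct calculation using $Q\ge 2^m/n-1$ and $m\ge k+\lceil\log n\rceil+2$ shows $\Pr_{r\sim\CU}(f(r)=i)\in\{(Q-1)/2^m,\,Q/2^m\}\subseteq[(1-2^{-k})/n,\,1/n]$ for each $i\in\{0,\ldots,n-1\}$, and that the total measure of $[0,1]$ on which $f$ does not take an integer value in $\{0,\ldots,n-1\}$ is at most $2n/2^m\le 2^{-k}$.

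The main obstacle is that $\lsig(2^m y)$ has slope $2^m$, which is exponential in $k$: realising it by a shallow network would require weights of magnitude $2^m$, so $\wt(\FF)$ and hence $\size(\FF)$ would be exponential in $k$. To avoid this I will trade depth for weight magnitude using the ``doubling gadget'' $T(y)\coloneqq\relu(2y)-\relu(2y-1)$, which by checking the three cases $y\le 0$, $0\le y\le 1/2$, $y\ge 1/2$ equals $\lsig(2y)$ exactly. A straightforward induction on $m$ (subdividing the nontrivial region of $T^{(m)}$ at $2^{-m-1}$) then shows $T^{(m)}(y)=\lsig(2^m y)$. Hence $\lsig(2^m y)$ is computable by a depth-$O(m)$ ReLU FNN of size $O(m)$ with all weights in $\{-1,1,2\}$ and biases in $\{-1,0\}$.

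To assemble $\FF$, place $n$ parallel copies of this depth-$O(m)$ subnetwork, shifting the first-layer biases of the $i$th copy to $-2a_i$ and $-2a_i-1$ so that it computes $g_i(x)\coloneqq T^{(m)}(x-a_i)=\lsig(2^m(x-a_i))$; these shifts are dyadic rationals of absolute value at most $3$ and bitsize $O(m)$. A single final node with activation $\relu$ then outputs $\sum_{i=1}^n g_i(x)=f(x)$, which lies in $[0,n]$, so the ReLU acts as the identity. The resulting FNN uses only ReLU activations, has $O(nm)$ nodes and edges, weights of absolute value at most $2$, biases of absolute value at most $3$, and parameter bitsizes $O(m)$; therefore $\size(\FF)=O(nm^2)=O\big(n(k+\log n)^2\big)$, polynomial in $n$ and $k$, as required.
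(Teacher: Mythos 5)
Your proof is correct, and the staircase-function idea is the same as the paper's, but your ReLU realization takes a genuinely different route. The paper realizes the bumps $g_i$ directly by $\lsig$ units with weight magnitude $4/\epsilon = 2^{k+\lceil\log n\rceil+2}$ at \emph{constant depth}; consequently $\wt(\FF)$ and hence $\size(\FF)$ are exponential in $k$, and the paper's own proof achieves ``size polynomial in $n$ and $k$'' only in the regime $k=O(\log n)$. Your doubling gadget $T(y)=\relu(2y)-\relu(2y-1)=\lsig(2y)$, iterated $m$ times, trades depth for weight magnitude and does achieve size polynomial in $n$ and $k$ for all $k$, with $O(1)$-magnitude weights; strictly speaking this is a tighter construction with respect to the lemma's literal statement. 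The price, however, is depth $O(m)=O(k+\log n)$, which is unbounded. The lemma's only use is inside Lemma~\ref{lem:ri3}, where the resulting FNN sits in a GNN layer of a family that must be \emph{bounded-depth}; there the paper's constant-depth, large-weight construction is the one needed (and its weight is polynomial in $n$ because $k=O(\log n)$ in that application). So your construction proves the lemma as stated but would not serve the downstream application; you may want to note this trade-off explicitly.
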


\begin{proof}
   For all $a\in\Real$ and $\epsilon>0$, let $f_{\epsilon,a}:\Real\to\Real$ be defined by
   $f_{\epsilon,a}(x)\coloneqq
     \lsig\big(\frac{1}{\epsilon}x-\frac{a}{\epsilon}\big)$. Then
     \[
       \begin{cases}
         f_{\epsilon,a}(x)=0&\text{if }x\le a,\\
         0\le f_{\epsilon,a}(x)\le 1&\text{if }a\le
         x\le a+\epsilon,\\
         f_{\epsilon,a}(x)=1&\text{if }a+\epsilon\le x.
       \end{cases}
     \]
     Furthermore, for $a,b\in\Real$ with $a+2\epsilon\le b$ and
     $g_{\epsilon,a,b}\coloneqq
     f_{\epsilon,a}-f_{\epsilon,b-\epsilon}$ we have
     \[
       \begin{cases}
         g_{\epsilon,a,b}(x)=0&\text{if }x\le a,\\
         0\le g_{\epsilon,a,b}(x)=1&\text{if }a\le
         x\le{a+\epsilon},\\
         g_{\epsilon,a,b}(x)=1&\text{if }a+\epsilon\le x\le
         b-\epsilon,\\
         0\le g_{\epsilon,a,b}(x)=1&\text{if }b-\epsilon\le
         x\le{b},\\
         g_{\epsilon,a,b}(x)=0&\text{if }b\le x.
       \end{cases}
     \]
     Let $\ell\coloneqq\ceil{\log n}$ and $\epsilon\coloneqq
     2^{-k-\ell}$. 
     In the following, let 
     For $0\le i\le n$, let $a_i\coloneqq \frac{i}{n}$. Let
     $a_i^-\coloneqq \floor{2^{k+\ell+2}\frac{i}{n}}2^{-k-\ell-2}$ and
     $a_i^+\coloneqq
     \ceil{2^{k+\ell+2}\frac{i}{n}}2^{-k-\ell-2}$. Then $a_i^-\le
     a_i\le a_i^+$ and $a_i-a_i^-\le\frac{\epsilon}{4}$,
     $a_i^+-a_i\le\frac{\epsilon}{4}$. Moreover,
     $a_i^--a_{i-1}^+\ge\frac{1}{n}-{\epsilon}{2}\ge
     \frac{\epsilon}{2}$. Let $a_i^{++}\coloneqq
     a^+_{i}+\frac{\epsilon}{4}$ and $a_i^{--}\coloneqq
     a^-_{i}-\frac{\epsilon}{4}$. Then $a_i^{--}\le a_i^- \le a_i\le a_i^+\le a_i^{++}$ and
$a_i-a_i^{--}\le\frac{\epsilon}{2}$,
$a_i^{++}-a_i\le\frac{\epsilon}{2}$.

     For $1\le i\le n$, let $I_i\coloneqq [a_{i-1},a_i]$ and
     $J_i\coloneqq [a_{i-1}^{++},a_i^{--}]$. Then $J_i\subseteq
     I_i$. The length of $I_i$ is
     $\frac{1}{n}$, and the length of $J_i$ is at least
     $\frac{1}{n}-\epsilon\ge \frac{1}{n}(1-2^{-k})$. Thus the
     probability that a randomly chosen $r\in[0,1]$ ends up in one of
     the intervals $J_i$ is at least $1-2^{-k}$. Moreover, for every
     $i$ we have
     \[
       \frac{1-2^{-k}}{n}\le
       \frac{1}{n}-\epsilon\le\Pr_{r\sim\CU}(r\in J_i)\le\frac{1}{n}.
     \]
     Let $g_i\coloneqq g_{\frac{\epsilon}{4},a_{i-1}^+,a_i^-}$. Then
     $g(r)=1$ for $r\in J_i$, $g_i(r)=0$ for $r\not\in I_i$, and $0\le
     g_i(r)\le 1$ for $r\in I_i\setminus J_i$.

     We use the first two layers of our FNN $\FF$ to compute $g_i$ of the
     input for all $i$. That is, on the second level, $\FF$ has $n$ nodes
     $v_1,\ldots,v_n$, and $f_{\FF,v_i}(r)=g_i(r)$. As the intervals
     $I_i$ are disjoint, at most one
     $v_i$ computes a nonzero value, and with probability at least
     $2-2^{-k}$, at least one of the nodes takes value $1$. On the last
     level, for each $i$ there is an edge of weight $i-1$ from $v_i$
     to the output node. Then if $r\in J_i$ the output is $i-1$, and
     the assertion follows.
\end{proof}

For the following lemma, recall that $\GCgcnu$ denotes the extension
of $\GC$ with global counting and built-in numerical relations.
   
\begin{lemma}\label{lem:ri3}
  Let ${\CQ}$ be a unary query over $\CGS^\bool_p$ that is definable by an
  order-invariant $\GCgcnu$-formula. Then there is a polynomial-size
  bounded-depth family $\CR$ of rational piecewise-linear GNNs with ri that
  computes ${\CQ}$.

  Furthermore, the GNNs in $\CR$ only use $\relu$-activations and $\SUM$-aggregation.
\end{lemma}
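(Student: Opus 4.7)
The plan is to use the random features to induce, with high probability, a linear order on the vertex set, and then invoke the $\GCgcnu$-with-global-readout version of Lemma~\ref{lem:converse} (promised in the remark following that lemma) on a formula obtained from the defining formula by replacing its built-in order by the random one. By hypothesis, $\CQ$ is defined by an order-invariant $\GCgcnu$-formula $\phi(x)$ of vocabulary $\{E,P_1,\ldots,P_p,\les\}$.

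Fix a constant $c\ge 3$ and, for each $n$, set $k\coloneqq\ceil{3\log n}$. For graphs of order $n$, I would give each vertex $c$ uniform random features and, applying the FNN of Lemma~\ref{lem:random-function} vertex-wise with target range $\{0,\ldots,n-1\}$ and error parameter $k$, transform each feature $r_i(v)\in[0,1]$ into an integer $u_i(v)$. By a union bound, every $u_i(v)$ lies in $\{0,\ldots,n-1\}$ with probability at least $1-cn\cdot 2^{-k}\ge 1-1/n$; conditional on this, the birthday bound applied to the almost-uniform distribution of Lemma~\ref{lem:random-function} shows that the tuples $\vec u(v)\coloneqq(u_1(v),\ldots,u_c(v))$ for $v\in V(G)$ are pairwise distinct with probability at least $1-\binom{n}{2}/n^{c-1}\ge 1-1/n$. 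On the intersection of these two events (combined probability $\ge 3/4$ for $n$ sufficiently large; the finitely many remaining $n$ are absorbed into the family) the tuples induce a linear order $\les_{\vec u}$ on $V(G)$ via lexicographic comparison.

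Next I would form a $\GCgcnu$-formula $\phi'(x)$ by replacing each atomic occurrence $x_i\les x_j$ in $\phi$ by the purely numerical lex comparison on $(U_1(x_i),\ldots,U_c(x_i))$ and $(U_1(x_j),\ldots,U_c(x_j))$, where the $U_i$ are new function variables of type $\ttv\to\ttn$; this comparison is a Boolean combination of atoms $U_i(x_j)\le U_i(x_k)$, so $\phi'$ stays in $\GCgcnu$ and, once the labels $P_i$ are represented as Boolean-valued function variables as in the proof of Theorem~\ref{theo:converse}, contains no relation or function variables besides the $U_i$. By order-invariance of $\phi$, on the good event $(G,\Cb,\vec u)\models\phi'(v)\iff\CQ(G,\Cb)(v)=1$ for every $v\in V(G)$. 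Applying the $\GCgcnu$-with-global-readout version of Lemma~\ref{lem:converse} to $\phi'$ yields a polynomial-size bounded-depth family of rational piecewise-linear GNNs of input dimension $p+c$ that, on input $(G,\Cb,\vec u)$ with $\vec u(v)\in\{0,\ldots,n-1\}^c$, outputs $1$ or $0$ according to whether $(G,\Cb,\vec u)\models\phi'(v)$; prepending the vertex-wise FNN of Lemma~\ref{lem:random-function} and using $\lsig(x)=\relu(x)-\relu(x-1)$ to rewrite every $\lsig$ activation by two $\relu$'s with constant blow-up gives a family $\CR$ using only $\relu$-activations and $\SUM$-aggregation. The only piece of care is the bookkeeping combining the two failure modes (out-of-range conversion and tuple collision) into a single error budget below $1/4$; the choices $k=\Theta(\log n)$ and $c\ge 3$ leave exactly this slack, and this is the main, though routine, obstacle.
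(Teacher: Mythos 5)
Your proof is correct and follows essentially the same route as the paper's: replace the built-in order $\les$ by a lexicographic comparison of vertex-wise integer features produced via Lemma~\ref{lem:random-function}, argue the induced order is well-defined with high probability, apply the global-readout version of Lemma~\ref{lem:converse} to the rewritten formula, and prepend the decoding layer. (The paper fixes $c=3$ rather than a general $c\ge 3$, and your collision bound should read $\binom{n}{2}/n^c$ rather than $\binom{n}{2}/n^{c-1}$, but that typo does not affect the conclusion since $\binom{n}{2}/n^3<1/n$.)
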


\begin{proof}
  Let $\phi(x)$ be an order-invariant $\GCgcnu$-formula that defines
  ${\CQ}$. This
  means that for all $p$-labelled graphs $(G,\Cb)\in\CGS_p^\bool$,
  all linear orders $\les$ on $V(G)$, and all $v\in V(G)$, we have
  \[
    (G,\Cb,\les)\models\phi(v)\iff {\CQ}(G,\Cb)(v)=1.
  \]
  We want to exploit that if we choose the random features for each
  vertex, independently for all vertices, then with high probability, they
  are all distinct and thus they give us a linear order on the
  vertices.

  However, in order to be able to apply Lemma~\ref{lem:converse}, we
  need to carefully limit the randomness.  Let $U_1,U_2,U_3$ be function variables of type $\ttv\to\ttn$. 
  We let $\phi'(x)$ be the formula obtained from $\phi$ by
  replacing each atomic subformula $x\les x'$ by the formula
  \begin{gather*}
    U_1(x)<U_1(x')\vee \big(U_1(x)=U_1(x')\wedge U_2(x)< U_2(x')\big)\\
    \vee \big(U_1(x)=U_1(x')\wedge U_2(x)=U_2(x')\wedge U_3(x)\le
    U_3(x')\big).
  \end{gather*}
    That
  is, we order the vertices lexicographically by their
  $U_i$-values. If no two vertices have identical $U_i$-values for
  $i=1,2,3$, then this yields a linear order.

  Let $(G,\Cb)\in\CGS_p^\bool$ of order $n\coloneqq|G|$. For functions
  $F_1,F_2,F_3:V(G)\to \{0,\ldots,n-1\}$ and $v\in V(G)$, we write
  $(G,\Cb,F_1,F_2,F)\models\phi'(v)$ instead of
  $((G,\Cb),\Ca)\models\phi'$ for some and hence every assignment
  $\Ca$ with $\Ca(U_i)=F_i$ and $\Ca(x)=v$.
  Let us call $F_1,F_2,F_3: V(G)\to \{0,\ldots,n-1\}$ \emph{bad} if
  there are distinct  $v,w\in V(G)$ such that $F_i(v)=F_i(w)$ for $i=1,2,3$, and call them \emph{good} otherwise. Observe
  that for randomly chosen $F_1,F_2,F_3$, the probability that they
  are bad is at most $\frac{1}{n}$.

  By the construction of $\phi'$ from $\phi$, if $F_1,F_2,F_3$ are
  good then for all $v\in V(G)$ we have 
  \[
    (G,\Cb,F_1,F_2,F_3)\models\phi'(v)\iff {\CQ}(G,\Cb)(v)=1.
  \]
  By Lemma~\ref{lem:converse} in its version for GNNs with global
  readout and the logic $\FOCnu[2]$, there is a polynomial-size
  bounded-depth family $\CN=(\FN^{(n)})_{n\in\Nat}$ of rational
  piecewise-linear GNNs of input dimension $p+3$ such that for all
  $(G,\Cb)\in\CGS_p$ of order $n$ and all functions
  $F_1,F_2,F_3:V(G)\to\{0,\ldots,n-1\}$ the following holds. Let
  $\Cu\in\CS_3(G)$ be the signal defined by
  $\Cu(v)=\big(F_1(v),F_2(v),F_3(v)\big)$. Then for all $v\in V(G)$ we
  have $\tilde\CN(G,\Cb\Cu)\in\{0,1\}$ and
  \[
    \tilde\CN(G,\Cb\Cu)=1\iff(G,\Cb,F_1,F_2,F_3)\models\phi'(v).
  \]
  Thus if $F_1,F_2,F_3$ are good,
   \[
    \tilde\CN(G,\Cb\Cu)={\CQ}(G,\Cb).
  \]
  Thus all we need to do is use the random features to generate
  three random functions from $V(G)\to\{0,\ldots,n-1\}$. At first
  sight this seems easy, because the random features from $[0,1]$
  contain ``more randomness'' than the functions. However, in fact it
  is not possible, essentially because we cannot map the interval $[0,1]$ to a
  discrete subset of the reals of more than one element by a continuous function. But it is good
  enough to do this approximately, and for this we can use
  Lemma~\ref{lem:random-function}. We use this lemma to create a GNN
  layer $\FL^{(n)}$ that takes a random random signal $\Cr\in\CS_3(G)$ and
  computes a signal $\Cu\in\CS_3(G)$ such that with high probability,
  $\Cu(v)_i\in\{0,\ldots,n-1\}$, and $\Cu(v)_i$ is almost uniformly
  distributed in $\{0,\ldots,n-1\}$, for all $i,v$.
  This is good enough to guarantee that with high probability the
  functions $F_1,F_2,F_3$ defined by $\Cu$ are good. Thus if we
  combined $\FL^{(n)}$ with $\FN^{(n)}$ for all $n$, we obtain a
  family of GNNs with ri that computes ${\CQ}$.
\end{proof}

\begin{myremark}
  With a little additional technical work, we could also prove a
  version of the lemma where the GNNs only use
  $\lsig$-activations. But it is not clear that this is worth the
  effort, because actually $\relu$ activations are more important (in
  practice) anyway. Recall that $\lsig$ can be simulated with $\relu$,
  but not the other way around.
  \uend
\end{myremark}

Finally, we can prove the following theorem, which implies
Theorem~\ref{theo:main2} stated in the introduction.

\begin{theorem}\label{theo:ri}
  Let ${\CQ}$ be a unary query on $\CGS^\bool_p$. Then the following are equivalent:
  \begin{enumerate}
  \item ${\CQ}$ is definable in order-invariant $\GCgcnu$.
  \item There is a polynomial-size bounded-depth family
    of rational piecewise-linear GNNs with ri, using only
    $\relu$-activations and $\SUM$-aggregation, that computes ${\CQ}$.
  \item There is a rpl-approximable polynomial-weight bounded-depth
    family of GNNs with ri that computes ${\CQ}$.
  \item ${\CQ}$ is in $\TC^0$.
  \end{enumerate}
\end{theorem}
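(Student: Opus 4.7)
The plan is to close the cycle $(1)\Rightarrow(2)\Rightarrow(3)\Rightarrow(1)$ using the lemmas already established, and then to pin (1) to (4) through the descriptive-complexity characterisation of $\TC^0$.

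First I would observe that $(1)\Rightarrow(2)$ is exactly the content of Lemma~\ref{lem:ri3}: an order-invariant $\GCgcnu$-formula defining $\CQ$ yields a polynomial-size bounded-depth family of rational piecewise-linear GNNs with random initialisation that uses only $\relu$-activations and $\SUM$-aggregation. The implication $(2)\Rightarrow(3)$ is immediate, since rational piecewise linear activations are trivially rpl-approximable and the polynomial size of a rational piecewise linear GNN bounds its weight up to a polynomial factor (each weight has bitsize polynomial in $n$, and the constants $\lambda(\Fa_v)$ and $\Fa_v(0)$ for $\relu$ are both bounded by $1$). For $(3)\Rightarrow(1)$ I would invoke Lemma~\ref{lem:ri2}, which gives an order-invariant $\GCgcnu$-formula defining any unary query computable by an rpl-approximable polynomial-weight bounded-depth family of GNNs with ri.

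Finally, the equivalence $(1)\iff(4)$ is precisely Corollary~\ref{cor:GCgc_TC0} applied to the unary query $\CQ$: $L_\les(\CQ)\in\TC^0$ if and only if $\CQ$ is definable in order-invariant $\GCgcnu$. Chaining the four implications yields all equivalences.

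There is essentially no new obstacle at this stage, since all heavy lifting has been done: the amplification argument and the Adleman trick inside Lemma~\ref{lem:ri2}, the explicit GNN construction using Lemma~\ref{lem:random-function} together with Lemma~\ref{lem:converse} inside Lemma~\ref{lem:ri3}, and the logical characterisation of $\TC^0$ via decomposable $\FOC[2]$ and its translation into $\GCgcnu$ in Section~\ref{sec:guarded}. The only mildly subtle bookkeeping is to make sure in $(2)\Rightarrow(3)$ that a rational piecewise linear GNN of polynomial size does indeed have polynomial weight in the sense of Section~\ref{sec:nonuniform}; this follows from the definition of $\bsize$ and the fact that each dyadic rational weight of bitsize $s$ has absolute value at most $2^s$, so the $\ell_\infty$-norms of weight and bias vectors, as well as the Lipschitz constants of the activations (here just $1$), are all polynomially bounded.
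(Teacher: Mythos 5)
Your proof takes exactly the same route as the paper: $(1)\Rightarrow(2)$ is Lemma~\ref{lem:ri3}, $(2)\Rightarrow(3)$ is trivial, $(3)\Rightarrow(1)$ is Lemma~\ref{lem:ri2}, and $(1)\iff(4)$ is Corollary~\ref{cor:GCgc_TC0}, precisely as in the paper's own proof of Theorem~\ref{theo:ri}.

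One small caution on your justification of $(2)\Rightarrow(3)$: the claim that a dyadic rational of bitsize $s$ has absolute value at most $2^s$ ``and is therefore polynomially bounded'' is wrong as a stand-alone step --- $2^s$ is exponential in $s$, so polynomial bitsize of the weights does \emph{not} by itself give you polynomial $\ell_\infty$-norms. Fortunately you don't need that argument at all: by the paper's definition, $\size(\FF)$ is the \emph{maximum} of $\bsize(\FF)$ and $\wt(\FF)$, and for GNNs $\size$ is the sum of the $\size$s of the constituent FNNs, so $\size(\FN)\ge\wt(\FN)$ holds by definition. Thus polynomial size immediately gives polynomial weight, and the rest of $(2)\Rightarrow(3)$ (rational piecewise-linear activations being trivially rpl-approximable) is indeed routine.
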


\begin{proof}
  The implication (1)$\implies$(2) is Lemma~\ref{lem:ri3}.  The
  implication (2)$\implies$(3) is trivial. The implication
  (3)$\implies$(1) is Lemma~\ref{lem:ri2}. And finally, the
  equivalence (1)$\iff$(4) is Corollary~\ref{cor:GCgc_TC0}.
\end{proof}

\section{Conclusions}
We characterise the expressiveness of graph neural networks in terms
of logic and Boolean circuit complexity. While this forces us to
develop substantial technical machinery with many tedious details,
the final results, as stated in the introduction, are surprisingly simple and clean:
GNNs correspond to the guarded fragment of first-order logic with
counting, and with random initialisation they exactly characterise
$\TC^0$. One reason I find this surprising is that GNNs carry out real
number computations, whereas the logics and circuits are discrete
Boolean models of computation.

We make some advances on the logical side that may be of independent
interest. This includes our treatment of rational arithmetic, 
non-uniformity and built-in relations on unordered structures, and 
unbounded aggregation (both sum and max). The latter may also shed new light on the relation between
first-order logic with counting and the recently introduced weight
aggregation logics \cite{BergeremS21} that deserves further study.

Our results are also interesting from a (theoretical) machine-learning-on-graphs
perspective. Most
importantly, we are the first to show limitations of GNNs with random
initialisation. Previously, it was only known that exponentially large
GNNs with ri can approximate all functions on graphs of order $n$
\cite{AbboudCGL21}, but no upper bounds were known for the more
realistic model with a polynomial-size restriction. Another
interesting consequence of our results (Theorems~\ref{theo:converse}
and \ref{theo:ri}) is that arbitrary
GNNs can be simulated by GNNs using only $\SUM$-aggregation and
$\relu$-activations with only a polynomial overhead in size. This was partially known \cite{XuHLJ19} for GNNs
distinguishing two graphs, but even for this weaker result the proof
of \cite{XuHLJ19} requires exponentially large GNNs because it is
based on the universal approximation theorem for feedforward neural
networks \cite{XuHLJ19}. It has recently been shown that such a
simulation of arbitrary GNNs by GNNs with $\SUM$-aggregation is not
possible in a uniform setting \cite{RosenbluthTG23}.

We leave it open to extend our results to higher-order GNNs
\cite{MorrisRFHLRG19} and the bounded-variable fragments of $\FOC$. It
might also be possible to extend the results to larger classes of
activation functions, for example, those approximable by
piecewise-polynomial functions (which are no longer Lipschitz
continuous). Another question that remains open is whether the
``Uniform Theorem'', Theorem~\ref{theo:uniform}, or
Corollary~\ref{cor:uniform} have a converse in terms of some form of
uniform families of GNNs. It is not even clear, however, what a
suitable uniformity notion for families of GNNs would be. We only
consider random initialisation in combination with global readout. The
exact expressiveness of polynomial-weight bounded-depth families of
GNNs with random initialisation, but without global readout, remains
unclear.

The most interesting extensions of our results would be to
GNNs of unbounded depth. Does the correspondence between circuits and
GNNs still hold if we drop or relax the depth restriction? And,
thinking about uniform models, is there a descriptive complexity
theoretic characterisation of recurrent GNNs? As a first step in this
direction, the logical expressiveness of recurrent GNNs has recently
been studied in \cite{PfluegerCK24}.
 
\printbibliography

@article{AamandCINRSSW22,
	author = {Aamand, Anders and Chen, Justin and Indyk, Piotr and Narayanan, Shyam and Rubinfeld, Ronitt and Schiefer, Nicholas and Silwal, Sandeep and Wagner, Tal},
	date-added = {2024-10-10 11:24:55 +0200},
	date-modified = {2024-10-10 11:27:13 +0200},
	journal = {Advances in Neural Information Processing Systems (NeurIPS 2022)},
	pages = {27333--27346},
	title = {Exponentially improving the complexity of simulating the Weisfeiler-Lehman test with graph neural networks},
	url = {https://proceedings.neurips.cc/paper_files/paper/2022/hash/af0ad514b9cda46bd49e14ee11e2672f-Abstract-Conference.html},
	volume = {35},
	year = {2022}}

@inproceedings{GeertsSV22,
	author = {Geerts, Floris and Steegmans, Jasper and Van den Bussche, Jan},
	booktitle = {International Symposium on Foundations of Information and Knowledge Systems (FoIKS 2022)},
	date-added = {2024-10-10 11:21:17 +0200},
	date-modified = {2024-10-10 11:23:02 +0200},
	doi = {10.1007/978-3-031-11321-5_2},
	organization = {Springer},
	pages = {20--34},
	title = {On the expressive power of message-passing neural networks as global feature map transformers},
	year = {2022}}

@article{MorrisLMRKGFB23,
	author = {Christopher Morris and Yaron Lipman and Haggai Maron and Bastian Rieck and Nils M. Kriege and Martin Grohe and Matthias Fey and Karsten M. Borgwardt},
	date-added = {2024-10-10 11:17:59 +0200},
	date-modified = {2024-10-10 11:17:59 +0200},
	journal = {Journal of Machine Learning Research},
	number = {333},
	pages = {1-59},
	status = {JOU},
	title = {Weisfeiler and {Leman} go Machine Learning: The Story so far},
	url = {https://jmlr.org/papers/v24/22-0240.html},
	volume = {24},
	year = {2023},
	bdsk-url-1 = {https://arxiv.org/abs/2112.09992}}

@inproceedings{GroheR24,
	author = {Martin Grohe and Eran Rosenbluth},
	booktitle = {Proceedings of the 39th Annual ACM/IEEE Symposium on Logic in Computer Science (LICS 2024)},
	date-added = {2024-10-06 17:15:00 +0200},
	date-modified = {2024-10-06 17:15:14 +0200},
	doi = {10.1145/3661814.3662093},
	pages = {40:1--40:14},
	title = {Are Targeted Messages More Effective?},
	year = {2024},
	bdsk-url-1 = {https://doi.org/10.1145/3661814.3662093}}

@inproceedings{PfluegerCK24,
	author = {Maximilian Pflueger and David Tena Cucala and Egor V. Kostylev},
	booktitle = {Proceedings of the 38th {AAAI} Conference on Artificial Intelligence (AAAI 2024)},
	date-added = {2024-09-18 17:44:26 +0200},
	date-modified = {2024-10-10 11:31:58 +0200},
	doi = {10.1609/AAAI.V38I13.29377},
	editor = {Michael J. Wooldridge and Jennifer G. Dy and Sriraam Natarajan},
	pages = {14608--14616},
	publisher = {{AAAI} Press},
	title = {Recurrent Graph Neural Networks and Their Connections to Bisimulation and Logic},
	year = {2024},
	bdsk-url-1 = {https://dblp.org/rec/conf/aaai/PfluegerCK24},
	bdsk-url-2 = {https://doi.org/10.1609/aaai.v38i13.29377}}

@inproceedings{RosenbluthTG23,
	author = {Eran Rosenbluth and Jan Toenshoff and Martin Grohe},
	booktitle = {Proceedings of the 32nd International Joint Conference on Artificial Intelligence (IJCAI 2023)},
	date-added = {2023-08-29 12:32:00 +0200},
	date-modified = {2023-08-29 12:32:13 +0200},
	doi = {10.24963/ijcai.2023/464},
	pages = {4172-4179},
	title = {Some Might Say All You Need Is Sum},
	year = {2023},
	bdsk-url-1 = {https://doi.org/10.48550/arXiv.2302.11603}}

@inproceedings{BergeremS21,
	author = {Steffen van Bergerem and Nicole Schweikardt},
	bibsource = {dblp computer science bibliography, https://dblp.org},
	biburl = {https://dblp.org/rec/conf/csl/BergeremS21.bib},
	booktitle = {Proceedings of the 29th {EACSL} Annual Conference on Computer Science Logic ({CSL} 2021)},
	date-added = {2023-01-21 12:36:23 +0100},
	date-modified = {2023-01-21 12:40:01 +0100},
	doi = {10.4230/LIPIcs.CSL.2021.10},
	editor = {Christel Baier and Jean Goubault{-}Larrecq},
	pages = {10:1--10:18},
	publisher = {Schloss Dagstuhl - Leibniz-Zentrum f{\"{u}}r Informatik},
	series = {LIPIcs},
	timestamp = {Thu, 16 Sep 2021 01:00:00 +0200},
	title = {Learning Concepts Described By Weight Aggregation Logic},
	url = {https://doi.org/10.4230/LIPIcs.CSL.2021.10},
	volume = {183},
	year = {2021},
	bdsk-url-1 = {https://dblp.org/rec/conf/csl/BergeremS21},
	bdsk-url-2 = {https://doi.org/10.4230/LIPIcs.CSL.2021.10}}

@inproceedings{GradelO93,
	author = {Erich Gr{\"a}del and Martin Otto},
	booktitle = {Proceedings of the 6th Workshop on Computer Science Logic (CSL 1992), Selected Papers},
	date-added = {2023-01-18 18:50:49 +0100},
	date-modified = {2023-01-18 18:51:55 +0100},
	editor = {E. B{\"o}rger and G. J{\"a}ger and H. Kleine B{\"u}ning and S. Martini and M.M. Richter},
	pages = {231-247},
doi          = {10.1007/3-540-56992-8\_15},
publisher = {Springer Verlag},
	series = {Lecture Notes in Computer Science},
	title = {Inductive definability with counting on finite structures},
	volume = {702},
	year = {1993}}

@inproceedings{Immerman87,
	author = {Neil Immerman},
	booktitle = {Proceedings of the 2nd IEEE Symposium on Structure in Complexity Theory (SCT 1987)},
	date-added = {2023-01-18 18:48:37 +0100},
	date-modified = {2023-01-18 18:49:15 +0100},
	pages = {194-202},
	title = {Expressibility as a complexity measure: results and directions},
	year = {1987}}

@book{KuperLP00,
	date-added = {2023-01-17 17:48:20 +0100},
	date-modified = {2023-01-17 17:48:54 +0100},
	editor = {Gabriel M. Kuper and Leonid Libkin and Jan Paredaens},
	publisher = {Springer},
	title = {Constraint Databases},
	year = {2000}}

@article{SiegelmannS95,
	author = {Hava T. Siegelmann and Eduardo D. Sontag},
	bibsource = {dblp computer science bibliography, https://dblp.org},
	biburl = {https://dblp.org/rec/journals/jcss/SiegelmannS95.bib},
	date-added = {2023-01-17 17:46:03 +0100},
	date-modified = {2023-01-17 17:47:50 +0100},
	doi = {10.1006/jcss.1995.1013},
	journal = {J. Comput. Syst. Sci.},
	number = {1},
	pages = {132--150},
	timestamp = {Tue, 16 Feb 2021 00:00:00 +0100},
	title = {On the Computational Power of Neural Nets},
	url = {https://doi.org/10.1006/jcss.1995.1013},
	volume = {50},
	year = {1995},
	bdsk-url-1 = {https://dblp.org/rec/journals/jcss/SiegelmannS95},
	bdsk-url-2 = {https://doi.org/10.1006/jcss.1995.1013}}

@inproceedings{MaassSS91,
	author = {Wolfgang Maass and Georg Schnitger and Eduardo D. Sontag},
	bibsource = {dblp computer science bibliography, https://dblp.org},
	biburl = {https://dblp.org/rec/conf/focs/MaassSS91.bib},
	booktitle = {32nd Annual Symposium on Foundations of Computer Science (FOCS 1991)},
	date-added = {2023-01-17 17:44:46 +0100},
	date-modified = {2023-01-17 17:45:23 +0100},
	doi = {10.1109/SFCS.1991.185447},
	pages = {767--776},
	publisher = {{IEEE} Computer Society},
	timestamp = {Wed, 16 Oct 2019 14:14:54 +0200},
	title = {On the Computational Power of Sigmoid versus Boolean Threshold Circuits},
	url = {https://doi.org/10.1109/SFCS.1991.185447},
	year = {1991},
	bdsk-url-1 = {https://dblp.org/rec/conf/focs/MaassSS91},
	bdsk-url-2 = {https://doi.org/10.1109/SFCS.1991.185447}}

@article{KarpinskiM97,
	author = {Marek Karpinski and Angus Macintyre},
	bibsource = {dblp computer science bibliography, https://dblp.org},
	biburl = {https://dblp.org/rec/journals/jcss/KarpinskiM97.bib},
	date-added = {2023-01-17 17:41:43 +0100},
	date-modified = {2023-01-17 17:42:23 +0100},
	doi = {10.1006/jcss.1997.1477},
	journal = {J. Comput. Syst. Sci.},
	number = {1},
	pages = {169--176},
	timestamp = {Tue, 16 Feb 2021 00:00:00 +0100},
	title = {Polynomial Bounds for {VC} Dimension of Sigmoidal and General Pfaffian Neural Networks},
	url = {https://doi.org/10.1006/jcss.1997.1477},
	volume = {54},
	year = {1997},
	bdsk-url-1 = {https://dblp.org/rec/journals/jcss/KarpinskiM97},
	bdsk-url-2 = {https://doi.org/10.1006/jcss.1997.1477}}

@article{Maass97,
	author = {Wolfgang Maass},
	date-added = {2023-01-17 17:40:34 +0100},
	date-modified = {2023-01-17 17:40:52 +0100},
	doi = {10.1137/S0097539793256041},
	journal = {{SIAM} J. Comput.},
	number = {3},
	pages = {708--732},
	title = {Bounds for the Computational Power and Learning Complexity of Analog Neural Nets},
	volume = {26},
	year = {1997},
	bdsk-url-1 = {https://dblp.org/rec/journals/siamcomp/Maass97},
	bdsk-url-2 = {https://doi.org/10.1137/S0097539793256041}}

@inproceedings{SatoYK21,
	author = {Ryoma Sato and Makoto Yamada and Hisashi Kashima},
	booktitle = {Proceedings of the 2021 {SIAM} International Conference on Data Mining ({SDM} 2021)},
	date-added = {2023-01-17 17:37:00 +0100},
	date-modified = {2023-01-17 17:37:00 +0100},
	doi = {10.1137/1.9781611976700.38},
	editor = {Carlotta Demeniconi and Ian Davidson},
	pages = {333--341},
	publisher = {{SIAM}},
	title = {Random Features Strengthen Graph Neural Networks},
	year = {2021},
	bdsk-url-1 = {https://doi.org/10.1137/1.9781611976700.38}}

@inproceedings{AbboudCGL21,
	author = {Ralph Abboud and Ismail Ilkan Ceylan and Martin Grohe and Thomas Lukasiewicz},
	booktitle = {Proceedings of the 30th International Joint Conference on Artificial Intelligence (IJCAI 2021)},
	date-added = {2023-01-17 17:25:28 +0100},
	date-modified = {2023-01-17 17:25:50 +0100},
	doi = {10.24963/ijcai.2021/291},
	editor = {Zhi-Hua Zhou},
	pages = {2112--2118},
	title = {The Surprising Power of Graph Neural Networks with Random Node Initialization},
	year = {2021},
	bdsk-url-1 = {https://arxiv.org/abs/2010.01179}}

@article{Morgan65,
	author = {H.L. Morgan},
	date-added = {2023-01-17 17:24:22 +0100},
	date-modified = {2024-10-10 11:30:27 +0200},
	journal = {Journal of Chemical Documentation},
	number = {2},
	pages = {107--113},
	publisher = {ACS Publications},
	title = {The generation of a unique machine description for chemical structures -- a technique developed at chemical abstracts service.},
	volume = {5},
	year = {1965}}

@article{WeisfeilerL68,
	author = {B.Y. Weisfeiler and A.A. Leman},
	date-added = {2023-01-17 17:22:23 +0100},
	date-modified = {2023-01-17 17:22:49 +0100},
	journal = {NTI, Series 2},
	note = {English translation by G.~Ryabov},
	url = {https://www.iti.zcu.cz/wl2018/pdf/wl_paper_translation.pdf},
	title = {The reduction of a graph to canonical form and the algebra which appears therein},
	year = {1968}}

@inproceedings{XuHLJ19,
	author = {Keyulu Xu and Weihua Hu and Jure Leskovec and Stefanie Jegelka},
	bibsource = {dblp computer science bibliography, https://dblp.org},
	biburl = {https://dblp.org/rec/conf/iclr/XuHLJ19.bib},
	booktitle = {7th International Conference on Learning Representations ({ICLR} 2019)},
	date-added = {2023-01-17 17:21:22 +0100},
	date-modified = {2023-01-17 17:23:21 +0100},
	publisher = {OpenReview.net},
	timestamp = {Thu, 25 Jul 2019 13:03:15 +0200},
	title = {How Powerful are Graph Neural Networks?},
	url = {https://openreview.net/forum?id=ryGs6iA5Km},
	year = {2019},
	bdsk-url-1 = {https://dblp.org/rec/conf/iclr/XuHLJ19},
	bdsk-url-2 = {https://openreview.net/forum?id=ryGs6iA5Km}}

@inproceedings{MorrisRFHLRG19,
	author = {Christopher Morris and
                  Martin Ritzert and
                  Matthias Fey and
                  William L. Hamilton and
                  Jan Eric Lenssen and
                  Gaurav Rattan and
                  Martin Grohe},
	booktitle = {Proceedings of the 33rd AAAI Conference on Artificial Intelligence (AAAI 2019)},
	date-added = {2023-01-17 17:20:23 +0100},
	date-modified = {2023-03-08 14:38:19 +0100},
	doi = {10.1609/aaai.v33i01.33014602},
	publisher = {{AAAI} Press},
	title = {Weisfeiler and {L}eman Go Neural: Higher-order Graph Neural Networks},
	volume = {4602-4609},
	year = {2019},
	bdsk-url-1 = {https://doi.org/10.1609/aaai.v33i01.33014602}}

@article{ChamiAPRM22,
	author = {Ines Chami and Sami Abu{-}El{-}Haija and Bryan Perozzi and Christopher R{\'{e}} and Kevin Murphy},
	date-added = {2023-01-17 17:18:04 +0100},
	date-modified = {2023-01-17 17:55:59 +0100},
	journal = {Journal of Machine Learning Research},
	number = {89},
	pages = {1-64},
	title = {Machine Learning on Graphs: {A} Model and Comprehensive Taxonomy},
	volume = {23},
	year = {2022},
url          = {https://jmlr.org/papers/v23/20-852.html},
bdsk-url-1 = {https://dblp.org/rec/journals/corr/abs-2005-03675},
	bdsk-url-2 = {https://arxiv.org/abs/2005.03675}}

@article{ScarselliGTHM09,
	author = {Franco Scarselli and Marco Gori and Ah Chung Tsoi and Markus Hagenbuchner and Gabriele Monfardini},
	date-added = {2023-01-17 17:10:22 +0100},
	date-modified = {2023-01-17 17:10:41 +0100},
	journal = {IEEE Transactions on Neural Networks},
	number = {1},
	pages = {61-80},
 doi          = {10.1109/TNN.2008.2005605},
 	title = {The graph neural network model},
	volume = {20},
	year = {2009}}

@inproceedings{KuskeS17,
	author = {Dietrich Kuske and Nicole Schweikardt},
	booktitle = {32nd Annual {ACM/IEEE} Symposium on Logic in Computer Science ({LICS} 2017)},
	date-added = {2022-12-28 15:58:32 +0100},
	date-modified = {2022-12-28 15:59:06 +0100},
	doi = {10.1109/LICS.2017.8005133},
	pages = {1--12},
	publisher = {{IEEE} Computer Society},
	title = {First-order logic with counting},
	year = {2017},
	bdsk-url-1 = {https://dblp.org/rec/conf/lics/KuskeS17},
	bdsk-url-2 = {https://doi.org/10.1109/LICS.2017.8005133}}

@inproceedings{Hesse01,
	author = {William Hesse},
	booktitle = {Proceedings of the 28th International Colloquium on Automata, Languages and Programming ({ICALP} 2001)},
	date-added = {2022-09-19 07:14:13 +0200},
	date-modified = {2022-09-19 07:14:53 +0200},
	doi = {10.1007/3-540-48224-5_9},
	editor = {Fernando Orejas and Paul G. Spirakis and Jan van Leeuwen},
	pages = {104--114},
	publisher = {Springer},
	series = {Lecture Notes in Computer Science},
	title = {Division Is in Uniform {TC}\({}^{\mbox{0}}\)},
	volume = {2076},
	year = {2001},
	bdsk-url-1 = {https://dblp.org/rec/conf/icalp/Hesse01},
	bdsk-url-2 = {https://doi.org/10.1007/3-540-48224-5_9}}

@book{HajekP93,
	author = {Petr H\'ajek and Pavel Pudlak},
	date-added = {2022-09-08 10:06:21 +0200},
	date-modified = {2022-09-08 10:08:21 +0200},
	publisher = {Springer},
	series = {Perspectives in Mathematical Logic},
	title = {Metamathematics of First-Order Arithmetic},
	year = {1993}}

@article{HesseAB02,
	author = {William Hesse and Eric Allender and David A. Mix Barrington},
	date-added = {2022-09-07 17:10:11 +0200},
	date-modified = {2022-09-07 17:11:25 +0200},
	doi = {10.1016/S0022-0000(02)00025-9},
	journal = {J. Comput. Syst. Sci.},
	number = {4},
	pages = {695--716},
	title = {Uniform constant-depth threshold circuits for division and iterated multiplication},
	volume = {65},
	year = {2002},
	bdsk-url-1 = {https://dblp.org/rec/journals/jcss/HesseAB02},
	bdsk-url-2 = {https://doi.org/10.1016/S0022-0000(02)00025-9}}

@inproceedings{BarceloKM0RS20,
	author = {Pablo Barcel{\'{o}} and Egor V. Kostylev and Mika{\"{e}}l Monet and Jorge P{\'{e}}rez and Juan L. Reutter and Juan Pablo Silva},
	bibsource = {dblp computer science bibliography, https://dblp.org},
	biburl = {https://dblp.org/rec/conf/iclr/BarceloKM0RS20.bib},
	booktitle = {8th International Conference on Learning Representations ({ICLR} 2020)},
	date-added = {2022-04-01 09:03:18 +0200},
	date-modified = {2022-04-01 09:03:30 +0200},
	publisher = {OpenReview.net},
	timestamp = {Thu, 07 May 2020 17:11:47 +0200},
	title = {The Logical Expressiveness of Graph Neural Networks},
	url = {https://openreview.net/forum?id=r1lZ7AEKvB},
	year = {2020},
	bdsk-url-1 = {https://openreview.net/forum?id=r1lZ7AEKvB}}

@inproceedings{GilmerSRVD17,
	author = {Justin Gilmer and Samuel S. Schoenholz and Patrick F. Riley and Oriol Vinyals and George E. Dahl},
	bibsource = {dblp computer science bibliography, https://dblp.org},
	biburl = {https://dblp.org/rec/conf/icml/GilmerSRVD17.bib},
	booktitle = {Proceedings of the 34th International Conference on Machine Learning ({ICML} 2017)},
	date-added = {2022-04-01 09:02:00 +0200},
	date-modified = {2022-04-01 09:03:40 +0200},
	editor = {Doina Precup and Yee Whye Teh},
	pages = {1263--1272},
	publisher = {{PMLR}},
	series = {Proceedings of Machine Learning Research},
	timestamp = {Wed, 29 May 2019 08:41:45 +0200},
	title = {Neural Message Passing for Quantum Chemistry},
	url = {http://proceedings.mlr.press/v70/gilmer17a.html},
	volume = {70},
	year = {2017},
	bdsk-url-1 = {http://proceedings.mlr.press/v70/gilmer17a.html}}

@inproceedings{Grohe21,
	author = {Martin Grohe},
	bibsource = {dblp computer science bibliography, https://dblp.org},
	biburl = {https://dblp.org/rec/conf/lics/Grohe21.bib},
	booktitle = {Proceedings of the 36th Annual {ACM/IEEE} Symposium on Logic in Computer Science ({LICS} 2021)},
	date-added = {2022-04-01 09:01:39 +0200},
	date-modified = {2022-04-01 09:03:49 +0200},
	doi = {10.1109/LICS52264.2021.9470677},
	pages = {1--17},
	publisher = {{IEEE}},
	timestamp = {Fri, 09 Jul 2021 14:04:16 +0200},
	title = {The Logic of Graph Neural Networks},
	url = {https://doi.org/10.1109/LICS52264.2021.9470677},
	year = {2021},
	bdsk-url-1 = {https://doi.org/10.1109/LICS52264.2021.9470677}}

@phdthesis{Bennett62,
	author = {James H. Bennett},
	date-added = {2022-03-17 12:48:31 +0100},
	date-modified = {2023-01-18 20:23:10 +0100},
	school = {Princeton University},
	title = {On Spectra},
	year = {1962}}

@book{Immerman99,
	author = {Neil Immerman},
	bibsource = {dblp computer science bibliography, https://dblp.org},
	biburl = {https://dblp.org/rec/books/daglib/0095988.bib},
	date-added = {2022-03-17 12:39:05 +0100},
	date-modified = {2022-03-17 12:39:21 +0100},
	doi = {10.1007/978-1-4612-0539-5},
	isbn = {978-1-4612-6809-3},
	publisher = {Springer},
	series = {Graduate texts in computer science},
	timestamp = {Tue, 16 May 2017 01:00:00 +0200},
	title = {Descriptive complexity},
	url = {https://doi.org/10.1007/978-1-4612-0539-5},
	year = {1999},
	bdsk-url-1 = {https://doi.org/10.1007/978-1-4612-0539-5}}

@book{Vollmer99,
	author = {Heribert Vollmer},
	bibsource = {dblp computer science bibliography, https://dblp.org},
	biburl = {https://dblp.org/rec/books/daglib/0097931.bib},
	date-added = {2022-03-17 10:55:24 +0100},
	date-modified = {2022-03-17 10:55:54 +0100},
	doi = {10.1007/978-3-662-03927-4},
	isbn = {978-3-540-64310-4},
	publisher = {Springer},
	series = {Texts in Theoretical Computer Science. An {EATCS} Series},
	timestamp = {Tue, 16 May 2017 01:00:00 +0200},
	title = {Introduction to Circuit Complexity - {A} Uniform Approach},
	url = {https://doi.org/10.1007/978-3-662-03927-4},
	year = {1999},
	bdsk-url-1 = {https://doi.org/10.1007/978-3-662-03927-4}}

@article{ChandraSV84,
	author = {Ashok K. Chandra and Larry J. Stockmeyer and Uzi Vishkin},
	bibsource = {dblp computer science bibliography, https://dblp.org},
	biburl = {https://dblp.org/rec/journals/siamcomp/ChandraSV84.bib},
	date-added = {2022-03-17 10:32:36 +0100},
	date-modified = {2022-03-17 10:54:59 +0100},
	doi = {10.1137/0213028},
	journal = {{SIAM} J. Comput.},
	number = {2},
	pages = {423--439},
	timestamp = {Sat, 27 May 2017 01:00:00 +0200},
	title = {Constant Depth Reducibility},
	url = {https://doi.org/10.1137/0213028},
	volume = {13},
	year = {1984},
	bdsk-url-1 = {https://doi.org/10.1137/0213028}}

@article{BarringtonIS90,
	author = {David A. Mix Barrington and Neil Immerman and Howard Straubing},
	bibsource = {dblp computer science bibliography, https://dblp.org},
	biburl = {https://dblp.org/rec/journals/jcss/BarringtonIS90.bib},
	date-added = {2022-03-17 09:26:39 +0100},
	date-modified = {2022-03-17 09:26:51 +0100},
	doi = {10.1016/0022-0000(90)90022-D},
	journal = {J. Comput. Syst. Sci.},
	number = {3},
	pages = {274--306},
	timestamp = {Tue, 16 Feb 2021 00:00:00 +0100},
	title = {On Uniformity within NC{\({^1}\)}},
	url = {https://doi.org/10.1016/0022-0000(90)90022-D},
	volume = {41},
	year = {1990},
	bdsk-url-1 = {https://doi.org/10.1016/0022-0000(90)90022-D}}

\end{document}
